\newtheorem{lemma}{Lemma}
\newtheorem{defn}{Definition}
\newtheorem{thm}{Theorem}
\newtheorem{remark}{Remark}[section]
\newenvironment{manualtheorem}[1]{%
  \manualtheoreminner
}{\endmanualtheoreminner}
\newcommand{\bm}{\boldsymbol}
\newcommand{\bsb}{\boldsymbol}
\newcommand{\rd}{\,\mathrm{d}}
\newcommand{\bsbX}{{\boldsymbol{X}}}
\newcommand{\bsbx}{{\boldsymbol{x}}}
\newcommand{\bsby}{{\boldsymbol{y}}}
\newcommand{\bsbb}{{\boldsymbol{\beta}}}
\newcommand{\bsbg}{{\boldsymbol{\gamma}}}
\newcommand{\bsbGamma}{{\boldsymbol{\Gamma}}}
\newcommand{\bsbH}{{\boldsymbol{H}}}
\newcommand{\bsbI}{{\boldsymbol{I}}}
\newcommand{\bsbxi}{{\boldsymbol{\xi}}}
\newcommand{\bsbD}{{\boldsymbol{D}}}
\newcommand{\bsbU}{{\boldsymbol{U}}}
\newcommand{\bsbV}{{\boldsymbol{V}}}
\newcommand{\bsba}{{\boldsymbol{\alpha}}}
\newcommand{\bsbA}{{\boldsymbol{A}}}
\newcommand{\bsbe}{{\boldsymbol{\eta}}}
\newcommand{\bsbeps}{{\boldsymbol{\epsilon}}}
\newcommand{\bsbr}{{\boldsymbol{r}}}
\newcommand{\bsbdelta}{{\boldsymbol{\delta}}}
\newcommand{\bsbzeta}{{\boldsymbol{\zeta}}}
\newcommand{\bsbeta}{{\boldsymbol{\eta}}}
\newcommand{\bsbh}{{\boldsymbol{h}}}
\newcommand{\bsbDelta}{{\boldsymbol{\Delta}}}
\newcommand{\bsbd}{\boldsymbol{d}}
\newcommand{\Proj}{{{\mathcal P}}}
\newcommand{\EP}{\,\mathbb{P}}
\newcommand{\EE}{\,\mathbb{E}}
\newcommand{\lmtt}{\fontfamily{lmtt}\selectfont}
\newcommand{\iid}{\stackrel{\small iid}{\sim}}
\newcommand{\breg}{{\mathbf{\Delta}}}
\newcommand{\Breg}{{\mathbf{D}}}
\begin{document}

\title{  Gaining Outlier Resistance with Progressive Quantiles: Fast Algorithms and Theoretical Studies  }
\author{Yiyuan She, Zhifeng Wang, Jiahui Shen \\ Department of Statistics, Florida State University}
\date{}

\maketitle

\begin{abstract}
  Outliers widely occur in big-data   applications and  may severely affect statistical estimation and inference. In this paper,   a framework of outlier-resistant estimation is introduced to robustify an arbitrarily  given   loss function. It has a close connection to the method of  trimming  and   includes explicit outlyingness parameters for all samples, which in turn   facilitates computation, theory, and parameter tuning.  To tackle   the issues of  nonconvexity and nonsmoothness, we  develop scalable   algorithms with implementation ease and guaranteed fast convergence. In particular,   a new technique is proposed  to alleviate the requirement on the starting point such that on regular  datasets, the number of  data resamplings can be substantially reduced. Based on  combined statistical and computational treatments, we are able to perform   nonasymptotic   analysis beyond  M-estimation.  The obtained resistant estimators, though not necessarily globally or even locally optimal,  enjoy minimax  rate optimality in both low dimensions and high dimensions. Experiments in regression, classification, and neural networks show excellent performance of the proposed methodology at the occurrence of gross outliers.
\end{abstract}

\section{Introduction} \label{sec:intro}

Outliers are bound to occur in real-world big data   and   may severely affect statistical analysis. Many commonly used methods, including the  lasso \citep{tibshirani1996regression},    break down   in the presence of  a single bad observation.   We study the problem in a general setup.  Given   an arbitrary   loss function   $l $,  a response vector $\bsby \in \mathbb R^n$ and a design matrix $\bsbX \in \mathbb R^{n \times p}$, the coefficient vector $\bsbb \in \mathbb R^p$ can be estimated by solving the following optimization problem
\begin{equation} \label{eq:old}
\min_{\bsbb \in \mathbb R^p}  l(\bsbe; \bsby) + \sum_{j=1}^p P(\beta_j) \mbox{ s.t. } \bsbe = \bsbX \bsbb,
\end{equation}
where   $l$ measures the discrepancy between the observed  response and the prediction, and   $P$ represents a regularizer that can also be  a  constraint.  Neither $l$ nor $P$ is necessarily convex. The loss can be, say, a deviance function  corresponding to a   generalized linear model  (GLM), a robust loss like Huber's loss,
 or the well-known hinge loss  or   exponential loss for the purpose of classification. The loss function  need    not     be associated with  any probability distribution.

The goal of this work is to robustify \eqref{eq:old}  for any given $l$ at the occurrence of a  small number of \textit{gross  outliers}---we stress that the desired  \textbf{outlier resistance} should not only accommodate    samples with mild deviations from the model assumption, but also   handle extreme anomalies with severe  outlyingness and leverage.     Some   frequently used robust losses---Huber's loss and the   $\ell_1$-norm loss, in particular---
are not   resistant to gross outliers, and hence one may   want    a  reinforced   robustification in the setup of  \eqref{eq:old}. The nature of the task eventually leads to   an     $\ell_0$-type regularization proposal.

The research of robust statistics undergoes a long history and leads  to fruitful outcomes. A standard means to gain robustness is to modify the loss, or equivalently, reweight the samples. We illustrate the idea  with   robust regression. Instead of minimizing the quadratic loss $\|\bsby - \bsbX \bsbb\|_2^2$,  M-estimation   \citep{huber2009robust,hampel2011robust,maronna2006robust} solves $\min_\bsbb\sum_i \rho(y_i - \bsbx_i^T\bsbb)$ or   $\bsbX^T\psi(\bsbX\bsbb - \bsby) = \bsb 0$, where $\rho$ is a robust loss and $\psi = \rho'$.  Let $w(t) = \psi(t) / t$ and $\bsbr = \bsbX\bsbb - \bsby$. The score equation can   be  written as $ \bsbX^T \{w(\bsbr) \circ (\bsbX\bsbb - \bsby)\} = \bsb0 $, where    the multiplicative weights  $w(r_i)$ are applied componentwise.  The design of a robust loss thus  amounts to that of a weight function.
See \cite{loh2017statistical} and  \cite{avella2017influence}, for example, for some   theoretical properties. We call the above scheme the  \textit{multiplicative}  robustification, to  contrast with the \textit{additive}   fashion to be introduced   in Section \ref{sec:regularized}.   An iterative algorithm called  iterative reweighted least squares (IRLS) is popularly used in computation. The most critical and expensive step however lies in the   {preliminary} resistant fit with  high-breakdown \citep{rousseeuw1984robust,rousseeuw1985multivariate}, also the focus of our paper. Some challenges of resistant estimation in methodology, theory, and computation are summarized as follows.

First, when $ l(\bsbe; \bsby) $ has a   more complicated expression than that    in terms of    $r_i = y_i - \bsb{x}_i^T \bsb{\beta}$ only, how to reshape it  to guard against gross outliers          does not seem straightforward.
According to  \cite{bianco1996robust}, simply applying    $\rho$ on the deviance residuals of a GLM \citep{pregibon1982resistant} does {not} ensure consistency   and  an extra  bias correction term must be added into the criterion.   \cite{croux2003implementing} limited $\rho$ to a certain class to guarantee the existence of a finite solution, and included   an additional weighting step to           get a bounded   influence function.
We refer to   \cite{AMR18} 
    for a robust quasi-likelihood form by use of  both a $\psi$ function and a weight function. To us, another important line of work  on the trimmed likelihood estimator ({TLE}) \citep{hadi1997maximum,  vandev1998regression, Kurnaz2018}, as an extension of {least trimmed squares (LTS) \citep{rousseeuw1985multivariate}}, is most motivating. LTS and TLE  give rise to   our new   regularized   form of resistant estimation that is  more amenable to both optimization  and analysis.

The theoretical challenges faced by today's robust statisticians   are perhaps even  more pressing. For example,  even under $n>p$ (let alone in   high-dimensional  settings), (a) how trimming inflates  the risk,   (b) how small the universal   minimax lower bound could be, and (c) what makes a theoretically sound     model selection   criterion,  are all surprisingly   unknown in the literature.  We emphasize the \textit{finite-sample} nature of desired statistical theory (e.g., Theorems \ref{thm:fixed-point}--\ref{thm:minimax}), because in real-life data applications, it is often hard to know how large the sample size should be relative to the number of unknowns to apply asymptotics. Breakdown   point provides a useful  index of the robustness of  a given method, but as pointed out by \cite{huber2009robust}, such     worst-case studies are not probabilistic and   may be   too conservative on a particular dataset.
In addition to robust analysis,  parameter tuning based on asymptotics or breakdown point     suffers the same theoretical difficulties.

 The last  big challenge lies in computation. Resistant fits are much more  costly than   IRLS or $\Theta$-IPOD \citep{she2011outlier}  which takes them as   initial points. Due to the inherent {nonconvexity},  subset sampling is heavily used   in LTS-like algorithms to generate multiple starting values    \citep{rousseeuw1999computing}. Unfortunately,   the number of required samplings grows exponentially in $p$  
 and so   these procedures already   become inaccurate and/or prohibitive    for  moderate $p$. Developing more efficient outlier-resistant algorithms and     mitigating the burden of subset sampling are at the core of  modern-day robust statistics. Moreover,  the fact that the computationally obtained solutions are   not necessarily globally optimal   poses   nontrivial  challenges in theoretical analysis.\\

This paper    attempts to   address  some aforementioned
obstacles to gain outlier resistance,  with the hope  to advance the practice of   robust statistics to more   sophisticated learning tasks.  Our main contributions are threefold.
a) A  general resistant learning framework  is introduced to robustify an arbitrarily  given   loss, with an ``$\ell_0+\ell_2$'' form of regularization to deal with gross outliers in possibly high dimensions. It has a close connection to the method of  trimming  but  includes explicit outlyingness parameters, which in turn   facilitates computation, theory, and parameter tuning.  b)   Although the associated problem is highly nonconvex and nonsmooth, extremely scalable   algorithms with implementation ease can be developed with  guaranteed   convergence. In particular, the statistical error of the sequence of iterates can   converge  \textit{geometrically} fast. A progressive optimization technique is proposed  to relax the regularity condition and alleviate the requirement on the starting point so that on regular  datasets the number of  data resamplings can be substantially reduced.
c) Combined statistical and computational treatments make it possible  to perform   nonasymptotic  analysis beyond the M-estimation  setting, and the obtained ``A-estimators'', though not necessarily globally or even locally optimal,  enjoy minimax  rate optimality in both low dimensions and high dimensions.

  The rest of the paper is organized as follows.  Section \ref{sec:regularized}  introduces our  regularized resistant learning  framework and compares it to trimming.   Section \ref{sec:algorithm} deals with the computational issues by     progressive iterative quantile thresholding. Section \ref{sec:theory}  provides nonasymptotic theoretical support for the obtained estimators under proper   regularity conditions.      Section  \ref{sec:tuning} reveals some universal minimax lower bounds and investigates the issue of parameter tuning.   Section \ref{sec:exp} performs extensive simulation studies and real world data analysis. The proof details and more simulation results  are given   in the appendix.

\paragraph{Notation.}
We use   $\bsbX = [\bsbx_1, \bsbx_2, \cdots, \bsbx_n]^T$
 to denote the design matrix and $\bsbx_i$  the $i$th sample.
Throughout the paper, we use $ l (\bsbe ; \bsby )$ to denote a differentiable loss defined on the   systematic component  $\bsbe$, and assume that $l$ is   bounded from below and   $\inf_{\bsbe} l(\bsbe; \bsby) \ge 0   $ without   loss of generality. Sometimes, such as when     building  the connection to trimming,    we assume that the overall  loss  takes a sample-additive form,  $l(\bsbe; \bsby)=\sum_{i=1}^n l_0(\eta_i; y_i)$.  $l(\bsbe; \bsby)$ is said to  be $L$-strongly smooth or  have    an $L$-Lipschitz continuous gradient if \begin{equation} \label{eq:derivative_lipschitz_condition}
\| \nabla l (\bsbe_1; \bsby) - \nabla l (\bsbe_2; y)\|_2 \leq  L\|\bsbe_1 - \bsbe_2\|_2, \forall \bsbe_1, \bsbe_2
\end{equation}
for some   $L > 0$.
For ease of presentation, we  frequently use the concatenated notation  $$\bar{\bm X} = [\bm X,\bm I] , \quad \bar{\bm\beta}=[\bm\beta^T,\bm\gamma^T]^T,$$   and so $\bm X\bm\beta+\bm\gamma = \bar{\bm X}\bar{\bm\beta}$. For any matrix $\bsbA$, $\| \bsbA\|_2$ is its spectral norm. Given any $\bsbb\in \mathbb R^p$, $\| \bsbb\|_0 = \sum_{j=1}^p 1_{\beta_j \ne 0}$. The hat matrix   $\bm H$ of $\bsbX$ is $\bsbX (\bsbX^T \bsbX)^+ \bsbX^T$ with $^+$ denoting the Moore-Penrose inverse.
Given any two vectors $\bsba, \bsbb$ of the same size, the inner product $\langle \bsba, \bsbb\rangle$ is given by  $\sum \alpha_j \beta_j$.

Given $\bsbX\in \mathbb R^{n\times p}$ and $s\le p$,   we introduce  $0\le m_{\bsbX} (s),M_{\bsbX} (s)$ associated with the \textit{restricted isometry property} \citep{Candes2005} by
$
m_{\bsbX}(s) \|\bsbb\|_2^2 \le \|\bsbX \bsbb\|_2^2 \le  M_{\bsbX}(s) \|\bsbb\|_2^2$, $ \forall   \bsbb: \| \bsbb\|_0 \le s
$. In particular, $M_{\bar \bsbX}(s, o)$ ($s\le p, o\le n$) obeying $$
\|\bar\bsbX \bar\bsbb \|_2^2 \le  M_{\bar \bsbX}(s, o) \|\bar\bsbb\|_2^2,  \  \forall \bar \bsbb: \| \bsbb\|_0 \le s, \| \bsbg \|_0 \le o
$$ will play an active role in our algorithm design and analysis. Obviously, $M_{  \bsbX}(s)\le \| \bsbX\|_2 ^2$ and  $M_{\bar \bsbX}(s, o)\le \| \bar \bsbX\|_2 ^2$. But   since we are primarily  interested  in  $s, o\ll n$,   $\bsbX \bsbb$ or $\bar \bsbX \bar \bsbb$ for an $s$-restricted $\bsbb$ or an $(s,o)$-restricted $\bar \bsbb$  involves a much thinner design, and so   $M_{  \bsbX}(s ),  M_{\bar \bsbX}(s, o)$ can be way smaller.

 We use $C$, $c$ and $L$ to denote universal constants which  are not necessarily the same at each occurrence and    $\lesssim$  denotes  an inequality that holds up to a multiplicative numerical constant. Finally,   $0\log 0 = 0$,  $a\vee b = \max\{a, b\}$ and $a\wedge b = \min\{a, b\}$ are adopted.

\section{$L_0$-regularization vs. Trimming} \label{sec:regularized}
We begin with the   robust estimation problem \eqref{eq:old} without   regularization. The customizable   loss determines the   performance metric, so  we prefer \textit{not} to alter its form  in the process of robustification. This is  possible by use of an additive regularized estimation.

Concretely,  re-define $\bsbe$ as $ \bsbX\bsbb + \bsbg$,  where $\gamma_i$ characterizes the outlyingness of the $i$th sample and we allow it to be    large.
Taking advantage of the sparsity in $\bsbg$,  as outliers are never the norm, we propose an  $\ell_0$-constrained,  $\ell_2$-penalized    outlier-resistant estimation
\begin{equation} \label{eq:criterion_l2}
\min_{\bsbb , \bsbg   } l(\bsbe;\bsby) + \frac{\nu}{2}\|\bsbg\|_2^2 \ \,  \mbox{ s.t. } \bsbe = \bsbX \bsbb + \bsbg, \mbox{ } \|\bsbg\|_0 \leq q,
\end{equation}
where $q$ satisfies $q\le n/2$  and  is assumed to be an integer throughout the paper. Here, the number of unknowns,  $n+p$, is larger than the number of observations, and the ``$\ell_0+\ell_2$'' regularization plays an effective  role in dealing with the high-dimensional challenge.

Unlike   the popular $\ell_1$-norm penalty, the $\ell_0$ `norm'  used in the constraint   is indifferent to the magnitude of $\gamma_i$   and does not result in any undesired  bias.     An alternative idea is to  utilize a   non-convex sparse penalty---in the regression setting, this       amounts to  M-estimation,  the diverse  choices of the penalty lead to different robust losses   \citep{she2011outlier}, and the conclusion extends to regularized $\bsbb$-estimation in possibly high dimensions \cite[Lemma 2]{She2017RRRR}. But   \eqref{eq:criterion_l2} has some distinct advantages:    the constraint  directly controls  the maximum number of outliers and     $\| \cdot \|_0 $   is   arguably the ideal function to enforce sparsity regardless of the severity of   aberrant samples. In practice,   specifying the value of       $q$, rather than a penalty parameter $\lambda$,   is   easier and more convenient.

 In addition to the sparsity-promoting regularization, we include the   $\ell_2$-shrinkage     to enhance  numerical stability and compensate  for collinearity   in the presence of clustered outliers.   We will see  in Section \ref{subsec:motivating_LTS} that  with $\nu$ introduced, the samples with minor deviations   can still contribute to the  model fitting. Empirically, $\nu$ is not a sensitive parameter, and  we often set  $\nu$  to a mild value (say 1e-4).   As long as $\nu>0$, the   $\bsbg$-estimate is finite.
But if we force  $\nu = 0$, \eqref{eq:criterion_l2}  has an interesting and important connection   to the method of trimming \citep{rousseeuw1985multivariate, hadi1997maximum,  vandev1998regression}.
\begin{thm} \label{th:criterion_equivalence}
Assume    $l(\bsbe; \bsby)$ takes a sample-additive form $\sum_{i=1}^n l_0(\eta_i; y_i)$ and
$\inf_{\eta}l_0(\eta;y) = 0 \mbox{ for all } y \in \mathcal Y \subset\mathbb R$.

i) Given any minimizer    $(\hat \bsbb, \hat \bsbg)$    of the $\ell_0$-\emph{constrained} \emph{joint} problem  $\min_{\bsbb , \bsbg   } l( \bsbX \bsbb + \bsbg;\bsby)$  s.t. $\|\bsbg\|_0\le q$,  $\hat \bsbb$ is also a globally optimal solution to the \emph{trimmed} problem on $\bsbb$:
$
\min_{\bsbb  } \sum_{i=1}^{n-q} {l}_0^{(i)}(\bsbb)
$ 
with ${l}_0^{(1)} (\bsbb)\leq  \cdots \leq {l}_0^{(n)}(\bsbb)$  the order statistics of $l_0(\bsbx_i^T\bsbb; y_i)$;
conversely, given any $\hat \bsbb$ that minimizes $\sum_{i=1}^{n-q} {l}_0^{(i)}(\bsbb)$,   there always exists a  $\hat \bsbg \in \bar{\mathbb R}^n$ with  $\bar {\mathbb R} =[-\infty, \infty]$ so that $(\hat \bsbb, \hat \bsbg)$ is a solution to the $\ell_0$-constrained joint problem.

ii)
The same connection holds between the $\ell_0$-\emph{penalized} \emph{joint} problem $\min_{\bsbb , \bsbg   } l( \bsbX \bsbb + \bsbg;\bsby) + \tau\|\bsbg\|_0$ and  the \emph{winsorized} problem $\min_{\bsbb}\sum  \tau \wedge l_0(\bsbx_i^T\bsbb; y_i)$ for any cutoff $\tau\ge 0$.

iii)
Finally, any minimizer  $\hat \bsbb$ of the winsorized optimization problem is  a solution to  the  trimmed problem with $q =| \{i:  l_0(\bsbx_i^T\hat \bsbb; y_i)>\tau\}|$, and  any       $ (\hat \bsbb, \hat \bsbg) $ that minimizes the $\ell_0$-penalized joint criterion is also an optimal solution to the $\ell_0$-constrained joint problem with  $q = \| \hat \bsbg\|_0$.
\end{thm}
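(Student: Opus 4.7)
The plan is to attack all three parts by a single device: for any fixed $\bsbb$, the joint objective separates into $n$ independent one-dimensional minimizations in $\gamma_i$, and profiling out $\bsbg$ collapses each joint problem into a pure $\bsbb$-problem whose form is identifiable with the trimmed or winsorized criterion. For part (i), fix $\bsbb$ and examine $\min_{\bsbg:\|\bsbg\|_0\le q}\sum_i l_0(\bsbx_i^T\bsbb+\gamma_i;y_i)$. At each index we either set $\gamma_i=0$, paying $l_0(\bsbx_i^T\bsbb;y_i)$, or spend one of the $q$ available nonzero slots to drive that summand down to $\inf_\eta l_0(\eta;y_i)=0$ (possibly only in a limit, hence $\bar{\mathbb R}$). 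Optimality dictates using the slots on the $q$ indices with the largest $l_0(\bsbx_i^T\bsbb;y_i)$, leaving a profiled objective precisely equal to $\sum_{i=1}^{n-q}l_0^{(i)}(\bsbb)$. Both directions then follow: the $\bsbb$-component of any joint minimizer solves the trimmed problem, and any trimmed minimizer $\hat\bsbb$ lifts to a joint minimizer by supplying $\hat\gamma_i\in\bar{\mathbb R}$ on the $q$ largest-loss indices.

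Part (ii) uses the same profiling in the penalized form: for each $i$ independently,
\[
\min_{\gamma_i\in\bar{\mathbb R}}\bigl\{l_0(\bsbx_i^T\bsbb+\gamma_i;y_i)+\tau\,1_{\gamma_i\ne 0}\bigr\}=\tau\wedge l_0(\bsbx_i^T\bsbb;y_i),
\]
because activating $\gamma_i$ costs $\tau$ but zeros the loss, while not activating costs $l_0$. Summing over $i$ yields the winsorized criterion, and both directions of equivalence follow as in (i), with activation on exactly the indices where $l_0(\bsbx_i^T\hat\bsbb;y_i)>\tau$.

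For the trimmed/winsorized link in (iii), write $f_\tau(\bsbb)=\sum_i\tau\wedge l_0(\bsbx_i^T\bsbb;y_i)$ and $g_q(\bsbb)=\sum_{i=1}^{n-q}l_0^{(i)}(\bsbb)$. Since $\tau\wedge t\le\tau$ and $\tau\wedge t\le t$, for any subset $S\subset\{1,\dots,n\}$ of size $q$ we have $f_\tau(\bsbb')\le q\tau+\sum_{i\notin S}l_0(\bsbx_i^T\bsbb';y_i)$; minimizing over $S$ delivers $f_\tau(\bsbb')\le q\tau+g_q(\bsbb')$ for every $\bsbb'$, and at $\hat\bsbb$ with $q=|\{i:l_0(\bsbx_i^T\hat\bsbb;y_i)>\tau\}|$ the inequality becomes an equality (the $q$ activated indices are exactly those with loss above $\tau$, and the remaining $n-q$ losses are all at most $\tau$, hence are the $n-q$ smallest). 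Chaining $q\tau+g_q(\hat\bsbb)=f_\tau(\hat\bsbb)\le f_\tau(\bsbb')\le q\tau+g_q(\bsbb')$ then yields $g_q(\hat\bsbb)\le g_q(\bsbb')$. The $\ell_0$-penalized-to-constrained transfer is an immediate one-line bound: any competitor $(\bsbb',\bsbg')$ with $\|\bsbg'\|_0\le q=\|\hat\bsbg\|_0$ pays at most the same $\tau$-penalty, so the loss itself must be at least that of $(\hat\bsbb,\hat\bsbg)$.

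The main---and relatively mild---obstacle is that $\inf_\eta l_0(\eta;y)=0$ need not be attained by any finite $\eta$, which is precisely why the lifted $\hat\bsbg$ in (i) is asserted in $\bar{\mathbb R}^n$ rather than $\mathbb R^n$; this is a limiting issue rather than a structural one, handled by taking $\gamma_i$ along a minimizing sequence. No convexity, smoothness, or differentiability of $l_0$ is required, and the entire argument is a purely combinatorial consequence of additive separability of the loss together with the sparsity role of $\|\bsbg\|_0$.
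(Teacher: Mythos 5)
Your proposal is correct and follows essentially the same route as the paper: the central device of profiling out $\bsbg$ for fixed $\bsbb$ to obtain the trimmed sum (constrained case) and the winsorized sum (penalized case) is exactly the paper's Lemma A.1, and parts (i)--(ii) then read off as in the paper. Your part (iii) replaces the paper's contradiction arguments with a direct sandwich inequality $f_\tau(\bsbb')\le q\tau+g_q(\bsbb')$ (with equality at $\hat\bsbb$) and a one-line monotonicity bound for the penalized-to-constrained transfer; these are cosmetically cleaner but rest on the same underlying computations.
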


So in a sense,   imposing the $\ell_0$ regularization  on $\bsbg$ amounts to using a trimmed version or setting a cutoff of  the original loss on $\bsbb$, both of which  can effectively bound   the influence of outliers and necessarily result in nonconvexity. The  joint $(\bsbb, \bsbg)$ formulation under $\ell_0$-penalization   corresponds to winsorizing the  loss (which unsurprisingly gives a  redescending $\psi$  in the setup of robust regression), and  the $\ell_0$-constrained form has a  similar power but utilizes a data-adaptive cutoff. It is noteworthy that our additive regularized framework makes it possible to   utilize any loss (which   can be convex) for resistant estimation.

\begin{remark}
Because of the connections, some  breakdown-point and influence-function results of the estimators defined by  \eqref{eq:criterion_l2} can be directly obtained from say  \cite{alfons2013sparse}  and \cite{ollerer2015influence}. For example, Proposition 1 of \cite{alfons2013sparse} can be slightly modified  to argue that the breakdown point is greater than $q/n$ when $\eta>0$ and $l$\ is convex under some regularity conditions. Moreover, interested readers may refer to Appendix \ref{subsec:extra} for   a risk-based   breakdown point defined via 
 the Orlicz norm of the effective noise, which takes the randomness  into account, cf. \eqref{eq:bpyvaryingspace}, \eqref{sbp-1}. One can then combine the empirical process theory and generalized Bregman calculus to perform breakdown point studies for an  {extended} real-valued loss $l(\bsbe;\bsby )$ that may not be a function  of $\bsby - \bsbe$ as in M-estimation. See Remark \ref{rmk:genbp} and
Theorem \ref{theorem:bregBP} for details, where the benefit of using a strongly convex $l$ on the systematic component is shown.

On the other hand, the equivalence between   the      sets of globally optimal solutions  does not preclude   practically obtained estimates    from being   different.  For the analysis of   locally optimal or alternatively optimal estimators to be presented in later sections, our   additive     regularized form appears  to be advantageous  over the trimming form. 
\end{remark}

Finally, an extension  of \eqref{eq:criterion_l2} to  high dimensions gives rise to      resistant variable selection or  resistant shrinkage estimation \begin{equation}\label{eq:criterion_hd}
\min_{\bsbb,   \bsbg  } l(\bsbX \bsbb + \bsbg;\bsby) + P(\bsbb; \lambda) + \frac{\nu}{2}\|\bsbg\|_2^2 \ \mbox{ s.t. }     \|\bsbg\|_0 \leq q.
\end{equation}
     $P(\bsbb; \lambda)$ can be, say,   a sparsity-inducing penalty/constraint, or an $\ell_2$-type penalty. \eqref{eq:criterion_hd}   allows for  $p>n$. Similar to Theorem \ref{th:criterion_equivalence}, in the special case of $l (\bsbe; \bsby) =\| \bsby - \bsbe\|_2^2/2  $, $P(\bsbb; \lambda) = \lambda \|\bsbb\|_1$ and $\nu = 0$,  \eqref{eq:criterion_hd} can be shown to be equivalent to the sparse LTS    \citep{alfons2013sparse}.
In comparison, our explicit introduction of the outlyingness vector $\bsbg$ provides  new        insights. First, outliers  can be directly  revealed    from the $\bsbg$-estimate.    Second, the formulation eases the design of optimization-based algorithms   (Section \ref{sec:algorithm}). Third, \eqref{eq:criterion_l2} and \eqref{eq:criterion_hd} enable us to borrow high dimensional statistics tools to investigate the non-asymptotic behavior of resistant estimators   (Section \ref{sec:theory}) and develop a new information criterion for  regularization parameter tuning (Section \ref{sec:tuning}). 


\section{Progressive Iterative Quantile-thresholding} \label{sec:algorithm}

This section studies how to solve the computational problems defined in Section \ref{sec:regularized},   where the main obstacle lies in   nonconvexity and nonsmoothness. Thanks to the \emph{sparsity} of the problem, obtaining a globally optimal solution  may not be necessary in many real world applications. This section develops two classes of   iterative   algorithms, the BCD type and the MM type, both of which can accommodate a differentiable loss function that is not necessarily convex. More importantly, we propose a   progressive  optimization technique to    relax the condition required to enjoy the best statistical accuracy.  

\subsection{Outlier-resistant regression} \label{subsec:motivating_LTS}

To  illustrate the main ideas and techniques, this part studies   outlier-resistant regression    with   $n > p$:
\begin{equation} \label{reg:criterion}
\min_{\bm\beta\in \mathbb R^p, \bm\gamma\in \mathbb R^n}  \frac{1}{2}\|\bm y - \bm X\bm\beta - \bm\gamma\|_2^2 + \frac{\nu}{2}\|\bm\gamma\|_2^2:= f(\bm\beta,\bm\gamma)   \text{ s.t. } \|\bm\gamma\|_0\le q.
\end{equation}
Compared with LTS, the inclusion of $\bm\gamma$ in \eqref{reg:criterion}   renders the quadratic loss intact, and so    block-wise coordinate descent (\textbf{BCD}) can be used for computation. The sub-problem  of   $\bsbb$ is trivial, and   given $\bm\beta$, the best  $\bsbg$   can be obtained with a quantile thresholding operator $\Theta^{\#}$ \citep{She2013Super}.
Given any $\bm s \in \mathbb R^n$,  define  $\Theta^{\#}(\bsb s; q, \nu)$  to be a vector $\bsb t \in \mathbb R^n$ satisfying
$t_{(j)} = s_{(j)} / (1 + \nu) \mbox{ if } 1 \leq j \leq q \mbox{, and } 0 \mbox{ otherwise,}$
where $s_{(1)}, \ldots, s_{(n)}$ are the order statistics of $s_1, \ldots, s_n$ satisfying $|s_{(1)}| \geq \cdots \geq |s_{(n)}|$, and $t_{(1)}, \ldots, t_{(n)}$ are defined similarly.
To ensure $\Theta^\#$ is a function and avoid ambiguity, we assume   throughout the paper that in performing $\Theta^{\#}(\bm s; q,\nu)$ , either $|s_{(q)}| > |s_{(q+1)}|$ or $|s_{(q)}| = |s_{(q+1)}|=0$ occurs, referred to as the $\Theta^\#$-uniqueness assumption.
The resultant BCD algorithm proceeds as follows
\begin{equation}\label{eq:update_rule_BCD}
\begin{cases}
\bsbg^{(t+1)} = \Theta^{\#}(\bsby - \bsbX \bsbb^{(t)}; q, \nu),\\
\bsbb^{(t+1)} = (\bsbX^T\bsbX)^{+}\bsbX^T(\bsby - \bsbg^{(t+1)}),
\end{cases}
\end{equation}
or
simply\begin{equation} \label{eq:update_BCD_line}
\bsbg^{(t+1)} = \Theta^{\#}(\bsbH \bsbg^{(t)} + (\bsbI - \bsbH) \bsby; q, \nu),
\end{equation}
starting with   $\bm\beta^{(0)}\in\mathbb R^p$ or   $\bsbg^{(1)}\in\mathbb R^n$.
 \eqref{eq:update_rule_BCD}  and
\eqref{eq:update_BCD_line}    are referred to as iterative quantile-thresholding (IQ) algorithms in the paper. Intuitively, 
\eqref{eq:update_BCD_line} repeatedly thresholds a weighted average of  $\bsby$ and the current $\bsbg $-estimate. But the quantile thresholding is no ordinary hard-thresholding due to its \textit{iteration-varying} threshold and   concurrent $\ell_2$ shrinkage.   \eqref{eq:update_BCD_line} shares some similarity with the   celebrated   ``{C-step}''  designed on the basis of trimming  \citep{rousseeuw1999computing}.  On  the other hand,    at each iteration     C-step only  uses  a subset of observations to get the coefficient vector,   while  IQ uses all the samples with     adjustment. In a sense, C-step's subset selection amounts to checking   the zero-nonzero pattern of $\gamma_i^{(t+1)}$ instead of its exact value.  


The   algorithm also suggests some virtues of the complementary  $\ell_2$-shrinkage.  When $\nu = 0$, given the optimal $\hat{\bm\beta}$, $\hat{\gamma_i}=0$ or $  y_i -  \bsbx_i^T\hat\bsbb$. Letting $d_i = 1_{\hat \gamma_i \ne 0}$ and  $\bsbD = \mbox{diag}\{d_i\}$,     from \eqref{eq:update_rule_BCD}, $\hat \bsbb$  satisfies
$$
\bsbX^T (\bsbI - \bsbD) \bsbX \hat \bsbb = \bsbX^T (\bsbI - \bsbD) \bsby.
$$
Because of the binary nature of $d_i$,   the $i$th observation is either kept  unaltered  or  removed completely, which   might hurt the estimation accuracy in the presence of  a   number of   mildly outlying observations.  When   $\nu>0$, the above equation still holds for $ d_i =  1_{\hat \gamma_i \ne 0}/(1+\nu)$,  and the fact that  $\bsbI - \bsbD$ is nonsingular    even when $q$ takes a conservatively large  value     secures numerical stability and  variance reduction. More insights can be gained by transforming  $\bm y = \bm X\bm\beta^*+\bm\gamma^*+\bm\epsilon$ to $\bsby' = \bsbV ^T\bsbg^* + \bsbeps'$ where $\bsbV$ is the orthogonal complement to the range of $\bsbX$,  $\bsby' = \bsbV^T \bsby$ and $\bsbeps' = \bsbV^T \bsbeps$. Then, during the occurrence of clustered outliers with a large  overall leverage, the $\ell_2$ regularization  has a beneficial effect to cope with the coherent  Gram matrix $\bsbV^T \bsbV = \bsbI - \bsbH$.

 Establishing IQ's rate of convergence is nontrivial since it is    a nonconvex optimization algorithm in high dimensions (rigorously speaking, $n$ observations and $p+n$ unknowns). We give two results regarding the convergence of its computational error \textit{and} statistical error.

\begin{thm} \label{thm:BCD_linear_rate}
Consider the algorithm defined by \eqref{eq:update_rule_BCD}  or \eqref{eq:update_BCD_line}.
\begin{description}
  \item a) For any $T \ge 1$,
  \begin{equation} \begin{split}
 \mathop{\min}_{1\le t\le T}  \| \bsbX &(\bsbb^{(t)}  - \bsbb^{(t+1)})\|_2^2 =   \mathop{\min}_{1\le t\le T} \|\bm H(\bm\gamma^{(t)}-\bm\gamma^{(t+1)})\|_2^2 \nonumber\\ \le \ & \frac{1}{ T }(\|(\bm I-\bm H)(\bm\gamma^{(1)}-\bm y)\|_2^2+\nu\|\bm\gamma^{(1)}\|_2^2).
\label{BCD_sublinear}
\end{split}\end{equation}


  \item b)  Let $\bm y = \bm X\bm\beta^*+\bm\gamma^*+\bm\epsilon$, where $\|\bm\gamma^*\|_0 = o^*$, $\bm\epsilon$ is a sub-Gaussian random vector with mean zero and scale bounded by $\sigma$ (cf. Definition \ref{def:subgauss} in Appendix \ref{proof:resistregalg}).   Let $q = \vartheta o^*$ with $\vartheta>1$. Suppose the design matrix  satisfies a restricted isometry property (\emph{RIP}) for some $\varepsilon$:  $0<\varepsilon<1$
      \begin{equation} \label{BCD_linear_cond}
      \| \bm H \bm\Delta\|_2^2 \le (1-\varepsilon)\|\bm\Delta\|_2^2, \  \forall
\bm\Delta:\|\bm\Delta\|_0\le(1+\vartheta)o^*.
      \end{equation} Then with probability at least $1-C(n-p)^{-c}$,   for any $t\ge 1$
      \begin{align} \label{BCD_linear_rate}
       \begin{split}
 \|\bm H(\bm\gamma^{(t)}-\bm\gamma^*)\|_2^2  \lesssim  &  \Big(\frac{1}{1+\kappa}\Big)^{t-1}\|\bm H(\bm\gamma^{(1)}-\bm\gamma^*)\|_2^2+    \frac{\nu}{\kappa} \|\bm\gamma^*\|_2^2
          +    \frac{ \sigma^2}{\kappa} \vartheta  o^*\log\frac{en}{o^* },
       \end{split}
      \end{align}
       where
\begin{align}
       \kappa =\{  \varepsilon-1/\sqrt{\vartheta}+ (1 -1/\sqrt{\vartheta})\nu\}/(1-\epsilon), \label{kappadef}\end{align}       
    and  $C,c>0$ are constants.
Furthermore, under a slightly stronger noise assumption that  $\epsilon_i$ are independent  sub-Gaussian(0,  $\sigma^2$), the following convergence result holds for all $t\ge 1$ with  probability at least $1-C(n-p)^{-c} - C \exp(-c p)$
\begin{align} \label{BCD_linear_rate-beta}
       \begin{split}
          \| \bsbX (\bsbb^{(t)} - \bsbb^*)\|_2^2 \lesssim  &  \Big(\frac{1}{1+\kappa}\Big)^{t-1}\|  \bsbX(\bsbb^{(1)}-\bsbb^*)\|_2^2+    \frac{\nu}{\kappa} \|\bm\gamma^*\|_2^2
       + p\sigma^2   +    \frac{ \sigma^2}{\kappa} \vartheta  o^*\log\frac{en}{o^* }.
       \end{split}
      \end{align}

\end{description}
\end{thm}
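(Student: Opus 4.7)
The plan is to first establish a one-step contraction for $\|\bsbH \bsbh^{(t)}\|_2^2$ with $\bsbh^{(t)} := \bsbg^{(t)} - \bsbg^*$, then bootstrap to both stated bounds. The opening move is a clean decomposition of $\bsbs^{(t)} := \bsby - \bsbX \bsbb^{(t)}$. Since $\bsbX \bsbb^{(t)} = \bsbH(\bsby - \bsbg^{(t)})$ and $\bsbH \bsbX = \bsbX$, substituting $\bsby = \bsbX \bsbb^* + \bsbg^* + \bsbeps$ yields $\bsbs^{(t)} = \bsbg^* + \bsbH \bsbh^{(t)} + (\bsbI - \bsbH) \bsbeps$, so the input being quantile-thresholded at step $t+1$ is $\bsbg^*$ perturbed by the projected iterate error plus projected noise.

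The optimality of $\bsbg^{(t+1)} = \Theta^{\#}(\bsbs^{(t)}; q, \nu)$ as the minimizer of $\tfrac{1}{2}\|\bsbg - \bsbs^{(t)}\|_2^2 + \tfrac{\nu}{2}\|\bsbg\|_2^2$ over $\|\bsbg\|_0 \le q$, combined with the feasibility of $\bsbg^*$ (since $o^* \le q$), produces after rearrangement with $\bsbu := \bsbh^{(t+1)}$ the basic inequality
\begin{align*}
\tfrac{1+\nu}{2}\|\bsbu\|_2^2 + \nu\langle \bsbg^*, \bsbu\rangle \;\le\; \langle \bsbH \bsbh^{(t)}, \bsbu\rangle + \langle (\bsbI - \bsbH)\bsbeps, \bsbu\rangle.
\end{align*}
Because $\|\bsbg^{(t+1)}\|_0 \le q$ and $\|\bsbg^*\|_0 \le o^*$, we have $\|\bsbu\|_0 \le (1+\vartheta)o^*$, so the RIP \eqref{BCD_linear_cond} applies and gives $\|\bsbu\|_2^2 \ge \|\bsbH \bsbu\|_2^2 / (1-\varepsilon)$. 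The cross term $\langle \bsbH \bsbh^{(t)}, \bsbu\rangle = \langle \bsbH \bsbh^{(t)}, \bsbH \bsbu\rangle$ is then handled by Cauchy-Schwarz and a calibrated Young inequality. Obtaining the sharp $1/\sqrt{\vartheta}$ factor in $\kappa$ of \eqref{kappadef}, rather than a crude constant, requires exploiting the structure of $\Theta^{\#}$: a support-splitting of $\bsbu$ into the portion on $\mathrm{supp}(\bsbg^{(t+1)})$ (where $\Theta^{\#}$ acts as a shrunken identity) and the portion on $\mathrm{supp}(\bsbg^*) \setminus \mathrm{supp}(\bsbg^{(t+1)})$ (the missed outliers, of size at most $o^*$), combined with the ``top-$q$'' definition of $\Theta^{\#}$ to upper-bound missed-outlier magnitudes by retained-coordinate magnitudes, yielding a gain that scales like $\sqrt{o^*/q} = 1/\sqrt{\vartheta}$.

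Assembling these pieces gives the recursion $\|\bsbH \bsbh^{(t+1)}\|_2^2 \le \tfrac{1}{1+\kappa}\|\bsbH \bsbh^{(t)}\|_2^2 + E_t$, where $E_t$ bundles the $\nu\|\bsbg^*\|_2$ bias term and the noise inner product. For the latter, since the support of $\bsbu$ is data-dependent, uniform control is obtained by a union bound over all $\binom{n}{(1+\vartheta)o^*}$ sparsity patterns combined with sub-Gaussian tail bounds for $\langle \bsbv, (\bsbI - \bsbH)\bsbeps\rangle$ on each fixed support, delivering $\sup_{\|\bsbv\|_0 \le (1+\vartheta)o^*,\,\|\bsbv\|_2 \le 1} \langle \bsbv, (\bsbI - \bsbH)\bsbeps\rangle^2 \lesssim \sigma^2 \vartheta o^* \log(en/o^*)$ with probability at least $1 - C(n-p)^{-c}$. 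Unrolling the contraction $t-1$ times and summing the geometric series of $E_t$ produces \eqref{BCD_linear_rate}.

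For \eqref{BCD_linear_rate-beta}, the identity $\bsbX(\bsbb^{(t)} - \bsbb^*) = \bsbH(\bsbg^* - \bsbg^{(t)}) + \bsbH \bsbeps$ (immediate from $\bsbX \bsbb^{(t)} = \bsbH(\bsby - \bsbg^{(t)})$) lets me transfer between $\|\bsbX(\bsbb^{(t)} - \bsbb^*)\|_2^2$ and $\|\bsbH \bsbh^{(t)}\|_2^2$ at the cost of $\|\bsbH \bsbeps\|_2^2$, which under independent sub-Gaussian$(0,\sigma^2)$ coordinates is a sub-exponential quadratic form of effective dimension $p$ and hence $\lesssim p\sigma^2$ with probability at least $1 - C e^{-cp}$ by a Hanson-Wright-type inequality. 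Applying this transfer at the initialization (converting $\|\bsbH \bsbh^{(1)}\|_2^2$ to $\|\bsbX(\bsbb^{(1)} - \bsbb^*)\|_2^2$ modulo $p \sigma^2$) and again at time $t$ closes the argument. I expect the chief obstacle to be the first item above, namely refining the naive RIP-plus-Cauchy-Schwarz bookkeeping to extract the sharp $1/\sqrt{\vartheta}$ factor in $\kappa$; a secondary subtlety is stitching the two high-probability events with consistent $(n-p)^{-c}$- and $e^{-cp}$-type tails so the final failure probability has the stated form.
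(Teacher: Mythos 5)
Your treatment of part (b) is correct and is essentially the paper's argument in different clothing: your ``basic inequality'' from comparing $\bsbg^{(t+1)}=\Theta^{\#}(\bsbs^{(t)};q,\nu)$ against the feasible competitor $\bsbg^*$, once strengthened by the support-splitting refinement you describe, is exactly the paper's Lemma \ref{lemma:bound without design} (quoted from prior work), and after the $\tfrac12$--$\tfrac12$ Young step on $\langle \bsbH\bsbh^{(t)},\bsbu\rangle$ your display coincides term-for-term with the paper's inequality \eqref{BCD_linear_rate-3}; the $1-1/\sqrt{\vartheta}$ factor arising from comparing missed coordinates of $\bsbg^*$ with falsely retained top-$q$ coordinates is precisely the mechanism behind the bound $\mathcal L(\mathcal J,\hat{\mathcal J})\le (o^*/q)^{1/2}$. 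Your noise control by a union bound over the $\binom{n}{(1+\vartheta)o^*}$ supports is equivalent to the paper's Lemma \ref{lemma:noise_bound}, and the transfer to \eqref{BCD_linear_rate-beta} via $\bsbX(\bsbb^{(t)}-\bsbb^*)=\bsbH(\bsbg^*-\bsbg^{(t)})+\bsbH\bsbeps$ and Hanson--Wright is the same as in the paper. The one genuine omission is part (a): your proposal never addresses the unconditional $O(1/T)$ bound on $\min_t\|\bsbH(\bsbg^{(t)}-\bsbg^{(t+1)})\|_2^2$. It does not follow from your contraction (which needs RIP and the noise model), but it is short: writing $f(\bsbg)=\frac12\|(\bsbI-\bsbH)(\bsbg-\bsby)\|_2^2+\frac{\nu}{2}\|\bsbg\|_2^2$ and observing that $\bsbg^{(t+1)}$ minimizes the surrogate $f(\bsbg)+\frac12\|\bsbg-\bsbg^{(t)}\|_2^2-\frac12\|(\bsbI-\bsbH)(\bsbg-\bsbg^{(t)})\|_2^2$ over $\|\bsbg\|_0\le q$ gives $f(\bsbg^{(t)})-f(\bsbg^{(t+1)})\ge \frac12\|\bsbH(\bsbg^{(t+1)}-\bsbg^{(t)})\|_2^2$, and telescoping yields the claim; you should add this to complete the theorem.
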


The first result in (a) states   a \textit{universal}   convergence rate  of the order       $\mathcal O(1/t)$, and the conclusion is {free} of any    conditions. 
On the other hand, this rate   
   does not take the model {parsimony}   into full account.

The statistical error convergence shown  in (b) is    surprising and delightful.
Although $\Theta^\#$ is  not    a nonexpansive mapping,  let alone  a contraction,  IQ   approaches
the statistical target
\emph{geometrically} fast  with high probability, as long as   $\theta$ or  $\nu$ is properly large:
\begin{align}
\varepsilon + (1 - 1/\sqrt \vartheta) \nu>1/\sqrt \vartheta. \label{epsbound}
\end{align}
Such a linear convergence result  in the discrete nonconvex setup   is novel to the best of our knowledge.
A  positive $\nu$ can improve the  linear  rate parameter $\kappa$. The error bound remains the same order  as long as           $  {\nu}  \|\bm\gamma^*\|_2^2$  is dominated by  a constant multiple of   $  \sigma^2 \vartheta  o^*\log\frac{en}{o^* }$. (The best value of $\nu$ should be  data-adaptive, but fixing it  at a mild value already gives  satisfactory performance.)
The statistical error does not vanish  as $t\rightarrow \infty$ due to the existence of noise, but   Section \ref{sec:tuning}  will show that $p\sigma^2 + o^*\sigma^2 \log(en/o^*)$ is   the desired    error rate in robust regression. Hence over-optimization seems to be unnecessary.

Of course, to enjoy the fast  convergence, a  regularity condition on $\bsbH$ must be assumed.   Restricted isometry like      \eqref{BCD_linear_cond}    is widely used and known to hold in compressed sensing and many high-dimensional sparse problems  \citep{Candes2005,hastie2015statistical}.
 Intuitively,
$
\bsbDelta^T (\bsbI - \bm H) \bsbDelta \ge  \varepsilon \|\bm\Delta\|_2^2
$,   $\forall  \bm\Delta:\|\bm\Delta\|_0\le(1+\vartheta)o^*$   means all principal submatrices  of size $(1+\vartheta)o^* $ of   $\bsbI - \bsbH$ should have   eigenvalues greater than or equal to a positive $\epsilon$. The cone to define the restricted eigenvalue   shrinks when $o^*$ is small. It is well known that RIP is much less demanding than the mutual coherence restrictions on $h_{i,j} (i\ne j)$  \citep{van2009conditions,hastie2015statistical} which are realized by   low-leveraged data since   $|h_{i,j}|\le (h_{ii} h_{jj})^{1/2}$.

  We will show that for strongly convex   losses, pursuing \textit{globally} optimal estimators in our additive regularized framework   can ultimately remove all regularity conditions, namely,     outlier resistance   regardless of    leverage and outlyingness. But on  regular datasets this may  be    unnecessary. The global minimization  comes at a heavy computational cost:  multiple random starts have to be used due to  the nonconvexity, and experience shows that even when  $p=50$, the    subset sampling popularly      used in robust statistics already becomes ineffective and   expensive  \citep{rousseeuw1997recent}.

So a legitimate question on large data  is how to \textit{relax} the condition required for   statistical accuracy while  maintaining   computational efficiency. Theorem \ref{thm:BCD_linear_rate} sheds   new light on the  issue. For example, with $\nu=0$,   $\varepsilon > 1/\sqrt \vartheta$     ensures the geometric statistical convergence, and to make it hold more easily, an   obvious  means   is to  increase $\vartheta$.   
The larger the value of  $\vartheta$  takes,  the bigger $\varepsilon$ is, and so the smaller the influence of the initial point according to \eqref{BCD_linear_rate} or \eqref{BCD_linear_rate-beta}. The adverse effect is the larger  error term    $\vartheta  o^*\log\frac{en}{  o^* }$.
But   we can    {\textit{gradually}} tighten the cardinality constraint with a  sequence  $Q(t)$ which deceases from $n$ to $q$ as $t$ increases. The resultant algorithm   is termed as  \textbf{P}rogressive \textbf{I}terative \textbf{Q}uantile-thresholding (\textbf{PIQ}).
  PIQ can substantially improve the mediocre performance of IQ. It is easy to extend Theorem \ref{thm:BCD_linear_rate} from fixed quantiles to   varying quantiles,   but   deriving a theoretically  optimal cooling scheme is challenging. Fortunately,   this is not a big issue in practice; various schemes  seem to do a decent job, such as       $Q(t) = n-at^2$ (quadratic), $Q(t)=2n/(1+\exp(at))$ (sigmoidal),   $Q(t)=n-a\log t$ (logarithmic) and so on.

Our progressive quantile proposal is innovative in that it is not just another intricate sampling scheme and does not    increase the number of   initial points. Compared with  state-of-the-art algorithms, the tables and figures  in   Section \ref{subsec:simu} clearly show that it can
significantly reduce  the need of data resampling (thereby the overall computational time) without sacrificing the statistical accuracy.  In addition, we  must  point out that PIQ  works in the \textit{opposite} direction to  the class of forward pathwise algorithms and boosting methods  \citep{buhlmann2003boosting,needell2009cosamp,zhang2013multi} which all grow a model from the null. Both of our analysis and computer experiments support   the backward  manner in resistant learning.   
\\

An alternative (and perhaps simpler) algorithm can be developed following the principle of majorization-minimization (\textbf{MM}) \citep{Hunter2004tutorial}. The technique is general and will play a vital role in extending   BCD-type PIQ to conquer a general loss.
Recall the concatenated notation $\bar \bsbX = [\bsbX \  \bsbI]$, $\bar \bsbb = [\bsbb^T \  \bsbg^T]^T$.
Instead of directly minimizing the original objective function $f(\bar \bsbb)= \|\bm y-\bar{\bm X}\bar{\bm\beta}\|_2^2/2 + \nu \|\bsbg\|_2^2/2$, we construct a surrogate function $g (\bar \bsbb; \bar \bsbb^-) $ by (joint) linearization
\begin{equation} \label{eq:surrogate}
g_\rho (\bar \bsbb; \bar \bsbb^-) := \|\bm y-\bar{\bm X}\bar{\bm\beta}^-\|_2^2/2 + \langle \bar \bsbX^T (\bar \bsbX \bar \bsbb^- -\bsby), \bar \bsbb - \bar \bsbb^- \rangle + \frac{\rho}{2} \|\bar \bsbb - \bar \bsbb^-\|_2^2 + \frac{\nu}{2}\|\bsbg\|_2^2,
\end{equation}
where $\rho$  is the inverse stepsize parameter to be chosen later.
Then   minimizing    $g $  with respect to $\bar{\bm\beta}$     yields a sequence of iterates
$
\bar{\bm\beta}^{(t+1)}=\arg\min_{\bar{\bm\beta}}g_\rho(\bar{\bm\beta};\bar{\bm\beta}^{(t)}) \text{ s.t. }\|\bm\gamma\|_0\le q.
$ 
After some algebraic manipulations, $g$ can be rephrased  as
\begin{equation}
\begin{split}
g_\rho(\bar \bsbb; \bar \bsbb^-) =\,&\frac{\rho}{2}\|\bar \bsbb - \bar \bsbb^- +  \bar \bsbX^T (\bar \bsbX \bar \bsbb^- - \bsby)/\rho\|_2^2 + \frac{1}{2}\|\bm y-\bar{\bm X}\bar{\bm\beta}^-\|_2^2\\
 &- \frac{1}{2\rho}\|\bar{\bm X}^T(\bar{\bm X}\bar{\bm\beta}^--\bm y)\|_2^2+ \frac{1}{2}\nu\|\bsbg\|_2^2\\
=\,& \frac{\rho}{2}\|\bar \bsbb - \bar \bsbb^- +  \bar \bsbX^T (\bar \bsbX \bar \bsbb^- - \bsby)/\rho\|_2^2 + \frac{\nu}{2}\|\bsbg\|_2^2 +c(\bar \bsbb^-),
\end{split}
\end{equation}
where $c(\bar \bsbb^-)$ does not depend on $\bm\beta$.
Noticing the separability  in $\bsbb$ and $\bsbg$,  the resultant update is given by
\begin{equation} \label{eq:joint_upadte_LTS}
\begin{cases}
\bsbb^{(t+1)} = \bsbb^{(t)} - \bsbX^T (\bsbX \bsbb^{(t)} + \bsbg^{(t)}-\bm y) / \rho,\\
\bsbg^{(t+1)} = \Theta^{\#}(\bsbg^{(t)} - (\bsbX \bsbb^{(t)} + \bsbg^{(t)}-\bm y) / \rho;q, {\nu}/{\rho}).
\end{cases}
\end{equation}
As long as the stepsize is properly small, e.g.,   $\rho >  \|\bar{\bm X}\|_2^2$, \eqref{eq:joint_upadte_LTS} is convergent    (cf.\,Theorem \ref{theorem:IQ_convergence}).
A similar conclusion  to Theorem \ref{thm:BCD_linear_rate}  can be proved, which supports   a progressive   scheme  with $Q(t)$ decreasing to $q$     to relax the requirement of the initial point.

Different from  \eqref{eq:update_rule_BCD}, \eqref{eq:joint_upadte_LTS} updates $(\bm\beta ,  \bm\gamma)$ simultaneously  
and does not involve any matrix inversion or expensive operations. A downside of MM however lies in its smaller step-size, which  not only slows down  convergence but  may (surprisingly) affect    statistical accuracy in resistant learning; see   Section \ref{sec:theory}.

\subsection{Harnessing a general loss} \label{subsec:comp_general}
This part uses the optimization techniques developed for  resistant regression to deal with   a general objective with   $p$  possibly larger than $n$:
\begin{equation}\label{eq:criterion_general_hd}
\min_{(\bsbb , \bsbg )} l(\bsbX \bsbb + \bsbg;\bsby) + P(\varrho \bsbb; \lambda) + \frac{\nu}{2}\|\bsbg\|_2^2 \ \mbox{  s.t. } \mbox{ } \|\bsbg\|_0 \leq q,
\end{equation}
where   $P$ is often   a sparsity-promoting penalty, but can also take the form of a ridge penalty or an $\ell_0$-constraint. Here,  we add  the algorithmic parameter $\varrho>0$   to match the scale of the design and aid   stepsize selection.
With linearization and MM,  we can extend   \eqref{eq:joint_upadte_LTS} to \eqref{eq:joint_upadte_general-b},  \eqref{eq:joint_upadte_general-g} below, applicable  to a broad family of $l$ and $P$.


\begin{thm}\label{theorem:IQ_convergence}
Assume that $\nabla l(\eta; y)$ is $L$-Lipschitz continuous. Given   an arbitrary thresholding rule  $\Theta$ with $\lambda$ as the threshold parameter (cf. Appendix \ref{app:proofiq}),   consider the following algorithm
\begin{numcases} {}
\bsbb^{(t+1)}  = \Theta(\varrho \bsbb^{(t)} - \bsbX^T  \nabla l(\bsbX \bsbb^{(t)} + \bsbg^{(t)}; \bm y)/\varrho; \lambda )/\varrho , \label{eq:joint_upadte_general-b} \\
\bsbg^{(t+1)}  = \Theta^{\#}(\bsbg^{(t)} -\nabla l(\bsbX \bsbb^{(t)} + \bsbg^{(t)}; \bm y) / \varrho^{2};q, {\nu}/{\varrho^2}), \label{eq:joint_upadte_general-g}
\end{numcases}
where $t\ge 0$, $\varrho>0$.
Let   $f(\bar{\bm\beta}) = l(\bar{\bm X}\bar{\bm\beta};\bm y) + P(\varrho \bsbb; \lambda) + \nu\|\bsbg\|_2^2/2$, where   $P$ is defined based on  $\Theta$:
$
P(\theta; \lambda) - P(0; \lambda) = \int_0^{|\theta|} (\sup \{s:\Theta(s;\lambda) \leq u\} - u) \rd u + q(\theta; \lambda)
$ 
for some  $q(\cdot, \lambda)\ge 0$  satisfying  $q(\Theta(t;\lambda);\lambda) = 0, \forall t \in \mathbb R$.
Then,  given any starting point $\bar{\bm\beta}^{(0)}$ and any $\varrho>0$, $
f(\bar \bsbb^{(t+1)})\le  f(\bar \bsbb^{(t)})-(\bar{\bm\beta}^{(t+1)}-\bar{\bm\beta}^{(t)})^T(\varrho^2 \bm I- L \bar{\bm  X}^T\bar{\bm X})(\bar{\bm\beta}^{(t+1)}-\bar{\bm\beta}^{(t)})/2
$ 
and
 the following convergence rate holds
\begin{equation}
\mathop{\min}_{t\le T} \  (\bar{\bm\beta}^{(t+1)}-\bar{\bm\beta}^{(t)})^T(\varrho^2 \bm I- L \bar{\bm  X}^T\bar{\bm X})(\bar{\bm\beta}^{(t+1)}-\bar{\bm\beta}^{(t)})
\le \frac{2f(\bar{\bm\beta}^{(0)})}{T+1}, \forall T\ge 0.
\end{equation}
  In particular, if  $\varrho^2  >  {L}M_{  \bar \bsbX}(p, 2q ) $ (cf. Section \ref{sec:intro}) and $\nu>0$,  any limit point $(\hat \bsbb, \hat\bsbg)$ of $(\bsbb^{(t)}, \bsbg^{(t)})$  satisfies
$ 
\hat\bsbb = \Theta(\varrho\hat \bsbb - \bsbX^T \nabla l(\bsbX \hat\bsbb + \hat \bsbg; \bm y)/\varrho; \lambda)/\varrho$ and $ 
\hat \bsbg = \Theta^{\#}(\hat \bsbg -  \nabla l(\bsbX \hat \bsbb + \hat \bsbg;\bm y) / \varrho^2;q, {\nu}/{\varrho}^2)
$ 
provided $\Theta$ is continuous at $\varrho\hat \bsbb - \bsbX^T \nabla l(\bsbX \hat \bsbb +\hat  \bsbg; \bm y)/\varrho$.
\end{thm}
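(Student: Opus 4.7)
The plan is to interpret the update \eqref{eq:joint_upadte_general-b}--\eqref{eq:joint_upadte_general-g} as one step of majorization--minimization for the surrogate
\[
g(\bar{\bm\beta};\bar{\bm\beta}^{(t)}) := \tfrac{\varrho^2}{2}\|\bar{\bm\beta}-\bar{\bm\beta}^{(t)}\|_2^2 + \langle\bar{\bm X}^T\nabla l(\bar{\bm X}\bar{\bm\beta}^{(t)};\bm y),\bar{\bm\beta}-\bar{\bm\beta}^{(t)}\rangle + P(\varrho\bm\beta;\lambda)+\tfrac{\nu}{2}\|\bm\gamma\|_2^2,
\]
minimized under $\|\bm\gamma\|_0\le q$. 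Completing the square in $(\bm\beta,\bm\gamma)$ separately (the surrogate is separable in the two blocks) shows the $\bm\gamma$-block reduces to scalar problems $(\gamma_i-s_i)^2/2+(\nu/(2\varrho^2))\gamma_i^2$ under the overall cardinality cap, which $\Theta^\#$ solves exactly; the $\bm\beta$-block, after rescaling $\theta=\varrho\beta_j$, reduces to coordinate-wise scalar subproblems $(\theta-u_j)^2/2+P(\theta;\lambda)$ solved by $\Theta$. The algebraic lemma I invoke here is the standard thresholding--penalty correspondence: the $P$ built from the stated integral formula makes $\Theta(u;\lambda)$ a global minimizer of $(\theta-u)^2/2+P(\theta;\lambda)$ because the free addendum $q(\cdot;\lambda)\ge 0$ vanishes on the range of $\Theta$, so replacing the base penalty implicitly matched to $\Theta$ by the full $P$ preserves optimality.

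I would then combine the surrogate-descent inequality $g(\bar{\bm\beta}^{(t+1)};\bar{\bm\beta}^{(t)})\le g(\bar{\bm\beta}^{(t)};\bar{\bm\beta}^{(t)})$ with the quadratic upper bound afforded by $L$-smoothness of $l\circ\bar{\bm X}$,
\[
l(\bar{\bm X}\bar{\bm\beta}^{(t+1)};\bm y)\le l(\bar{\bm X}\bar{\bm\beta}^{(t)};\bm y)+\langle\bar{\bm X}^T\nabla l(\bar{\bm X}\bar{\bm\beta}^{(t)};\bm y),\bar{\bm\beta}^{(t+1)}-\bar{\bm\beta}^{(t)}\rangle+\tfrac{L}{2}(\bar{\bm\beta}^{(t+1)}-\bar{\bm\beta}^{(t)})^T\bar{\bm X}^T\bar{\bm X}(\bar{\bm\beta}^{(t+1)}-\bar{\bm\beta}^{(t)}).
\]
Adding these cancels the common linear term $\langle\bar{\bm X}^T\nabla l,\cdot\rangle$ and produces the advertised monotone decrease. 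Telescoping $t=0,\dots,T$, using $f\ge 0$, and dividing by $T+1$ gives the $\mathcal O(1/T)$ bound on $\min_{t\le T}(\bar{\bm\beta}^{(t+1)}-\bar{\bm\beta}^{(t)})^T(\varrho^2\bm I-L\bar{\bm X}^T\bar{\bm X})(\bar{\bm\beta}^{(t+1)}-\bar{\bm\beta}^{(t)})$.

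For the limit-point characterization under $\varrho^2>LM_{\bar{\bm X}}(p,2q)$ and $\nu>0$, the crucial observation is that $\|\bm\gamma^{(t)}\|_0,\|\bm\gamma^{(t+1)}\|_0\le q$ forces the $\bm\gamma$-block of $\bar{\bm\beta}^{(t+1)}-\bar{\bm\beta}^{(t)}$ to have at most $2q$ nonzeros, so the restricted bound
\[
(\bar{\bm\beta}^{(t+1)}-\bar{\bm\beta}^{(t)})^T(\varrho^2\bm I-L\bar{\bm X}^T\bar{\bm X})(\bar{\bm\beta}^{(t+1)}-\bar{\bm\beta}^{(t)})\ge (\varrho^2-LM_{\bar{\bm X}}(p,2q))\|\bar{\bm\beta}^{(t+1)}-\bar{\bm\beta}^{(t)}\|_2^2
\]
applies with a strictly positive prefactor. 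Combined with the summability coming from the descent, this forces $\bar{\bm\beta}^{(t+1)}-\bar{\bm\beta}^{(t)}\to\bm 0$, where $\nu>0$ supplies the $\bm\gamma$-boundedness needed to extract a subsequential limit. Passing to the limit in \eqref{eq:joint_upadte_general-b}--\eqref{eq:joint_upadte_general-g} yields the asserted fixed-point identities---the $\bm\beta$-equation via the explicitly assumed continuity of $\Theta$ at the indicated argument, and the $\bm\gamma$-equation via the $\Theta^\#$-uniqueness assumption, which guarantees local continuity of $\Theta^\#$ at the limit. The main obstacle I anticipate is the book-keeping between $P$ and the base penalty implicitly induced by $\Theta$: it is the nonnegativity and zero-on-range property of $q(\cdot;\lambda)$ that lets the surrogate optimality transfer to the original $P$-based objective $f$, which is what allows the entire descent chain to close.
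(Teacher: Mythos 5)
Your proposal is correct and takes essentially the same route as the paper's proof: the same block-separable linearized surrogate, the same thresholding--penalty correspondence for the $\bm\beta$- and $\bm\gamma$-subproblems, the same descent bookkeeping (which the paper phrases as the generalized Bregman inequality $\bm\Delta_{\tilde l}(\bar\bsbb,\bar\bsbb^-)\le \frac{L}{2}\|\bar\bsbX(\bar\bsbb-\bar\bsbb^-)\|_2^2$ rather than your explicit addition of the smoothness bound to the surrogate-descent inequality, but these are algebraically identical), and the same restricted-spectral-bound plus continuity argument for the limit-point equations.
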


 In robust regression, \eqref{eq:joint_upadte_general-b}   degenerates to the  linear $\bsbb$-update in \eqref{eq:joint_upadte_LTS} with $\rho = \varrho^2$, but notice the distinctive    scaled form of   \eqref{eq:joint_upadte_general-b}   associated with a nonlinear thresholding operator $\Theta$.  The Lipschitz condition   is satisfied by the squared error loss, logistic deviance, and Huber's loss, among many others, but is only used to give a universal stepsize. In fact, using a backtracking line search  \citep{Boyd2004},      $L$  need not be known in implementation; see Remark \ref{appremlinesearch}.  The equations satisfied by $(\hat \bsbb, \hat \bsbg)$ actually define  a  broader class of estimators that will  be investigated  in Section \ref{sec:theory}. 
\\

In the remaining, we use BCD  to tackle \eqref{eq:criterion_general_hd}. The   alternating update involves solving two sub-problems:
\begin{align}
\bm\gamma^{(t+1)} & = \mathop{\arg\min}_{\bm\gamma} l( \bm\gamma+\bm X\bm\beta^{(t)}) + \nu\|\bm\gamma\|_2^2/2 \ \, \text{ s.t. }\|\bm\gamma\|_0\le q, \label{eq:general_BCD1} \\
\bm\beta^{(t+1)} & =  \mathop{\arg\min}_{\bm\beta} l(\bm X\bm\beta + \bm\gamma^{(t+1)}) + P(\varrho\bm\beta;\lambda). \label{eq:general_BCD2}
\end{align}
The first problem can be addressed using a similar MM trick as in Theorem \ref{theorem:IQ_convergence}. But there is a more efficient way to   directly locate the optimal support of  $\bsbg $      when $l(\bsbe; \bsby)= \sum_{i=1}^n l_0(\eta_i; y_i)$.
Specifically, with $\tilde l(t;a,b) = l_0(a+t,b) + \nu t^2/2$, the criterion becomes $\sum_{i:\gamma_i = 0}l_0(\bm x_i^T\bm\beta^{(t)}, y_i) + \sum_{i:\gamma_i\ne 0}\tilde l(\gamma_i;\bm x_i^T\bm\beta^{(t)},y_i)$.
Let $l_{0,i} = l_0(\bm x_i^T\bm\beta^{(t)}; y_i)$ and $\tilde l_{\min,i}= \min_t \tilde l(t;\bm x_i^T\bm\beta^{(t)}, y_i)$.  Then the index set $I$ associated with the $q$ largest  differences $\delta_i: = l_{0,i} - \tilde l_{\min,i}$ ($1\le i \le n$)   gives the support.   When $\nu = 0$ and $\tilde l_{\min,i} = 0$,  simply sorting the losses $l_{0,i}$   achieves the goal.
Next, one merely needs to solve   $q$ univariate problems $\min_s\tilde l(s;\bm x_i^T\bm\beta^{(t)}, y_i)$ to get     $\bsbg^{(t+1)}  $. 

For \eqref{eq:general_BCD2},   an iterative MM algorithm similar to \eqref{eq:joint_upadte_general-b} can be developed, and we will show that  it suffices to  performing the  update  just once,
$
\bm\beta^{(t+1)}=\Theta(\varrho\bm\beta^{(t)}-\bm X^T \nabla l(\bm X\bm\beta^{(t)}+\bm\gamma^{(t+1)}; \bm y)/\varrho; \lambda)/\varrho
$. 
   Again, we  advocate   pursuing $\bsbb$-sparsity via the hybrid ``$\ell_0$ + $\ell_2$'' regularization   in place of \eqref{eq:general_BCD2},
$$
\bm\beta^{(t+1)} =  \mathop{\arg\min}_{ } l(\bm X\bm\beta + \bm\gamma^{(t+1)}) +  \nu_\beta\|\bsbb\|_2^2/2  \mbox{ s.t. }\|\bsbb\|_0\le q_{\beta}.
$$
By use of linearization, the resultant   progressive quantile-thresholding   is
given by$$
\bm\beta^{(t+1)} = \Theta^{\#}(\bsbb^{(t)} - \bsbX^T \nabla l(\bsbX \bsbb^{(t)} +\bsbg^{(t+1)};\bm y)/\rho ; q_\beta^{(t)}, \nu_\beta/\rho)
, $$ 
where  $\rho \ge   L M_{  \bsbX}(2q_\beta ) $ (recall      the definition of $M_{  \bsbX}$ in Section \ref{sec:intro}) and the sequence   $q_\beta^{(t)}$   drops to $q_\beta$ eventually.   (One might   want to  merge the two $\ell_0$-constraints into $\|\bar \bsbb\|_0\le q_\beta + q_\gamma$ to simplify the computation, but this would result in a larger statistical error due to the enlarged search space; see the analysis in Section \ref{sec:theory}.)

At the end of the section, we make a comparison between the  MM-type and the BCD-type  PIQ algorithms. The first kind  has a single-loop structure and maintains low per-iteration complexity.     We do observe that MM is   faster than BCD in  some   large-$p$   classification problems (say when $p>500$), but  in   the conventional   $n>p$     setup and   many other scenarios, BCD is more accurate and efficient.   We will mainly use the BCD-type PIQ in applications and analysis unless otherwise specified.

\section{Statistical Accuracy of  A-estimators}
\label{sec:theory}


A major theoretical challenge brought by the class of  estimators  from our computational algorithms     is their lack of global optimality and stationarity,
 which significantly   differentiates them from  conventional \textbf{M}-estimators and \textbf{Z}-estimators  \citep{vdV98}. We call such estimators \textbf{A}-estimators due to  their \textit{alternative}   optimality nature.
In fact,  even the alternative optimality
may  be merely in a local sense. As far as we know, there is a lack of statistical accuracy studies   for  such algorithm-driven estimators. In this section, we aim for  developing new   techniques to reveal tight error rates of A-estimators  as well as the    comprehensive roles of algorithmic parameters in resistant learning, in both low and high dimensions.

Introducing the notion of noise in  the non-likelihood setting is  an essential component.
 We define the \textit{effective noise} associated with a statistical truth $ \bar \bsbb^* = [\bm\beta^{*T}, \bm\gamma^{*T}]^T$   by
\begin{equation} \label{noise-def}
\bm\epsilon = -\nabla l(\bar \bsbX \bar \bsbb^*  ),
\end{equation}
which is not affected by the regularizer.
In a GLM with cumulant function $b$ and canonical link function $g=(b')^{-1}$, the deviance based loss is given by $ l(\bm\eta) = -\langle\bm y, \bm\eta\rangle + \langle\bm 1, b(\bm\eta)\rangle$, and so $$\bm\epsilon = \bm y - g^{-1}(\bm X\bm\beta^*+\bm\gamma^*) = \bm y-\mathbb E(\bm y).$$ 
In general, however, the effective noise jointly determined by the loss and $\bsby$ may not be the raw noise on $\bsby$. For example, a robust loss associated with a bounded $\psi$ function always gives rise to a bounded  $\bsbeps$,  thereby sub-Gaussian regardless of the distribution of $\bsby$, making  our analysis nonparametric in nature; the same is true for  many classification losses.

Unless otherwise specified, we assume that $\bm\epsilon$ is a \textit{sub-Gaussian} random vector with mean zero and scale bounded by $\sigma$ (cf. Definition \ref{def:subgauss} in Appendix \ref{proof:resistregalg}).  This gives  a broad family including Gaussian  and bounded random variables; in particular, any Lipschitz $l$ results in a sub-Gaussian effective noise. However, our analysis  is  not restricted by sub-Gaussianity;   Theorem \ref{theorem:generalnoise} in Appendix \ref{subsec:extra} gives  a universal risk bound when  $\bm\epsilon$  has a general Orlicz  norm.  Also, we allow        $\gamma_i^*$ to take    arbitrarily large values to capture extreme anomalies; another possible way is to     assume $\gamma_i$ are i.i.d. following a certain   distribution, but treating $\bsbg^*$ as an $n$-dimensional unknown parameter vector (with sparsity) is     convenient in detecting outliers and in handling  models with  non-additive noise.


Our main tool to tackle the diverse   form of the loss        is the \textit{generalized Bregman  function} (GBF) \citep{SheBregman}: given any     differentiable $l$\begin{equation}
\bm\Delta_l(\bm\alpha,\bm\beta) := l(\bm\alpha) - l(\bm\beta) - \langle\nabla l(\bm\beta), \bm\alpha-\bm\beta\rangle.
\end{equation}
If, further, $l$ is strictly convex, $\bm\Delta_l(\bm\alpha,\bm\beta)$ becomes the standard Bregman divergence $\mathbf D_\psi(\bm\alpha,\bm\beta)$  \citep{Bregman1967}.   When $l(\bm\beta) = \|\bm\beta\|_2^2/2$, $\bm\Delta_l(\bm\alpha,\bm\beta) = \|\bm\alpha-\bm\beta\|_2^2/2$,   abbreviated as $\mathbf D_2(\bm\alpha,\bm\beta)$. In general, $\Delta_l(\bm\alpha,\bm\beta)$ may not be symmetric, and we define   its symmetrization   by   $\bar{\bm\Delta}_l(\bm\alpha,\bm\beta):=(\bm\Delta_l(\bm\alpha,\bm\beta)+\bm\Delta_l(\bm\beta,\bm\alpha))/2$.

We work in a general setting with $p$ possibly larger than $n$ unless otherwise mentioned.
   For simplicity,     all  $\ell_2$-shrinkage parameters are set  to    zero, and   we assume   $\nabla l$ is      $L$-Lipschitz continuous, and    take $L=1 $ without loss of generality.

Let's first consider       the
advocated doubly constrained form introduced  at the end of Section \ref{sec:algorithm} to  exemplify  the error bounds. As aforementioned, to make the statistical error study  more realistic,  rather than   restricting to  the set of globally optimal solutions, one   needs to pay particular attention to the A-estimators defined by
\begin{subequations}\label{eq:globalopt_BCD}
\begin{align}[left ={ \empheqlbrace}]
      \hat \bsbb \in \arg \min_{ \bm\beta:\|\bm\beta\|_0 \le q_\beta }\allowbreak l(\bm X\bm\beta + \hat \bsbg; \bm y) \label{eq:globalopt_BCDa} \\  \hat \bsbg \in \arg \min_{ \bm\gamma:  \|\bm\gamma\|_0 \le q_\gamma
} l(\bm X\hat \bsbb
+   \bsbg; \bm y). \label{eq:globalopt_BCDb}
\end{align}
\end{subequations}
 But the $\hat \bsbb$ delivered  from the fast   PIQ algorithm  may not possess the   global  alternative optimality in \eqref{eq:globalopt_BCDa}.  The good news is that  according to     Theorem \ref{th:Theta_equation}  in Appendix \ref{proof:Theta_equation},  all these global or local A-estimators satisfy
\begin{equation}\label{eq:fix_point_BCD}
\begin{cases}
\hat \bsbb = \Theta^{\#}(\hat \bsbb - \bsbX^T \nabla l(\bsbX \hat \bsbb+ \hat \bsbg;\bm y)/\rho ; q_\beta) \\
\hat \bsbg = \Theta^{\#}(\hat \bsbg-  \nabla l(\bsbX \hat \bsbb+ \hat \bsbg;\bm y) ;q_\gamma),
\end{cases}
\end{equation}  for any $\rho> M_{\bsbX}(2q_\beta)$. Our statistical accuracy analysis is   applicable to the whole   class (and MM-type non-global estimators as well,  with  slight modification). For clarity, define $o^*=\|\bm\gamma^*\|_0$, $s^*=\|\bm\beta^*\|_0$.


\begin{thm} \label{thm:fixed-point}
Let  $(\hat{\bm\beta}, \hat{\bm\gamma})$ be an A-estimator     that satisfies \eqref{eq:fix_point_BCD} for some $\rho>0$, $  q_\gamma= \vartheta o^*,   q_\beta = \vartheta s^* $ ($\vartheta\ge 1$) and $\|\hat{\bm\gamma}\|_0 = q_\gamma$, $\|\hat{\bm\beta}\|_0 = q_\beta$.
 Let $\bar \bsbX_\rho =[  {\bsbX}/{\sqrt \rho} \ \ \bsbI]$. Assume that there exists some $\delta >0$ such that
\begin{equation} \label{fixed-point-condition}
\begin{split}
  (2\bar{\bm\Delta}_l -  \delta\Breg_2 )(\bar \bsbX_{\rho} \bar \bsbb , \bar \bsbX_{\rho} \bar \bsbb ')
 \ge \frac{1}{\sqrt{\vartheta}} \Breg_2(  \bar \bsbb,  \bar \bsbb ')   \end{split}
\end{equation}
for sparse $\bar \bsbb, \bar \bsbb'$ satisfying  $\|\bm\beta\|_0\le \vartheta s^*, \|\bm\beta'\|_0 \le  s^*$, $\|\bm\gamma\|_0 \le \vartheta o^*, \| \bm\gamma'\|_0 \le  o^*$. Then the following   error bound
holds\begin{equation} \label{fixed-point-rate}
\EE [\Breg_2 (\bar \bsbX \hat {\bar \bsbb}, \bar \bsbX \bar \bsbb^*  ) ]
\lesssim \frac{1}{\delta^2}\Big \{\vartheta o^*\sigma^2\log\frac{en}{ \vartheta o^*} + \vartheta s^* \sigma^2\log\frac{ep}{ \vartheta s^*}+{\sigma^2 }\Big\}
, \end{equation}
 where $C,c$ are positive constants.

 More generally, if   the starting point $(\bsbb^{(0)},\bsbg^{(0)})$: $ \| \bsbb^{(0)}\|_0\le q_\beta, \|\bsbg^{(0)}\|_0\le q_\gamma$  already satisfies $\EE [\Breg_2 (\bar \bsbX  {\bar \bsbb}^{(0)}, \bar \bsbX \bar \bsbb^*  ) ]= O(R) \allowbreak\{\vartheta o^*\sigma^2 \allowbreak \log\frac{en}{ \vartheta o^*} + \vartheta s^* \sigma^2\log\frac{ep}{ \vartheta s^*} +\sigma^2 \}$ with $  +\infty\ge R \ge 1$,  to obtain the same conclusion \eqref{fixed-point-rate}, \eqref{fixed-point-condition}   can be relaxed to
\begin{equation}\big\{2(1-\frac{1}{R})\bar{\bm\Delta}_l +\frac{C}{  R(R\delta \vee 1)} \breg_l -  \delta\Breg_2 \big\}(\bar \bsbX_{\rho} \bar \bsbb , \bar \bsbX_{\rho} \bar \bsbb ')
 \ge \frac{(1-1/R)}{\sqrt{\vartheta}} \Breg_2(  \bar \bsbb,  \bar \bsbb ')\label{eq:genregcond}   \end{equation}
for some   constant $C>0$.
\end{thm}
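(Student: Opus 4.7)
The plan is to combine three ingredients: (i) a basic inequality extracted from the $\Theta^{\#}$-optimality \eqref{eq:fix_point_BCD} of the A-estimator, (ii) the generalized-Bregman RSC condition \eqref{fixed-point-condition}, and (iii) a sparse sub-Gaussian deviation bound on the effective noise $\bm\epsilon$, welded together by Young's inequality to furnish the advertised $1/\delta^{2}$ prefactor.

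I would first turn the fixed-point equation into a basic inequality. Since $\hat\bsbg$ is the $\ell_0$-constrained projection of $\hat\bsbg-\nabla l(\bar\bsbX\hat{\bar\bsbb})$ onto $\{\|\cdot\|_0\le q_\gamma\}$, and $\bsbg^*$ is admissible as a comparator because $o^*\le q_\gamma$, expanding $\|\hat\bsbg-(\hat\bsbg-\nabla l)\|_2^2\le\|\bsbg^*-(\hat\bsbg-\nabla l)\|_2^2$ gives $2\langle\nabla l,\hat\bsbg-\bsbg^*\rangle\le\|\hat\bsbg-\bsbg^*\|_2^2$. The same manipulation on the $\hat\bsbb$-equation (after absorbing the $1/\rho$ stepsize) produces $2\langle\nabla l,\bsbX(\hat\bsbb-\bsbb^*)\rangle\le\rho\|\hat\bsbb-\bsbb^*\|_2^2$. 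Summing, using $\bm\epsilon=-\nabla l(\bar\bsbX\bar\bsbb^*)$ and the identity $2\bar{\bm\Delta}_l(\bm a,\bm b)=\langle\nabla l(\bm a)-\nabla l(\bm b),\bm a-\bm b\rangle$, I obtain
\[
4\bar{\bm\Delta}_l(\bar\bsbX\hat{\bar\bsbb},\bar\bsbX\bar\bsbb^*)\;\le\;\rho\|\hat\bsbb-\bsbb^*\|_2^2+\|\hat\bsbg-\bsbg^*\|_2^2+2\langle\bm\epsilon,\bar\bsbX(\hat{\bar\bsbb}-\bar\bsbb^*)\rangle.
\]
A reparametrization $\tilde{\bar\bsbb}:=[\sqrt\rho\,\bsbb^T,\bsbg^T]^T$ makes $\bar\bsbX_\rho\tilde{\bar\bsbb}=\bar\bsbX\bar\bsbb$ and $\|\tilde{\bar\bsbb}-\tilde{\bar\bsbb}'\|_2^2=\rho\|\bsbb-\bsbb'\|_2^2+\|\bsbg-\bsbg'\|_2^2$, aligning everything with the norms appearing in \eqref{fixed-point-condition}.

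Next I would invoke \eqref{fixed-point-condition} at $\bar\bsbb=\tilde{\hat{\bar\bsbb}}$, $\bar\bsbb'=\tilde{\bar\bsbb}^*$; the cardinality-saturation hypotheses $\|\hat\bsbb\|_0=q_\beta=\vartheta s^*$, $\|\hat\bsbg\|_0=q_\gamma=\vartheta o^*$ together with the truth's sparsity verify the premise. Doubled, this gives $4\bar{\bm\Delta}_l\ge 2\delta\,\Breg_2(\bar\bsbX\hat{\bar\bsbb},\bar\bsbX\bar\bsbb^*)+(2/\sqrt\vartheta)\,\Breg_2(\tilde{\hat{\bar\bsbb}},\tilde{\bar\bsbb}^*)$, and combining with the basic inequality cancels the bulk of the parameter-norm contribution to leave
\[
\delta\|\bar\bsbX(\hat{\bar\bsbb}-\bar\bsbb^*)\|_2^2\;\le\;\bigl(1-1/\sqrt\vartheta\bigr)\bigl[\rho\|\hat\bsbb-\bsbb^*\|_2^2+\|\hat\bsbg-\bsbg^*\|_2^2\bigr]+2\langle\bm\epsilon,\bar\bsbX(\hat{\bar\bsbb}-\bar\bsbb^*)\rangle.
\]
To dispose of the residual parameter-norm term I would couple \eqref{fixed-point-condition} with the Lipschitz upper bound $2\bar{\bm\Delta}_l\le 2\,\Breg_2$ on the image ($L=1$), which gives the implicit restricted eigenvalue $\rho\|\hat\bsbb-\bsbb^*\|_2^2+\|\hat\bsbg-\bsbg^*\|_2^2\le(2-\delta)\sqrt\vartheta\,\|\bar\bsbX(\hat{\bar\bsbb}-\bar\bsbb^*)\|_2^2$ on the relevant sparse cone, reducing everything to a scalar inequality of the form $c(\delta,\vartheta)\,A\le 2\langle\bm\epsilon,\bar\bsbX(\hat{\bar\bsbb}-\bar\bsbb^*)\rangle$.

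The stochastic term is then handled by a Cauchy-Schwarz-plus-Gaussian-width estimate on the sparse image cone $\{\bar\bsbX\bar\bsbv:\|\bsbb\|_0\le 2\vartheta s^*,\|\bsbg\|_0\le 2\vartheta o^*,\|\bar\bsbX\bar\bsbv\|_2=1\}$, whose Gaussian width is $\lesssim\sqrt{\vartheta s^*\log(ep/\vartheta s^*)+\vartheta o^*\log(en/\vartheta o^*)}$ by a union-bound-plus-covering argument over $\binom{p}{2\vartheta s^*}\binom{n}{2\vartheta o^*}$ supports. Young's inequality with step $\delta/2$ absorbs $\|\bar\bsbX(\hat{\bar\bsbb}-\bar\bsbb^*)\|_2$ into the LHS and produces the announced $1/\delta^2$ prefactor; integrating the sub-Gaussian tail converts the high-probability bound to an expectation one and contributes the additive $\sigma^2$ term. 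The hardest technical issue, in my view, is the parameter-vs-image bookkeeping in the main inequality: the coefficient $(1-1/\sqrt\vartheta)$ on the parameter norm must be dominated via the implicit restricted eigenvalue without letting the multiplicative constant blow up in $\vartheta$. The more general statement \eqref{eq:genregcond} would be proved by a bootstrap/localization argument: the hypothesis that $\EE\,\Breg_2(\bar\bsbX\bar\bsbb^{(0)},\bar\bsbX\bar\bsbb^*)=O(R)\cdot(\text{target rate})$ restricts the analysis to an $R$-neighborhood of $\bar\bsbb^*$, letting one trade part of $2\bar{\bm\Delta}_l$ for the weaker asymmetric piece $(C/R)\breg_l$ at the cost of the $(1-1/R)$ discount on the parameter-norm RHS, while preserving the same expectation rate.
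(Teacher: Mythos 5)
Your overall architecture matches the paper's --- a basic inequality from the fixed-point characterization, cancellation against the Bregman restricted-isometry condition, then sparse sub-Gaussian concentration plus Young's inequality for the $1/\delta^2$ factor --- but there is a genuine gap in the deterministic step, and you have in fact put your finger on it without resolving it. The naive projection-optimality expansion you use gives $\langle\nabla l(\bar\bsbX\hat{\bar\bsbb}),\hat\bsbg-\bsbg^*\rangle\le\tfrac12\|\hat\bsbg-\bsbg^*\|_2^2$, i.e.\ a full unit coefficient on the parameter norm, whereas the regularity condition \eqref{fixed-point-condition} only supplies $\tfrac{1}{\sqrt\vartheta}\Breg_2(\bar\bsbb,\bar\bsbb')$ on its right-hand side. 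After combining, you are left with the residual $(1-1/\sqrt\vartheta)\bigl[\rho\|\hat\bsbb-\bsbb^*\|_2^2+\|\hat\bsbg-\bsbg^*\|_2^2\bigr]$, and your proposed cure --- the implicit restricted eigenvalue $\rho\|\hat\bsbb-\bsbb^*\|_2^2+\|\hat\bsbg-\bsbg^*\|_2^2\le(2-\delta)\sqrt\vartheta\,\|\bar\bsbX(\hat{\bar\bsbb}-\bar\bsbb^*)\|_2^2$ obtained from \eqref{fixed-point-condition} plus the Lipschitz bound --- yields a net coefficient $\delta-(\sqrt\vartheta-1)(2-\delta)$ on $\|\bar\bsbX(\hat{\bar\bsbb}-\bar\bsbb^*)\|_2^2$, which is negative for every $\vartheta$ outside a narrow neighborhood of $1$ (unless $\delta$ is close to $2$). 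The scalar inequality is then vacuous, and the central feature of the theorem --- that increasing $\vartheta$ \emph{relaxes} the regularity condition --- is lost.

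The missing ingredient is the sharpened quantile-thresholding optimality bound (Lemma A.3 in the appendix, imported from Lemma 9 of She (2017)): because $\hat\bsbg=\Theta^{\#}(\hat\bsbg-\nabla l;\,q_\gamma)$ has support of size $q_\gamma=\vartheta o^*$ while the comparator $\bsbg^*$ has support of size only $o^*$, the thresholding step beats the comparator by at least $\{1-\mathcal L(\mathcal J(\bsbg^*),\mathcal J(\hat\bsbg))\}\|\hat\bsbg-\bsbg^*\|_2^2/2$ with $\mathcal L\le\vartheta^{-1/2}$, which upgrades your inequality to $\langle\nabla l,\hat\bsbg-\bsbg^*\rangle\le\tfrac{1}{2\sqrt\vartheta}\|\hat\bsbg-\bsbg^*\|_2^2$ (and similarly for $\hat\bsbb$ with the $\rho$-weighting, via the joint surrogate characterization of Lemma A.5). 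With this, the parameter-norm terms cancel exactly against the right-hand side of \eqref{fixed-point-condition}, leaving the clean bound $\delta\Breg_2(\bar\bsbX\hat{\bar\bsbb},\bar\bsbX\bar\bsbb^*)\le\langle\bm\epsilon,\bar\bsbX(\hat{\bar\bsbb}-\bar\bsbb^*)\rangle$ with no constant blowing up in $\vartheta$. Your treatment of the stochastic term (union-of-subspaces covering plus Young's inequality) is fine and rate-equivalent to the paper's two-projection decomposition. For the relaxed condition \eqref{eq:genregcond}, your localization sketch is directionally right, but the paper's actual mechanism is a convex combination, with weights $1-1/R$ and $1/R$, of the main inequality and a second inequality derived from the descent property $l(\bar\bsbX\hat{\bar\bsbb})\le l(\bar\bsbX\bar\bsbb^{(0)})$, which is how the asymmetric $\breg_l$ term enters.
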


The regularity condition is an extension of the classical $\ell_2$-form restricted isometry imposed   on the augmented design.  Indeed,    for $l(\bsbe; \bsby) = \| \bsbe  - \bsby \|_2^2/2$,  the   condition  can written as  $
\|\bm X\bm\beta+\bm\gamma \|_2^2 \ge  [{1}/\{{(2 -\delta)}\sqrt{\vartheta}\}] \big(\rho \|\bsbb\|_2^2 + \|\bm\gamma\|_2^2\big) , \ \forall (\bsbb, \bsbg): \|\bm\beta\|_0\le (1+\vartheta) s^* ,  \|\bm\gamma\|_0 \le (1+\vartheta) o^*
$ for some     $\delta>0$ and large enough $\vartheta$.
According to the proof, under $q_\beta = p$,  the term   $\rho \|\bsbb\|_2^2 $ on its right-hand side  can be dropped  and so the condition degenerates to   the ordinary RIP used  in Theorem \ref{thm:BCD_linear_rate}, due to the orthogonal decomposition $\| \bsbX \bsbb + \bsbg\|_2^2 = \|\bsbX \{\bsbb + (\bsbX^T \bsbX)^+\bsbX^T \bsbg\}\|_2^2 + \| (\bsbI - \bsbH)\bsbg\|_2^2$. We have argued in Section \ref{sec:algorithm} that in low dimensions,   low leverage implies   low mutual coherence which in turn implies RIP. But the plain  definition of leverage    cannot   incorporate    outlier scarcity,  and is noninformative as   $p>n$,   since   $\bsbH = \bsbI$ when $\bsbX$ has full row rank. The Bregman-based restricted isometry  on the augmented design gives an extension of   the    notion of leverage to   high dimensions and non-quadratic losses.

Theorem \ref{thm:fixed-point} gives  useful   implications   regarding the algorithmic parameters as well. For example, from     \eqref{fixed-point-condition}, to enjoy the    statistical guarantee, the inverse stepsize $\rho$   must be     properly small.  The finding is insightful and novel. Throughout   the  machine learning literature, slow   learning rate is  recommended   when training  a nonconvex   learner.   But our statistical error analysis   strongly cautions  against using an  extremely small step size when nonconvexity occurs. The BCD-type algorithm design appears advantageous in this aspect.   Moreover,  a high-quality  starting point, usually expensive in computation, certainly brings some statistical benefit  seen from the generalized regularity condition \eqref{eq:genregcond} (consider an extreme case  $R=1$). But simply enlarging  $\vartheta$ gives another  promising way  to relax \eqref{fixed-point-condition}, and     supports  the progressive tightening proposal  in Section \ref{sec:algorithm}.

When $\vartheta, \delta$ are treated as constants and $s^*\vee o^* >0$,   \eqref{fixed-point-rate} shows that A-estimators can achieve an error bound of the order $\mathcal \sigma^2s^*\log(ep/s^*)+\sigma^2o^*\log(en/o^*)$, which reduces to $p\sigma^2 +\sigma^2o^*\log(en/o^*)$ when there is no $\bsbb$-sparsity.   This error rate    involves  the number of outliers apart from
the number of relevant predictors. But the influence of outliers remains {controlled}  in the sense that the bound does not grow with the magnitude of $\bsbg^*$. In the  situation of  relatively few    outliers: $o^*\log (en/o^*)\ll s^* \log (e p /s^*)$, the proposed method can  attain
the  celebrated   rate   $\sigma^2 s^* \log (ep/s^*)$  for (clean)  large-$p$ variable selection.  On the other hand, one should be aware that on `big dirty' data with large $n$ and $ o^*$, $\sigma^2 o^*\log (en/o^*)$ could be the dominant  error term.

\begin{remark} 
 One may wonder whether the overall error rate in Theorem \ref{thm:fixed-point} can be further improved by pursuing a global minimizer. Theorem \ref{thm:global_rate} shows that  this is not the case,      but the regularity condition gets
relaxed. Concretely,   the symmetrized GBF   $\bar{\bm \Delta}_l$   in
\eqref{fixed-point-condition}  will be replaced by  ${\bm \Delta}_l$,  and    the right-hand side  of
\eqref{fixed-point-condition}  will  be  just zero. Hence  if the loss $l$ defined on the systematic component is $\mu$-strongly convex (e.g., regression), the error bound holds   with $\delta = \mu$, \emph{free} of any RIP or leverage restriction. This shows the inherent soundness of the proposed resistant learning framework.
\end{remark}

\begin{remark}
 The $\ell_2$-recovery  result proved in Theorem \ref{thm:fixed-point} is fundamental, from which one can also obtain   estimation error bounds.  For example,
Theorem \ref{th:esterr} shows that under  a slightly stronger regularity condition,  $$ \|  \hat { \bsbb} -  \bsbb^*  \|_2^2 \lesssim  \sigma^2 \{ s^*\log(ep/s^*) + o^* \log (en/o^*)\}/n
$$  holds with high probability. In the classical    $n>p$  setup with no penalty imposed on $\bm\beta$,   it becomes  $ \|  \hat { \bsbb} -  \bsbb^*
\|_2^2 \lesssim   \{ p \sigma^2 +  o^* \sigma^2 \log (en/o^*)\}/n
$. See    Appendix \ref{proof:fixed-point}  for details. 
\end{remark}

Our theoretical techniques  can cope with the penalized form of \eqref{eq:criterion_general_hd} as well.   For instance, Theorem \ref{thm:fixed-point-lasso}    studies  the   outlier-resistant lasso (cf. Section \ref{sec:regularized}), where    $$l(\bm X\bm\beta+\bm\gamma; \bm y) = \|\bm X\bm\beta + \bm\gamma - \bm y\|_2^2/2, \ P(\bm\beta;\lambda) = \lambda \varrho  \|\bm\beta\|_1,$$ with  $\varrho> 0 $. (The theorem proved
in Appendix \ref{proof:fixed-point-l1} is actually regarding   a general loss and its proof applies to all subadditive penalties.)

\begin{thm} \label{thm:fixed-point-lasso}
Let $\lambda = A\sigma\sqrt{\log(ep)}$  with $A$ a sufficiently large constant. Then if      $(\hat{\bm\beta},\hat{\bm\gamma})$ is an A-estimator of resistant lasso for some $\varrho$ and $\vartheta$ (cf. Theorem \ref{th:Theta_equation}),  the   risk  bound   
\begin{equation}
\EE [\Breg_2 (\bar \bsbX \bar \bsbb^*, \bar \bsbX \hat {\bar \bsbb}) ]\lesssim \sigma^2 \vartheta o^*\log\frac{en}{o^*} + K^2 \sigma^2 s^*\log(ep) +\sigma^2 
\label{l1errorbound} \end{equation}
holds under the assumption that  there exists a large enough $K$ such that
$ K\sqrt J\|\bm X\bm\beta +\bm\gamma \|_2
+\varrho \| \bm\beta_{\mathcal J^c}\|_1
\ge \|\bm\gamma\|_2^2/(2\lambda\sqrt{\vartheta})  + (1+\varepsilon)\varrho \| \bm\beta_{\mathcal J}\|_1$   for any $\bm\beta,\bm\gamma$ with $ \|\bm\gamma\|_0\le(1+\vartheta)o^*$,  where   $\varepsilon$ is a positive constant, $\mathcal J = \{j:\beta_j^*\ne 0\}$ and $J = |\mathcal J|$. \end{thm}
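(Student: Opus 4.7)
The plan is to follow the architecture of Theorem~\ref{thm:fixed-point}—derive a one-shot oracle inequality from the A-estimator's two fixed-point equations, rewrite loss differences with the GBF of the quadratic loss (which here equals $\Breg_2$), and then use the stated restricted strong convexity to absorb the $\ell_1$ and $\bsbg$-sparse cross terms into the Bregman left-hand side—while \emph{replacing} the $\ell_0$ oracle step for $\bsbb$ by an $\ell_1$/soft-thresholding one. Concretely, by Theorem~\ref{theorem:IQ_convergence} the A-estimator satisfies $\varrho\hat\bsbb=\Theta_{\mathrm{soft}}(\varrho\hat\bsbb-\bsbX^T\nabla l(\bsbX\hat\bsbb+\hat\bsbg;\bsby)/\varrho;\lambda)$, which is equivalent to saying that $\hat\bsbb$ minimizes the MM surrogate $g_\rho(\cdot;\hat\bsbb)$ subject to no constraint. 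Competing against $\bsbb^*$ in this surrogate gives the MM descent inequality
\begin{equation*}
l(\bsbX\hat\bsbb+\hat\bsbg;\bsby)+\lambda\varrho\|\hat\bsbb\|_1 \le l(\bsbX\bsbb^*+\hat\bsbg;\bsby)+\lambda\varrho\|\bsbb^*\|_1 + \tfrac{\varrho^2-\|\bsbX\|_2^2}{2}\|\hat\bsbb-\bsbb^*\|_2^2,
\end{equation*}
while $\hat\bsbg$ being the $\ell_0$-constrained minimizer of $\|\bsby-\bsbX\hat\bsbb-\bsbg\|_2^2$ over $\|\bsbg\|_0\le q_\gamma$ (with $\bsbg^*$ feasible because $o^*\le\vartheta o^*$) yields $l(\bsbX\hat\bsbb+\hat\bsbg;\bsby)\le l(\bsbX\hat\bsbb+\bsbg^*;\bsby)$. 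Expanding both sides via the identity $l(\bar\bsbX\bar\bsbb_1;\bsby)-l(\bar\bsbX\bar\bsbb_2;\bsby)=\Breg_2(\bar\bsbX\bar\bsbb_1,\bar\bsbX\bar\bsbb_2)-\langle\bsbeps,\bar\bsbX(\bar\bsbb_1-\bar\bsbb_2)\rangle$ and combining produces a basic inequality of the form $\Breg_2(\bar\bsbX\hat{\bar\bsbb},\bar\bsbX\bar\bsbb^*) \lesssim \langle\bsbeps,\bsbX(\hat\bsbb-\bsbb^*)\rangle+\langle\bsbeps,\hat\bsbg-\bsbg^*\rangle+\lambda\varrho(\|\bsbb^*\|_1-\|\hat\bsbb\|_1)+\text{correction}$.

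Next I would control the two stochastic terms on a high-probability event. A maximal inequality over the $p$ coordinates gives $\|\bsbX^T\bsbeps\|_\infty\lesssim \sigma\sqrt{\log(ep)}$, so with $\lambda=A\sigma\sqrt{\log(ep)}$ and $A$ large, $|\langle\bsbeps,\bsbX(\hat\bsbb-\bsbb^*)\rangle|\le (\lambda\varrho/2)\|\hat\bsbb-\bsbb^*\|_1$, enabling the usual lasso cone bookkeeping after splitting by $\mathcal J$ and $\mathcal J^c$. Since $\hat\bsbg-\bsbg^*$ has support at most $(1+\vartheta)o^*$, a union bound over supports and sub-Gaussian concentration give $|\langle\bsbeps,\hat\bsbg-\bsbg^*\rangle|\lesssim \sigma\sqrt{\vartheta o^*\log(en/(\vartheta o^*))}\,\|\hat\bsbg-\bsbg^*\|_2$, and Young's inequality peels off a small multiple of $\|\hat\bsbg-\bsbg^*\|_2^2$ that we later absorb into the Bregman left-hand side.

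The resulting cone inequality puts $\varrho\|\hat\bsbb_{\mathcal J^c}\|_1$ in terms of $\varrho\|\hat\bsbb_{\mathcal J}-\bsbb^*_{\mathcal J}\|_1$ plus noise; at this point the stated restricted strong convexity hypothesis, applied to $(\hat\bsbb-\bsbb^*,\hat\bsbg-\bsbg^*)$, is precisely designed to convert the remaining $\varrho\|\bsbb_{\mathcal J}\|_1$ contribution into a multiple of $K\sqrt{J}\,\|\bsbX(\hat\bsbb-\bsbb^*)+(\hat\bsbg-\bsbg^*)\|_2$ together with the fractional quadratic $\|\hat\bsbg-\bsbg^*\|_2^2/(2\lambda\sqrt\vartheta)$. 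Applying AM--GM on the $\sqrt{J}\|\cdot\|_2$ term converts it into $\tfrac12\Breg_2(\bar\bsbX\hat{\bar\bsbb},\bar\bsbX\bar\bsbb^*)+O(K^2\sigma^2 s^*\log(ep))$, at which point the Bregman term on the right is absorbed into that on the left and one reads off the rate $\sigma^2\vartheta o^*\log(en/o^*)+K^2\sigma^2 s^*\log(ep)$. Finally, promoting the high-probability bound to expectation contributes at most an $O(\sigma^2)$ term from the complementary event by sub-Gaussian tail integration, which matches the trailing $\sigma^2$ in \eqref{l1errorbound}.

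The main obstacle is \emph{Step~1}: because $\hat\bsbb$ is only a fixed point of the soft-thresholding map and not a minimizer of the $\bsbb$-subproblem, one cannot plug $\bsbb^*$ into ``$f(\hat\bsbb)\le f(\bsbb^*)$''. The penalty--thresholding calibration in Theorem~\ref{theorem:IQ_convergence}, together with its MM descent inequality, plays exactly this role at the cost of the harmless curvature correction $(\varrho^2-\|\bsbX\|_2^2)\|\hat\bsbb-\bsbb^*\|_2^2/2$, which is why Theorem~\ref{theorem:IQ_convergence} must be invoked up front rather than argued from scratch. A secondary difficulty is that the paper's regularity hypothesis mixes $\ell_1$ in $\bsbb$ with $\ell_2$ in $\bsbg$, and threading the cone inequality through it—so that the $\vartheta o^*\log(en/o^*)$ outlier rate and the $s^*\log(ep)$ sparse-selection rate emerge \emph{simultaneously} rather than compete—requires careful tracking of the $\lambda$, $\varrho$, and $\sqrt\vartheta$ scalings.
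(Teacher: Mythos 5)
There is a genuine gap in your Step~1, and it is exactly at the point you flag as the ``main obstacle.'' Plain surrogate optimality $g_\rho(\hat\bsbb;\hat\bsbb)\le g_\rho(\bsbb^*;\hat\bsbb)$ gives
$l(\bsbX\hat\bsbb+\hat\bsbg)+\lambda\varrho\|\hat\bsbb\|_1 \le l(\bsbX\bsbb^*+\hat\bsbg)+\lambda\varrho\|\bsbb^*\|_1 + \tfrac{\varrho^2}{2}\|\hat\bsbb-\bsbb^*\|_2^2-\tfrac12\|\bsbX(\hat\bsbb-\bsbb^*)\|_2^2$; your stated correction $\tfrac{\varrho^2-\|\bsbX\|_2^2}{2}\|\hat\bsbb-\bsbb^*\|_2^2$ is a \emph{lower} bound on this gap, not an upper bound, so the inequality you write does not follow. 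More importantly, either version leaves a positive term of order $\varrho^2\|\hat\bsbb-\bsbb^*\|_2^2$ on the wrong side, and nothing in your argument can absorb it: the left-hand side is the prediction norm $\|\bsbX(\hat\bsbb-\bsbb^*)+(\hat\bsbg-\bsbg^*)\|_2^2$, which does not dominate $\|\hat\bsbb-\bsbb^*\|_2^2$ when $p>n$, and the stated regularity condition contains no $\ell_2$ control of $\bsbb$. This term is not a harmless curvature correction; it is the crux. The paper's fix is the \emph{second-order} growth of the prox subproblem (Lemma~2 of \cite{She2016}): since $g_\rho(\cdot;\hat\bsbb)$ is $\varrho^2$-strongly convex in $\bsbb$ and $\hat\bsbb$ minimizes it, one has $g_\rho(\bsbb^*;\hat\bsbb)-g_\rho(\hat\bsbb;\hat\bsbb)\ge \tfrac{\varrho^2}{2}\|\hat\bsbb-\bsbb^*\|_2^2$, which exactly cancels the proximal quadratic and yields the clean basic inequality \eqref{lasso-1} with $2\bar{\bm\Delta}_l$ on the left and only the $\tfrac{1}{2\sqrt\vartheta}\|\hat\bsbg-\bsbg^*\|_2^2$ remainder (from the analogous but weaker growth constant $1-\mathcal L$ for the nonconvex quantile-thresholding step, via Lemma~\ref{lemma:bound without design}). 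This cancellation is the missing idea; with it your remaining steps can go through.

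Two secondary points. First, your treatment of the stochastic term differs from the paper's: you split $\langle\bsbeps,\bsbX(\hat\bsbb-\bsbb^*)\rangle$ and $\langle\bsbeps,\hat\bsbg-\bsbg^*\rangle$ and bound the first by $\|\bsbX^T\bsbeps\|_\infty\|\hat\bsbb-\bsbb^*\|_1$, whereas the paper uses Lemma~\ref{lemma:phostochastic}, a uniform bound trading the \emph{joint} prediction norm $\tfrac1a\|\bsbX\bsbb+\bsbg\|_2^2$ against the Huber penalty $P_H(\varrho\bsbb;\lambda)\le\lambda\varrho\|\bsbb\|_1$ plus $a\sigma^2 q\log(en/q)$, proved by a combined statistical--computational argument (Lemma~\ref{lemma:P-H-comp}). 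Your route is workable for the quadratic loss, but note that Young's inequality on $\langle\bsbeps,\hat\bsbg-\bsbg^*\rangle$ leaves a multiple of $\|\hat\bsbg-\bsbg^*\|_2^2$ that, again, cannot be absorbed into the prediction-norm left-hand side; it must be routed through the regularity condition, which is calibrated for exactly the coefficient $1/(2\lambda\sqrt\vartheta)$, so you would need to restate the assumption with a slightly larger coefficient or bound the cross term against the joint prediction norm as the paper does. Second, the endgame you describe (AM--GM on $K\sqrt{J}\|\bar\bsbX\bsbDelta\|_2$, absorbing into the Bregman term, $\ell_1$ bookkeeping over $\mathcal J,\mathcal J^c$ with $b=1/(2\varepsilon)$) matches the paper's \eqref{lasso-3} and the final step, and is fine.
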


When $K$ and $\vartheta$ are treated as   constants, the error rate of the resistant lasso is slightly worse  (by a
logarithmic term) than      \eqref{fixed-point-rate}  for the doubly constrained form.


\section{Minimax Lower Bounds and  Model Comparison  } \label{sec:tuning}

 We first  derive  some   universal lower bounds that are satisfied by   all  estimators. These nonasymptotic results are stated for  GLMs. Let $\bsbe^* = \bar \bsbX \bar \bsbb^*$ and   $y_i | \eta_i^*$  be  independent and follow a distribution in the regular exponential   family with dispersion $\sigma^2$ and  $l_0$ be the negative log-likelihood function (cf. Appendix \ref{subsec:proofofminimax} for  details), and consider a general   signal class with $p$ possibly larger than $n$
\begin{equation}
\mathcal B(s^*,o^*) = \{(\bm\beta^*,\bm\gamma^*): \|\bm\beta^*\|_0\le s^*, \|\bm\gamma^*\|_0 \le o^*\}.
\end{equation}
\begin{thm} \label{thm:minimax}

In the  regular exponential dispersion family with $n\ge 2, p\ge 2, 1\le o^* \le n/2, 1\le s^*\le p/2,$ define   \begin{align}
P (s^*,o^*) = s^*\log(ep/s^*) + o^*\log(en/o^*).
\end{align}
Let $I(\cdot)$ be any nondecreasing   function with $I(0)=0, I\not\equiv 0$.
 (i) Suppose for some $\kappa > 0$    $  \breg_l (\bsb0, \bar \bsbX \bar \bsbb       )\sigma^2\allowbreak\le    \kappa\Breg_2(  \bsb0,  \bar \bsbb    )$ $\forall  \bar{\bm\beta}  :   \|\bm\beta \|_0 \le s^*$,   $\|\bm\gamma \|_0\le o^*$.  Then there exist positive constants $\tilde c, c$, depending on $I(\cdot)$ only, such that
\begin{equation}
\inf_{(\hat{\bm\beta},\hat{\bm\gamma})}\,\sup_{(\bm\beta^*,\bm\gamma^*)\in \mathcal B(s^*,o^*)} \mathbb E[I(\Breg_2  (   \bar \bsbb^*,    \hat{ \bar \bsbb} )/\{\tilde c \sigma^2  P (s^*,o^*)/\kappa\})] \ge c >0,
\label{minimaxrobglm-est}
\end{equation}
where $(\hat{\bm\beta},\hat{\bm\gamma})$ denotes any estimator of $(\bm\beta^*,\bm\gamma^*)$. (ii) Suppose $  \breg_l (\bsb{0},\bar \bsbX \bar \bsbb_1  )\sigma^2\allowbreak\le   \overline \kappa\Breg_2(   \bsb{0}, \bar \bsbb_1      )$ and  $\underline \kappa\Breg_2(  \bsb0,  \bar \bsbb_2     )\le  \Breg_2( \bsb0,\bar \bsbX    \bar \bsbb_2   ) $,    $\forall  \bar{\bm\beta}_i  :    \| \bsbb_i\|_0 \le i s^*,    \|\bsbg_i\|_0\le i o^*, i =1,2$,  where $\underline \kappa/ \overline \kappa$ is a    positive   constant. Then  there exist positive constants $\tilde c, c$ such that
\begin{equation}
\inf_{(\hat{\bm\beta},\hat{\bm\gamma})}\,\sup_{(\bm\beta^*,\bm\gamma^*)\in \mathcal B(s^*,o^*)} \mathbb E[I(\Breg_2  (\bar \bsbX   \bar \bsbb^*,   \bar \bsbX \hat{ \bar \bsbb} )/\{\tilde c \sigma^2  P (s^*,o^*)\})] \ge c >0.
\label{minimaxrobglm-pred}
\end{equation}
\end{thm}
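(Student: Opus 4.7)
The plan is a Fano-type reduction from multiway hypothesis testing to minimax estimation, combined with a product Varshamov--Gilbert packing that treats the sparse coefficient $\bsbb^*$ and the sparse outlier vector $\bsbg^*$ symmetrically. First I would construct the packing: Varshamov--Gilbert gives at least $M_1\ge (ep/s^*)^{cs^*}$ support patterns of size $s^*$ in $[p]$ with pairwise Hamming distance $\ge s^*/2$, and $M_2\ge (en/o^*)^{co^*}$ patterns of size $o^*$ in $[n]$. On each chosen $\bsbb$-support I assign entries $\pm r_\beta$ and on each chosen $\bsbg$-support entries $\pm r_\gamma$. The Cartesian product produces $M=M_1M_2$ hypotheses inside $\mathcal B(s^*,o^*)$ with $\log M\asymp P(s^*,o^*)$, whose pairwise $\Breg_2$-separation on the coefficients is of order $r_\beta^2 s^*+r_\gamma^2 o^*$; for part~(ii) the condition $\underline\kappa\Breg_2(\bsb 0,\bar\bsbb)\le\Breg_2(\bsb 0,\bar\bsbX\bar\bsbb)$ applied to the $(2s^*,2o^*)$-sparse difference transfers this into a $\Breg_2(\bar\bsbX\bar\bsbb)$-separation of the same order up to the factor $\underline\kappa$.

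Second, I would bound the KL divergences. A direct exponential-dispersion-family calculation shows that $KL(P_{\bar\bsbX\bar\bsbb}\,\|\,P_{\bsb 0})$ equals $\breg_l(\bsb 0,\bar\bsbX\bar\bsbb)$ up to the paper's dispersion convention. I would apply a reference-based (Yang--Barron) Fano with $\bsb 0$ as the center, so that the mutual information is controlled by $\max_k KL(P_k\,\|\,P_0)$ and only the stated hypothesis at sparsity $(s^*,o^*)$ is needed, neatly sidestepping the doubled-sparsity nuisance of fully pairwise KL bounds. For part~(i), the hypothesis $\sigma^2\breg_l(\bsb 0,\bar\bsbX\bar\bsbb)\le\kappa\Breg_2(\bsb 0,\bar\bsbb)$ gives $\max_k KL\le(\kappa/\sigma^2)(r_\beta^2 s^*+r_\gamma^2 o^*)$; choosing $r_\beta^2=c\sigma^2\log(ep/s^*)/\kappa$ and $r_\gamma^2=c\sigma^2\log(en/o^*)/\kappa$ for a small enough absolute $c$ keeps this below $\log M/4$ while retaining a $\Breg_2(\bar\bsbb)$-separation of order $\sigma^2 P(s^*,o^*)/\kappa$, matching the target scale in \eqref{minimaxrobglm-est}. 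For part~(ii) the KL is controlled by $\overline\kappa$ and the separation promoted by $\underline\kappa$; since $\underline\kappa/\overline\kappa$ is an absolute constant, the prediction-error rate is $\sigma^2 P(s^*,o^*)$ with no residual $\kappa$, matching \eqref{minimaxrobglm-pred}.

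Finally, Fano's inequality gives $\inf_{\hat k}\max_k\Pr_k(\hat k\ne k)\ge c_0>0$ for an absolute $c_0$. To convert this testing-error lower bound to a risk lower bound under an arbitrary nondecreasing loss $I$ with $I(0)=0$ and $I\not\equiv 0$, I pick any $t_0>0$ with $I(t_0)>0$, set $\hat k:=\arg\min_k\Breg_2(\hat{\bar\bsbb},\bar\bsbb^{(k)})$ for part~(i) (and its prediction-side analogue for part~(ii)), and use the triangle inequality for the $\Breg_2^{1/2}$ pseudonorm so that once the packing separation exceeds $4t_0$ times the scale, the minimum-distance test is correct on the event $\{\Breg_2/\mathrm{scale}\le t_0\}$; Markov's inequality then gives $c_0\le\Pr_k(\hat k\ne k)\le \mathbb E_k[I(\Breg_2/\mathrm{scale})]/I(t_0)$, delivering the claimed constants $\tilde c,c$ that depend only on $I$. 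The main obstacle I anticipate is not the Fano machinery itself but the careful accounting between $\Breg_2$ on coefficients and $\Breg_2$ on systematic components, so that $\underline\kappa$ and $\overline\kappa$ combine cleanly to produce the sharp $\sigma^2 P(s^*,o^*)$ rate in (ii) without stray $\kappa$-factors; the doubled sparsity of packing differences is a related subtlety that the Yang--Barron reference trick is designed to handle.
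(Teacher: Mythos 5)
Your overall strategy coincides with the paper's: reduce to multiway testing via a reference-point Fano bound (the paper invokes Tsybakov's Theorem~2.7, which plays the same role as your Yang--Barron trick), build hypotheses from a Varshamov--Gilbert packing of sparse supports containing $\bsb 0$, and convert KL divergences to generalized Bregman divergences via $\mathcal K(p_{\bar\bsbb},p_{\bsb 0})=\breg_l(\bsb 0,\bar\bsbX\bar\bsbb)$ (Lemma~\ref{lemKLBreg}), so that only $(s^*,o^*)$-sparse hypotheses enter the KL control. Your accounting in part~(ii) --- the $i=2$ condition with $\underline\kappa$ applied to packing differences for the separation, the $i=1$ condition with $\overline\kappa$ for the KL, leaving only the constant ratio $\underline\kappa/\overline\kappa$ --- also matches the generalized statement (Theorem~\ref{thm:minimax-gen}) proved in the appendix.

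The genuine gap is your separation claim for the product packing. If $\{\bsbb_i\}$ and $\{\bsbg_j\}$ are the two block packings, two distinct elements of the Cartesian product can share the same $\bsbb$-component and differ only in $\bsbg$ (or vice versa), so the \emph{minimum} pairwise $\Breg_2$-separation --- which is what the Fano reduction to testing uses, and what your minimum-distance decoder $\hat k$ requires --- is $\min\{c r_\beta^2 s^*,\, c r_\gamma^2 o^*\}$, not $r_\beta^2 s^*+r_\gamma^2 o^*$. With your calibration $r_\beta^2\asymp\sigma^2\log(ep/s^*)/\kappa$, $r_\gamma^2\asymp\sigma^2\log(en/o^*)/\kappa$, this yields a lower bound of order $\sigma^2\min\{s^*\log(ep/s^*),\,o^*\log(en/o^*)\}/\kappa$, which can be far below the target $\sigma^2P(s^*,o^*)/\kappa$ when one term dominates. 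Two standard repairs exist: (a) the paper's route, which splits into two cases, perturbs only the block with the larger contribution to $P(s^*,o^*)$ while fixing the other at zero, and uses that the maximum of the two one-block rates is at least half their sum; or (b) keep your product packing but recalibrate the radii so that \emph{each} block alone carries the full budget, i.e.\ $r_\beta^2\asymp c\sigma^2P(s^*,o^*)/(\kappa s^*)$ and $r_\gamma^2\asymp c\sigma^2P(s^*,o^*)/(\kappa o^*)$; then the minimum separation is of order $\sigma^2P(s^*,o^*)/\kappa$ while the KL to $P_{\bsb 0}$ remains $\lesssim cP(s^*,o^*)\lesssim c\log M$, so Fano still applies for $c$ small. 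As written, though, the step fails.
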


Finer analysis and lower bounds are presented in Appendix  \ref{subsec:proofofminimax}.
We  illustrate the conclusion for  classification. Because  the logistic deviance has a $1/4$-Lipschitz continuous gradient, the regularity condition in (i) is satisfied for any  $\kappa \ge M_{\bar \bsbX}(s^*, o^*) /4$ (cf. Section \ref{sec:intro} and Appendix \ref{app:proofiq}). So when $\kappa \le c   n$,   $\{s^*\log(ep/s^*) + o^*\log(en/o^*)\}/n$,   or $(p + o^*\log(en/o^*))/n$ when $s^* = p$, is the  desired minimax lower error rate for  all classifiers,    both with constant positive probability and in expectation (corresponding to $I(t) = 1_{t\ge c}$ and $I(t) =t$). A similar conclusion for prediction  can be drawn from (ii).  To the best of our knowledge, such a kind of results with the usage of GBFs     for resistant GLMs
is novel  in robust statistics.

 Together with the upper error bounds (e.g., Theorem  \ref{thm:fixed-point} and Theorem  \ref{th:esterr}), PIQ   enjoys minimax
rate optimality provided that $q_\gamma$  is not set too large relative to $o^*$. It is common  in practice to   directly specify  the cardinality bound based on domain knowledge, say,   $\| \bsbg\|_0\le    0.25 n$, for the sake of $\bsb{\beta}$-estimation.
 On the other hand, if the goal is    to identify all outliers  to have the best predictive model, one ought to choose the  regularization parameters   in a more data-adaptive manner. This is regarded as  a vital and challenging task in robust statistics \citep{ronchetti1997robust,muller2005outlier,salibian2008robust}.   Indeed, outliers may render  resampling based cross-validation   inappropriate, and although  there is an abundance of  information criteria, which one is sound  in theory and  reality   lacks a clear conclusion.  The joint  variable selection in high dimensions and the flexible   choice of the loss     exacerbate the issue. (The $\ell_2$-shrinkage parameter is however much less sensitive and can be fixed at a mild value.)

In the general setup   where both       $\bm\beta$   and $\bsbg$ are   possibly sparse,  the upper  and lower error bounds  suggest  a universal  function to penalize model  complexity
\begin{equation}\label{generalpenfortuning}
P (\bm\beta,\bm\gamma) =  \| \bsbg\|_0+ \| \bsbg\|_0\log(en/\| \bsbg\|_0)+ \| \bsbb\|_0+ \| \bsbb\|_0\log(ep/\| \bsbb\|_0).
\end{equation}
We refer to the   associated criterion as the \textit{predictive information criterion} (\textbf{PIC}).
In the case that no  variable selection is needed ($\|  \bsbb\|_0=p$), one can rewrite  \eqref{generalpenfortuning}  as
\begin{equation}\label{generalpenfortuning-gammaonly}
P ( \bm\gamma) =  \| \bsbg\|_0+ \| \bsbg\|_0\log(en/\| \bsbg\|_0)
\end{equation}
for pure outlier identification.

    \eqref{generalpenfortuning} has some  involved logarithmic terms and is not a simple  $\ell_0$-type penalty,  but it leads to an ideal  model comparison criterion with a solid finite-sample theoretical support.   
\begin{thm} \label{thm:tuning}
Assume    $\bm\epsilon$ defined in \eqref{noise-def} is a sub-Gaussian random vector with mean zero and   scale bounded by a constant and   $\bar\bsbb^*\in \mathcal M$ and $\bar\bsbb^*\ne \bsb0 $. Define     $P(\bar \bsbb) = P (\bm\beta,\bm\gamma) =  J(\bm\beta)\log(ep/J(\bm\beta)) + J(\bm\gamma)\log(en/J(\bm\gamma))$. Assume that there exist constants $\delta>0$ and $A_0\ge 0$ such that $(\bm\Delta_l- \delta\mathbf D_2)(\bar \bsbX \bar\bsbb , \bar \bsbX \bar\bsbb ') +A_{0} (P(\bar \bsbb ) +  P(\bar \bsbb' ))\ge0$,  $\forall \bar\bsbb, \bar\bsbb'\in \mathcal M$.    Then for a sufficiently large constant $A$, any
      $(\hat{\bm\beta},\hat{\bm\gamma})$  that minimizes
\begin{equation} \label{PIC-general}
l(\bar \bsbX \bar\bsbb ; \bm y) + AP (\bar \bsbb)
\end{equation}
subject to $\bar \bsbb\in \mathcal M$  must satisfy
\begin{equation} \label{tuning_general-1}
\mathbb E\big[ \Breg_2 (\bar \bsbX \hat {\bar \bsbb}, \bar \bsbX \bar \bsbb^*  )\big] \lesssim  s^*\log(ep/s^*) + o^*\log(en/o^*).
\end{equation}
\end{thm}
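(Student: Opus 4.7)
\textbf{Proof plan for Theorem \ref{thm:tuning}.}
The plan is to combine the standard optimality-based basic inequality with a uniform tail bound for the stochastic term, expressed in units of the complexity penalty $P(\bar\bsbb)$. To begin, since $\hat{\bar\bsbb}$ minimizes $l(\bar\bsbX\bar\bsbb;\bm y)+AP(\bar\bsbb)$ over $\mathcal M$ and $\bar\bsbb^*\in\mathcal M$, I have the basic inequality
\begin{equation*}
l(\bar\bsbX\hat{\bar\bsbb};\bm y)-l(\bar\bsbX\bar\bsbb^*;\bm y)\le A\bigl(P(\bar\bsbb^*)-P(\hat{\bar\bsbb})\bigr).
\end{equation*}
Applying the generalized Bregman identity to the left-hand side together with $\bm\epsilon=-\nabla l(\bar\bsbX\bar\bsbb^*;\bm y)$ gives
\begin{equation*}
\bm\Delta_l(\bar\bsbX\hat{\bar\bsbb},\bar\bsbX\bar\bsbb^*)\le \langle\bm\epsilon,\bar\bsbX(\hat{\bar\bsbb}-\bar\bsbb^*)\rangle + A\bigl(P(\bar\bsbb^*)-P(\hat{\bar\bsbb})\bigr).
\end{equation*}
This is the central inequality to be processed; the effective noise carries all randomness, and the restricted-strong-convexity-type assumption $(\bm\Delta_l-\delta\mathbf D_2)(\cdot,\cdot)+A_0(P+P)\ge 0$ will later convert $\bm\Delta_l$ into $\delta\mathbf D_2(\bar\bsbX\hat{\bar\bsbb},\bar\bsbX\bar\bsbb^*)$ modulo an extra $A_0(P(\hat{\bar\bsbb})+P(\bar\bsbb^*))$ term that can be absorbed by taking $A$ large.

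The main obstacle is a uniform upper bound for the stochastic inner product over the countable but exponentially large class of sparsity patterns. I would state and prove a sharp comparison inequality of the form
\begin{equation*}
\sup_{\bar\bsbb\in\mathcal M}\; \frac{\langle\bm\epsilon,\bar\bsbX(\bar\bsbb-\bar\bsbb^*)\rangle}{\tfrac{\delta}{2}\|\bar\bsbX(\bar\bsbb-\bar\bsbb^*)\|_2^2+\,c\,\sigma^2\bigl(P(\bar\bsbb)+P(\bar\bsbb^*)+1\bigr)}\le 1
\end{equation*}
holding with probability $1-e^{-c'}$ and, after integration of the tail, also in expectation. The argument is a peeling over the values of $(J(\bm\beta),J(\bm\gamma))=(s,o)$: for each fixed $(s,o)$ and each fixed support pair, the vector $\bar\bsbX(\bar\bsbb-\bar\bsbb^*)$ lies in a linear subspace of dimension at most $s+o+s^*+o^*$, so sub-Gaussian concentration (the Hanson--Wright or Gaussian-width bound for $\bm\epsilon$) yields $\langle\bm\epsilon,\bar\bsbX(\bar\bsbb-\bar\bsbb^*)\rangle\lesssim\sigma\|\bar\bsbX(\bar\bsbb-\bar\bsbb^*)\|_2\sqrt{s+o+s^*+o^*+t}$ with failure probability $e^{-t}$. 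A union bound costs only $\log\binom{p}{s}+\log\binom{n}{o}\lesssim s\log(ep/s)+o\log(en/o)=P(\bar\bsbb)$, which is exactly why the penalty shape in \eqref{generalpenfortuning} is the right one; one then sets $t=c\,P(\bar\bsbb)$ and applies $2ab\le\tfrac{\delta}{2}a^2+\tfrac{2}{\delta}b^2$ to split the bound into a quadratic piece absorbed into $\delta\mathbf D_2$ and a linear piece in $P$.

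Equipped with this uniform bound, I would substitute into the central inequality and use the regularity condition to replace $\bm\Delta_l$ by $\delta\mathbf D_2(\bar\bsbX\hat{\bar\bsbb},\bar\bsbX\bar\bsbb^*)$ at the cost of $A_0(P(\hat{\bar\bsbb})+P(\bar\bsbb^*))$. Collecting the $P(\hat{\bar\bsbb})$ terms on the left and choosing the prefactor $A$ larger than $2A_0+2c\sigma^2$ ensures cancellation, so
\begin{equation*}
\tfrac{\delta}{2}\mathbf D_2(\bar\bsbX\hat{\bar\bsbb},\bar\bsbX\bar\bsbb^*)\;\lesssim\;\sigma^2\bigl(P(\bar\bsbb^*)+1\bigr)+A\,P(\bar\bsbb^*)\;\lesssim\;s^*\log(ep/s^*)+o^*\log(en/o^*)
\end{equation*}
on the good event. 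Finally, because the tail probability of the peeling inequality decays exponentially in $P$, integrating over the complement event contributes only a constant, yielding the stated expected-risk bound. The mildly delicate piece to double-check is that the peeling stratifies correctly when $J(\bm\beta)$ or $J(\bm\gamma)$ equals zero (the $0\log0=0$ convention was flagged in the notation section), and that the constant term $+\sigma^2$ absorbs the trivial-signal case $\bar\bsbb=\bar\bsbb^*$.
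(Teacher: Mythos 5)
Your plan is correct and follows essentially the same route as the paper's proof: the basic inequality from optimality, a uniform bound on $\langle\bm\epsilon,\bar\bsbX(\hat{\bar\bsbb}-\bar\bsbb^*)\rangle$ obtained by stratifying over support sizes $(s,o)$ with a union bound whose entropy cost is exactly $P(\bar\bsbb)$ (so the tail sums geometrically and the supremum has bounded second moment), followed by Young's inequality, the regularity condition to pass from $\bm\Delta_l$ to $\delta\mathbf D_2$ at cost $A_0(P(\hat{\bar\bsbb})+P(\bar\bsbb^*))$, and a choice of $A$ large enough to cancel the $P(\hat{\bar\bsbb})$ terms. The only cosmetic difference is that you bound the stochastic term on a single subspace of dimension $s+o+s^*+o^*$ per support pattern, whereas the paper splits $\bar\bsbX(\hat{\bar\bsbb}-\bar\bsbb^*)$ into its projection onto the true-support column space and the orthogonal remainder before applying the same Dudley/union-bound machinery; both yield the identical rate.
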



Theorem \ref{thm:tuning} achieves the optimal error rate as   Theorem \ref{thm:fixed-point} and Theorem \ref{thm:fixed-point-lasso}, but a major difference here is that  no  parameters (like $q ,  \lambda$) are involved,
but just some absolute constants. Moreover, Theorem \ref{thm:tuning} does not require any  RIP or large signal-to-noise ratio conditions.

   When  the noise distribution has a  dispersion parameter $\sigma^2$, Theorem \ref{thm:tuning} still applies, but the penalty in \eqref{PIC-general} becomes $A \sigma^2 P (\bsbb, \bsbg)$ with an unknown factor.  A preliminary robust scale estimate can be possibly used. But an appealing result     for robust regression is that  the estimation of $\sigma$ can be totally bypassed.
For clarity, Theorem \ref{thm:tuning_log_form}   assumes    $n>p$ and
   no sparsity in $\bsbb$, in which situation    a scale-free form of PIC, motivated by \cite{SCV}, is given by
\begin{equation} \label{PIC-log}
(n-p)\log \big(\|\bm X\bm\beta +\bm\gamma - \bm y\|_2^2\big)+ \alpha_1  \|\bsbg\|_0 + \alpha_2  \|\bsbg\|_0\log(en/ \|\bsbg\|_0),
\end{equation}
where   $\alpha_1, \alpha_2$ are absolute constants.
 Again, the soundness of \eqref{PIC-log} when the outliers are not awfully many places  no limit on their leverage or outlyingness.  (More generally, using \eqref{generalpenfortuning}  in place of \eqref{generalpenfortuning-gammaonly} gives a more general form of scale-free PIC for   joint variable selection and outlier identification;    Theorem  \ref{thm:tuning_log_form}'
proved in
Appendix \ref{subsec:sfpic-high}   implies Theorem \ref{thm:tuning_log_form},  and   applies to $p>n$.)  

\begin{thm} \label{thm:tuning_log_form}
Let $\bm y = \bm X\bm\beta^* + \bm\gamma^*+\bm\epsilon$, where $\bm X$ has full column rank, $\epsilon_i$ are independent sub-Gaussian$( 0,\sigma^2)$ and $\EE \epsilon_i^2 =  \sigma_i^2 = c_i \sigma^2$ with $c_i$  some positive constants and $\sigma^2$ unknown.  Define   $l(\bm X\bm\beta +\bm\gamma ; \bm y) = \|\bm X\bm\beta +\bm\gamma - \bm y\|_2^2$ and $P (\bm\gamma) =  \|\bsbg\|_0\log(en/ \|\bsbg\|_0)$. Assume  $P(\bm\gamma^*)\le (n-p)/A_0$ for some constant $A_0 > 0$. Let $\delta(\bm\gamma) = AP (\bm\gamma)/(n-p)$, where $A$ is a positive constant satisfying $A<A_0$, and so $\delta(\bm\gamma^*)<1$. Then, for sufficiently large values of $A_0$ and $A$, any $(\hat \bsbb, \hat{\bm\gamma})$ that minimizes  $\log l(\bm X\bm\beta +\bm\gamma ; \bm y) + \delta(\bm\gamma)$ 
subject to $\delta(\bm\gamma)<1$ must satisfy
$ 
\Breg_2 (\bar \bsbX \hat {\bar \bsbb}, \bar \bsbX \bar \bsbb^*  )   \lesssim \sigma^2o^*\log(en/o^*) + p\sigma^2
$ 
with probability at least $1-c  (n-p) ^{-c' }$ for some constants $c ,c '>0$.
\end{thm}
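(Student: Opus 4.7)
Proof proposal for Theorem \ref{thm:tuning_log_form}:

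The plan is to exploit the scale-free log-form criterion by turning the optimality inequality into a Bregman-type inequality on the residuals, then control the stochastic cross-term via a uniform bound over sparse augmented directions. Write $\bsbr = \bsbX\hat\bsbb+\hat\bsbg-\bsby$ and $\bsbDelta = \bsbX(\hat\bsbb-\bsbb^*)+(\hat\bsbg-\bsbg^*)$, so that $\bsbr=\bsbDelta-\bsbeps$. By definition of the minimizer, $\log\|\bsbr\|_2^2+\delta(\hat\bsbg)\le\log\|\bsbeps\|_2^2+\delta(\bsbg^*)$. The key arithmetic trick is the elementary inequality $\log a-\log b\ge (a-b)/a$ for $a,b>0$, which after rearrangement gives $\|\bsbr\|_2^2\bigl(1-\delta(\bsbg^*)+\delta(\hat\bsbg)\bigr)\le\|\bsbeps\|_2^2$, and equivalently $\|\bsbDelta\|_2^2-2\langle\bsbDelta,\bsbeps\rangle\le \|\bsbr\|_2^2\bigl(\delta(\bsbg^*)-\delta(\hat\bsbg)\bigr)$.

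Next, I would use sub-Gaussian concentration (Hanson--Wright type) to show that, with probability at least $1-Ce^{-cn}$, both $\|\bsbeps\|_2^2\asymp n\sigma^2$ (upper and lower bound) hold; together with the constraint $\delta(\hat\bsbg)<1$ and the assumption $P(\bsbg^*)\le (n-p)/A_0$ with $A<A_0$, this yields $c_1 n\sigma^2\le\|\bsbr\|_2^2\le C_1 n\sigma^2$. Combining with $n\asymp n-p$ (the assumption $A P(\bsbg^*)<n-p$ forces $p$ not to be too close to $n$ through $\delta(\bsbg^*)<1$), I can replace $\|\bsbr\|_2^2/(n-p)$ by a constant multiple of $\sigma^2$, whose sign matches that of $\delta(\bsbg^*)-\delta(\hat\bsbg)$ in the right way, to conclude
\[
\|\bsbDelta\|_2^2-2\langle\bsbDelta,\bsbeps\rangle\le C_2 A\sigma^2 P(\bsbg^*)-c_2 A\sigma^2 P(\hat\bsbg).
\]
Here the lower bound on $\|\bsbr\|_2^2$ is essential to extract a genuinely negative $-P(\hat\bsbg)$ contribution when $P(\hat\bsbg)>P(\bsbg^*)$; this is the main conceptual obstacle, since the log-form allows no explicit scale factor.

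To handle the stochastic cross-term, decompose $\bsbDelta$ as living in the linear space $\operatorname{range}(\bsbX)+\operatorname{span}\{\bsbe_i:i\in S\}$ where $S=\mathrm{supp}(\hat\bsbg)\cup\mathrm{supp}(\bsbg^*)$, with $|S|\le k:=\|\hat\bsbg\|_0+o^*$. Then $\langle\bsbDelta,\bsbeps\rangle\le\|\bsbDelta\|_2\cdot\|\Pi_{\bsbX,S}\bsbeps\|_2$, and a chi-square tail bound together with a union bound over all $\binom{n}{k}$ supports and a peeling argument across $k$ gives uniformly
\[
\|\Pi_{\bsbX,S}\bsbeps\|_2^2\lesssim \sigma^2\bigl(p+k\log(en/k)\bigr)
\]
with probability at least $1-c(n-p)^{-c'}$. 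Subadditivity of $x\mapsto x\log(en/x)$ (concavity plus $f(0)=0$) yields $k\log(en/k)\le P(\hat\bsbg)+P(\bsbg^*)$. Applying Young's inequality $2|\langle\bsbDelta,\bsbeps\rangle|\le\tfrac12\|\bsbDelta\|_2^2+C\sigma^2(p+P(\hat\bsbg)+P(\bsbg^*))$ and combining with the deterministic bound above gives
\[
\tfrac12\|\bsbDelta\|_2^2\le C'\sigma^2 p+(C'-c_2 A)\sigma^2 P(\hat\bsbg)+(C'+C_2 A)\sigma^2 P(\bsbg^*).
\]
Choosing $A$ (and hence $A_0$) sufficiently large so that $c_2 A>C'$ absorbs the $P(\hat\bsbg)$ term, leaving exactly $\|\bsbDelta\|_2^2\lesssim \sigma^2 p+\sigma^2 o^*\log(en/o^*)$, which is the claim. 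The main obstacle is delicately handling the case split on the sign of $P(\bsbg^*)-P(\hat\bsbg)$ in the log-inequality so that the $P(\hat\bsbg)$ penalty survives with a positive coefficient after substituting the concentration bounds for $\|\bsbr\|_2^2$; the rest is a standard uniform cross-term control already familiar from the proofs of Theorems \ref{thm:fixed-point} and \ref{thm:tuning}.
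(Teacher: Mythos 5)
Your overall architecture matches the paper's in spirit (optimality inequality from the log-form criterion, concentration of a noise norm, uniform control of the cross term over sparse supports, absorbing $P(\hat\bsbg)$ by taking $A$ large), but the step you yourself flag as the crux is handled incorrectly. Your linearization $\log a-\log b\ge (a-b)/a$ makes the \emph{fitted} residual $\|\bsbr\|_2^2$ the multiplier of $\delta(\bsbg^*)-\delta(\hat\bsbg)$, so you need a two-sided bound on $\|\bsbr\|_2^2$, and you assert $\|\bsbr\|_2^2\ge c_1 n\sigma^2$ together with ``$n\asymp n-p$.'' Both claims are false in general: the assumption $P(\bsbg^*)\le (n-p)/A_0$ only forces $n-p\gtrsim o^*\log(en/o^*)$, not $n-p\gtrsim n$ (take $n=10^6$, $p=n-10^3$, $o^*=10$), and since the least-squares fit absorbs the noise energy in the $p$-dimensional column space, the fitted residual concentrates at the scale $(n-p)\sigma^2$, not $n\sigma^2$. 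Because $\delta(\bsbg)=AP(\bsbg)/(n-p)$ is normalized by $n-p$, what you actually need is $\|\bsbr\|_2^2\ge c(n-p)\sigma^2$ \emph{uniformly over the random support of $\hat\bsbg$}; this is true but requires a Hanson--Wright bound for $\bsbeps^T\Pi^\perp\bsbeps$ combined with a union bound over all supports of size at most $(n-p)/A$, which you do not supply. (On the positive side, the apparent $n/(n-p)$ inflation of the $P(\bsbg^*)$ term in your upper bound is harmless, since $P(\bsbg^*)\le(n-p)/A_0$ gives $\tfrac{n}{n-p}P(\bsbg^*)\le P(\bsbg^*)+p/A_0$.)

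The paper sidesteps this entirely. It sandwiches $e^{\delta}$ between $1/(1-\delta)$ and $1/(1-\delta/2)$, which turns the optimality inequality into $\|\bsbr\|_2^2\le\|\bsbeps\|_2^2\,(1-\delta(\hat\bsbg)/2)/(1-\delta(\bsbg^*))$, so that the multiplier of both $P(\bsbg^*)$ and $-P(\hat\bsbg)$ is $\|\bsbeps\|_2^2$ evaluated at the \emph{truth} --- a single fixed quadratic form needing only one application of Hanson--Wright, no union bound. Moreover, the low-dimensional statement is obtained by first proving a general result (Theorem \ref{thm:tuning_log_form-largep}) and then applying it to the reduced model $\bsbU_\perp^T\bsby=\bsbU_\perp^T\bsbg^*+\bsbeps'$ with $\bsbU_\perp\in\mathbb R^{n\times(n-p)}$, which is exactly what makes the effective sample size $n-p$ and the noise norm concentrate at $(n-p)\sigma^2$, consistent with the $(n-p)$-normalization of $\delta$ and the $1-c(n-p)^{-c'}$ probability. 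Your proof is repairable along the lines above, but as written the key lower bound is wrong and its justification does not follow from the hypotheses.
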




 Combining Theorem \ref{thm:minimax} and Theorem  \ref{thm:tuning_log_form}  supports     
  the {modified BIC} proposed  by \cite{she2011outlier} based on empirical studies.
 However,
\eqref{PIC-log}   has some subtle but important differences from    BIC, and the resemblance is just a coincidence. First, unlike most information criteria, its derivation and justification  do not need an    infinite sample size. Second, the factor  $\log (en/  \|\bsbg\|_0 )$---not  exactly $\log n$  as in BIC that assumes clean data---arises  due to the seek for outliers. Third,  the general PIC complexity term  \eqref{generalpenfortuning} (cf. Theorem \ref{thm:tuning} or  Theorem \ref{thm:tuning_log_form}')  has the overall inflation effect \textit{added} to  the degrees of freedom, as opposed to the standard multiplicative relation in    AIC,    BIC, and many others. 

 In common with most nonasymptotic studies,  our theorems do not show  tight constants. However, these  numerical constants   do not depend on a specific problem and in some easier cases can be determined by Monte Carlo experiments---for example, when $n>p$, we recommend $5.5   \|\bsbg\|_0 +   \|\bsbg\|_0\log(en/\|\bsbg\|_0)$  in \eqref{PIC-log} for robust regression. 
Theoretically deriving the optimal   constants  requires much finer
analysis and we will not pursue     in the current paper.

\section{Experiments} \label{sec:exp}

\subsection{Simulations} \label{subsec:simu}
In this part, we perform simulation studies to compare PIQ with some popular robust and/or sparse estimation methods in $p<n$ and $p>n$ setups. Unless otherwise mentioned, the raw predictor matrix $\bm X = [\bm x_1, \bm x_2, \ldots, \bm x_n]^T\in \mathbb R^{n\times p}$ is generated  in the default way: $\bm x_i\iid \mathcal N(\bm 0,\bm\Sigma)$, where $\bm \Sigma$ is a Toeplitz design covariance matrix with $\Sigma_{ij} = \rho^{|i-j|}$. Here, we consider regression and classification models contaminated with highly-leveraged outliers in the following four setups. More experiment results in other settings are in reported in Appendix \ref{sec:add_exp} due to limited space. Recall $o^* = \| \bsbg^*\|_0, s^* = \| \bsbb^*\|_0$.

\vspace{2ex}
\noindent\textbf{Example 1 ($n>p$, regression)}: $n=1000, p=10, \rho=0.5$, $\bm\beta^*= [1,1,0.5,0.5,-1.5,\allowbreak-1.5,-1,\allowbreak -1,1,1]^T$. We set $\bm\gamma^* = [5,\ldots,5,0,\ldots,0]^T$ with the first $o^*$   components   nonzero, and modify the first $o^*$ rows of $\bm X$ to $[3,\ldots, 3]$. The response vector is    generated according to $\bm y = \bm X\bm\beta^* + \bm\gamma^* + \bm\epsilon$ with $\epsilon_i\iid \mathcal N(0,1)$.\\

\noindent\textbf{Example 2 ($n>p$, classification)}: $n=1000, p=10, \rho=0.5$, $\bm\beta^* = [3,3,1.5,\allowbreak1.5,3,3,-3,-3,3,3]^T$. To introduce  outliers, we change the first $o^*$ rows of $\bm X$ to $[3,\ldots, 3]$   so that the first $o^*$ elements in $\bm X\bm\beta^*$ are   45 and  set $\gamma_i^* = -90$, $1\le i\le o^*$ and 0 otherwise. The response vector is generated according to the Bernoulli distribution with mean $1/(1+\exp(-\bm x_i^T\bm\beta^* - \gamma_i^* ))$ for the $i$th observation.\\

\noindent\textbf{Example 3 ($p>n$, regression)}: $n=200, p=1000$, $\bm\beta^*= [1,0.5,0,0,\allowbreak-0.5,-1, 0,  \ldots, 0]^T$ so that $s^*=4$, $\bm\gamma^* = [5,\ldots,5,0,\ldots,0]^T$. High-leveraged outliers are introduced in the same way as in Example 1.
\\

\noindent\textbf{Example 4 ($p>n$, classification)}: $n=200, p=1000$, $\bm\beta^* =  [3,1.5,3,\allowbreak0,\ldots,0]^T$. Similar to Example 2,      outliers are added,  with  $\gamma^*_i = -45$, $1\le i\le o^*$ and 0 otherwise. 
\\

The following 11 methods are included for comparison:     S-estimator  \citep{rousseeuw1984robust}, LTS \citep{rousseeuw1985multivariate},  Bianco and Yohai's robust logistic regression (B-Y)  \citep{bianco1996robust}, TLE \citep{hadi1997maximum},     robust quasi-likelihood estimator (QLE) \citep{robustGLM2001},  quantile lasso (QL)  \citep{Belloni2011}, robust LARS (RLARS) \citep{khan2007robust}, sparse LTS (S-LTS) \citep{alfons2013sparse}, sparse maximum tangent-likelihood estimator  (S-MTE) \citep{Qin2017},     elastic-net LTS for classification  (enetLTS)  \citep{Kurnaz2018} and PENSE \citep{freue2019robust}.
To reduce the interference of different regularization parameter tuning schemes, and to make a fair comparison, sparsity parameters and cut-off values for the sake of outlier identification or variable selection are all chosen to yield $1.5o^*$ or $1.5s^*$ nonzeros using the true model.
The other parameters are set to their default values. In calling PIQ, we use the BCD-type, with $\nu$  fixed at $10^{-4}$.
A quadratic cooling schedule $Q(t) = -at^2 + U$ with $a = (U-L)/T^2$ ($0\le t\le T$) is applied in regression and a logarithmic cooling $Q(t) = - a\log t + U$  ($0< t\le T$) with $a=(U-L)/\log T$ is used in classification, where $L$ is the desired cardinality,    $U=n$ or $p$ for outlier identification or variable selection,  and $T=200$.
Owing to the progressive backward optimization, a simple single starting point $\bsbb^{(0)} = \bsb0$ already seems satisfactory in PIQ.
In each setup, we repeat the experiment 50 times and evaluate the performance of an algorithm using  the following metrics.
For outlier identification, we report the masking or missing (\textbf{M}) rate, as well as the rate of successful joint detection (\textbf{JD}); see \cite{she2011outlier}. Concretely, the masking probability is the fraction of undetected true outliers (misses), and the JD is the fraction of simulations with zero miss.
For variable selection, we report the false alarm (\textbf{FA}) rate (the fraction of spuriously identified variables) in addition to M and JD. In regression experiments, the mean square error on $\bm\beta$, denoted by \textbf{Err}, is   used to evaluate estimation accuracy, while in classification, \textbf{Err} refers to the misclassification error on a separate clean testset containing  10,000 observations.
The computational time in seconds, denoted by \textbf{T}, is also included in the tables or figures to describe the complexity of each algorithm.

\begin{table}[!h]
  \setlength{\tabcolsep}{3pt}\centering\footnotesize
{\footnotesize 
  \caption{\small Performance comparison in $n>p$ settings (Example 1, Example 2 with $n=1000,p=10$). The outlier percentage varies from 1\% to 20\% for regression, and   3\% to 15\% for classification. \label{table:ex1-2}
}
\vspace{.1in}
  \begin{tabular}{lccccccccccccccccccc}
  \hline
  & \multicolumn{19}{c}{\textbf{Regression}}\\

   & \multicolumn{3}{c}{$o^*=10$} && \multicolumn{3}{c}{$o^*=50$} && \multicolumn{3}{c}{$o^*=100$} && \multicolumn{3}{c}{$o^*=150$} && \multicolumn{3}{c}{$o^*=200$}\\ \cmidrule(lr){2-4} \cmidrule(lr){6-8} \cmidrule(lr){10-12} \cmidrule(lr){14-16} \cmidrule(lr){18-20}
              & Err & M & JD && Err & M &JD && Err & M & JD && Err & M & JD && Err & M & JD\\
  \hline
  S           & 0.06 & 2.0 & 82 &  & 0.05 & 0.6  & 78 &  & 0.05 &  2.2 &  76 &   & 0.27 & 60  & 26 &  & 0.40 & 77 & 0 \\
  LTS         & 0.02 & 2.0 & 82 &  & 0.02 & 0.8  & 70 &  & 0.03 &  4.3 &  64 &   & 0.13 & 38  & 40 &  & 0.29 & 74 & 2 \\
 RLARS & 0.02 & 2.4 & 78 &  & 0.11 & 48   & 8  &  & 0.25 &  78  &  0  &   & 0.26 & 78  & 0  &  & 0.28 & 74 & 0 \\
  PENSE       & 0.02 & 1.6 & 86 &  & 0.02 & 0.6  & 78 &  & 0.23 &  79  &  0  &   & 0.27 & 80  & 0  &  & 0.29 & 75 & 0 \\
  \textbf{PIQ} & 0.02 & 1.6 & 86 &  & 0.02 & 0.4  & 84 &  & 0.03 &  0.2 & 88  &   & 0.03 & 0.1 & 90  &  & 0.05 & 0.1  & 90 \\


  \hline\hline

  & \multicolumn{19}{c}{\textbf{Classification}}\\

     & \multicolumn{3}{c}{$o^*=30$} && \multicolumn{3}{c}{$o^*=60$} && \multicolumn{3}{c}{$o^*=90$} && \multicolumn{3}{c}{$o^*=120$} && \multicolumn{3}{c}{$o^*=150$}\\ \cmidrule(lr){2-4} \cmidrule(lr){6-8} \cmidrule(lr){10-12} \cmidrule(lr){14-16} \cmidrule(lr){18-20}
              & Err & M & JD && Err & M & JD  && Err & M & JD  && Err & M & JD && Err & M & JD\\
  \hline
  B-Y     & 0.06 & 0.1 &  98     &  & 0.14 &  42 & 56  &   & 0.28 &  100 & 0     &  & 0.29 & 100 & 0   &  & 0.30 & 100& 0\\
  QLE     & 0.06 &  0  &  100    &  & 0.23 &  88 & 12  &   & 0.28 &  100 & 0     &  & 0.29 & 100 & 0   &  & 0.30 & 100& 0\\
  TLE     & 0.09 &  0  &  100    &  & 0.09 &  0  & 100 &   & 0.08 &   0  & 100   &  & 0.11 &  16 & 84  &  & 0.35 & 100& 0\\
  \textbf{PIQ}   & 0.06 &0&100    &  &  0.06  & 0& 100 &  & 0.07 &   0 & 100 &  & 0.07 &  0 & 100  & & 0.08 & 0  & 100 \\
  \hline
  \end{tabular}
}
\end{table}

\begin{figure}[!h]
  \centering
  \subfigure[regression]{\includegraphics[width=0.48\textwidth]{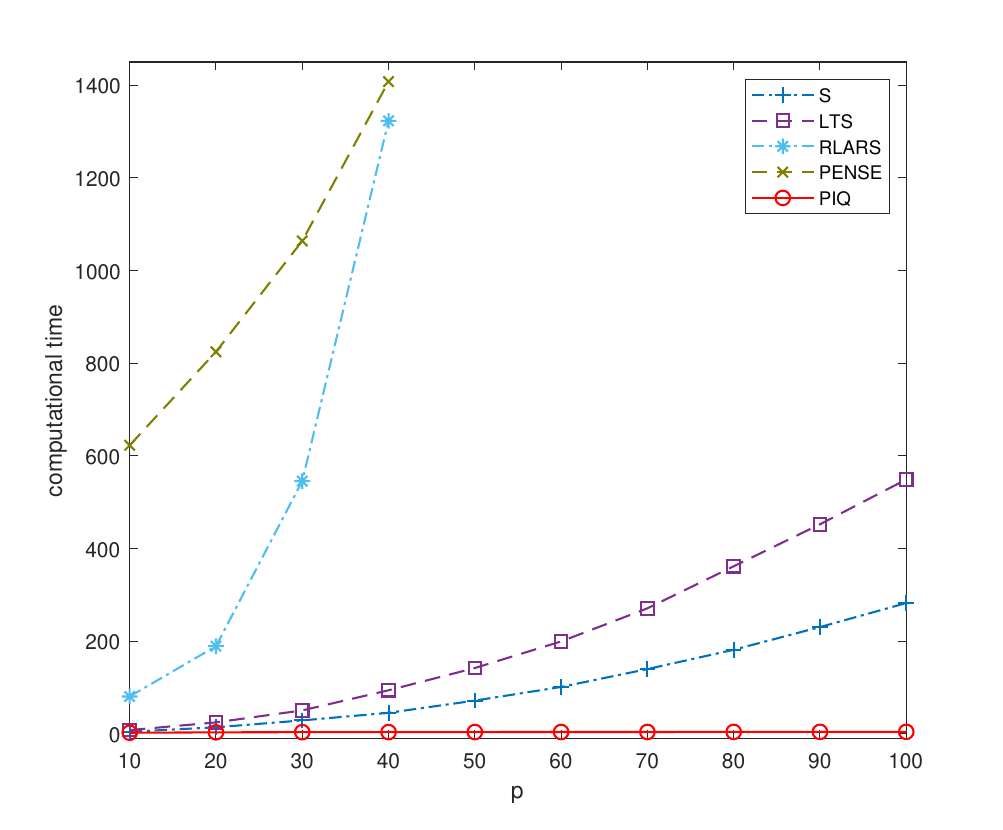}}
  \subfigure[classification]{\includegraphics[width=0.48\textwidth]{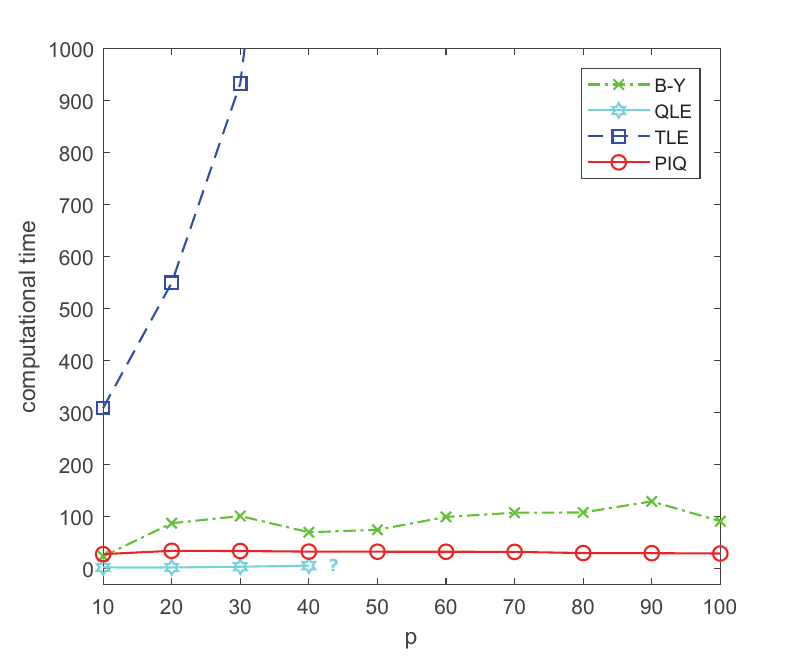}}
  \caption{\small Computational time comparison between different robust methods as $p$ increases. Left: regression  with $o^*=0.1n$ and $n=1000$ (RLARS and PENSE are too costly compared with the other methods and only parts of their cost curves are shown). Right: classification with $o^*=0.05n$ and $n=1000$ (when $p>40$, QLE delivered an error message  and could not continue).  } \label{figure:ex1-2}
\end{figure}

Table \ref{table:ex1-2} shows a performance comparison for robust regression and robust classification in Example 1 and Example 2, respectively.
To investigate algorithm scalability, we also give  a computational time comparison  in Figure \ref{figure:ex1-2} with respect to the dimension $p$. According to the table, with less than $5\%$ outliers,     all  methods  show  satisfactory results for   regression or classification. But as the percentage of outliers goes up to     $10\%$ (say) or more, the performance of many conventional   methods degrades significantly. Also,  some algorithms have poor numerical stability as $p$ is just over 40. In comparison,    PIQ is much more accurate and robust, and its scalability is evident.

\begin{table}[!h]
  \setlength{\tabcolsep}{3pt}\centering \footnotesize

{\footnotesize 
\caption{\small Performance comparison in $p>n$ settings (Example 3, Example 4 with $n=200, p =1000$). The outlier percentage varies from 5\% to 20\%.} \label{table:ex3-4}
\vspace{.1in}
  \begin{tabular}{lccc ccc ccc ccc ccc}
  \hline
  & \multicolumn{15}{c}{\textbf{Regression}}\\
  & \multicolumn{7}{c}{$o^*=10$}  & \multicolumn{7}{c}{$o^*=20$}   \\ \cmidrule(lr){2-8} \cmidrule(lr){10-16}

   & Err  & M$^{\gamma}$ & JD$^{\gamma}$ & M$^{\beta}$ & FA$^{\beta}$ & JD$^{\beta}$ & \textbf{T} & & Err  & M$^{\gamma}$ & JD$^{\gamma}$ & M$^{\beta}$ & FA$^{\beta}$ & JD$^{\beta}$ & \textbf{T}\\
  \hline
  QL      &  0.37    & 36  & 16 & 19 & 0.1  & 40 &  43  && 0.71 &  81 & 0 & 25 & 0.2 & 24 & 42  \\
  S-MTE   &  0.44    & 73  & 0  & 14 & 0.2  & 54 &  35  && 0.64 &  82 & 0 & 18 & 0.2 & 48 & 39 \\
  RLARS   &  0.53    & 82 & 4  & 25 & 0.2  & 40 &  42  && 1.06 &  89& 0 & 42 & 0.3 & 0  & 42  \\ 
  S-LTS   &  0.28    & 4  & 80  & 1  & 0.1  & 98 &  134  && 0.27 &  5 & 74& 1  & 0.2 & 96 & 134  \\
  \textbf{PIQ}   & 0.11 & 2 & 86 & 3 & 0.2 & 90 & 4  &&  0.23 &   5 & 80 & 10 & 0.2 &64 & 4 \\
  \midrule[1pt]
  & \multicolumn{7}{c}{$o^*=30$}  & \multicolumn{7}{c}{$o^*=40$}   \\ \cmidrule(lr){2-8} \cmidrule(lr){10-16}
   & Err  & M$^{\gamma}$ & JD$^{\gamma}$ & M$^{\beta}$ & FA$^{\beta}$ & JD$^{\beta}$ & \textbf{T} & & Err  & M$^{\gamma}$ & JD$^{\gamma}$ & M$^{\beta}$ & FA$^{\beta}$ & JD$^{\beta}$ & \textbf{T}\\
  \hline
  QL     & 0.89 &  81 & 0  &  33 & 0.2  &  8  & 45    && 1.10 &  76  & 0 &  41 & 0.2  & 4  & 50 \\
  S-MTE  & 0.82 &  79 & 0  &  28 & 0.2  &  26 & 42    && 0.97 &  75  & 0 &  34 & 0.2  & 14 & 54 \\
RLARS  & 1.52 &  91 & 0  &  58 & 0.3  &  0  & 41    && 1.55 &  85  & 0 &  57 & 0.3  & 0  & 40 \\
  S-LTS  & 1.24 &  79 & 10 &  29 & 1.4  &  16 &  132  && 1.52 &  82  & 0 &  46 & 1.5  & 0  & 136 \\
  \textbf{PIQ}& 0.34 & 3 &  78 & 16 & 0.3 &  46 & 4   && 0.62 &  5 &  76 & 27 & 0.3 & 22&4\\

  \hline\hline

  & \multicolumn{15}{c}{\textbf{Classification}}\\

  & \multicolumn{7}{c}{$o^*=10$}  & \multicolumn{7}{c}{$o^*=20$}   \\ \cmidrule(l){2-8} \cmidrule(l){10-16}
    & Err  & M$^{\gamma}$ & JD$^{\gamma}$ & M$^{\beta}$ & FA$^{\beta}$ & JD$^{\beta}$ & \textbf{T} & & Err  & M$^{\gamma}$ & JD$^{\gamma}$ & M$^{\beta}$ & FA$^{\beta}$ & JD$^{\beta}$ & \textbf{T}\\
  \hline
  enetLTS  & 12.4 & 4 & 96 & 1.3 & 0.1 &96 & 62  & &16.0 & 22 & 78 & 5.3 & 0.24 & 86 & 67  \\
  \textbf{PIQ}& 11.0 & 0 &100 & 0 & 0.1 & 100 & 16 & &11.6 &  0 & 100 & 0 &0.1 &100& 16  \\
  \midrule[1pt]
  & \multicolumn{7}{c}{$o^*=30$}  & \multicolumn{7}{c}{$o^*=40$}   \\ \cmidrule(l){2-8} \cmidrule(l){10-16}
  & Err  & M$^{\gamma}$ & JD$^{\gamma}$ & M$^{\beta}$ & FA$^{\beta}$ & JD$^{\beta}$ & \textbf{T} & & Err  & M$^{\gamma}$ & JD$^{\gamma}$ & M$^{\beta}$ & FA$^{\beta}$ & JD$^{\beta}$ & \textbf{T}\\
  \hline
  enetLTS & 36.2  &  94  &  6 & 43 & 1.0 & 16 &86 &    &  19.8 &   100 &0  & 66 & 0.9 &0 &96  \\
  \textbf{PIQ} & 12.6 &0 &  100 & 0.7 & 0.1 & 98 & 14  & & 17.0 &  8 & 84 & 9 &0.1 & 78 &14\\
  \hline
  \end{tabular}
}
\end{table}

The advantages of PIQ are even more impressive in Table \ref{table:ex3-4} which summarizes the high-dimensional results. In particular, its boost in accuracy and outlier-resistance is not accompanied by a sacrifice in computational time relative to  other robust methods.  PIQ offered substantial time savings: its computational cost was at most about  one fourth of that of the other algorithms.

We also conducted experiments with different correlation strengths, covariance structures and sparsity levels, and similar conclusions can be drawn from the performance comparison; see Appendix \ref{sec:add_exp} for details.

\subsection{Robust classification of spam email}

The email spam dataset (available at  \url{ftp.ics.uci.edu}) contains      57 predictors  including   relative frequencies of 54 most commonly occurring words and characters and 3 statistics of   capital letters.
 The total number of emails is 4,601,   of which   1,813 are spam ($y=1$) and 2,788 are non-spam ($y=0$).
We  would like to know   whether accommodating   \textit{mislabeled} emails (if any) could lead to improved   spam/non-spam classification.

We applied PIQ with  the  logistic deviance   and the Huberized hinge loss \citep{Chapelle2007},  denoted by PIQ-L and PIQ-H, respectively,   with  PIC for tuning.
 Table \ref{spam_test} compares PIQ   with    logistic regression, B-Y, TLE, and QLE   (cf.  Section \ref{subsec:simu}) in terms of misclassification error and   F1-score \citep{FScore}, both averaged over 100 data splits ($70\%$ for training and $30\%$ for   test).
      From the table, TLE and QLE  failed  due to the      large   sample size.     PIQ outperforms the other methods and the choice of the loss does not affect  its performance on this dataset.

\begin{table}[h!]
        \centering
        \small
        \caption{\small Email spam data:  misclassification error rates  and F1-scores.}
        \vspace{2ex}
        \begin{tabular}{lcccc}
                \hline
                & Miscls (\%)   & F1 (\%)    \\
                \hline
                Logistic   & 7.5 & 90.3  \\ 
                B-Y        & 7.0   &  89.4  \\
                TLE  & -- &  -- \\
                QLE  & -- &  -- \\
                PIQ-L      & 6.8  &  91.3   \\
                PIQ-H       & 6.8  &  91.3  \\
                \hline
        \end{tabular}
        \label{spam_test}
\end{table}

We then studied  the subset of outliers identified by PIQ. Take PIQ-L for instance: there are  73 nonzeros $\hat\gamma_i$'s,   the smallest magnitude being   10.97. To verify if there is indeed a distinction between  the  outliers ($o=1$) and clean    observations ($o=0$), we examined 6 indicative features,   {\lmtt free}, {\lmtt !},  {\lmtt \$}, {\lmtt hp},  {\lmtt 650}  and  {\lmtt cap\_avg}. The first three   words and signs are often used in spam to attract  people's attention; {\lmtt hp} and   {\lmtt 650} are related to the data creator's working company and area code; {\lmtt cap\_avg} measures the average length of uninterrupted sequences of capital letters.
Given each feature, we made   a $2\times 2$ table  with its rows corresponding to    spam and non-spam ($y=1$, $0$) and columns  corresponding to outlying and  clean ($o =1$, $0$).  The cross-product ratio or    odds ratio    visualizes the strong association between $y$ and $o$, as shown in   Figure \ref{fig:spam_odds_ratio}. Indeed,     even the odds ratios   closer to 1   are either as small as     $\exp(-2.8)=  0.06$ or as large as $\exp(4.89)=  133$, demonstrating  a strong discrepancy between the outlier subset and the clean subset.  We also built  an ANOVA model on each feature to test the significance of $o$;  all the obtained $p$-values are below 1e-10, a clear evidence of  the outlyingness.

The data analysis suggests that  even on some conventional datasets, mislabeling frequently occurs and   guarding against   outliers is beneficial. It is fair to say that modern statistical analysis involves more and more large-scale datasets, but with bigger data also come more junk and errors.  This could be a severe issue  for binary classification, since any inaccuracy in  labeling simply means  the response goes to the other extreme. Applying a classifier designed under no consideration of gross outliers could be risky  in reality, and our  resistant learning framework   offers an effective  fix in this regard.

\begin{figure}[h!]
        \centering
        \includegraphics[width=0.55\textwidth]{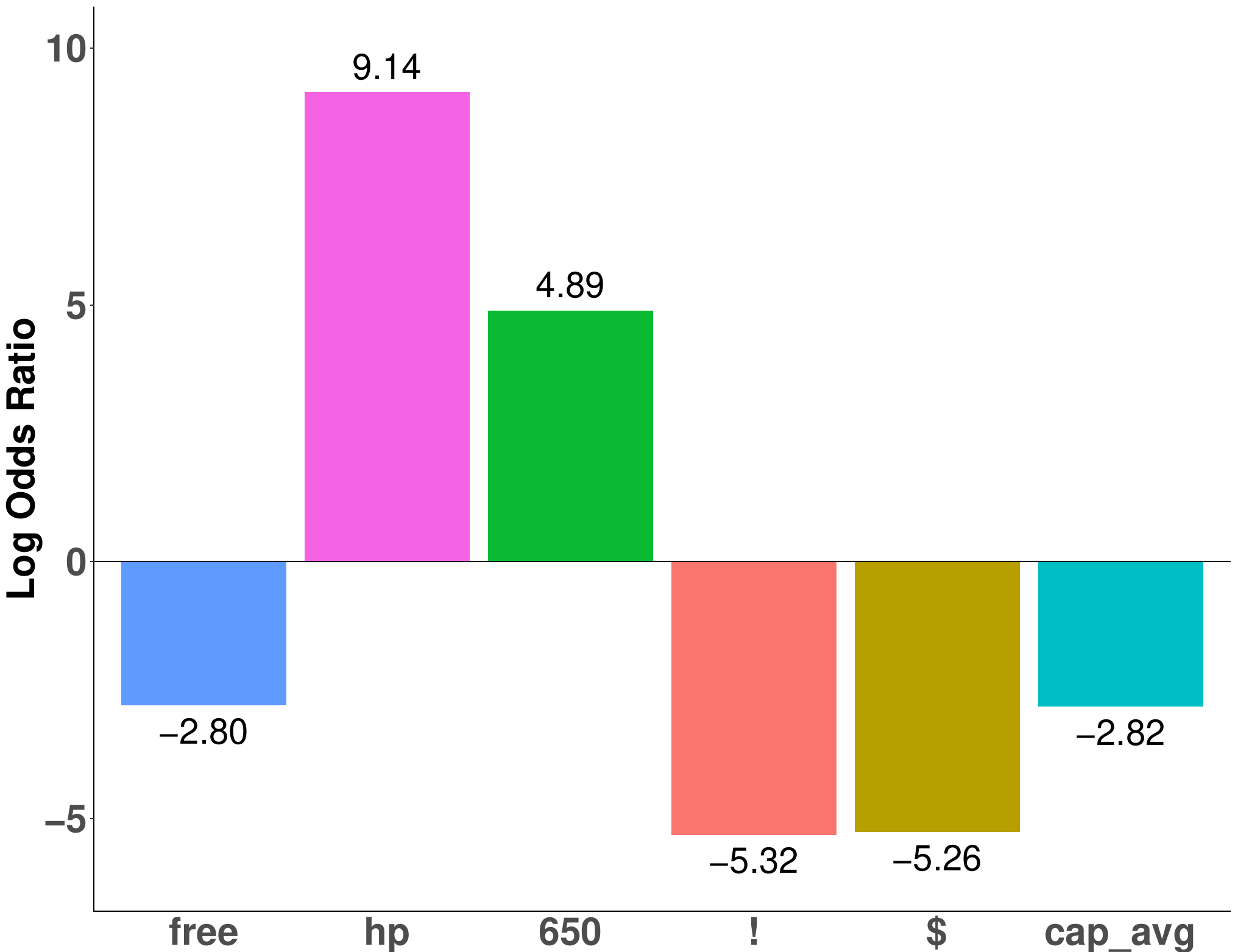}
        \caption{\small Log odds ratios of six $2\times 2 $ tables with row  variable $y $ and column  variable $o$.} \label{fig:spam_odds_ratio}
\end{figure}


\subsection{Robust neural network for Parkinson's disease detection}
Neural networks have attracted a great deal of attention in machine learning. In this experiment, we adapt the PIQ technique to a neural net model and test its performance for Parkinson's disease (PD) detection. There has been a surge of interest in speech analysis of PD recently, as voice recordings from these patients tend to show typical patterns like aperiodic vibrations.
We use the dataset from \cite{little2007exploiting} which collects a range of biomedical voice measurements from normal people and patients with PD, where the 22 predictors for 195 recordings include various measures of vocal fundamental frequency and variation in amplitude, as well as some nonlinear metrics constructed from raw voice recordings.
We randomly split the dataset into a training subset (60\%) and a test subset (40\%).

Previous studies suggest the presence of nonlinear effects and interactions of the features in this challenging classification task (\cite{little2007exploiting}, \cite{sakar2013collection}). Hence we designed a feedforward neural network that consists of an input layer, two dense layers with 22 and 12 nodes respectively, and one output layer. The dense layers deploy the popular rectified linear units (ReLU) \citep{Hinton2010relu}, and the output node uses a softmax activation.
We trained the network for 200 epochs and attained a misclassification error rate of 9.1\% on the test data---in contrast, an SVM with a nonlinear RBF kernel gave 17.9\%. (Meanwhile, classical robust classification methods like B-Y, QLE, and TLE all behaved poorly, with the misclassification error rates $>25\%$.)

On the other hand, the noisy signals and potential outliers occurring suggest the need of robustification. But the overall network criterion barely  resembles  that of regression or logistic regression due to  its substantial nonlinearity and hierarchy,  and how to reweight the samples in this sophisticated  model to limit the influence of outliers sounds  tricky.
Following the additive robustification scheme, we simply added sample-indicator features into the input layer (cf. Figure \ref{figure:nn-plot}), accompanied by a group $\ell_0 + \ell_2$ regularization ($q = 0.05n$, $\nu =$ 1e-4) to correct  for sample outlyingness   in constructing predictive factors.
Our optimization-based PIQ algorithm  can be seamlessly incorporated into the process of back-propagation.
Though a simple modification, the robustified neural network gave an impressive  misclassification error rate 6.4\%, an improvement of about 30\% over that of the vanilla neural network.

\begin{figure}[!ht]
  \centering{}
 \includegraphics[width=0.75\textwidth]{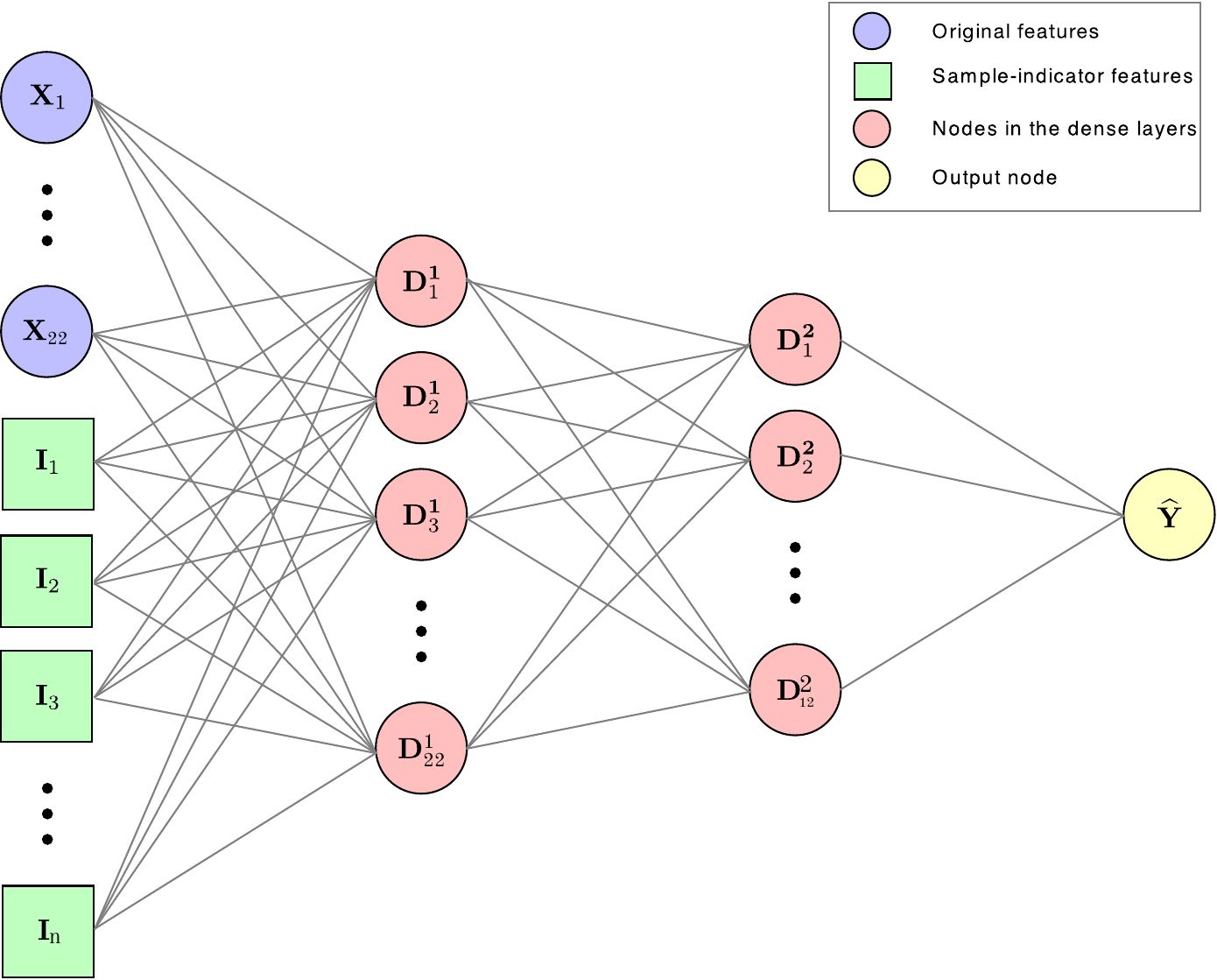}
  \caption{\small Structure of the robustified neural network on Parkinson's disease data, where sample-indicator features (denoted by green square nodes) are added to the input layer.} \label{figure:nn-plot}
\end{figure}

\section{Summary}
The work studied how to gain outlier resistance for a given   loss that is not necessarily quadratic and may go beyond the likelihood setup.
Motivated by the method of trimming, we proposed  an  $\ell_0$+$\ell_2$  regularized learning framework. We  showed that  pursuing its (global) minimizers   can cope with extreme outliers regardless of  leverage and outlyingness,  but  on large-scale data, there is always a   balance between statistical accuracy   and computational efficiency. Hence   a more meaningful and deeper question is how to  design a scalable algorithm to    alleviate    the starting point requirement and the regularity conditions to obtain the desired order of statistical accuracy.

In this work, the interplay between high dimensional statistics  and nonconvex optimization led to fruitful   results  for resistant learning on the   learning rate, cardinality control,  the choice of regularization parameters, and so on. In particular,   a novel progressive backward optimization scheme is brought forward which, unlike subset sampling, is cost effective, and can significantly improve the statistical performance of iterative quantile-thresholding.
We hope that the work could     raise  concern about    outliers among (big) data analysts,  as well as providing a sensible  and universal means to boosting   resistance  in regression, classification and more sophisticated learning tasks.

%


\appendix
\numberwithin{equation}{section}
\numberwithin{thm}{section}
\numberwithin{table}{section}
\section{Proofs} \label{sec:proofs}

\subsection{Proof of Theorem \ref{th:criterion_equivalence}}

First, we introduce a lemma to be used for proving the three statements.
\begin{lemma}  \label{lemma:gamma_optimal}
Let $l_0(\eta;y)$ be a     loss function defined on $\eta\in \mathbb R$ with $y\in \mathcal  Y \subset\mathbb R$  as a parameter and
$ \inf_{\eta}l_0(\eta;y)=L  >-\infty \mbox{ for all } y \in \mathcal Y$. 
 Define $\bsbe = \bsbX \bsbb$  and   $l(\bsbe;\bsby) = \sum_{i=1}^n l_0(\eta_i;y_i)$. Given  any $\bsbb $, let    $l_0^{(1)}(\bsbb) \leq \ldots \leq l_0^{(n)}(\bsbb)$ be the   order statistics of   $l_0(\bsbx_i^T\bsbb; y_i)$ ($1\le i \le n$). Then\begin{align} \label{eq:gamma_optimal}
&\inf_{  \|\bsbg\|_0 \leq q}l(\bsbX \bsbb + \bsbg; \bsby) = \sum_{i=1}^{n-q} l_0^{(i)}(\bsbb).
\end{align}
Moreover, given any $\tau\ge 0$,
\begin{align}\label{eq:gamma_optimal2}
&\inf_{   \bsbg }\, l(\bsbX \bsbb + \bsbg; \bsby) + \tau \| \bsbg\|_0 = \sum_{i=1}^{n} (\tau +L)\wedge l_0(\bsbx_i^T\bsbb; y_i).
\end{align}
\end{lemma}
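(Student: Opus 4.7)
The plan is to exploit the sample-additivity of $l$ to decompose the infimum pointwise over the coordinates of $\bsbg$, reducing both claims to elementary combinatorial selection problems.

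For \eqref{eq:gamma_optimal}, I would first fix an arbitrary support set $S\subset\{1,\dots,n\}$ with $|S|\le q$ and consider the restricted problem $\inf_{\bsbg:\mathrm{supp}(\bsbg)\subset S} l(\bsbX\bsbb+\bsbg;\bsby)$. Because $l(\bsbX\bsbb+\bsbg;\bsby)=\sum_{i} l_0(\bsbx_i^T\bsbb+\gamma_i;y_i)$ separates over $i$, for $i\notin S$ the $i$th term is forced to be $l_0(\bsbx_i^T\bsbb;y_i)$, while for $i\in S$ the range of $\gamma_i$ is unconstrained and $\inf_{\gamma_i}l_0(\bsbx_i^T\bsbb+\gamma_i;y_i)=\inf_\eta l_0(\eta;y_i)=L$. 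Hence the restricted infimum equals $|S|L+\sum_{i\notin S}l_0(\bsbx_i^T\bsbb;y_i)$. I would then minimize over $S$: since each summand $l_0(\bsbx_i^T\bsbb;y_i)\ge L$, enlarging $S$ up to cardinality $q$ can only help, and the gain from including index $i$ equals $l_0(\bsbx_i^T\bsbb;y_i)-L\ge0$, which is maximized by taking $S$ to be the indices corresponding to the top-$q$ values of $l_0(\bsbx_i^T\bsbb;y_i)$. This yields the infimum $qL+\sum_{i=1}^{n-q}l_0^{(i)}(\bsbb)$ (with the $qL$ term absorbed into the stated formula under the normalization $L=0$ used throughout the paper).

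For \eqref{eq:gamma_optimal2}, I would argue analogously but even more locally: because both $l$ and $\|\bsbg\|_0=\sum_i 1_{\gamma_i\ne 0}$ are sample-additive, the joint objective separates completely over $i$, and
\[
\inf_{\bsbg} l(\bsbX\bsbb+\bsbg;\bsby)+\tau\|\bsbg\|_0 \;=\; \sum_{i=1}^n \inf_{\gamma_i\in\mathbb R}\bigl\{l_0(\bsbx_i^T\bsbb+\gamma_i;y_i)+\tau\,1_{\gamma_i\ne 0}\bigr\}.
\]
Each inner infimum is a two-way comparison: setting $\gamma_i=0$ yields $l_0(\bsbx_i^T\bsbb;y_i)$, while allowing $\gamma_i\ne 0$ yields $\tau+\inf_\eta l_0(\eta;y_i)=\tau+L$. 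The minimum of these two options is $(\tau+L)\wedge l_0(\bsbx_i^T\bsbb;y_i)$, giving the stated formula upon summation.

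I do not anticipate any substantial obstacle; the only care required is to work with $\inf$ rather than $\min$ throughout, since the lower bound $L$ need not be attained by any finite $\eta$. One just notes that for any $\varepsilon>0$ a choice of $\gamma_i$ can drive the $i$th loss within $\varepsilon$ of $L$, so both infima match the claimed expressions. The ordering argument in Part~1 is a one-line exchange argument (if a chosen index has a smaller $l_0$ value than an unchosen one, swapping them does not increase the objective), and nothing beyond the definition of order statistics is needed.
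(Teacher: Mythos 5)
Your proof is correct and follows essentially the same route as the paper's: exploit sample-additivity to reduce the infimum over $\bsbg$ to coordinatewise infima (each equal to $\inf_\eta l_0(\eta;y_i)$ on the support, and $l_0(\bsbx_i^T\bsbb;y_i)$ off it), then select the support holding the $q$ largest losses. Your handling of the unattained infimum via an $\varepsilon$-argument matches the paper's device of letting $\gamma_i\in\bar{\mathbb R}$, and your explicit exchange argument and the remark that the $qL$ term corresponds to the paper's normalization $L=0$ are, if anything, more careful than the published proof.
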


For a),  let  $(\hat \bsbb, \hat \bsbg)$ be a globally optimal solution to\begin{equation} \label{eq:criterion}
\min_{(\bsbb,   \bsbg)} l(\bsbe;\bsby) \mbox{ s.t. } \bsbe = \bsbX \bsbb + \bsbg, \mbox{ } \|\bsbg\|_0 \leq q.
\end{equation}
Then for any $\bsbb$,
$
l(\bsbX\hat \bsbb + \hat\bsbg; \bsby) \leq  l(\bsbX \bsbb + \hat\bsbg_\bsbb; \bsby) ,
$
where $\hat \bsbg_\bsbb$ is an optimal solution of $\bsbg$   given $\bsbb$.
From Lemma \ref{lemma:gamma_optimal}, $
l(\bsbX\hat \bsbb + \hat\bsbg; \bsby) = \sum_{i=1}^{n-q} l_0^{(i)}(\hat \bsbb), \mbox{ } l(\bsbX \bsbb + \hat\bsbg_\bsbb; \bsby) = \sum_{i=1}^{n-q} l_0^{(i)}(\bsbb),
$
and so $
\sum_{i=1}^{n-q} l_0^{(i)}(\hat \bsbb) \leq \sum_{i=1}^{n-q} l_0^{(i)}(\bsbb).
$

Now suppose that  $\hat \bsbb$ is a globally optimal solution that minimizes the trimmed criterion on the RHS of \eqref{eq:gamma_optimal}. Then for any $(\bsbb, \bsbg)$,
\begin{align*}
l(\bsbX \bsbb + \bsbg; \bsby) & \geq l(\bsbX \bsbb + \hat \bsbg_\bsbb; \bsby)  = \sum_{i=1}^{n-q} l_0^{(i)}(\bsbb) \geq\sum_{i=1}^{n-q} l_0^{(i)}(\hat \bsbb)  = l(\bsbX \hat \bsbb + \hat \bsbg_{\hat\bsbb}; \bsby),
\end{align*}
which means $(\hat \bsbb; \hat \bsbg_{ \hat \bsbb})$ is a global minimizer of \eqref{eq:criterion}.
The second conclusion for the penalized form and winsorized form can be proved similarly (assuming $L=0$).

To prove the last statement,  we set $  \mathcal C(\bsbb) =    \{i:  l_0(\bsbx_i^T \bsbb; y_i)> \tau , 1\le i \le n\} $ and $c(\bsbb)= | \mathcal C(\bsbb)|$. So  $\sum_{i=1}^{n}  \tau  \wedge l_0(\bsbx_i^T\bsbb; y_i)=c(\bsbb) \tau + \sum_{i=1}^{n-c(\bsbb)} l_0^{(i)}( \bsbb)$.

We claim that for any $\hat\bsbb\in \arg\min  \sum  \tau \wedge l_0(\bsbx_i^T\bsbb; y_i)$   with $q =c(\hat \bsbb)$, it is also an optimal solution to $\min  \sum_{i=1}^{n-q} l_0^{(i)}(\bsbb)$. Prove by contradiction: if $  \sum_{i=1}^{n-q} l_0^{(i)}(\hat \bsbb)>  \sum_{i=1}^{n-q} l_0^{(i)}(\bsbb)$ for some $\bsbb$, then since $\tau  \wedge l_0(\bsbx_i^T\hat \bsbb; y_i) = \tau\ge \tau  \wedge l_0(\bsbx_j^T  \bsbb; y_j)$ for all $i \in  \mathcal C(\hat \bsbb), j: 1 \le j\le n $, we would get   $\sum_{i=1}^{n}  \tau  \wedge l_0(\bsbx_i^T\hat \bsbb; y_i)> \sum_{i=1}^{n}  \tau  \wedge l_0(\bsbx_i^T\bsbb; y_i)$.

Finally,  we show that given  any $(\hat \bsbb, \hat \bsbg)\in \arg \min
l(\bsbX \bsbb + \bsbg; \bsby) + \tau \| \bsbg\|_0 $ with $q = \|\hat \bsbg\|_0$ (where $\hat \gamma_i$ are allowed to take $\pm \infty$), it is also an optimal solution to
$\min_{\bsbb , \bsbg } l(\bsbX \bsbb + \bsbg;\bsby)     \mbox{ s.t. } \|\bsbg\|_0 \leq q$.  Suppose that  there exists $(\bsbb, \bsbg)$ with $\|\bsbg\|_0\le q$ so that  $  l(\bsbX \hat \bsbb + \hat \bsbg;\bsby)>  l(\bsbX   \bsbb +   \bsbg;\bsby)$. Then, by considering   $(\bsbb, \bsbg')$ with $\bsbg' = \hat \bsbg_\bsbb$ and $\|\bsbg'\|_0\le q$ as defined previously, $\inf_{  \|\bsbg\|_0 \leq q}l(\bsbX \hat \bsbb +\bsbg; \bsby)=l(\bsbX \hat \bsbb + \hat \bsbg;\bsby) >\inf_{  \|\bsbg\|_0 \leq q}l(\bsbX   \bsbb +\bsbg; \bsby)  $, and thus   $  \sum_{i=1}^{n-q} l_0^{(i)}(\hat \bsbb)>  \sum_{i=1}^{n-q} l_0^{(i)}(\bsbb)$ from Lemma \ref{lemma:gamma_optimal}. Using the fact that $\hat \gamma_i\ne 0$ implies $\tau  + l_0(\bsbx_i^T\hat \bsbb+\hat \gamma_i; y_i) =\tau$, and the construction of $\bsbg' $,  $l(\bsbX \hat \bsbb + \hat\bsbg; \bsby) + \tau \|\hat  \bsbg\|_0 = q\tau + \sum_{i=1}^{n-q} l_0^{(i)}(\hat \bsbb) $ and $l(\bsbX \bsbb +  \bsbg'; \bsby) + \tau \|   \bsbg'\|_0 = q\tau + \sum_{i=1}^{n-q} l_0^{(i)}(  \bsbb) $. Summarizing the above  gives $l(\bsbX \hat \bsbb + \hat \bsbg;\bsby)> l(\bsbX \bsbb + \bsbg';\bsby)$, contradicting the optimality of  $(\hat \bsbb, \hat \bsbg)$. The proof is complete.

\paragraph{Proof of Lemma \ref{lemma:gamma_optimal}}

Recall that   the loss is assumed  bounded from below. Without loss of generality,  let
$\inf_{\eta}l_0(\eta;y) = 0 \mbox{ for all } y \in \mathcal Y$ (otherwise, we can redefine the loss function by $l_0(\eta; y) - \inf_\eta l_0(\eta;y)$).

Given $\bm\beta$,  denote by $\hat \bsbg_\bsbb$ a minimizer of $l(\bm X\bm\beta + \bm\gamma; \bm y)$ subject to $\|\bm\gamma\|_0 \le q$, whose components   can take $\pm \infty$. Then if $\hat \gamma_i = 0$,  $ l_0(\bsbx_i^T  \bsbb + \hat \gamma_i; y_i) = l_0(\bsbx_i^T  \bsbb; y_i) \ge 0$. If $\hat \gamma_i \neq 0$, by varying its value, which will not   violate the constraint, one can argue that $l_0(\bsbx_i^T  \bsbb + \hat \gamma_i; y_i) =\inf_{\eta}l_0(\eta;y_{i}) = 0$. 
Therefore, $l(\bsbX \bsbb + \hat \bsbg_{\bsbb}; \bsby) = \sum_{i=1}^{n-q} l_0^{(i)}(\bsbb)$.
The proof of \eqref{eq:gamma_optimal2} follows similar lines.
\subsection{Proof of Theorem \ref{thm:BCD_linear_rate}}
\label{proof:resistregalg}
First, we have the following result to argue the (alternative) optimality of   $\bsbg^{(t+1)}$ given $\bsbb^{(t)}$, cf.  Lemma C.1 in \cite{She2013Super}. 

\begin{lemma} \label{lemma:Theta_optimality}
Given any $\bsb s \in \mathbb R^n$, $0 \leq q \leq n$, $\nu \geq 0$, $\hat \bsbxi = \Theta^{\#}(\bsb s; q, \nu)$ is a global optimal solution to $\min_{\bsbxi \in \mathbb R^n} \|\bsb s - \bsbxi\|_2^2/2 + \nu \|\bsbxi\|_2^2/2 \mbox{ s.t. }\|\bsbxi\|_0 \leq q.$
\end{lemma}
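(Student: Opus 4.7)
The plan is to reduce the joint minimization to a separable, two-stage problem: first optimize the magnitudes for a fixed support, then choose the best support of size $\le q$. The key observation is that the objective
$
f(\bsbxi)=\sum_{i=1}^n\bigl[(s_i-\xi_i)^2/2+\nu\xi_i^2/2\bigr]
$
is already separable across coordinates, so once the support $S=\{i:\xi_i\ne 0\}$ (with $|S|\le q$) is fixed, each nonzero coordinate can be optimized independently.

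First, I will fix an arbitrary $S$ with $|S|\le q$ and minimize over the free components. Coordinates outside $S$ are forced to $0$ and contribute $s_i^2/2$. For $i\in S$, the univariate minimizer of $(s_i-\xi_i)^2/2+\nu\xi_i^2/2$ is $\xi_i^\star=s_i/(1+\nu)$, yielding the value $\nu s_i^2/\{2(1+\nu)\}$. Substituting, the optimal value for support $S$ becomes
\[
V(S)\;=\;\sum_{i\notin S}\frac{s_i^2}{2}+\sum_{i\in S}\frac{\nu s_i^2}{2(1+\nu)}\;=\;\frac{1}{2(1+\nu)}\sum_{i=1}^n s_i^2\cdot\bigl(\nu\cdot\mathbf 1_{i\in S}+(1+\nu)\mathbf 1_{i\notin S}\bigr),
\]
which can be rewritten as a constant $\frac{1}{2}\sum_i s_i^2$ minus the ``gain'' $\frac{1}{2(1+\nu)}\sum_{i\in S} s_i^2$.

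Next, since the gain from including index $i$ in $S$ is $s_i^2/\{2(1+\nu)\}\ge 0$ and is monotone in $|s_i|$, the optimal support of size at most $q$ is obtained by taking the $q$ indices with the largest values of $|s_i|$; because $\nu\ge 0$ we always exhaust the budget (unless ties at $0$ force otherwise, which the $\Theta^{\#}$-uniqueness assumption excludes). Combined with the coordinatewise formula $\xi_i^\star=s_i/(1+\nu)$ on this optimal support, this is precisely the definition of $\Theta^{\#}(\bsb s;q,\nu)$, so $\hat{\bsbxi}$ attains the minimum.

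There is no serious obstacle here; the only point requiring care is the well-definedness of the top-$q$ support, which is exactly why the paper imposes the $\Theta^{\#}$-uniqueness assumption ($|s_{(q)}|>|s_{(q+1)}|$ or $|s_{(q)}|=|s_{(q+1)}|=0$). Under that assumption, the optimal support is unique and the argument above delivers global optimality in one pass; otherwise, any tie-breaking choice among maximizers of the gain still yields a global optimum, so the conclusion remains valid with $\hat{\bsbxi}$ viewed as one member of the optimal set.
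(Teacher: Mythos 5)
Your proof is correct: the coordinatewise separation, the univariate minimizer $s_i/(1+\nu)$ with residual value $\nu s_i^2/\{2(1+\nu)\}$, and the resulting per-index gain $s_i^2/\{2(1+\nu)\}$ that makes the top-$q$ magnitudes the optimal support are all computed accurately, and together they yield exactly $\Theta^{\#}(\bsb s;q,\nu)$. The paper itself does not prove this lemma but merely cites Lemma C.1 of She (2013); your argument is the standard self-contained version of that result, so there is nothing further to add.
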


The  below definition of a sub-Gaussian random vector is standard in the literature, see, e.g., \cite{Vershynin2012}.

\begin{defn}[\textbf{Sub-Gaussian random variable/vector}]\label{def:subgauss}
We call $\xi$ a sub-Gaussian(0, $\sigma^2$) random variable if and only if $\xi$ has mean 0 and there exist constants $C, c>0$ such that $\mathbb P\{|\xi|\geq t\} \leq C e^{-c t^2}, \forall t>0$ with the scale (or $\psi_2$-norm) of $\xi$  defined by $\sigma( \xi) =
\inf \{\sigma>0: \mathbb E\exp(\xi^2/\sigma^2) \leq 2\}$. More generally, $\bm\xi\in \mathbb R^p$ is called a sub-Gaussian random vector with scale  bounded by $\sigma$ if all one-dimensional marginals $\langle \bm\xi, \bm\alpha \rangle$ are sub-Gaussian satisfying $\|\langle \bm\xi, \bm\alpha \rangle\|_{\psi_2}\leq \sigma \|\bm\alpha \|_2$, $\forall \bm\alpha\in \mathbb R^{p}$. 

\end{defn}

a)   Using the  update in \eqref{eq:update_rule_BCD}, we can represent the optimal $\bsbb$   as a function of $\bsbg$, from which    problem    \eqref{reg:criterion} is equivalent to
\begin{equation}
\min_{\bm\gamma} f(\bm\gamma):= \frac{1}{2}\|(\bm I-\bm H)(\bm\gamma-\bm y)\|_2^2 + \frac{\nu}{2}\|\bm\gamma\|_2^2 \text{ s.t. }\|\bm\gamma\|_0\le q.
\end{equation}
Let $l(\bm\gamma) = \|(\bm I-\bm H)(\bm\gamma-\bm y)\|_2^2/2$. From Lemma \ref{lemma:Theta_optimality}, a nice fact of  \eqref{eq:update_BCD_line} is that     $$\bm\gamma^{(t+1)} = \arg\min_{\|\bm\gamma\|_0\le q}g(\bm\gamma;\bm\gamma^{(t)}),$$ where
\begin{align*}
g(\bm\gamma;\bm\gamma^-) & = \|(\bm I-\bm H)(\bm\gamma^--\bm y)\|_2^2/2 + \langle (\bm I-\bm H)(\bm\gamma^--\bm y), \bm\gamma-\bm\gamma^-\rangle \\ & \quad+ \|\bm\gamma-\bm\gamma^-\|_2^2/2 + \nu\|\bm\gamma\|_2^2/2\\ & =    f(\bm\gamma) + (\mathbf D_2-\bm\Delta_l)(\bm\gamma,\bm\gamma^-)\end{align*} with $\bm\Delta_l$ and $\mathbf D_2$ introduced in Section \ref{sec:theory}. By definition,  $g(\bm\gamma^{(t)};\bm\gamma^{(t)})\ge g(\bm\gamma^{(t+1)};\bm\gamma^{(t)})$, and so we have
\begin{equation}\label{BCD_linear_rate-2}
f(\bm\gamma^{(t)}) - f(\bm\gamma^{(t+1)}) \ge (\mathbf D_2-\bm\Delta_l)(\bm\gamma^{(t+1)},\bm\gamma^{(t)}).
\end{equation}
Summing up \eqref{BCD_linear_rate-2} over $t =  1,\ldots,T$ gives the conclusion.
\\

\noindent
b) We begin with a useful lemma. Throughout the proof, given any $\bsbb$, we use $\mathcal J(\bsbb)$ to denote its support, i.e., $\mathcal J(\bsbb)=\{j: \beta_j \ne 0\}$, and $J(\bsbb ) = |\mathcal J(\bsbb)|$. Recall that $s^* = J(\bsbb^*)$ and $o^* = J(\bsbg^*)$.

\begin{lemma} \label{lemma:bound without design}
  Given $q\le p$ and $\nu\ge 0$, a globally optimal solution to the optimization problem $\min_{\bm\beta\in\mathbb R^p}l(\bm\beta) = \|\bm y - \bm\beta\|_2^2/2 + \nu\|\bm\beta\|_2^2/2$ s.t. $\|\bm\beta\|_0\le q$ is given by  $\hat{\bm\beta}=\Theta^{\#}(\bm y; q, \nu)$. Let $\mathcal J = \mathcal J(\bm\beta)$, $\hat{\mathcal J} = \mathcal J(\hat{\bm\beta})$ and assume      $J(\hat{\bm\beta}) = q$. Then, for any $\bm\beta$ with $J(\bm\beta)\le s = q/\vartheta$ and $\vartheta\ge 1$,   $$l(\bm\beta)-l(\hat{\bm\beta}) \ge \{1-\mathcal L(\mathcal J, \hat{\mathcal J})\}(1+\nu)\|\hat{\bm\beta}-\bm\beta\|_2^2/2,$$ where $\mathcal L(\mathcal J,\hat{\mathcal J}) = (|\mathcal J \backslash \hat{\mathcal J}| / |\hat{\mathcal J}\backslash \mathcal J|)^{1/2} \le (s/q)^{1/2} = \vartheta^{-1/2}$.
\end{lemma}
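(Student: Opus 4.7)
The global optimality of $\hat{\bm\beta}=\Theta^\#(\bm y;q,\nu)$ is immediate from Lemma \ref{lemma:Theta_optimality}, so the plan focuses on the quantitative bound. My starting point is the algebraic identity
\begin{equation*}
l(\bm\beta)-l(\hat{\bm\beta}) = \tfrac{1+\nu}{2}\|\bm\beta-\hat{\bm\beta}\|_2^2 + \langle (1+\nu)\hat{\bm\beta}-\bm y,\, \bm\beta-\hat{\bm\beta}\rangle,
\end{equation*}
obtained by expanding the quadratic and completing the square in $1+\nu$. The definition of $\Theta^{\#}$ gives $(1+\nu)\hat\beta_j = y_j$ on $\hat{\mathcal J}$ and $\hat\beta_j=0$ off $\hat{\mathcal J}$, so the cross term collapses to $-\sum_{j\notin \hat{\mathcal J}} y_j\beta_j = -\sum_{j\in \mathcal J\setminus\hat{\mathcal J}} y_j\beta_j$. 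The problem therefore reduces to controlling this last sum in terms of $\|\bm\beta-\hat{\bm\beta}\|_2^2$.

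The main step will be a careful Cauchy--Schwarz paired with the defining top-$q$ property of $\Theta^\#$. Setting $a=|\mathcal J\setminus\hat{\mathcal J}|$ and $b=|\hat{\mathcal J}\setminus\mathcal J|$, every index $j\in \mathcal J\setminus\hat{\mathcal J}$ was passed over by the quantile operator in favor of each $k\in\hat{\mathcal J}\setminus\mathcal J$, hence $|y_j|\le |y_k|$ for all such pairs. This yields $y_j^2 \le \min_{k\in\hat{\mathcal J}\setminus\mathcal J} y_k^2 \le (1/b)\sum_{k\in\hat{\mathcal J}\setminus\mathcal J} y_k^2$, and summing over the $a$ indices in $\mathcal J\setminus\hat{\mathcal J}$ gives
\begin{equation*}
\sum_{j\in\mathcal J\setminus\hat{\mathcal J}} y_j^2 \le \frac{a}{b}\sum_{k\in\hat{\mathcal J}\setminus\mathcal J} y_k^2 = \frac{a}{b}(1+\nu)^2\sum_{k\in\hat{\mathcal J}\setminus\mathcal J}(\hat\beta_k-\beta_k)^2,
\end{equation*}
where the last equality uses $\hat\beta_k = y_k/(1+\nu)$ and $\beta_k=0$ on $\hat{\mathcal J}\setminus\mathcal J$.

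Then Cauchy--Schwarz on the cross term followed by AM--GM produces
\begin{equation*}
\Big|\!\!\sum_{j\in\mathcal J\setminus\hat{\mathcal J}}\!\! y_j\beta_j\Big| \le (a/b)^{1/2}(1+\nu)\,\|\bm\beta-\hat{\bm\beta}\|_{\hat{\mathcal J}\setminus\mathcal J}\,\|\bm\beta-\hat{\bm\beta}\|_{\mathcal J\setminus\hat{\mathcal J}} \le \tfrac{1}{2}(a/b)^{1/2}(1+\nu)\|\bm\beta-\hat{\bm\beta}\|_2^2,
\end{equation*}
where the subscripts denote restrictions to the respective index sets and the last inequality uses the orthogonal decomposition of $\|\bm\beta-\hat{\bm\beta}\|_2^2$. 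Plugging back delivers the stated bound with $\mathcal L(\mathcal J,\hat{\mathcal J})=(a/b)^{1/2}$. The trivial inequality $\mathcal L\le (s/q)^{1/2}$ will follow from $a\le |\mathcal J|-|\mathcal J\cap\hat{\mathcal J}|$ and $b = q-|\mathcal J\cap\hat{\mathcal J}|$, because $(|\mathcal J|-c)/(q-c)$ is nondecreasing in $c\in[0,|\mathcal J|]$ when $|\mathcal J|\le q$, so it is maximized at $c=0$ giving $|\mathcal J|/q\le s/q$; the degenerate case $b=0$ forces $a=0$ and $\hat{\mathcal J}=\mathcal J$, and we set $\mathcal L=0$ by convention.

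The main obstacle is really just the top-$q$ averaging trick in the second paragraph, which is what couples the two orthogonal blocks $\mathcal J\setminus\hat{\mathcal J}$ and $\hat{\mathcal J}\setminus\mathcal J$ and produces the sharp factor $\mathcal L$; everything else is routine algebra and AM--GM. No restricted isometry or design assumption enters, since this is a purely coordinate-wise lemma about the thresholding operator.
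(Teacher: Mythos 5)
Your argument is correct and essentially self-contained, which is more than the paper offers: the paper simply defers to Lemma 9 of \cite{She2017RRRR} without reproducing the proof, and your completion-of-the-square identity plus the top-$q$ averaging/Cauchy--Schwarz/AM--GM chain is exactly the standard route for establishing this coupling between the two disjoint blocks $\mathcal J\setminus\hat{\mathcal J}$ and $\hat{\mathcal J}\setminus\mathcal J$. One small slip: the ratio $(|\mathcal J|-c)/(q-c)$ with $c=|\mathcal J\cap\hat{\mathcal J}|$ is non\emph{increasing} in $c$ when $|\mathcal J|\le q$ (its derivative is $(|\mathcal J|-q)/(q-c)^2\le 0$), not nondecreasing as you wrote; your stated conclusion that it is maximized at $c=0$, yielding $\mathcal L\le(|\mathcal J|/q)^{1/2}\le(s/q)^{1/2}$, is nevertheless the correct one, so the bound stands.
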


It can be proved by Lemma 9   in \cite{She2017RRRR}. 
By Lemma \ref{lemma:bound without design}, for any $\bsbg$ satisfying $\| \bsbg\|_0\le q $, $$g(\bm\gamma;\bm\gamma^{(t)}) - g(\bm\gamma^{(t+1)};\bm\gamma^{(t)}) \ge \{1-\mathcal L(\mathcal J(\bm\gamma),\mathcal J(\bm\gamma^{(t+1)}))\}(1+\nu)\mathbf D_2(\bm\gamma^{(t+1)},\bm\gamma),$$ and so
\begin{equation*}
\begin{split}
&f(\bm\gamma)+(\mathbf D_2-\bm\Delta_l)(\bm\gamma,\bm\gamma^{(t)}) - f(\bm\gamma^{(t+1)}) - (\mathbf D_2-\bm\Delta_l)(\bm\gamma^{(t+1)},\bm\gamma^{(t)})\\
\ge\,& \{1-\mathcal L(\mathcal J(\bm\gamma),\mathcal J(\bm\gamma^{(t+1)}))\}(1+\nu)\mathbf D_2(\bm\gamma^{(t+1)},\bm\gamma),
\end{split}
\end{equation*}
or
\begin{equation*}
\begin{split}
&f(\bm\gamma)+((1+\nu)\mathbf D_2-\bm\Delta_f)(\bm\gamma,\bm\gamma^{(t)}) - f(\bm\gamma^{(t+1)}) - (\mathbf D_2-\bm\Delta_l)(\bm\gamma^{(t+1)},\bm\gamma^{(t)})\\
\ge\,& \{1-\mathcal L(\mathcal J(\bm\gamma),\mathcal J(\bm\gamma^{(t+1)}))\}(1+\nu)\mathbf D_2(\bm\gamma^{(t+1)},\bm\gamma).
\end{split}
\end{equation*}
Substituting  $\bm\gamma^*$ for $\bsbg $ gives
\begin{equation} \label{BCD_linear_rate-0}
\begin{split}
&\{1-\mathcal L(\mathcal J(\bm\gamma^*),\mathcal J(\bm\gamma^{(t+1)}))\}(1+\nu)\mathbf D_2(\bm\gamma^{(t+1)},\bm\gamma^*) + (\mathbf D_2-\bm\Delta_l)(\bm\gamma^{(t+1)},\bm\gamma^{(t)})\\
\le\,& -\bm\Delta_f(\bm\gamma^{(t+1)},\bm\gamma^*) - \langle\nabla f(\bm\gamma^*), \bm\gamma^{(t+1)}-\bm\gamma^*\rangle +((1+\nu)\mathbf D_2-\bm\Delta_f)(\bm\gamma^*,\bm\gamma^{(t)}).
\end{split}
\end{equation}
Since  $\|\bsbI -\bsbH\|_2 = 1$,  $   (\mathbf D_2-\bm\Delta_l)(\bm\gamma^{(t+1)},\bm\gamma^{(t)}) \ge 0$. Together with  $$\mathcal L(\mathcal J(\bm\gamma^*), \allowbreak\mathcal J(\bm\gamma^{(t+1)})) \le 1/\sqrt{\vartheta},$$ we obtain
\begin{equation} \label{BCD_linear_rate-3}
\begin{split}
&((1-1/\sqrt{\vartheta})(1+\nu)\mathbf D_2 + \bm\Delta_l +\nu \mathbf D_2)(\bm\gamma^{(t+1)},\bm\gamma^*)\\
\le\,& (\mathbf D_2-\bm\Delta_l)(\bm\gamma^*,\bm\gamma^{(t)}) + \langle (\bm I-\bm H)^2(\bm\gamma^*-\bm y) + \nu\bm\gamma^*, - \bm \gamma^{(t+1)}+\bm\gamma^* \rangle\\
=\,& (\mathbf D_2-\bm\Delta_l)(\bm\gamma^*,\bm\gamma^{(t)}) + \langle- \bm U_\perp \bm U_\perp^T\bm\epsilon + \nu\bm\gamma^*, \bm\gamma^*  - \bm\gamma^{(t+1)}\rangle\\
=\,& (\mathbf D_2-\bm\Delta_l)(\bm\gamma^*,\bm\gamma^{(t)}) + \langle \bm\epsilon', \bm U_\perp^T (   \bm\gamma^{(t+1)} - \bm\gamma^*) \rangle + \nu\langle\bm\gamma^*,  \bm\gamma^*  - \bm\gamma^{(t+1)} \rangle,\\
\end{split}
\end{equation}
where $\bm X=[\bm U, \bm U_\perp]\bm D\bm V^T$ is the singular value decomposition of $\bm X$ with $\bm U_\perp\in\mathbb R^{n\times (n-p)}$ and $\bm\epsilon'=\bm U_\perp^T\bm\epsilon$  which is still a sub-Gaussian random vector with mean 0 and scaled bounded by $\sigma$.
From   $\bsbI - \bsbH = \bsbU_{\perp} \bsbU_{\perp}^T$,   $\Delta_l (\bm\gamma,\bm\gamma') = \| \bsbU_{\perp}^T (\bsbg - \bsbg')\|_2^2/2$.

The stochastic term $\langle\bm\epsilon', \bm U_\perp^T(\bm\gamma^{(t+1)}-\bm\gamma^*)\rangle$ can be bounded using the following lemma.
\begin{lemma}  \label{lemma:noise_bound}
Let $\bm\epsilon$ be a sub-Gaussian random vector with mean zero and scale bounded by $\sigma$. Given $\bm X\in\mathbb R^{n\times p}$ and $0\le J \le p$, there exist universal constants $A, C, c > 0$ such that for any $a>0$, the following event
\begin{equation}
\sup_{\bsbb\in\mathbb R^p:\|\bsbb\|_0\le J}\langle\bm\epsilon, \bm X\bsbb\rangle - \frac{1}{2a}\|\bm X\bsbb \|_2^2 - aA^2\sigma^2J\log(ep/J) > a\sigma^2t
\end{equation}
occurs with probability at most $C\exp(-ct) p^{ -cA^2 }$ for any $t\ge 0$.
\end{lemma}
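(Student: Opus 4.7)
The plan is a standard sparsity-adapted union bound: fix the support, evaluate the partial supremum in closed form, apply a sub-Gaussian concentration bound on the resulting projection norm, and then union bound over all supports of size at most $J$.

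For a fixed subset $S\subseteq\{1,\ldots,p\}$ with $|S|=s\le J$, writing $\bm u=\bm X\bsbb$ restricts $\bm u$ to $\mathrm{range}(\bm X_S)$. A direct Fenchel-dual computation (the supremum of a linear minus a quadratic) yields
$$\sup_{\mathrm{supp}(\bsbb)\subseteq S}\Big[\langle\bm\epsilon,\bm X\bsbb\rangle-\frac{1}{2a}\|\bm X\bsbb\|_2^2\Big]=\frac{a}{2}\|\bm P_S\bm\epsilon\|_2^2,$$
where $\bm P_S$ denotes the orthogonal projection onto $\mathrm{range}(\bm X_S)$, a subspace of dimension at most $s$. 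Because $\bm\epsilon$ is sub-Gaussian with scale bounded by $\sigma$, the standard concentration inequality for the squared norm of a sub-Gaussian vector restricted to an $s$-dimensional subspace (via Hanson--Wright, or equivalently an $\varepsilon$-net argument over the unit ball of $\mathrm{range}(\bm P_S)$) provides absolute constants $C_1,C_2,c_1>0$ with
$$\mathbb P\bigl(\|\bm P_S\bm\epsilon\|_2^2>C_1\sigma^2 s+C_2\sigma^2 u\bigr)\le \exp(-c_1 u)\qquad\forall u\ge 0.$$

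I would then apply the union bound. For each $s\in\{1,\ldots,J\}$ there are at most $\binom{p}{s}\le\exp\{s\log(ep/s)\}$ supports, and since $x\mapsto x\log(ep/x)$ is nondecreasing on $(0,p]$, $s\log(ep/s)\le J\log(ep/J)$. Choose $u$ so that $C_1 s+C_2 u\ge 2A^2 J\log(ep/J)+2t$; then the event $\frac{a}{2}\|\bm P_S\bm\epsilon\|_2^2>aA^2\sigma^2J\log(ep/J)+a\sigma^2 t$ at a single support is contained in the above tail event, and its probability is bounded by $\exp\{-c_2 A^2 J\log(ep/J)-c_2 t\}$ for an absolute $c_2>0$. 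Multiplying by the entropy bound $\exp\{s\log(ep/s)\}$ and summing $s=1,\ldots,J$, provided $A$ is a sufficiently large universal constant so that $c_2 A^2-1\ge c_3 A^2$, gives the collective tail $C\exp\{-c_3 A^2 J\log(ep/J)\}\exp(-c_4 t)$. Finally, since $J\log(ep/J)\ge\log p$ whenever $J\ge 1$, this reduces to $Cp^{-c_3 A^2}\exp(-c_4 t)$, matching the stated bound. The $J=0$ case is trivial: the feasible set is $\{\bsb 0\}$ and the supremum is zero, so the event fails for every $t\ge 0$.

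The main obstacle is pure bookkeeping: both the mean term $C_1\sigma^2 s$ from sub-Gaussian concentration and the counting entropy $s\log(ep/s)$ must be absorbed into the penalty $A^2 J\log(ep/J)$ while leaving a net exponent proportional to $A^2$. This is what forces $A$ to be a large enough universal constant in order to ensure the factor $c_2 A^2$ dominates the logarithmic entropy with a fixed positive slack. Once $A$ is fixed this way, the rest of the argument is a routine sparsity-adapted union bound that is by now standard in high-dimensional probability.
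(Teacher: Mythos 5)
Your proof is correct. Note that the paper does not actually prove this lemma in-house: it states that the result ``is a variant of Lemma 4 of She (2016) and can be obtained from its proof,'' and the closest in-paper analogue is Lemma \ref{concenGauss}, whose proof bounds the metric entropy of the union of coordinate subspaces $\Gamma_{J}$, applies Dudley's entropy integral to control $\sup_{\bm\alpha\in\Gamma_J}\langle\bm\epsilon,\bm\alpha\rangle$ over the whole union at once, and then converts the linear form into the quadratically penalized form via Cauchy--Schwarz (as in display \eqref{cauchyin}). You instead evaluate the inner supremum exactly on each fixed support, obtaining $\tfrac{a}{2}\|\bm P_S\bm\epsilon\|_2^2$, concentrate the projected norm on each $s$-dimensional subspace separately, and union bound over the at most $\binom{p}{s}$ supports. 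The two arguments rest on the same entropy count $\log\binom{p}{s}+s\log(C/\varepsilon)$ and give the same rate; yours is somewhat more elementary (no chaining, only a single-scale net inside each subspace plus a finite union bound), while the paper's packaging is reused verbatim for the more complicated sets $\Gamma_{J_1,J_2}$ and $\Gamma_{J_1,J_2,J_1',J_2'}$ of Lemmas \ref{concenGauss} and \ref{concenGauss2}, where a per-support closed form is less convenient. One small caution: the Hanson--Wright route you mention as an alternative requires independent coordinates, which the paper's Definition \ref{def:subgauss} does not assume, so the $\varepsilon$-net version of the projected-norm concentration is the one to keep; with that choice every step, including the absorption of the mean term $C_1\sigma^2 s$ and the entropy into $A^2 J\log(ep/J)$ and the final reduction $J\log(ep/J)\ge\log p$, is sound.
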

 The result is a variant of Lemma 4 of  \cite{She2016} 
  and can be obtained from its proof.
According to Lemma \ref{lemma:noise_bound}, there exist constants $A,C,c>0$ such that
for any $a>0$,
\begin{equation} \label{BCD_linear_rate-4}
\langle\bm\epsilon', \bm U_\perp^T(\bm\gamma^{(t+1)}-\bm\gamma^*)\rangle \le \frac{1}{a}\bm\Delta_l(\bm\gamma^{(t+1)},\bm\gamma^*) + aA^2\sigma^2(1+\vartheta)o^*\log\frac{en}{o^* }
\end{equation}
occurs with probability at least $1-C (n-p)^{-c}$. Since $\langle\bm\gamma^*, -\bm\gamma^{(t+1)}+\bm\gamma^* \rangle \le (b/2)\|\bm\gamma^*\|_2^2 + (1/2b)\|\bm\gamma^{(t+1)}-\bm\gamma^*\|_2^2$ for any $b>0$,
combining  \eqref{BCD_linear_rate-3} and \eqref{BCD_linear_rate-4}  gives
\begin{equation} \label{BCD_linear_rate-5}
\begin{split}
&\{((1-\frac{1}{\sqrt{\vartheta}}) + (2-\frac{1}{\sqrt{\vartheta}}-\frac{1}{b})\nu)\mathbf D_2+(1- \frac{1}{a})\bm\Delta_l\}(\bm\gamma^{(t+1)},\bm\gamma^*) \\
\le\,& (\mathbf D_2-\bm\Delta_l)(\bm\gamma^{(t)},\bm\gamma^*) + aA^2\sigma^2(1+\vartheta)o^*\log\frac{en}{o^*}  + \frac{b\nu}{2}\|\bm\gamma^*\|_2^2.
\end{split}
\end{equation}
Let
\begin{align}
\kappa = \frac{1}{1-\varepsilon} \big\{(2- \frac{1}{a})\varepsilon - \frac{1}{\sqrt \vartheta}   + (2 - \frac{1}{\sqrt \vartheta} - \frac{1}{b})\nu\big\}. \label{kappadefproof}
\end{align}
By the regularity condition,
\begin{equation*}
\begin{split}
& (1+ \kappa)(\mathbf D_2-\bm\Delta_l) (\bm\gamma^{(t+1)},\bm\gamma^*) \\
\le\, & \{((1-\frac{1}{\sqrt{\vartheta}} ) + (2-\frac{1}{\sqrt{\vartheta}}-\frac{1}{b})\nu)\mathbf D_2+ (1-\frac{1}{a})\bm\Delta_l\}(\bm\gamma^{(t+1)},\bm\gamma^*)\\
\le\,& (\mathbf D_2-\bm\Delta_l)(\bm\gamma^{(t)},\bm\gamma^*) + aA^2\sigma^2(1+\vartheta)o^*\log\frac{en}{o^*}  + \frac{b\nu}{2}\|\bm\gamma^*\|_2^2.
\end{split}
\end{equation*}
Finally, by a recursive argument, we have

\begin{equation*}
\begin{split}
&(\mathbf D_2-\bm\Delta_l)(\bm\gamma^{(t)},\bm\gamma^*)\\
\le\,& \Big(\frac{1}{\kappa+1}\Big)^{t-1} (\mathbf D_2-\bm\Delta_l)(\bm\gamma^{(1)},\bm\gamma^*) + \frac{a}{\kappa}A^2\sigma^2(1+\vartheta)o^*\log\frac{en}{o^*}  + \frac{b\nu}{2\kappa}\|\bm\gamma^*\|_2^2.
\end{split}
\end{equation*}
Choosing $a =1  $, $b =1 $, which ensures  $1+\kappa \ge 0$, we obtain the conclusion on the sequence of $\bsbg$-iterates.

In fact, according to   \eqref{kappadefproof}, we may choose any $a\ge 1$ and $b>0$, and so   the linear convergence rate holds if $\varepsilon > 1/(2\sqrt \vartheta)-(2 -  {1}/({\sqrt \vartheta}) \nu$.

To show the conclusion on the sequence of $\bsbb$-iterates, notice that for any $t\ge 1$,
$$\bsbX \bsbb^{(t)}  - \bsbX \bsbb^* +  \bsbH  \bsbg^{(t)} -\bsbH  \bsbg^*  = \bsbH \bsbeps. $$
and
$$
\EP [ \|\bsbH \bsbeps\|_2^2 \ge c r(\bsbX) \sigma^2  + t\sigma^2] \le C \exp(-c [\{t^2/r(\bsbX)\} \wedge t]), $$
where $r(\bsbX)$ denotes the rank of $\bsbX$.
The last inequality can be shown from the Hanson-Wright inequality \citep{rudelson2013} under the independent sub-Gaussian assumption since $\| \bsbH \|_2 = 1$ and $ \|\bsbH\|_F^2=r(\bsbX)$. Then applying the Cauchy-Schwartz inequality gives the result (details omitted).

\subsection{Proof of Theorem \ref{theorem:IQ_convergence}}
\label{app:proofiq}
First,   
a thresholding function \citep{she2011outlier} is defined as a real-valued function $\Theta(t;\lambda)$   for $-\infty < t <\infty$ and $0\leq \lambda < \infty$ such that
(i) $\Theta(-t;\lambda) = -\Theta(t;\lambda)$;
(ii) $\Theta(t;\lambda) \leq \Theta(t';\lambda)$ for $t\leq t'$;
(iii) $\mathop{\lim}_{t\rightarrow\infty}\Theta(t;\lambda) = \infty$;
(iv) $0\leq \Theta(t;\lambda) \leq t$ for $0\leq t < \infty$.
$\Theta$ is defined component-wise if either $t$ or $\lambda$ is replaced
by a vector. Throughout the paper, assume that  $\lambda$ is the threshold parameter.

 Let $f(\bar{\bm\beta}) = \tilde l(\bar{\bm\beta}) + P(\varrho\bm\beta;\lambda) + \nu\| \bm\gamma\|_2^2/2+\iota(\bsbg; q)$, where $\tilde l(\bar{\bm\beta}) = l(\bar{\bm X}\bar{\bm\beta};\bm y)$ and $\iota(\bsbg; q) = 0$ if $\| \bsbg\|_0\le q$ and $+\infty$ otherwise. Let   $g (\bar{\bm\beta};\bar{\bm\beta}^-) = l(\bar \bsbX \bar \bsbb^-;\bsby) + \langle \bar \bsbX^T \nabla l(\bar \bsbX \bar \bsbb^-;\bsby), \bar \bsbb - \bar \bsbb^- \rangle + \varrho^2 \|\bar \bsbb - \bar \bsbb^-\|_2^2/2 + \nu\|\bsbg\|_2^2/2 + P(\varrho\bm\beta;\lambda)+\iota(\bsbg; q)$. We drop the dependence  on  $\rho$ for notational simplicity.

First, we claim that   for any  $\bar \bsbb, \bar \bsbb^{-}$,
\begin{align}
\breg_{\tilde l}(\bar \bsbb,  \bar \bsbb^-) \le \varrho^2 \mathbf D_2(\bar \bsbb, \bar \bsbb^-)
\end{align}
provided that $\varrho \ge \sqrt L \|\bar \bsbX\|_2$, where the generalized Bregman notations defined in Section \ref{sec:theory} are used. In fact,
\begin{align}
&\ \breg_{\tilde l}(\bar \bsbb,  \bar \bsbb^-) \notag\\
= &\ {\tilde l}(\bar\bsbb)-{\tilde l}(\bar\bsbb^{-})-\langle\nabla {\tilde l}(\bar\bsbb^{-}),\bar\bsbb-\bar\bsbb^{-}\rangle\notag\\
= & \int_{0}^{1}\langle\nabla \tilde l (  \bar\bsbb^{-}+t (\bar\bsbb-\bar\bsbb^{-})),\bar\bsbb-\bar\bsbb^{-}\rangle \rd t-\int_{0}^{1}\langle\nabla \tilde l  ( \bar\bsbb^{-}),\bar\bsbb-\bar\bsbb^{-}\rangle \rd t\notag\\
= & \int_{0}^{1}\langle\bar\bsbX^T    \nabla  l  (\bar\bsbX \bar\bsbb^{-}+t\bar\bsbX(\bar\bsbb-\bar\bsbb^{-})),\bar\bsbb-\bar\bsbb^{-}\rangle \rd t-\int_{0}^{1}\langle\bar\bsbX^T  \nabla  l (\bar\bsbX\bar\bsbb^{-}),\bar\bsbb-\bar\bsbb^{-}\rangle \rd t\notag\\
= & \int_{0}^{1}\langle  \nabla  l (\bar\bsbX\bar\bsbb^{-}+t\bar\bsbX(\bar\bsbb-\bar\bsbb^{-}))- \nabla  l (\bar\bsbX\bar\bsbb^{-}),\bar\bsbX (\bar\bsbb-\bar\bsbb^{-})\rangle \rd t\notag\\
\leq & \int_{0}^{1}\lVert  \nabla  l (\bar\bsbX\bar\bsbb^{-}+t\bar\bsbX(\bar\bsbb-\bar\bsbb^{-}))- \nabla  l (\bar\bsbX\bar\bsbb^{-})\rVert_{2} \|\bar\bsbX (\bar\bsbb-\bar\bsbb^{-})\rVert_{2}  \rd t \notag\\ 
\le & \int_{0}^{1} t L\|\bar\bsbX \bar\bsbb-\bar\bsbX\bar\bsbb^{-} \|_2 \|\bar\bsbX \bar\bsbb-\bar\bsbX\bar\bsbb^{-} \|_2 \rd t \notag\\
= & \frac{L  }{2} \| \bar\bsbX\bar\bsbb- \bar\bsbX\bar\bsbb^{-} \|_2^2   \le  \frac{L M_{ \bar\bsbX}  (p, 2q)}{2} \| \bar\bsbb-\bar\bsbb^{-} \|_2^2 \le  \frac{L \| \bar\bsbX\|_2^2}{2} \| \bar\bsbb-\bar\bsbb^{-} \|_2^2. \notag
\end{align}

Next, we solve $\min_{\bar \bsbb} g (\bar{\bm\beta};\bar{\bm\beta}^-) $. Due to the   separability, the problem is equivalent to
\begin{align*} &\min_{\bsbb}\varrho^2 \|  \bsbb -   \bsbb^-+   \bsbX^T \nabla l(\bar \bsbX \bar \bsbb^-;\bsby)/\varrho^2\|_2^2/2 + P(\varrho\bm\beta;\lambda) , \mbox{ and }\\
&  \min_{\bsbg} \|  \bsbg -   \bsbg^-+   \nabla l(\bar \bsbX \bar \bsbb^-;\bsby)/\varrho^2\|_2^2/2 +    (\nu/ \varrho^2) \|\bsbg\|_2^2/2   +\iota(\bsbg; q).
\end{align*} With a change of variables, it leads to the MM updates given in the theorem, cf.   \cite{She2012}. Furthermore, it is easy to show (details omitted) that the construction of  $g$ is equivalent to    $$g(\bar{\bm\beta};\bar{\bm\beta}^-) = f(\bar{\bm\beta}) + (\varrho^2\mathbf D_2-\bm\Delta_{\tilde l})(\bar{\bm\beta},\bar{\bm\beta}^-).$$Then,   $g (\bar \bsbb; \bar \bsbb^-) - f(\bar \bsbb) = \varrho^2\mathbf D_2(\bar \bsbb, \bar \bsbb^-)-\bm\Delta_{\tilde l}(\bar \bsbb,  \bar \bsbb^-) $.
By   the optimality of $\bar{\bm\beta}^{(t+1)}$, we obtain
\begin{equation}
\varrho^2\mathbf D_2(\bar \bsbb, \bar \bsbb^-)-\bm\Delta_{\tilde l}(\bar \bsbb,  \bar \bsbb^-)+f(\bar \bsbb^{(t+1)}) \leq g(\bar \bsbb^{(t+1)}; \bar \bsbb^{(t)} ) \leq g (\bar \bsbb^{(t)}; \bar \bsbb^{(t)}) = f(\bar \bsbb^{(t)}). \label{MM_decreasing}
\end{equation} It follows that
\begin{equation} \label{joint_sublinear}
(\varrho^2\mathbf D_2-\bm\Delta_{\tilde l})(\bar{\bm\beta}^{(t+1)},\bar{\bm\beta}^{(t)}) \le f(\bar{\bm\beta}^{(t)}) - f(\bar{\bm\beta}^{(t+1)}).
\end{equation}
Summing up \eqref{joint_sublinear} over $t = 0,1,\ldots,T$, and using $\breg_{\tilde l}(\bar \bsbb, \bar \bsbb^-)\le  {L  }\Breg_2 ( \bar\bsbX\bar\bsbb,  \bar\bsbX\bar\bsbb^{-} )$ proved before, we obtain  the convergence rate.

  Finally, we prove the limit-point equation. Since $$\nu\|\bm\gamma^{(t)}\|_2^2/2 +P(\varrho\bm\beta^{(t)};\lambda) \le f(\bar{\bm\beta}^{(t)}) \le f(\bar{\bm\beta}^{(0)}),$$ the sequence of   $\{\bar{\bm\beta}^{(t)}\}_{t=0}^\infty$ is uniformly bounded under $\nu>0$. Moreover,  $\lim_{t\rightarrow \infty} (f(\bar{\bm\beta}^{(t+1)}) - f(\bar{\bm\beta}^{(t)})) = 0$ implies that $\lim_{t\rightarrow \infty}(\varrho^2\mathbf D_2-\bm\Delta_{\tilde l})(\bar{\bm\beta}^{(t+1)}, \bar{\bm\beta}^{(t)}) = 0$ and so $\lim_{t\rightarrow 0}(\bar{\bm\beta}^{(t+1)}-\bar{\bm\beta}^{(t)}) = 0$ since $\varrho^2  >  {L}M_{  \bar \bsbX}(p, 2q )$.
Let $(\hat{\bm\beta},\hat{\bm\gamma})$ be any limit point of $(\bm\beta^{(t)},\bm\gamma^{(t)})$ satisfying $\hat{\bm\beta}=\lim_{k\rightarrow \infty}\bm\beta^{(j_k)}$ and $\hat{\bm\gamma}=\lim_{k\rightarrow \infty}\bm\gamma^{(j_k)}$ for some sequence $\{j_k,k=1,2,\ldots\}$. Then
\begin{equation}
\begin{split}
\bm 0 &= \lim_{k\rightarrow\infty}\{\bm\beta^{(j_k+1)}-\bm\beta^{(j_k)}\}\\
 &= \lim_{k\rightarrow\infty}\Theta(\varrho\bm\beta^{(j_k)}-\bm X^{T}\nabla l(\bm X\bm\beta^{(j_k)}+\bm\gamma^{(j_k)})/\varrho;\lambda)/\varrho-\hat{\bm\beta}\\
&= \Theta(\varrho\hat{\bm\beta}-\bm X^{T}\nabla l(\bm X\hat{\bm\beta}+\hat{\bm\gamma})/\varrho ;\lambda)/\varrho -\hat{\bm\beta}
\end{split}
\end{equation}
and
\begin{equation}
\begin{split}
\bm 0 &= \lim_{k\rightarrow\infty}\{\bm\gamma^{(j_k+1)}-\bm\gamma^{(j_k)}\} \\
&= \lim_{k\rightarrow\infty}\Theta^{\#}(\bm\gamma^{(j_k)}-\nabla l(\bm X\bm\beta^{(j_k)}+\bm\gamma^{(j_k)})/\varrho^2;q,\nu/\varrho^2)-\hat{\bm\gamma} \\
&= \Theta^{\#}(\hat{\bm\gamma}-\nabla l(\bm X\hat{\bm\beta}+\hat{\bm\gamma})/\varrho^2;q,\nu/\varrho^2)-\hat{\bm\gamma}
\end{split}
\end{equation}
due to the continuity assumption of $\Theta$ and the $\Theta^\#$-uniqueness assumption.

\begin{remark} \label{appremlinesearch}
The conclusion of $f(\bar{\bm\beta}^{(t)}) \le  f(\bar{\bm\beta}^{(t+1)})$ holds more generally, as long as $ g(\bar \bsbb^{(t+1)}; \bar \bsbb^{(t)} )\le   g(\bar \bsbb^{(t)}; \bar \bsbb^{(t)})$.  Moreover, the exact value of $L$ or $M_{  \bar \bsbX}(p, 2q )$ need not be known, because we can perform a line search until  $f(\bar \bsbb^{(t+1)}) \leq g(\bar \bsbb^{(t+1)}; \bar \bsbb^{(t)} )$ is satisfied.
\end{remark}
\subsection{Thresholding equations of A-estimators   } \label{proof:Theta_equation}
For the  penalized form problem $\min_{\bsbb, \|\bsbg\|_0\le q} l(\bar{\bm X}\bar{\bm\beta};\bm y) + P(\varrho \bsbb; \lambda) $ with    $P$   associated with a thresholding  $\Theta$:
$
P(\theta; \lambda) - P(0; \lambda) = \int_0^{|\theta|} (\sup \{s:\Theta(s;\lambda) \leq u\} - u) \rd u
$ 
 as defined in Theorem \ref{theorem:IQ_convergence}, let $\mathcal A(\lambda, q)$ be the set of all alternative estimators satisfying $\hat \bsbb \in \arg \min_{ \bm\beta } l(\bm X\bm\beta + \hat \bsbg; \bm y) + P(\varrho \bsbb; \lambda)$  and $\hat \bsbg \in \arg \min_{ \bm\gamma } l(\bm X\hat \bsbb  +   \bsbg; \bm y)$   s.t. $   \|\bm\gamma\|_0 \le q$.

Similarly, for the $\ell_0$-constrained form, let $\mathcal A(q_\beta, q_\gamma)$ be the set of alternative estimators  $(\hat{\bm\beta}, \hat{\bm\gamma})$ satisfying $\hat \bsbb \in \arg \min_{ \bm\beta } l(\bm X\bm\beta + \hat \bsbg; \bm y)$   s.t. $   \|\bm\beta\|_0 \le q_\beta$ and $\hat \bsbg \in \arg \min_{ \bm\gamma } l(\bm X\hat \bsbb  +   \bsbg; \bm y)$   s.t. $   \|\bm\gamma\|_0 \le q_\gamma$.
(Recall      the definition of $M_{  \bsbX}$ in Section \ref{sec:intro} and we know $M_{\bsbX}(q)\le \| \bsbX\|_2^2$ for any $q$.)

\begin{thm} \label{th:Theta_equation}
 Under  the 1-Lipschitz condition of $\nabla l$,  (i) any       $(\hat{\bm\beta},\hat{\bm\gamma})\in \mathcal A (q_\beta, q_\gamma)$ satisfies the following two quantile-thresholding equations:
\begin{equation}\label{eq:fix_point_BCD-proof}
\begin{cases}
\hat \bsbb = \Theta^{\#}(\hat \bsbb - \bsbX^T \nabla l(\bsbX \hat \bsbb + \hat \bsbg;\bm y)/\rho ; q_\beta) \\
\hat \bsbg  = \Theta^{\#}( \hat \bsbg -  \nabla l(\bsbX \hat \bsbb + \hat \bsbg;\bm y) ;q_\gamma)
\end{cases}
\end{equation}
for \emph{any} $\rho> M_{\bsbX}(2 q_\beta)$, and (ii) any        $(\hat{\bm\beta},\hat{\bm\gamma})\in \mathcal A (\lambda, q)$ satisfies the mixed thresholding equations
\begin{equation}\label{eq:fix_point_BCD-proof}
\begin{cases}
\hat \bsbb = \Theta (\varrho \hat \bsbb - \bsbX^T \nabla l(\bsbX \hat \bsbb + \hat \bsbg;\bm y)/\varrho  ; \lambda)/\varrho  \\
\hat \bsbg  = \Theta^{\#}( \hat \bsbg -  \nabla l(\bsbX \hat \bsbb + \hat \bsbg;\bm y) ;q )
\end{cases}
\end{equation}
for \emph{any} $\varrho> 0$, provided that $\Theta$ is continuous at $\varrho \hat \bsbb - \bsbX^T \nabla l(\bsbX \hat \bsbb + \hat \bsbg;\bm y)/\varrho $.
\end{thm}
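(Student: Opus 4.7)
The plan is to recycle the majorization--minimization argument behind Theorem~\ref{theorem:IQ_convergence}: for each of the two block-coordinate subproblems that define an A-estimator, I build a quadratic surrogate centered at the A-estimator itself, recognize its closed-form constrained minimizer as the designated $\Theta^{\#}$ (or $\Theta$) image via Lemma~\ref{lemma:Theta_optimality}, and then close a four-term chain of inequalities that forces the A-estimator to equal that image.

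First I would handle the $\bsbg$-equation in (i). With $\nabla l$ taken $1$-Lipschitz after rescaling, the standard descent lemma yields
\begin{equation}
l(\bsbX\hat\bsbb + \bsbg;\bsby) \le g(\bsbg;\hat\bsbg) := l(\bsbX\hat\bsbb+\hat\bsbg;\bsby) + \langle \nabla l(\bsbX\hat\bsbb+\hat\bsbg;\bsby),\, \bsbg-\hat\bsbg\rangle + \tfrac12\|\bsbg-\hat\bsbg\|_2^2
\end{equation}
for every $\bsbg$, with equality at $\bsbg=\hat\bsbg$. Lemma~\ref{lemma:Theta_optimality} identifies the (unique, under $\Theta^{\#}$-uniqueness) minimizer of $g(\cdot;\hat\bsbg)$ on $\{\|\bsbg\|_0\le q_\gamma\}$ as $\tilde\bsbg := \Theta^{\#}(\hat\bsbg-\nabla l(\bsbX\hat\bsbb+\hat\bsbg;\bsby);q_\gamma)$. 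The chain
\begin{equation}
l(\bsbX\hat\bsbb+\tilde\bsbg) \le g(\tilde\bsbg;\hat\bsbg) \le g(\hat\bsbg;\hat\bsbg) = l(\bsbX\hat\bsbb+\hat\bsbg) \le l(\bsbX\hat\bsbb+\tilde\bsbg),
\end{equation}
whose last inequality combines the block-alternative optimality of $\hat\bsbg$ with $\|\tilde\bsbg\|_0\le q_\gamma$, collapses to equalities throughout. In particular $\hat\bsbg$ must also minimize $g(\cdot;\hat\bsbg)$ on the $\ell_0$-ball, so the $\Theta^{\#}$-uniqueness hypothesis forces $\hat\bsbg=\tilde\bsbg$.

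For the $\bsbb$-equation I would run the same template but replace the naive Lipschitz constant by its restricted analogue: since any two candidate $\bsbb,\bsbb'$ are $q_\beta$-sparse, their difference is $(2q_\beta)$-sparse and $\|\bsbX(\bsbb-\bsbb')\|_2^2 \le M_{\bsbX}(2q_\beta)\|\bsbb-\bsbb'\|_2^2$. This sharpens the descent lemma so that the surrogate
\begin{equation}
h(\bsbb;\hat\bsbb) = l(\bsbX\hat\bsbb+\hat\bsbg;\bsby) + \langle \bsbX^T\nabla l(\bsbX\hat\bsbb+\hat\bsbg;\bsby),\,\bsbb-\hat\bsbb\rangle + \tfrac{\rho}{2}\|\bsbb-\hat\bsbb\|_2^2
\end{equation}
majorizes $l(\bsbX\bsbb+\hat\bsbg;\bsby)$ over all $q_\beta$-sparse $\bsbb$ for \emph{any} $\rho>M_{\bsbX}(2q_\beta)$, with equality at $\hat\bsbb$. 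Lemma~\ref{lemma:Theta_optimality} identifies the $\ell_0$-constrained minimizer of $h(\cdot;\hat\bsbb)$ as $\Theta^{\#}(\hat\bsbb-\bsbX^T\nabla l/\rho;q_\beta)$, and the identical four-term chain---now driven by the $\bsbb$-subproblem optimality of $\hat\bsbb$---closes to yield the fixed-point equation.

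For part (ii) the $\bsbg$-update is verbatim; the $\bsbb$-update trades the sparsity constraint for the penalty $P(\varrho\bsbb;\lambda)$, and here I would invoke the penalty--thresholding correspondence recalled in Theorem~\ref{theorem:IQ_convergence} (so that $t=\Theta(s;\lambda)$ iff $s-t$ lies in the subdifferential of $P(\cdot;\lambda)$ at $t$, by differentiating the integral defining $P$) together with the first-order subgradient optimality condition for $\hat\bsbb$ to minimize $l(\bsbX\bsbb+\hat\bsbg;\bsby)+P(\varrho\bsbb;\lambda)$. Continuity of $\Theta$ at $\varrho\hat\bsbb-\bsbX^T\nabla l(\bsbX\hat\bsbb+\hat\bsbg;\bsby)/\varrho$ promotes the set-valued inclusion to equality, delivering $\hat\bsbb = \Theta(\varrho\hat\bsbb-\bsbX^T\nabla l/\varrho;\lambda)/\varrho$ for any $\varrho>0$. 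The main obstacle throughout is the last step of each argument: the chain of inequalities only certifies that $\hat\bsbb$ (respectively $\hat\bsbg$) is \emph{some} minimizer of the relevant surrogate, and it is precisely the $\Theta^{\#}$-uniqueness assumption in (i) and the continuity of $\Theta$ in (ii) that let one identify this minimizer with the prescribed thresholding image.
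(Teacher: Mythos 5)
Your proposal matches the paper's own argument essentially step for step: the paper likewise builds the quadratic surrogate $g(\cdot;\hat\bsbg)$ (resp.\ $h(\cdot;\hat\bsbb)$ with the restricted constant $M_{\bsbX}(2q_\beta)$), identifies its constrained minimizer via Lemma~\ref{lemma:Theta_optimality}, closes the same four-term chain, and then pins down equality using the $\Theta^\#$-uniqueness assumption (the paper phrases this as unique optimal support plus the vanishing of $\nabla l$ on the support of $\hat\bsbg$, and for the $\bsbb$-block exploits the strict gap $\rho>M_{\bsbX}(2q_\beta)$, but these are interchangeable with your uniqueness argument). The only caveat is in part (ii): since $P$ need not be convex, the "subdifferential" correspondence you invoke should be stated via one-sided directional derivatives of $P_\Theta$ along coordinate directions, which is exactly how the paper derives $\Theta^{-1}(|\hat\beta_j|;\lambda)\sgn(\hat\beta_j)=\hat\beta_j-\bsbx_j^T\nabla l$ before applying continuity of $\Theta$.
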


These equations are also satisfied by the fixed points of our BCD algorithms seen from  the  iterate updates given in Section \ref{sec:algorithm}.
\begin{proof}

We prove the result for the doubly constrained form first. Given $\hat \bsbb$,   construct
\begin{equation}
\begin{split}
g(\bm\gamma;\bm\gamma^-) &= l(\bm X\hat \bsbb +\bm\gamma;\bm y) + (\mathbf D_2-\bm\Delta_{l(\bsbX \hat \bsbb + \cdot)} )(\bm\gamma,\bm\gamma^-).\\
\end{split}
\end{equation}
Under the 1-Lipschitz continuity of $\nabla l$, for any $\bm\gamma, \bm\gamma^-$, $$g(\bm\gamma;\bm\gamma^-) \ge l(\bm X\hat \bsbb+\bm\gamma;\bm y).$$
Recall that   $\hat \bsbg$ is a minimizer of   $ l(\bm X\hat \bsbb +\bm\gamma;\bm y) \text{ subject to }\|\bm\gamma\|_0\le q_\gamma$. Define
\begin{align}\tilde \bsbg  := \Theta^{\#}(\hat \bsbg -\nabla l(\bm X\hat \bsbb+\hat \bsbg); q_\gamma). \label{newgammodef}
\end{align} Then, by use of Lemma \ref{lemma:bound without design} and the construction of $g$, we have
\begin{equation}
l(\bm X\hat \bsbb +\tilde \bsbg;\bm y) \le g(\tilde \bsbg;\hat  \bsbg) \le g(\hat \bsbg;\hat \bsbg)  = l(\bm X\hat \bsbb +\hat\bsbg ;\bm y),
\end{equation}
which means  $\tilde \bsbg$ must   also be a globally optimal solution  to  $\min_{ \bm\gamma:  \|\bm\gamma\|_0 \le q_\gamma } l(\bm X\hat \bsbb  +   \bsbg; \bm y)$.  The optimal support   is uniquely determined due to the $\Theta^\#$-uniqueness assumption.
   Moreover,    $\nabla l(\bm X\hat \bsbb+\hat \bsbg) $ restricted to  $\mathcal J  = \{i:\hat \gamma_i \ne 0\}$ is a zero vector. It follows from \eqref{newgammodef} that $$\tilde  \gamma_{i} = \hat  \gamma_{i}\ne 0,  \forall i \in \mathcal J $$ and so   $\hat \bsbg   = \Theta^{\#}(\hat \bsbg -\nabla l(\bm X\hat \bsbb+\hat \bsbg); q_\gamma)$.

Similar to the proof of Theorem \ref{theorem:IQ_convergence},   given $\hat \bsbg$, any  $\hat \bsbb\in \arg\min_{\|\bsbb\|_0\le q_{\beta}   }    l(\bm X \bsbb +\hat \bsbg ;\bm y)=: \tilde l(\bsbb)$ satisfies
(details omitted) $$
\tilde l (\tilde \bsbb) + (\rho \Breg_2 - \breg_{\tilde l})(\tilde  \bsbb, \hat \bsbb) \le \tilde l (\hat \bsbb),
$$
where
\begin{align}\label{fixedbetaeq}
\tilde \bsbb= \Theta^{\#}(\hat \bsbb - \bm X^T\nabla l(\bm X\hat \bsbb+\hat \bsbg;\bm y)/\rho; q_\beta).
\end{align}
  Therefore, for any
  $\rho > M_{ \bsbX}(2 q_\beta)$, $\tilde l (\tilde \bsbb) =\tilde l (\hat \bsbb) $,  $(\rho \Breg_2 - \breg_{\tilde l})(\tilde \bsbb, \hat \bsbb)= 0$, and so  $\tilde \bsbb = \hat \bsbb$ (cf. Appendix \ref{app:proofiq}).

To prove the result for the penalized form, it suffices to study  the conditions satisfied by a local or coordinate-wise minimizer $\hat\bsbb$ of   $f_{\Theta}:= l( \bsbX \bsbb;  \bsby) + \sum_{j=1}^p P_{\Theta}(|\beta_j|; \lambda)$.  The  proof  of Theorem 1 in \cite{She2016} can be modified for the purpose. For completeness, we give the details below. Denote the criterion by   $f$ for simplicity and define $s(u;\lambda):=\Theta^{-1}(u;\lambda) -u$ for $u\ge 0$. Let $\delta f(\bsbb; \bsbh)$ denote the directional derivative of $f$ at $\bsbb$ with increment  $\bsbh$: $$\delta f(\bsbb; \bsbh) = \lim_{\epsilon \rightarrow 0+} \frac{f(\bsbb + \epsilon \bsbh)-f(\bsbb )}{\epsilon}.$$ By the definition of $P_{\Theta}$,  $\delta f(\bsbb, \bsbh)$ exists for any  $ \bsbh \in \mathbb R^p$. Let $l_0(\bsbb) = l(\bsbX \bsbb)$. Consider the following directional vectors: $\bsbd_j= [d_1, \cdots, d_p]^T$ with $d_j=\pm 1$ and $d_{j'}=0, \forall j'\ne j$. Then for any $j$,
\begin{align}
\delta l_{0}(\bsbb; \bsbd_j) &= d_j \bsbx_j^T \nabla l (\bsbX \bsbb ),\\
\delta  P_{\Theta}(\bsbb; \bsbd_j) & = \begin{cases} s(|\beta_j|) \mbox{sgn}(\beta_j) d_j, &\mbox{ if } \beta_j\ne 0, \\ s(|\beta_j|), & \mbox{ if } \beta_j=0.\end{cases}
\end{align}

Due to the local   coordinatewise optimality of $\hat\bsbb$,  $\delta f(\hat\bsbb; \bsbd_{j})\ge 0$, $\forall j$.
When $\hat \beta_1 \ne 0$, we obtain $\bsbx_1^T \nabla l (\bsbX \hat \bsbb ) + s(|\hat\beta_1|;\lambda) \mbox{sgn}(\hat\beta_1) = 0$. When $\hat \beta_1=0$,  $\bsbx_1^T \nabla l (\bsbX \hat \bsbb ) + s(|\hat\beta_1|;\lambda) \ge  0$ and $-\bsbx_1^T \nabla l (\bsbX \hat \bsbb ) + s(|\hat\beta_1|;\lambda) \ge  0$, i.e., $|\bsbx_1^T \nabla l (\bsbX \hat \bsbb )| \le  s(|\hat\beta_1|;\lambda)=\Theta^{-1}(0;\lambda)$. To summarize,   whenever $f$ achieves   a     {local} coordinate-wise minimum  at $\hat \bsbb$,  we have
\begin{align}
&\hat\beta_j \ne 0 \Rightarrow \Theta^{-1}(|\hat \beta_j|;\lambda)\mbox{sgn} (\hat \beta_j) = \hat \beta_j - \bsbx_j^T \nabla l (\bsbX \hat \bsbb )\label{nzb-1}\\
&\hat\beta_j = 0 \Rightarrow \Theta(\bsbx_j^T \nabla l (\bsbX \hat \bsbb ); \lambda) = 0.
\end{align}
When $\Theta$ is continuous at $\hat \beta_j - \bsbx_j^T \nabla l (\bsbX \hat \bsbb )$, \eqref{nzb-1} implies that $\hat \beta_j = \Theta(\hat \beta_j - \bsbx_j^T \nabla l (\bsbX \hat \bsbb ); \lambda)$. Hence $\hat \bsbb$ must  satisfy  $\hat \bsbb = \Theta( \hat \bsbb - \bsbX^T \nabla l (\bsbX \hat \bsbb ); \lambda)$.
For the original problem, let $\bsbb' = \varrho \bsbb$ and notice that $\nabla_{\bsbb'} l(\bsbX \bsbb + \hat\bsbg) = \bsbX^T \nabla l (\bsbX \bsbb+ \hat\bsbg) /\varrho$. The conclusion thus follows.
   \end{proof}

\subsection{Proof of Theorem \ref{thm:fixed-point}} \label{proof:fixed-point}


Recall that given any $\bsbb$,  $\mathcal J(\bsbb)$   denotes its support, i.e., $\mathcal J(\bsbb)=\{j: \beta_j \ne 0\}$, and $J(\bsbb ) = |\mathcal J(\bsbb)|$ and  $s^* = J(\bsbb^*)$ and $o^* = J(\bsbg^*)$.
Given a matrix $\bsbA$, let  $\mathcal P_{\bm A}$ be the orthogonal projection matrix onto the column space of $\bm A$ and $\mathcal P_{\bm A}^\perp$ be its orthogonal complement. The range of $\bsbA$ is denoted by $\mathcal R(\bsbA)$. For convenience, we also use   $\mathcal P_{\bm A}$ to stand for $\mathcal R(\bsbA)$ or   $\mathcal R(\mathcal P_{\bm A})$.  In the proofs,   $[n]$  denotes the set of $\{1, \ldots, n\}$ and   we use $C$ and $c$ and $L$ to denote universal constants which are not necessarily the same at each occurrence.

\begin{lemma} \label{bcdsurro}
Any     $(\hat{\bm\beta},\hat{\bm\gamma})$ satisfying \eqref{eq:fix_point_BCD} can be  re-characterized  by\begin{equation} \label{fixed-point_surro}
(\hat{\bm\beta},\hat{\bm\gamma}) \in \mathop{\arg\min}_{(\bm\beta,\bm\gamma)} g(\bm\beta,\bm\gamma;\bm\beta^-,\bm\gamma^-)|_{\bm\beta^-=\hat{\bm\beta},\bm\gamma^-=\hat{\bm\gamma}} \text{ s.t. }\|\bm\gamma\|_0\le q_\gamma, \|\bm\beta\|_0\le q_\beta,
\end{equation}
where \begin{align}
g(\bm\beta,\bm\gamma;\bm\beta^-,\bm\gamma^-) = \,& l(\bm X\bm\beta^-+\bm\gamma^-) + \langle\nabla l(\bm X\bm\beta^-+\bm\gamma^-), \bm X\bm\beta-\bm X\bm\beta^- + \bm\gamma-\bm\gamma^-\rangle \notag\\
&  + \rho\|\bm\beta-\bm\beta^-\|_2^2/2 + \|\bm\gamma-\bm\gamma^-\|_2^2/2.
\label{gbcddef}\end{align}
\end{lemma}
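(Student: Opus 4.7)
The plan is to exploit the fact that once the anchor $(\bm\beta^-,\bm\gamma^-)$ is frozen at $(\hat{\bm\beta},\hat{\bm\gamma})$, the surrogate $g$ defined in \eqref{gbcddef} splits into a function of $\bm\beta$ alone and a function of $\bm\gamma$ alone (up to a constant independent of the decision variables). Concretely, the cross-term $\langle \nabla l(\bm X\hat{\bm\beta}+\hat{\bm\gamma}),\bm X\bm\beta - \bm X\hat{\bm\beta}+\bm\gamma-\hat{\bm\gamma}\rangle$ is additive in $\bm\beta$ and $\bm\gamma$, and the quadratic proximal term is separable by construction.

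First, I would complete the square in each block. For the $\bm\beta$-block, collecting $\langle \bm X^T\nabla l(\bm X\hat{\bm\beta}+\hat{\bm\gamma}),\bm\beta\rangle + (\rho/2)\|\bm\beta-\hat{\bm\beta}\|_2^2$ gives
\[
\tfrac{\rho}{2}\bigl\|\bm\beta - \hat{\bm\beta} + \bm X^T\nabla l(\bm X\hat{\bm\beta}+\hat{\bm\gamma})/\rho\bigr\|_2^2 + c_\beta,
\]
where $c_\beta$ does not depend on $\bm\beta$. Similarly, the $\bm\gamma$-block reduces to $(1/2)\|\bm\gamma-\hat{\bm\gamma}+\nabla l(\bm X\hat{\bm\beta}+\hat{\bm\gamma})\|_2^2 + c_\gamma$. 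Therefore the constrained minimization of $g(\cdot,\cdot;\hat{\bm\beta},\hat{\bm\gamma})$ subject to $\|\bm\beta\|_0\le q_\beta$ and $\|\bm\gamma\|_0\le q_\gamma$ decouples into two independent quantile-thresholding problems, each of the form treated by Lemma \ref{lemma:Theta_optimality} (with $\nu=0$).

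Next, I would invoke Lemma \ref{lemma:Theta_optimality} to identify the global minimizers of the two sub-problems explicitly as
\[
\Theta^{\#}\!\bigl(\hat{\bm\beta}-\bm X^T\nabla l(\bm X\hat{\bm\beta}+\hat{\bm\gamma})/\rho;\,q_\beta\bigr)\quad\text{and}\quad \Theta^{\#}\!\bigl(\hat{\bm\gamma}-\nabla l(\bm X\hat{\bm\beta}+\hat{\bm\gamma});\,q_\gamma\bigr),
\]
respectively. By the hypothesis that $(\hat{\bm\beta},\hat{\bm\gamma})$ satisfies \eqref{eq:fix_point_BCD}, these two global minimizers coincide exactly with $\hat{\bm\beta}$ and $\hat{\bm\gamma}$. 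The $\Theta^{\#}$-uniqueness assumption stated in Section \ref{sec:algorithm} guarantees that these quantile-thresholding outputs are well defined as genuine argmins, so the product pair $(\hat{\bm\beta},\hat{\bm\gamma})$ is a minimizer of the joint (separable) surrogate problem \eqref{fixed-point_surro}, as desired.

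There is no real obstacle here; the only point requiring a line of care is to verify that the separability also respects the two cardinality constraints (which it plainly does, since the feasible set is the Cartesian product $\{\|\bm\beta\|_0\le q_\beta\}\times\{\|\bm\gamma\|_0\le q_\gamma\}$), and to make sure the uniqueness caveat for $\Theta^{\#}$ is invoked so that the two componentwise argmins combine into a joint argmin. This re-characterization is what will allow the subsequent analysis (e.g., plugging $\bm\beta^*,\bm\gamma^*$ as competitors in the surrogate inequality) to go through uniformly for all A-estimators, not just BCD iterates.
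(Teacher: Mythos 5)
Your proposal is correct and follows exactly the route the paper intends: the paper's own proof is stated as an omitted consequence of Lemma \ref{lemma:Theta_optimality}, and your completion of the square in each block, the observation that the surrogate and the feasible set are separable in $(\bm\beta,\bm\gamma)$, and the identification of the blockwise argmins with the right-hand sides of \eqref{eq:fix_point_BCD} supply precisely the omitted details.
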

The proof of the lemma can be shown from Lemma \ref{lemma:Theta_optimality} and the details are omitted.
 A pleasant   fact  is that \eqref{fixed-point_surro} gives a {joint} optimization problem, rather than an alternative one. Also, notice that   $g$ may \textbf{not} majorize $l$.

From \eqref{gbcddef}, $g(\bsbb, \bsbg; \hat \bsbb, \hat \bsbg)$   is equivalent to   $\rho\|\bm\beta-\hat{\bm\beta}+\bm X^T\nabla l(\bm X\hat{\bm\beta}+\hat{\bm\gamma})/\rho\|_2^2/2 + \|\bm\gamma-\hat{\bm\gamma}+\nabla l(\bm X\hat{\bm\beta}+\hat{\bm\gamma})\|_2^2/2$ up to some additive terms dependent on $\hat \bsbb, \hat \bsbg$ only. By $g(\hat \bsbb, \hat \bsbg; \hat \bsbb, \hat \bsbg)\le g(\bsbb^*, \bsbg^*; \hat \bsbb, \hat \bsbg)$ and Lemma \ref{lemma:bound without design}, \begin{equation}
\begin{split}
& \frac{1}{2}\|\bm\gamma^*-\hat{\bm\gamma}+\nabla l(\bm X\hat{\bm\beta}+\hat{\bm\gamma})\|_2^2 - \frac{1}{2}\|\nabla l(\bm X\hat{\bm\beta}+\hat{\bm\gamma})\|_2^2 \\
& + \frac{\rho}{2}\|\bm\beta^*-\hat{\bm\beta}+\bm X^T\nabla l(\bm X\hat{\bm\beta}+\hat{\bm\gamma})/\rho\|_2^2 - \frac{\rho}{2}\|\bm X^T\nabla l(\bm X\hat{\bm\beta}+\hat{\bm\gamma})/\rho\|_2^2\\
\ge\,& \frac{1}{2}\big[1-\mathcal L(\mathcal J(\bm\gamma^*),\mathcal J(\hat{\bm\gamma}))\big]\|\hat{\bm\gamma}-\bm\gamma^*\|_2^2 + \frac{\rho}{2}\big[1-\mathcal L(\mathcal J(\bm\beta^*),\mathcal J(\hat{\bm\beta}))\big]\|\hat{\bm\beta}-\bm\beta^*\|_2^2.
\end{split}
\end{equation}
From  $\mathcal L(\mathcal J(\bm\gamma^*),\mathcal J(\hat{\bm\gamma})) \le 1/\sqrt{\vartheta}$, $\mathcal L(\mathcal J(\bm\beta^*),\mathcal J(\hat{\bm\beta})) \le 1/\sqrt{\vartheta}$, and   the definition of noise $\bm\epsilon$, we can show   (details omitted)
\begin{equation} \label{fixed-point_bound-1}
\begin{split}
2\bar{\bm\Delta}_l(\bm X\hat{\bm\beta}+\hat{\bm\gamma}, \bm X\bm\beta^*+\bm\gamma^*) \le   \frac{1}{2\sqrt{\vartheta}}(\|\hat{\bm\gamma}-\bm\gamma^*\|_2^2+\rho\|\hat{\bm\beta}-\bm\beta^*\|_2^2)  + \langle\bm\epsilon, \bm X\hat{\bm\beta}-\bm X\bm\beta^*+\hat{\bm\gamma}-\bm\gamma^*\rangle.
\end{split}
\end{equation}
 Define
$$
P (\bm\beta,\bm\gamma) =  J(\bm\beta) + J(\bm\beta)\log(ep/J(\bm\beta)) + J(\bm\gamma) + J(\bm\gamma)\log(en/J(\bm\gamma)).
$$
Because the function depends on cardinality only, we also denote it by
$P (J(\bm\beta),J(\bm\gamma))$.
We will show that with high probability, the last stochastic term or $\langle\bm\epsilon, \bar{\bm X}\hat{\bar{\bm\beta}}-\bar{\bm X}\bar{\bm\beta}^*\rangle$ in \eqref{fixed-point_bound-1} can be bounded by the sum  of $\|\bar{\bm X}\hat{\bar{\bm\beta}}-\bar{\bm X}\bar{\bm\beta}^*\|_2^2$ and  $\sigma^2 P (\bm\beta^*,\bm\gamma^*) +\sigma^2  P (\hat{\bm\beta},\hat{\bm\gamma})$, up to  some multiplicative constants.

 To achieve the purpose, we decompose $\bar\bsbX   ( \hat{\bar{\bm\beta}}-\bar{\bm\beta} ^*)=: \bar{\bm X}\bm\Delta$ as follows
\begin{equation}
\begin{split}
\bar{\bm X}\bm\Delta  &= \mathcal P_{\bar{\bm X}_{\mathcal J(\bar{\bm\beta}^*)}}(\bar{\bm X}\bm\Delta ) + \mathcal P_{\bar{\bm X}_{\mathcal J(\bar{\bm\beta}^*)}}^\perp(\bar{\bm X}\bm\Delta ) \\
&= \mathcal P_{\bar{\bm X}_{\mathcal J(\bar{\bm\beta}^*)}}(\bar{\bm X}\bm\Delta ) + \mathcal P_{\bar{\bm X}_{\mathcal J(\bar{\bm\beta}^*)}}^\perp \mathcal (\bm X\hat{\bm\beta} +\hat{\bm\gamma}) - \mathcal P_{\bar{\bm X}_{\mathcal J(\bar{\bm\beta}^*)}}^\perp \mathcal (\bm X {\bm\beta}^* + {\bm\gamma}^*)\\
&= \mathcal P_{\bar{\bm X}_{\mathcal J(\bar{\bm\beta}^*)}}(\bar{\bm X}\bm\Delta ) + \mathcal P_{\bar{\bm X}_{\mathcal J(\bar{\bm\beta}^*)}}^\perp \mathcal P_{\bar{\bm X}_{\mathcal J(\hat{\bar{\bm\beta}})}} (\bm X\hat{\bm\beta} +\hat{\bm\gamma}) \equiv {I} + {II}.
\end{split}
\end{equation}
It is easy to verify that  $\|I\|_2^2+\|II\|_2^2 = \|\bar{\bm X}\bm\Delta \|_2^2$.   We use Lemma \ref{concenGauss} and Lemma \ref{concenGauss2}  to handle $\langle\bm\epsilon, I\rangle$ and $\langle\bm\epsilon, II\rangle$, respectively. The proof of Lemma \ref{concenGauss} is given at the end; the proof of Lemma  \ref{concenGauss2} is similar and omitted.

\begin{lemma} \label{concenGauss}
Given $[\bm X^{(1)}, \bm X^{(2)}] \in\mathbb R^{n\times (p_1+p_2)}$ where $\bm X^{(1)}\in\mathbb R^{n\times p_1}, \bm X^{(2)}\in\mathbb R^{n\times p_2}, rank(\bm X^{(1)})= r_1, rank(\bm X^{(2)})=r_2$ and $1\le J_1\le p_1$, $1\le J_2\le p_2$, define $\Gamma_{J_1,J_2}=\{\bm\alpha\in\mathbb R^{p_1+p_2}: \|\bm\alpha\|_2 \le 1, \bm\alpha\in \mathcal P_{[\bm X^{(1)}_{\mathcal J_1},\bm X^{(2)}_{\mathcal J_2}]} \text{ for some } \mathcal J_1\subset[p_1], \mathcal J_2\subset[p_2]\text{ with }|\mathcal J_1|\le J_1, |\mathcal J_2| \le  J_2\}$. Let $P' (J_1,J_2) = \sigma^2\{J_1\wedge r_1+J_2\wedge r_2+\log{p_1\choose J_1}+\log{p_2\choose J_2}\}$. Then for any $t\ge 0$,
\begin{equation} \label{fixed-point_bound-2}
\mathbb P\bigg(\sup_{\bm\alpha\in\Gamma_{J_1,J_2}}\langle\bm\epsilon, \bm\alpha\rangle > t\sigma + \sqrt{LP' (J_1,J_2)}\bigg) \le C\exp(-ct^2),
\end{equation}
where $L,C,c$ are universal constant.
\end{lemma}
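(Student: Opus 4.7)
The plan is to reduce the supremum over $\Gamma_{J_1,J_2}$ to the maximum of $\|\mathcal P_{\bm M}\bm\epsilon\|_2$ over the finite list of column subsets $\bm M=[\bm X^{(1)}_{\mathcal J_1},\bm X^{(2)}_{\mathcal J_2}]$, and then combine a fixed-subspace sub-Gaussian deviation bound with a union bound. By duality, for each index pair,
\begin{equation*}
\sup\{\langle\bm\epsilon,\bm\alpha\rangle : \|\bm\alpha\|_2\le 1,\ \bm\alpha\in \mathcal R([\bm X^{(1)}_{\mathcal J_1},\bm X^{(2)}_{\mathcal J_2}])\} = \|\mathcal P_{[\bm X^{(1)}_{\mathcal J_1},\bm X^{(2)}_{\mathcal J_2}]}\bm\epsilon\|_2,
\end{equation*}
and the dimension of this subspace is at most $k^*:=(J_1\wedge r_1)+(J_2\wedge r_2)$, since rank is subadditive under column concatenation and $\mathrm{rank}(\bm X^{(i)}_{\mathcal J_i})\le |\mathcal J_i|\wedge r_i$ for $i=1,2$. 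So the LHS of \eqref{fixed-point_bound-2} is $\max_{(\mathcal J_1,\mathcal J_2):|\mathcal J_i|\le J_i}\|\mathcal P_{V_{\mathcal J_1,\mathcal J_2}}\bm\epsilon\|_2$ where each $V_{\mathcal J_1,\mathcal J_2}$ has dimension at most $k^*$.

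Next, for any deterministic subspace $V$ of dimension $k\le k^*$, I would invoke the standard $1/2$-net argument: fix a net of the unit sphere of $V$ of cardinality at most $5^k$, and combine it with the sub-Gaussianity of $\bm\epsilon$ (Definition \ref{def:subgauss}) to control the maximum of finitely many one-dimensional marginals. This yields
\begin{equation*}
\mathbb P(\|\mathcal P_V\bm\epsilon\|_2 > \sigma\sqrt{A_0 k}+s\sigma)\le C_0\exp(-c_0 s^2),\quad s\ge 0,
\end{equation*}
for absolute constants $A_0,C_0,c_0>0$. This mirrors the tail bound used in Lemma 4 of \cite{She2016}; the only feature needed is that both the expectation term and the deviation scale are controlled by the intrinsic dimension $k$, not by $n$.

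Finally, a union bound over the $\binom{p_1}{J_1}\binom{p_2}{J_2}$ index pairs, together with the calibration $s^2 = t^2 + c_0^{-1}\left[\log\binom{p_1}{J_1}+\log\binom{p_2}{J_2}\right]$ (up to an additive constant), collapses the combinatorial prefactor and makes the overall tail at most $C\exp(-ct^2)$ for suitable absolute $C,c$. Applying $\sqrt{a+b}\le\sqrt a+\sqrt b$ to the deterministic threshold then splits it into a piece of the form $\sigma t$ plus $\sqrt{LP'(J_1,J_2)}$ for some universal $L$, which matches \eqref{fixed-point_bound-2}. The main obstacle is essentially bookkeeping: choosing $A_0$ and $L$ large enough that the $5^k$ net discretization, the $\sigma\sqrt{k^*}$ expectation term, and the union-bound entropy $\log\binom{p_i}{J_i}$ can all be absorbed into one unified sub-Gaussian tail with the claimed exponent; no technique beyond classical net/chaining arguments is required.
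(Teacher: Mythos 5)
Your proposal is correct. You prove the lemma by a two-stage argument: (i) for each fixed index pair $(\mathcal J_1,\mathcal J_2)$ the supremum over the unit ball of $V_{\mathcal J_1,\mathcal J_2}=\mathcal R([\bm X^{(1)}_{\mathcal J_1},\bm X^{(2)}_{\mathcal J_2}])$ equals $\|\mathcal P_{V}\bm\epsilon\|_2$, which you control by a one-step $1/2$-net of the unit sphere of a subspace of dimension at most $k^*=(J_1\wedge r_1)+(J_2\wedge r_2)$; (ii) a union bound over the $\binom{p_1}{J_1}\binom{p_2}{J_2}$ subsets with the deviation calibrated to absorb the combinatorial entropy. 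The paper instead treats $\Gamma_{J_1,J_2}$ as a single set, bounds its global covering number by $\log\binom{p_1}{J_1}+\log\binom{p_2}{J_2}+k^*\log(C\sigma/\varepsilon)$ via a volume argument, and then invokes the concentration form of Dudley's entropy-integral bound followed by Cauchy--Schwarz to extract $\sqrt{LP'(J_1,J_2)}$. The two routes exploit exactly the same structure (a union of at most $\binom{p_1}{J_1}\binom{p_2}{J_2}$ low-dimensional balls) and yield the same bound; yours is more elementary and self-contained in that it needs only the finite-maximum tail for sub-Gaussian marginals rather than the Dudley-plus-concentration machinery, while the paper's entropy formulation packages the subset count and the net of each ball into one covering-number estimate, which is the template it reuses for Lemma \ref{concenGauss2} and the uniform-over-$(s,o)$ variants in the proof of Theorem \ref{thm:tuning}. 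One small point worth making explicit in your write-up: the fixed-subspace tail $\mathbb P(\|\mathcal P_V\bm\epsilon\|_2>\sigma\sqrt{A_0k}+s\sigma)\le C_0\exp(-c_0s^2)$ holds for an arbitrary sub-Gaussian vector in the sense of Definition \ref{def:subgauss} (no coordinate independence), precisely because the net reduces it to a maximum of $5^k$ one-dimensional marginals and $A_0$ is chosen so that $c_0A_0\ge\log 5$; with that noted, the bookkeeping you describe closes the argument.
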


\begin{lemma} \label{concenGauss2}
Given $\bm X = [\bm X^{(1)}, \bm X^{(2)}] \in\mathbb R^{n\times (p_1+p_2)}$ where $\bm X^{(1)}\in\mathbb R^{n\times p_1}, \bm X^{(2)}\in\mathbb R^{n\times p_2}$, $rank(\bm X^{(1)})=r_1$, $rank(\bm X^{(2)})=r_2$ and $1\le J_1,J_1'\le p_1$, $1\le J_2,J_2'\le p_2$, let  $\mathcal P_{\mathcal J_1,\mathcal J_2} := \mathcal P_{[\bm X^{(1)}_{\mathcal J_1},\bm X^{(2)}_{\mathcal J_2}]}$ and define $\Gamma_{J_1,J_2,J_1',J_2'}=\{\bm\alpha\in\mathbb R^{p_1+p_2}: \|\bm\alpha\|_2 \le 1, \bm\alpha\in \mathcal R(\mathcal P^\perp_{\mathcal J_1,\mathcal J_2}\mathcal P_{\mathcal J_1',\mathcal J_2'})  \text{ for } \mathcal J_1,\mathcal J_1'\subset[p_1], \mathcal J_2,\mathcal J_2\subset[p_2]\text{ and }|\mathcal J_1|\le J_1, |\mathcal J_2| \le  J_2, |\mathcal J_1'|\le J_1', |\mathcal J_2'|\le J_2'\}$. Let $P'' (J_1,J_2, J_1',J_2') = \sigma^2\{J_1'\wedge r_1+ J_2'\wedge r_2+\log{p_1\choose J_1}{p_2\choose J_2} {p_1\choose J'_1} {p_2\choose J'_2}\}$. Then for any $t\ge 0$,
\begin{equation}
\mathbb P\bigg(\sup_{\bm\alpha\in\Gamma_{J_1,J_2,J_1',J_2'}}\langle\bm\epsilon, \bm\alpha\rangle > t\sigma + \sqrt{LP'' (J_1,J_2,J_1',J_2')}\bigg) \le C\exp(-ct^2),
\end{equation}
where $L,C,c$ are universal constant.
\end{lemma}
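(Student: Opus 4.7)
My plan is to split the supremum according to the choice of the index sets $(\mathcal J_1,\mathcal J_2)$, handle each fixed pair by a dimension-based sub-Gaussian concentration bound, and then combine them via a union bound. First I would rewrite
\[
\sup_{\bm\alpha\in\Gamma_{J_1,J_2}}\langle\bm\epsilon,\bm\alpha\rangle
\;=\;\max_{\substack{\mathcal J_1\subset[p_1],|\mathcal J_1|\le J_1\\ \mathcal J_2\subset[p_2],|\mathcal J_2|\le J_2}}\,
\sup_{\bm\alpha\in V_{\mathcal J_1,\mathcal J_2},\,\|\bm\alpha\|_2\le 1}\langle\bm\epsilon,\bm\alpha\rangle
\;=\;\max_{(\mathcal J_1,\mathcal J_2)}\|\mathcal P_{\mathcal J_1,\mathcal J_2}\bm\epsilon\|_2,
\]
where $V_{\mathcal J_1,\mathcal J_2}=\mathcal R([\bm X^{(1)}_{\mathcal J_1},\bm X^{(2)}_{\mathcal J_2}])$ and $\mathcal P_{\mathcal J_1,\mathcal J_2}$ is the orthogonal projector onto it. The key observation is that $\dim V_{\mathcal J_1,\mathcal J_2}\le (J_1\wedge r_1)+(J_2\wedge r_2)=:d$, since $\mathrm{rank}(\bm X^{(k)}_{\mathcal J_k})\le J_k\wedge r_k$.

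For each fixed $(\mathcal J_1,\mathcal J_2)$, I would control $\|\mathcal P_{\mathcal J_1,\mathcal J_2}\bm\epsilon\|_2$ by a standard covering/net argument: take a $(1/2)$-net $\mathcal N$ of the unit ball in the $d$-dimensional subspace $V_{\mathcal J_1,\mathcal J_2}$, of cardinality at most $5^{d}$, so that $\|\mathcal P_{\mathcal J_1,\mathcal J_2}\bm\epsilon\|_2\le 2\max_{\bm\alpha\in\mathcal N}\langle\bm\epsilon,\bm\alpha\rangle$. Each $\langle\bm\epsilon,\bm\alpha\rangle$ is sub-Gaussian$(0,\sigma^2)$ by Definition~\ref{def:subgauss}, so a union bound over $\mathcal N$ yields
\[
\mathbb P\bigl(\|\mathcal P_{\mathcal J_1,\mathcal J_2}\bm\epsilon\|_2\ge C\sigma\sqrt{d}+u\sigma\bigr)\le C\exp(-c u^{2}),\qquad u\ge 0,
\]
for universal $C,c>0$ (absorbing the $5^{d}$ factor into the $\sqrt d$ term by choosing the constant appropriately).

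Next I would take a union bound over the at most $\binom{p_1}{J_1}\binom{p_2}{J_2}$ choices of $(\mathcal J_1,\mathcal J_2)$. Setting $u=\sigma\bigl(t+\sqrt{(1/c)\log\binom{p_1}{J_1}\binom{p_2}{J_2}}\bigr)$ and using $(a+b)^{2}\le 2a^{2}+2b^{2}$ to merge the deterministic offsets yields
\[
\mathbb P\!\left(\sup_{\bm\alpha\in\Gamma_{J_1,J_2}}\langle\bm\epsilon,\bm\alpha\rangle
\;\ge\; \sqrt{L\sigma^{2}\bigl[(J_1\wedge r_1)+(J_2\wedge r_2)+\log\tbinom{p_1}{J_1}+\log\tbinom{p_2}{J_2}\bigr]}+t\sigma\right)\;\le\;C\exp(-c t^{2}),
\]
which is exactly \eqref{fixed-point_bound-2} with the stated $P'(J_1,J_2)$. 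The same template, applied to $V_{\mathcal J_1,\mathcal J_2,\mathcal J_1',\mathcal J_2'}:=\mathcal R(\mathcal P^{\perp}_{\mathcal J_1,\mathcal J_2}\mathcal P_{\mathcal J_1',\mathcal J_2'})$, whose dimension is at most $(J_1'\wedge r_1)+(J_2'\wedge r_2)$, will give Lemma~\ref{concenGauss2}; the only change is the cardinality of the union, which is $\binom{p_1}{J_1}\binom{p_2}{J_2}\binom{p_1}{J_1'}\binom{p_2}{J_2'}$.

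The main delicate point is the accounting in the exponent when merging the covering-number term $5^{d}$ with the sub-Gaussian tail and then with the combinatorial union bound: I will need to pick the constant in front of $\sqrt{d}$ large enough to dominate both $d\log 5$ and $\log\binom{p_k}{J_k}$, and to rescale $t$ so that the final bound is genuinely $C\exp(-ct^{2})$ rather than $\exp(-c t^{2}+c'd)$. Apart from this bookkeeping, every step is a routine application of sub-Gaussian concentration and elementary union bounds, and no regularity condition on the designs $\bm X^{(k)}$ is required beyond their fixed rank.
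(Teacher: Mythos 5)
Your proposal is correct and follows essentially the same route as the paper: both arguments reduce the supremum to a union of at most $\binom{p_1}{J_1}\binom{p_2}{J_2}\binom{p_1}{J_1'}\binom{p_2}{J_2'}$ balls of dimension at most $(J_1'\wedge r_1)+(J_2'\wedge r_2)$ and then pay $\log$ of that count plus the dimension in the deterministic offset. The only cosmetic difference is that the paper bounds the metric entropy of the whole union and invokes Dudley's integral, whereas you use a single-scale $(1/2)$-net for $\|\mathcal P^{\perp}_{\mathcal J_1,\mathcal J_2}\mathcal P_{\mathcal J_1',\mathcal J_2'}\bm\epsilon\|_2$ followed by two union bounds; your bookkeeping of the $5^{d}$ and combinatorial factors into the constants is handled correctly.
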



Define $\Gamma_{s,o} = \{\bm\Delta\in\mathcal P_{[\bm X_{\mathcal J_1}, \  \bm I_{\mathcal J_2}]}, \|\bm\Delta\|_2\le 1, \mathcal J_1\subset [p], \mathcal J_2\subset [n], |\mathcal J_1|\le  s, |\mathcal J_2|\le  o\}$. Given any $a, b, a'>0$, we have
\begin{equation} \nonumber
\begin{aligned}
& \langle \bm\epsilon, \mathcal P_{\bar{\bm X}_{\mathcal J(\bar{\bm\beta}^*)}} \bar{\bm X}\bm\Delta\rangle  - \frac{1}{a} \| \mathcal P_{\bar{\bm X}_{\mathcal J(\bar{\bm\beta}^*)}}\bar{\bm X}\bm\Delta\|_2^2 - bL\sigma^2  P (J(  \bsbb^*), J(\bsbg^*) ) \\
\le\,& \| \mathcal P_{\bar{\bm X}_{\mathcal J(\bar{\bm\beta}^*)}} \bar{\bm X}\bm\Delta\|_2 \langle \bm\epsilon, \frac{ \mathcal P_{\bar{\bm X}_{\mathcal J(\bar{\bm\beta}^*)}} \bar{\bm X}\bm\Delta}{\| \mathcal P_{\bar{\bm X}_{\mathcal J(\bar{\bm\beta}^*)}} \bar{\bm X}\bm\Delta \|_2}\rangle - 2 \sigma \| \mathcal P_{\bar{\bm X}_{\mathcal J(\bar{\bm\beta}^*)}} \bar{\bm X}\bm\Delta\|_2 \sqrt {\frac{b}{a}L P (J(  \bsbb^*), J(\bsbg^*))}  \\
\le\,&  \frac{1}{a'} \| \mathcal P_{\bar{\bm X}_{\mathcal J(\bar{\bm\beta}^*)}} \bar{\bm X}\bm\Delta \|^2_2 + \frac{a'}{4}   \sup_{\bm\Delta\in\Gamma_{s^*,o^*}} [\langle\bm\epsilon, \bm\Delta\rangle - 2\sigma\{{ ({b}/{a})LP (s^*, o^*)}\}^{1/2} ]_+^2
\end{aligned}\label{cauchyin}
\end{equation}
by the Cauchy-Schwarz inequality.

Let $R_{s, o}=  \sup_{\bm\Delta\in\Gamma_{s,o}} [\langle\bm\epsilon, \bm\Delta\rangle - 2\sigma\{{ ({b}/{a})LP (s, o)}\}^{1/2} ]_+ $.
Lemma \ref{concenGauss} indicates  that $R_{s,o}$ is bounded above by a constant times $\sigma^2$ in expectation    when choosing $b/a$ to be a constant greater than $1/4$. Note that when $s=o=0$, $R_{0,0}= 0 $. When $s\ge 1, o\ge 1$, for any $t\ge 0$,
\begin{equation} \label{Rbound}
\begin{split}
&\mathbb P(R_{s, o} > t\sigma)\\
\le\,&   \mathbb P\bigg(\sup_{\bm\Delta\in\Gamma_{s, o}}\langle\bm\epsilon, \bm\Delta\rangle - \sqrt{LP (s,o)} > t\sigma + 2\sigma\sqrt{\frac{b}{a}LP (s,o)} -\sigma \sqrt{LP (s,o)}\bigg)\\
\le\,&   C\exp(-ct^2)\exp[-c(2\sqrt{b/a}-1)^2 P (s,o)]\\
\le\, & C \exp(-ct^2 ) (np)^{-c} \le  C \exp(-ct^2 ),
\end{split}
\end{equation}
where we used $\log  {p \choose J } \le C J \log (ep/J)$ and $C, c>0$ are constants. Similarly, when $s = 0 $, we know $\EP(R_{s,o} > t \sigma)\le C \exp(-ct^2) n^{-c}  $ for any $0\le o \le n$,  and when $o=0$, $\EP(R_{s,o} > t \sigma)\le C \exp(-ct^2) p^{-c}$ for any $0\le s \le p$.   From these tail bounds, we get
$$
\EE \langle \bm\epsilon, \mathcal P_{\bar{\bm X}_{\mathcal J(\bar{\bm\beta}^*)}} \bar{\bm X}\bm\Delta\rangle  \le \EE[  ( \frac{1}{a} + \frac{1}{a'})\| \mathcal P_{\bar{\bm X}_{\mathcal J(\bar{\bm\beta}^*)}}\bar{\bm X}\bm\Delta\|_2^2 + bL \sigma P (J(  \bsbb^*), J(\bsbg^*) )] + a' C \sigma^2.
$$
Repeating the treatment using   Lemma \ref{concenGauss2} gives \begin{align*}
 \EE \langle \bm\epsilon, P_{\bar{\bm X}_{\mathcal J(\bar{\bm\beta}^*)}}^\perp \mathcal P_{\bar{\bm X}_{\mathcal J(\hat{\bar{\bm\beta}})}} (\bm X\hat{\bm\beta} +\hat{\bm\gamma}) \rangle  \le &\EE [ (\frac{1}{a} + \frac{1}{a'}) \|P_{\bar{\bm X}_{\mathcal J(\bar{\bm\beta}^*)}}^\perp \mathcal P_{\bar{\bm X}_{\mathcal J(\hat{\bar{\bm\beta}})}} (\bm X\hat{\bm\beta} +\hat{\bm\gamma})\|_2^2  \\ & +     bL\sigma\{  P (J(\hat \bsbb) , J(\hat \bsbg)) + P (J(\bsbb^*) , J(\bsbg^*))\}] + a' C \sigma^2.
\end{align*}
Combining the two bounds and using the monotone property of $P$, we get for any $4b> a>0, a'>0$,  \begin{equation}\label{fixed-point_bound-2}
\EE \langle\bm\epsilon, \bar{\bm X}{\bm\Delta}\rangle
\le \EE (\frac{1}{a}+\frac{1}{a'})\|\bar{\bm X}{\bm\Delta}\|_2^2+ 3bL\sigma^2 P ( \vartheta s^*,  \vartheta o^*) + a' C\sigma^2
\end{equation}
where $C,c$ are positive constants. It follows from  \eqref{fixed-point_bound-1}, \eqref{fixed-point_bound-2} and the regularity condition \eqref{fixed-point-condition} that
\begin{equation}
\EE [(\frac{\delta}{2} - (\frac{1}{a}+\frac{1}{a'}))\|\bar{\bm X}{\bm\Delta}\|_2^2 ]\le  bL' \sigma^2 \vartheta (o^*\log\frac{en}{ \vartheta o^*} + s^*\log\frac{ep}{ \vartheta s^*})+ C a' \sigma^2,
\end{equation}
where $L'$ is a  constant.
Taking $a=a' = 8/\delta$ and $b = 4/\delta$ gives the conclusion.

In the remaining, we  prove the more  general result under
\begin{equation}
\EE [\Breg_2 (\bar \bsbX   {\bar \bsbb}^{(0)}, \bar \bsbX \bar \bsbb^*  ) ]\le   C M  \allowbreak\{\vartheta o^*\sigma^2 \allowbreak \log\frac{en}{ \vartheta o^*} + \vartheta s^* \sigma^2\log\frac{ep}{ \vartheta s^*} +\sigma^2 \}\label{auxcond1}
\end{equation} where $M$ satisfies $  +\infty\ge M \ge 1$ and   $C\ge 0$ is a constant,  and
\begin{equation}
\Big\{2(1-\frac{1}{M})\bar{\bm\Delta}_l +\frac{C_0}{  M( M\delta_{0} \vee c_{0})} \breg_l -  \delta_{0}\Breg_2 \Big\}(\bar \bsbX  \bar \bsbb , \bar \bsbX  \bar \bsbb ')
 \ge \frac{(1-1/M)}{\sqrt{\vartheta}} (\rho\Breg_2(    \bsbb,    \bsbb ')+ \Breg_2(    \bsbg,    \bsbg '))  \label{auxcond2}
\end{equation}
for all $\bar \bsbb, \bar \bsbb'$:   $\|\bm\beta\|_0\le \vartheta s^*, \|\bm\beta'\|_0 \le  s^*$, $\|\bm\gamma\|_0 \le \vartheta o^*, \| \bm\gamma'\|_0 \le  o^*$ for some $\delta_{0}>0$ and   constants $C_0, c_0>0$.

First, from the above analysis, we have
\begin{equation}
  \EE [(2 \delta\bar{\bm\Delta}_l  - \delta^2 \Breg_2 )(\bar \bsbX \hat {\bar \bsbb} , \bar \bsbX \bar \bsbb^*  ) ]  -  \EE [   \frac{\delta  }{\sqrt \vartheta}\ ( \rho\Breg_2(   \hat  \bsbb,    \bsbb^*)+ \Breg_2(  \hat  \bsbg,    \bsbg^* )  ) ] \le  C E. \label{auxeq1}
\end{equation}
where  $E:=\vartheta o^*\sigma^2 \allowbreak \log\frac{en}{ \vartheta o^*} + \vartheta s^* \sigma^2\log\frac{ep}{ \vartheta s^*} +\sigma^2 $.
Next, from $l( \bar \bsbX \hat {\bar \bsbb} )\le l(  \bar \bsbX \bar \bsbb^{(0)})$,  we have
\begin{equation}
 \breg_l ( \bar \bsbX \hat {\bar \bsbb}, \bar \bsbX \bar \bsbb^*) \le\breg_l(   \bar \bsbX \bar \bsbb^{(0)}, \bar \bsbX \bar \bsbb^*) + \langle \bsbeps, \bsbX \hat {\bar \bsbb} - \bar \bsbX \bar \bsbb^*   \rangle -  \langle \bsbeps, \bsbX   {\bar \bsbb}^{(0)} - \bar \bsbX \bar \bsbb^*   \rangle
\end{equation}
from which it follows that for any $\delta'>0$,
\begin{equation}
\EE [   ( {\breg}_l  - \delta'  \Breg_2) (\bar \bsbX \hat {\bar \bsbb} , \bar \bsbX \bar \bsbb^*  ) ]  \le  (\breg_l + \frac{1}{ M} \Breg_2)(   \bar \bsbX \bar \bsbb^{(0)}, \bar \bsbX \bar \bsbb^*) + C E (\frac{1}{\delta'} +M).
\end{equation}
Since $\nabla l$ is Lip($1$) and  $M\ge 1$, we obtain
\begin{equation}
\EE [   ( {\breg}_l  - \delta'  \Breg_2) (\bar \bsbX \hat {\bar \bsbb} , \bar \bsbX \bar \bsbb^*  ) ]  \le C  E (\frac{1}{\delta'} +  M) \le \frac{C}{c_1 \wedge c_2 }E(\frac{c_{1}}{\delta'} + c_{2} M)   \label{auxeq2raw}
\end{equation}
where    $c_1, c_2>0$ are arbitrary constants and recall that     $C$ may not be the same constant at each occurrence.
Taking $\delta^2 = \delta'^2 /(c_1 + c_2 M \delta')$, we get
\begin{equation}
\EE [   (\frac{\delta^{2}}{\delta'} {\breg}_l  - \delta^2  \Breg_2) (\bar \bsbX \hat {\bar \bsbb} , \bar \bsbX \bar \bsbb^*  ) ]  \le    \frac{C}{c_1 \wedge c_2 }E.    \label{auxeq2}
\end{equation}

Multiplying \eqref{auxeq1} by $(1 - 1/M)$ and   \eqref{auxeq2} by $1/M$ and adding the two inequalities yield
\begin{align*}
 & \EE [  (1-\frac{1}{M})\{2 \bar{\bm\Delta}_l    (\bar \bsbX \hat {\bar \bsbb} , \bar \bsbX \bar \bsbb^*  )  -    \frac{  \rho\Breg_2(   \hat  \bsbb,    \bsbb^*)+ \Breg_2(  \hat  \bsbg,    \bsbg^* )}{\sqrt \vartheta}\  \} \\ &  +( \frac{\delta}{M\delta'} {\breg}_l- \delta   \Breg_2 )(\bar \bsbX \hat {\bar \bsbb} , \bar \bsbX \bar \bsbb^*  ) ]   \le     \frac{E}{\delta} (1+ \frac{C}{c_1 \wedge c_2}).
 \end{align*}
 Simple calculation shows $$ \delta'/   \delta  =   c_2 M \delta + \sqrt{c_2^2 M^2 \delta^2 + 4c_1} \le  (2c_2 M \delta) \vee (4c_1) $$
 Under    \eqref{auxcond2} with $\delta_0 = 2 \delta, C_0 = 1/c_2, c_0 = 4c_1/c_2$, we get $\EE [ \Breg_2  (\bar \bsbX \hat {\bar \bsbb} , \bar \bsbX \bar \bsbb^*  ) ] \lesssim E/ \delta_0^2$. A reparameterization of     \eqref{auxcond2}  gives the assumed regularity condition.

\paragraph{Proof of Lemma \ref{concenGauss}} By definition, $\{\langle\bm\epsilon, \bm\alpha\rangle: \bm\alpha\in \Gamma_{J_1,J_2}\rangle\}$ is a stochastic process with sub-Gaussian increments. The induced metric on $\Gamma_{J_1,J_2}$ is Euclidean $d(\bm\alpha_1,\bm\alpha_2) = \sigma\|\bm\alpha_1-\bm\alpha_2\|_2$.

To bound the metric entropy $\log\mathcal N(\varepsilon,\Gamma_{J_1,J_2},d)$, where $\mathcal N(\varepsilon,\Gamma_{J_1,J_2}, d)$ is the smallest cardinality of an $\varepsilon$-net that covers $\Gamma_{J_1,J_2}$ under $d$, we notice that $\bm\alpha$ is always in a $({r_1\wedge J_1+r_2\wedge J_2})$-dimensional ball and the number of such ball is at most ${p_1\choose J_1}{p_2\choose J_2}$. By a standard volume argument 
\begin{equation*}
\begin{split}
\log\mathcal N(\varepsilon,\Gamma_{J_1,J_2},d) &\le \log  {p_1\choose J_1}{p_2\choose J_2}\Big(\frac{C\sigma}{\varepsilon}\Big)^{r_1\wedge J_1+r_2\wedge J_2} \\
&=\log{p_1\choose J_1}+\log{p_2\choose J_2} + (r_1\wedge J_1+r_2\wedge J_2)\log(C\sigma/\varepsilon),
\end{split}
\end{equation*}
where $C$ is a universal constant.

From Dudley's integral bound, 
\begin{equation*}
\mathbb P\bigg(\sup_{\bm\alpha\in\Gamma_{J_1,J_2}} \langle\bm\epsilon, \bm\alpha\rangle\ge t\sigma + L\int_0^\sigma \sqrt{\log\mathcal N(\varepsilon, \Gamma_{J_1,J_2},d)}\text{d}\varepsilon\bigg) \le C\exp(-ct^2).
\end{equation*}
The conclusion follows from   the Cauchy-Schwarz inequality.
\qed \\

\begin{remark}  Because of the tail bounds,   the proof also gives a high-probability form result:
 $$
 \|\bar \bsbX  \hat{\bar \bsbb} - \bar \bsbX   {\bar \bsbb^*}\|_2^2 \lesssim \frac{\sigma^2 \vartheta}{\delta^2}\Big \{o^*\log\frac{en}{ \vartheta o^*} + s^*\log\frac{ep}{ \vartheta s^*}\Big\}
 $$
 with   probability at least $1 - C n^{-c(1 \wedge o^*)} p^{-c(1 \wedge s^*)}  $
 under the same regularity condition.   

In addition, if the $\bsbb$-optimization problem is convex,      such as in robust regression or logistic regression, the regularity condition can be weakened to $(2\bar{\bm\Delta}_l
-  \delta\Breg_2 )(\bm X\bm\beta +\bm\gamma, \bm X\bm\beta'+\bm\gamma')\ge \|\bm\gamma-\bm\gamma'\|_2^2/(4\sqrt{\vartheta})$ for any $\bm\beta,\bm\gamma,\bm\beta',\bm\gamma'$:  $\|\bm\gamma\|_0 \le \vartheta o^*, \| \bm\gamma'\|_0 \le  o^*$ and some $\delta>0$. It is also worth mentioning that  to get the risk bound,  \eqref{fixed-point-condition} can be stated in expectation.

\end{remark}

\subsection{Proof of Theorem \ref{thm:fixed-point-lasso}}
\label{proof:fixed-point-l1}

We prove a more general theorem  for a strongly smooth   loss   $l$ with $\nabla l$ 1-Lipschitz.

\begin{manualtheorem}{\ref{thm:fixed-point-lasso}'}\label{thm:fixed-point-lasso-generalloss}
Consider $\min_{\bsbb, \bsbg} l(\bsbX \bsbb+ \bsbg) + \lambda \varrho   \| \bsbb\|_1$  s.t. $\| \bsbg\|_0\le q$, where $\varrho >0$,  $\lambda = A\sigma\sqrt{\log(ep)}$ with $A$ a sufficiently large constant and $q = \vartheta  o^{*}$ with $\vartheta\ge 1$. Then the following inequality holds for any coordinatewise minimum point $(\hat{\bm\beta},\hat{\bm\gamma})$
\begin{equation}
\EE [ \|\bar \bsbX \hat{\bar \bsbb}-\bar \bsbX {\bar{\bsbb^*}}\|_2^2 ] \lesssim  {\sigma^2} \vartheta o^*\log({en}/{o^*}) + \sigma^2 K^{2} s^*\log(ep) + \sigma^2
\end{equation}
under the assumption that there exists $K\ge 0$ such that  $(2\bar{\bm\Delta}_l - \delta \Breg_2) (\bar \bsbX  {\bar \bsbb}, \bar \bsbX  {\bar \bsbb}')+ \lambda \varrho\|(\bm\beta-\bm\beta')_{\mathcal J^c}\|_1 + K^{2}\lambda^2J
\ge  \|\bm\gamma-\bm\gamma'\|_2^2/(2\sqrt{\vartheta})  + (1+\varepsilon)\lambda\varrho\|(\bm\beta-\bm\beta')_{\mathcal J}\|_1$ holds  for any $\bm\beta,\bm\beta',\bm\gamma,\bm\gamma'$ satisfying $ \| \bsbg\|_0 \le  \vartheta  o^*  , \| \bm\gamma'\|_0\le o^*, $ where $\mathcal J = \{j:\beta_j^{*}\ne 0\}$,  $J = |\mathcal J|$, and $\varepsilon, \delta$ are positive constants.
\end{manualtheorem}

For    resistant lasso, where $l(\bsbX \bsbb+\bsbg, \bsby) =\| \bsby - \bsbX \bsbb\|_2^2/2$, the regularity condition is implied by $K  \sqrt J \| \bsbX  \bsbb +  \bsbg\|_2    + \varrho\| \bsbb_{\mathcal J^c}\|_1 \ge\|\bsbg\|_2^2/(2\lambda\sqrt{\vartheta})+(1+\varepsilon)\varrho\| \bm\beta_{\mathcal J}\|_1$ for some (redefined) $K\ge 0$, where  $\| \bsbg\|_0 \le  (1+\vartheta)  o^*$ and  $\varepsilon$ is some positive constant, cf. Theorem \ref{thm:fixed-point-lasso}.

\begin{proof}
First, from the proof of   Theorem \ref{th:Theta_equation}, 
any coordinatewise minimum point     $(\hat{\bm\beta},\hat{\bm\gamma})$  must obey
\begin{equation}
\begin{cases}
\bm\beta = \Theta_{\text{soft}}(\bsbb - \bsbX^T \nabla l(\bsbX \bsbb + \bsbg;\bm y) ; \lambda\varrho), \\
\bm\gamma = \Theta^{\#}( \bsbg -  \nabla l(\bsbX \bsbb + \bsbg;\bm y) ;q),
\end{cases}
\end{equation}
where $\Theta_{\text{soft}}(\bm\beta;\lambda) = \text{sign}(\bm\beta)\circ(|\bm\beta|-\lambda)_+$ is the soft-thresholding and $\circ$ is the component-wise multiplication. 

Then, similar to the proof of Lemma  \ref{bcdsurro}, we can prove
\begin{equation}
(\hat{\bm\beta},\hat{\bm\gamma}) \in \mathop{\arg\min}_{\bm\beta,\bm\gamma} g(\bm\beta,\bm\gamma;\bm\beta^-,\bm\gamma^-)|_{\bm\beta^-=\hat{\bm\beta},\bm\gamma^-=\hat{\bm\gamma}} \text{ s.t. }\|\bm\gamma\|_0\le q,
\end{equation}
where $g(\bm\beta,\bm\gamma;\bm\beta^-,\bm\gamma^-) = l(\bm X\bm\beta^-+\bm\gamma^-) + \langle\nabla l(\bm X\bm\beta^-+\bm\gamma^-), \bm X\bm\beta-\bm X\bm\beta^- + \bm\gamma-\bm\gamma^-\rangle + \|\bm\beta-\bm\beta^-\|_2^2/2 + \|\bm\gamma-\bm\gamma^-\|_2^2/2 + \varrho\lambda\|\bm\beta\|_1$.  Then by
 Lemma 2 in \cite{She2016} and
 Lemma \ref{lemma:bound without design},  we have
\begin{equation}
\begin{split}
& \frac{1}{2}\|\bm\gamma^*-\hat{\bm\gamma}+\nabla l(\bm X\hat{\bm\beta}+\hat{\bm\gamma})\|_2^2 - \frac{1}{2}\|\nabla l(\bm X\hat{\bm\beta}+\hat{\bm\gamma})\|_2^2 + \lambda\varrho\|\bm\beta^*\|_1 - \lambda\varrho\|\hat{\bm\beta}\|_1\\
& + \frac{1}{2}\|\bm\beta^*-\hat{\bm\beta}+\bm X^T\nabla l(\bm X\hat{\bm\beta}+\hat{\bm\gamma})\|_2^2 - \frac{1}{2}\|\bm X^T\nabla l(\bm X\hat{\bm\beta}+\hat{\bm\gamma})\|_2^2\\
\ge\,& \frac{1}{2}\big[1-\mathcal L(\mathcal J(\bm\gamma^*),\mathcal J(\hat{\bm\gamma}))\big]\|\hat{\bm\gamma}-\bm\gamma^*\|_2^2 + \frac{1}{2}\|\hat{\bm\beta}-\bm\beta^*\|_2^2,
\end{split}
\end{equation}
or
\begin{equation} \label{lasso-1}
\begin{split}
2\bar{\bm\Delta}_l(\bm X\hat{\bm\beta} +\hat{\bm\gamma}, \bm X\bm\beta^* +\bm\gamma^*) \le   \frac{1}{2\sqrt{\vartheta}}\|\hat{\bm\gamma}-\bm\gamma^*\|_2^2 + \lambda\varrho\|\bm\beta^*\|_1 - \lambda\varrho\|\hat{\bm\beta}\|_1
& + \langle\bm\epsilon, \bar{\bm X}\hat{\bar \bsbb }-\bar{\bm X}\bar\bsbb^*\rangle.
\end{split}
\end{equation}
Next we try to   bound the  stochastic term in a proper way to merge with the $\bsbb$-penalties. Let  $P_H(t; \lambda)= (-t^2/2+\lambda |t|)1_{|t|<\lambda} +(\lambda^2/2) 1_{|t|\geq \lambda}$ and $P_H(\bsbb ; \lambda) = \sum P_H(\beta_j; \lambda)$.
 The following lemma based on combined statistical and computational analysis is useful.

\begin{lemma} \label{lemma:phostochastic}
Define $\Gamma_{q} = \{ (\bm\beta,\bm\gamma): \bsbb \in\mathbb R^p, \bsbg \in \mathbb R^n,  \|\bm\gamma\|_0 \le q\}$ where     $0\le q \le n$. Then for any $\varrho \ge \| \bsbX\|_2$, there exist universal constants $A_0, A_1, C, c, c_1, c_2>0$ such that for any $a\ge 2b>0$, the following event
\begin{align} \nonumber
\sup_{(\bm\beta,\bm\gamma)\in \Gamma_{q}} \{2\langle \bsbeps, \bsbX  \bsbb + \bm\gamma \rangle - \frac{1}{a} \|\bsbX  \bsbb +\bm\gamma\|_2^2 - \frac{1}{b} P_{H}(\varrho\bsbb;\lambda) - aA_0\sigma^2 q\log(en/q)\} \ge  a  \sigma^2 t
\end{align}
occurs with probability at most $ C \exp(-ct)$, where $\lambda = A  \sigma\sqrt{\log(ep)}, A = \sqrt{ab} A_1$ and $t\geq 0$.
\end{lemma}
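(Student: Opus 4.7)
My plan is to combine a peeling argument over the support of $\bsbg$ with a large-versus-small decomposition of $\bsbb$ that exploits $P_{H}$ as a hybrid $\ell_{0}$-plus-$\ell_{1}$ regularizer. First I would fix $S\subset[n]$ with $|S|\le q$ and write $\bsbu=\bsbX\bsbb+\bsbg$; since $\bsbg|_{S^{c}}=0$ while $\bsbg|_{S}$ is unconstrained, optimizing the noise-plus-quadratic terms in $\bsbu_{S}$ produces $a\|\bsbeps_{S}\|_{2}^{2}$. A union bound over $S$ of size at most $q$ combined with the standard sub-Gaussian $\chi^{2}$ tail will give $\sup_{|S|\le q} a\|\bsbeps_{S}\|_{2}^{2}\le aA_{0}\sigma^{2}q\log(en/q)+a\sigma^{2}t/2$ with probability at least $1-Ce^{-ct}$, supplying the deterministic term in the bound and using half of the tail budget.

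Next I would handle the residual supremum $\sup_{\bsbb}\{2\langle\tilde\bsbeps,\tilde\bsbX\bsbb\rangle-(1/a)\|\tilde\bsbX\bsbb\|_{2}^{2}-(1/b)P_{H}(\varrho\bsbb;\lambda)\}$ with $\tilde\bsbeps=\bsbeps\cdot 1_{S^{c}}$, $\tilde\bsbX=\bsbX_{S^{c},:}$ and $\|\tilde\bsbX\|_{2}\le\varrho$. I would partition $\bsbb=\bsbb_{\mathcal H}+\bsbb_{\mathcal L}$ with $\mathcal H=\{j:|\varrho\beta_{j}|\ge\lambda\}$, on which $P_{H}$ equals $(\lambda^{2}/2)|\mathcal H|$; on the complement $\mathcal L$ it equals $\lambda\varrho\|\bsbb_{\mathcal L}\|_{1}-(\varrho^{2}/2)\|\bsbb_{\mathcal L}\|_{2}^{2}$. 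Cauchy--Schwarz on the $\mathcal H$-piece (with auxiliary parameter $a'=3a$), combined with the AM--GM-type bound $\|\tilde\bsbX\bsbb\|_{2}^{2}\ge(1/3)\|\tilde\bsbX_{\mathcal H}\bsbb_{\mathcal H}\|_{2}^{2}-(1/2)\|\tilde\bsbX_{\mathcal L}\bsbb_{\mathcal L}\|_{2}^{2}$, cancels the $\|\tilde\bsbX_{\mathcal H}\bsbb_{\mathcal H}\|_{2}^{2}$ contributions, while H\"older on the $\mathcal L$-piece plus $\|\bsbb_{\mathcal L}\|_{2}^{2}\le(\lambda/\varrho)\|\bsbb_{\mathcal L}\|_{1}$ (since $\|\bsbb_{\mathcal L}\|_{\infty}<\lambda/\varrho$) converts the remaining $\ell_{2}$-debt into $\ell_{1}$-debt; collecting terms, the residual is dominated by
\[
3a\,\|P_{\tilde\bsbX_{\mathcal H}}\tilde\bsbeps\|_{2}^{2}-\frac{J\lambda^{2}}{2b}+\|\bsbb_{\mathcal L}\|_{1}\Big(2\|\tilde\bsbX^{T}\tilde\bsbeps\|_{\infty}-\lambda\varrho\big(\tfrac{1}{2b}-\tfrac{1}{2a}\big)\Big),\quad J=|\mathcal H|.
\]

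Under $a\ge 2b$, $\lambda\varrho(\tfrac{1}{2b}-\tfrac{1}{2a})\ge\lambda\varrho/(4b)$, so on the event $\{\|\tilde\bsbX^{T}\tilde\bsbeps\|_{\infty}\le\lambda\varrho/(8b)\}$ the $\ell_{1}$-coefficient is non-positive and the $\bsbb_{\mathcal L}$-term vanishes; with $\lambda=A\sigma\sqrt{\log(ep)}$ and $A=\sqrt{ab}A_{1}$, a sub-Gaussian maximal inequality will bound the probability of the bad event by a fixed $(ep)^{1-cA_{1}^{2}}$, which for $A_{1}$ large enough can be absorbed into $Ce^{-ct}$ uniformly in $t\ge 0$. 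For $3a\|P_{\tilde\bsbX_{\mathcal H}}\tilde\bsbeps\|_{2}^{2}-J\lambda^{2}/(2b)$, I would peel over $\mathcal H$ (cost $\log\binom{p}{J}\le J\log(ep)$) and over $S$ (already absorbed in Step 1), using the sub-Gaussian $\chi^{2}$ tail; since $J\lambda^{2}/(2b)=aA_{1}^{2}\sigma^{2}J\log(ep)/2$, the cardinality penalty dominates the chaining cost $O(a\sigma^{2}J\log(ep))$ uniformly in $J$ for $A_{1}$ large, leaving at most $a\sigma^{2}t/2$ in the tail. The hard part will be coupling the four constants $a$, $b$, $A$, $A_{1}$: the concave correction $-(\varrho^{2}/2)\|\bsbb_{\mathcal L}\|_{2}^{2}$ inside $P_{H}$ is precisely what forces $a\ge 2b$ (so that the Huber slack $\lambda\varrho/b$ survives both the $\lambda\varrho/(2b)$ from the concave correction and the Cauchy--Schwarz residue $\lambda\varrho/(2a)$), and the relation $A=\sqrt{ab}A_{1}$ is the balance equation matching the $\ell_{0}$-penalty $J\lambda^{2}/(2b)$ against the $O(a\sigma^{2}J\log(ep))$ chaining cost.
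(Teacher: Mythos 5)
Your architecture---peel over the outlier support $S$, then split $\bsbb$ into coordinates where $P_H$ is flat versus where it is $\ell_1$-like---is a legitimate-looking alternative, but two steps fail as written. First, after you optimize out $\bsbg_S$ you are left with the truncated noise $\tilde\bsbeps=\bsbeps 1_{S^c}$, so your soft-thresholding event $\{\|\tilde\bsbX^T\tilde\bsbeps\|_\infty\le\lambda\varrho/(8b)\}$ must hold \emph{uniformly over all} $S$ with $|S|\le q$. The $j$th coordinate of $\bsbX_{S^c,:}^T\bsbeps_{S^c}$ equals $(\bsbX^T\bsbeps)_j-(\bsbX_{S,:}^T\bsbeps_S)_j$, and an adversarial choice of $S$ makes the subtracted term of order $\sigma\varrho\sqrt{q\log(en/q)}$; this exceeds your threshold $\lambda\varrho/(8b)\asymp\sigma\varrho\sqrt{(a/b)\log(ep)}$ whenever $q\log(en/q)\gg\log(ep)$, and no universal constant $A_1$ repairs it, because the union-bound cost over $S$ scales with $q\log(en/q)$ while the threshold only scales with $\sqrt{\log(ep)}$. (That particular defect is curable by bounding $2\langle\bsbeps,\bsbX\bsbb_{\mathcal L}\rangle$ with the unrestricted noise \emph{before} peeling over $S$, reserving the $S^c$-restriction for the subspace part.) Second, even after that fix, the $\ell_\infty$ event has a \emph{fixed} failure probability of order $(ep)^{1-cA_1^2}$ that does not decay in $t$, so the asserted $C\exp(-ct)$ bound for all $t\ge 0$ is not obtained; your remark that this ``can be absorbed into $Ce^{-ct}$ uniformly in $t$'' is false for $t$ beyond $\log$ of the reciprocal of that fixed probability. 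Letting the threshold grow like $\sqrt t$ reintroduces an excess proportional to $\sqrt t\,\|\bsbb_{\mathcal L}\|_1$ with $\|\bsbb_{\mathcal L}\|_1$ unbounded, and the only compensating term, $\lambda\varrho\|\bsbb_{\mathcal L}\|_1/b$, has a fixed coefficient; one would have to exploit $\|\bsbb_{\mathcal L}\|_\infty\le\lambda/\varrho$ coordinatewise and control a sum of truncated excesses, an extra argument you have not supplied.

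For contrast, the paper avoids all of this with a short optimization-based reduction: if the supremum exceeds $a\sigma^2 t$ it does so at a minimizer of $\|\bsbX\bsbb+\bsbg\|_2^2/2-a\langle\bsbeps,\bsbX\bsbb+\bsbg\rangle+(a/(2b))P_H(\varrho\bsbb;\lambda)$, and Lemma \ref{lemma:P-H-comp} (using $a\ge 2b$ and $\varrho\ge\|\bsbX\|_2$) guarantees such a minimizer has every nonzero coefficient satisfying $|\varrho\beta_j^o|\ge\lambda$, so that $P_H(\varrho\bsbb^o;\lambda)=\lambda^2\|\bsbb^o\|_0/2$ there. Hence $\mathcal A_H\subset\mathcal A_0$, and the problem collapses to the $\ell_0$-penalized case, where $\bsbb$ is sparse and the supremum ranges over finitely many low-dimensional subspaces handled by the projection-plus-Dudley machinery of Theorem \ref{thm:fixed-point}, with clean $\exp(-ct)$ tails and support-counting costs exactly matched by the $J\lambda^2/(2b)$ and $aA_0\sigma^2q\log(en/q)$ budgets. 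Your large/small split is a hand-made substitute for that reduction, and the dense $\bsbb_{\mathcal L}$ piece is precisely where it leaks.
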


From Lemma \ref{lemma:phostochastic} and  $\| \hat \bsbg - \bsbg^*\|_0 \le (1+ \vartheta) o^*$, it is easy to see
\begin{equation}\label{lasso-2}
\begin{split}
&\langle\bm\epsilon, \bm X\hat{\bm\beta}-\bm X\bm\beta^*+\hat{\bm\gamma}-\bm\gamma^*\rangle\\
\le\,& \frac{1}{ a}\Breg_2 (\bar \bsbX \bar \bsbb^*, \bar \bsbX \hat {\bar \bsbb}) + \frac{1}{2b}P_H(\varrho(\hat{\bm\beta}- \bm\beta^*),\lambda)   + \frac{1}{2}\sigma^2aA_0o^*(1+\vartheta)\log\frac{en}{o^*} +a R
\end{split}
\end{equation}
with $\EE R \le C \sigma^2 $. Let  $\mathcal J=\{j:\beta_j^*\ne 0\}$.   Due to the sub-additivity of $\|\cdot\|_1$ and the fact that  $P_H(t; \lambda)\le \lambda |t|$,
\begin{equation} \label{lasso-3}
\begin{split}
&\lambda\varrho\|\bm\beta^*\|_1 - \lambda\varrho\|\hat{\bm\beta}\|_1 + \frac{1}{2b}P_H(\varrho(\hat{\bm\beta}-\bm\beta^*);\lambda)\\
\le\,& \lambda\varrho\{\|(\hat{\bm\beta}-\bm\beta^*)_{\mathcal J}\|_1 - \|(\hat{\bm\beta}-\bm\beta^*)_{\mathcal J^c}\|_1\} + \varepsilon \lambda\varrho\{\|(\hat{\bm\beta}-\bm\beta^*)_{\mathcal J^c}\|_1 + \|(\hat{\bm\beta}-\bm\beta^*)_{\mathcal J}\|_1\}\\
\le\,& \lambda\varrho\{(1+\varepsilon)\|(\hat{\bm\beta}-\bm\beta^*)_{\mathcal J}\|_1 - (1-\varepsilon)\|(\hat{\bm\beta}-\bm\beta^*)_{\mathcal J^c}\|_1\}\\
\le\,& \lambda\varrho\{(1+\varepsilon) \|(\hat{\bm\beta}-\bm\beta^*)_{\mathcal J}\|_1 - \|(\hat{\bm\beta}-\bm\beta^*)_{\mathcal J^c}\|_1\},
\end{split}
\end{equation}
where we set $b = 1/(2\varepsilon) > 0$.
Combining \eqref{lasso-1}, \eqref{lasso-2}, and  \eqref{lasso-3} gives
\begin{equation}
\begin{split}
&2\bar{\bm\Delta}_l(\bm X\hat{\bm\beta} +\hat{\bm\gamma}, \bm X\bm\beta^* +\bm\gamma^*)-\frac{1}{2a}\|\bm X\hat{\bm\beta} - \bm X\bm\beta^* +\hat{\bm\gamma}-\bm\gamma^*\|_2^2 \\
\le\,& \frac{1}{2\sqrt{\vartheta}}\|\hat{\bm\gamma}-\bm\gamma^*\|_2^2 + \frac{1}{2}\sigma^2aA_0o^*(1+\vartheta)\log\frac{en}{o^*}\\
& + \lambda\varrho\{(1+\varepsilon)\|(\hat{\bm\beta}-\bm\beta^*)_{\mathcal J}\|_1 - \|(\hat{\bm\beta}-\bm\beta^*)_{\mathcal J^c}\|_1\}+aR.
\end{split}
\end{equation}
Choosing $a = 2/\delta+1/\varepsilon,  A \ge \sqrt{ab} A_1$,  and using the   regularity condition, we obtain the conclusion.
\end{proof}

\paragraph{Proof of Lemma \ref{lemma:phostochastic}}
Define $l_H(\bm\beta,\bm\gamma,q) = 2\langle \bsbeps, \bsbX  \bsbb + \bm\gamma \rangle -   \|\bsbX  \bsbb +\bm\gamma\|_2^2/a -   P_{H}(\varrho\bsbb;\lambda) /b- aA_0\sigma^2 q\log(en/q)$. Let   $P_0(\bm\beta;\lambda) = \lambda^2\|\bm\beta\|_0/2$. Similarly, define $l_0(\bm\beta,\bm\gamma,q) =   2\langle \bsbeps, \bsbX  \bsbb + \bm\gamma \rangle - \|\bsbX  \bsbb +\bm\gamma\|_2^2 /a-  P_0(\bsbb;\lambda) /b- aA_0\sigma^2 q\log(en/q)$.  Let $\mathcal A_H = \{\sup_{(\bm\beta,\bm\gamma)\in\Gamma_{q}}l_H(\bm\beta,\bm\gamma,q)\ge at\sigma^2\}$ and $\mathcal A_0 = \{\sup_{(\bm\beta,\bm\gamma)\in\Gamma_{q}}l_0(\bm\beta,\bm\gamma,q)\ge at\sigma^2\}$.
The occurrence of $\mathcal A_H$ implies
\begin{equation} \label{l_H}
l_H(\bm\beta^o,\bm\gamma^o,q) \ge at\sigma^2
\end{equation}
for any $(\bm\beta^o,\bm\gamma^o)$ that solves
\begin{align}
\min_{\bm\beta,\bm\gamma:\|\bsbg\|_0\le q} \|\bm X\bm\beta+\bm\gamma\|_2^2/2 - a\langle\bm\epsilon, \bm X\bm\beta+\bm\gamma\rangle + (a/(2b))P_H(\varrho\bm\beta;\lambda). \label{lemma8optprob}
\end{align}
\begin{lemma} \label{lemma:P-H-comp}
Given any $\tau\ge \| \bsbX\|_2^2$, there always exists a globally optimal solution $\bm\beta^o$ to $\min_{\bm\beta}\|\bm y-\bsbX\bm\beta\|_2^2/2 + \tau P_H(\bm\beta;\lambda)$ such that for any $j:1\le j\le p$, either $\beta_j^o=0$ or $\beta_j^o\ge \lambda\sqrt{\tau}/\| \bsbX\|_2\ge \lambda$.
\end{lemma}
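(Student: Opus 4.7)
The plan is to combine a quadratic majorization step with a one-coordinate zero-out argument. Set $F(\bsbb) := \|\bm y - \bsbX\bsbb\|_2^2/2 + \tau P_H(\bsbb;\lambda)$ and pick any global minimizer $\tilde\bsbb^o$ of $F$. Exploiting $\tau\bsbI \succeq \bsbX^T\bsbX$, form the quadratic surrogate
\[
G(\bsbb;\tilde\bsbb^o) \;:=\; \tfrac{1}{2}\|\bm y - \bsbX\tilde\bsbb^o\|_2^2 + \langle \bsbX^T(\bsbX\tilde\bsbb^o - \bm y),\,\bsbb - \tilde\bsbb^o\rangle + \tfrac{\tau}{2}\|\bsbb - \tilde\bsbb^o\|_2^2 + \tau P_H(\bsbb;\lambda),
\]
which satisfies $F \le G(\,\cdot\,;\tilde\bsbb^o)$ pointwise with equality at $\tilde\bsbb^o$. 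Any minimizer $\bsbb^o$ of $G(\,\cdot\,;\tilde\bsbb^o)$ is then again a global minimizer of $F$, via the sandwich $F(\bsbb^o)\le G(\bsbb^o;\tilde\bsbb^o)\le G(\tilde\bsbb^o;\tilde\bsbb^o)=F(\tilde\bsbb^o)$.

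Next, $G(\,\cdot\,;\tilde\bsbb^o)$ splits coordinatewise into scalar subproblems $\min_{\beta_j}(\tau/2)(\beta_j - u_j)^2 + \tau P_H(\beta_j;\lambda)$ with $u_j := \tilde\beta_j^o - [\bsbX^T(\bsbX\tilde\bsbb^o - \bm y)]_j/\tau$. A direct case analysis on $P_H$ shows each subproblem is solved by hard thresholding at level $\lambda$: the optimum equals $u_j$ when $|u_j| > \lambda$ and $0$ when $|u_j| < \lambda$. When $|u_j| = \lambda$, the two candidates $\{0,u_j\}$ both attain the common value $\tau\lambda^2/2$, so I break the tie in favor of $0$. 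The resulting global minimizer $\bsbb^o$ therefore has every nonzero coordinate satisfying $|\beta_j^o| > \lambda$ \emph{strictly}.

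For any such $j$, $P_H$ is locally flat at $\beta_j^o$, so coordinatewise first-order optimality of $F$ yields $[\bsbX^T(\bsbX\bsbb^o - \bm y)]_j = 0$. Now compare $\bsbb^o$ with the competitor obtained by zeroing its $j$-th coordinate. Denoting the $j$-th column of $\bsbX$ by $c_j$, the change in objective equals $\beta_j^o\,[\bsbX^T(\bsbX\bsbb^o - \bm y)]_j + \tfrac{1}{2}(\beta_j^o)^2\|c_j\|_2^2 - \tfrac{\tau\lambda^2}{2}$, which by the first-order relation reduces to $\tfrac{1}{2}(\beta_j^o)^2\|c_j\|_2^2 - \tfrac{\tau\lambda^2}{2}$. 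Global optimality of $\bsbb^o$ forces this quantity to be nonnegative, and rearranging yields $|\beta_j^o| \ge \lambda\sqrt{\tau}/\|c_j\|_2 \ge \lambda\sqrt{\tau}/\|\bsbX\|_2 \ge \lambda$.

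The delicate point I expect is securing the strict inequality $|\beta_j^o| > \lambda$ after the MM step; equality would place $\beta_j^o$ at the kink of $P_H$ and obstruct the smooth first-order relation used in the zero-out step. The tie-breaking convention at $|u_j| = \lambda$ is therefore essential rather than cosmetic, and is precisely what sharpens the baseline threshold $\lambda$ to the claimed $\lambda\sqrt{\tau}/\|\bsbX\|_2$.
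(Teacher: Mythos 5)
Your proof is correct and follows exactly the route the paper gestures at (``linearization and the property of $P_H$''): an MM surrogate at a global minimizer, coordinatewise hard thresholding of the separable prox, and a zero-out comparison using the vanishing gradient at coordinates where $P_H$ is flat to extract the quantitative bound $\lambda\sqrt{\tau}/\|\bsbX\|_2$. The only blemish is a harmless sign slip in the linear term of the zero-out computation (it should be $-\beta_j^o[\bsbX^T(\bsbX\bsbb^o-\bm y)]_j$), which vanishes anyway by the first-order condition.
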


The lemma can be shown  by linearization and  the property of $P_H$; see  \cite{She2012}. 
From Lemma \ref{lemma:P-H-comp} and $a/ 2b \ge 1 \ge \| \bsbX\|_2^2/ \varrho^2$, \eqref{l_H} implies  the existence  of an optimal solution $(\bm\beta^o,\bm\gamma^o)$ to \eqref{lemma8optprob} such that  $l_0(\bm\beta^o,\bm\gamma^o,q) =l_H(\bm\beta^o,\bm\gamma^o,q)   \ge at\sigma^2$ and so  $\mathcal A_H \subset \mathcal A_0$. It suffices to study $\mathbb P(\mathcal A_0)$.

The remaining  part follows the lines of the proof of Theorem \ref{thm:fixed-point} and is omitted. \qed

\subsection{Proof of Theorem \ref{thm:minimax}}
\label{subsec:proofofminimax}

Assume    the  density  of $\bsby\in \mathcal Y^n$ given $\bsbe^* = \bar \bsbX \bar \bsbb^*$ is given by \begin{align}p_{\bar \bsbb^*}(\bsby)= \exp \{[\langle \bsby, \bar \bsbe^*\rangle - \langle \bsb1, b(\bar\bsbe^*) \rangle] /\sigma^2  \}\end{align} with respect to a base measure $\mu$ defined on $\mathcal Y^n$.
The corresponding loss is
 $l(\bsbe) = \langle 1, l_0(\bsbe)\rangle$ with  $\eta_i =  \bsbx_i^T \bsbb + \gamma_i = \bar \bsbx_i^T \bar\bsbb$ and
\begin{align}\label{l0glmass}
l_0 ( \eta_i; y_i) =  (-y_i \eta_i + b(\eta_i))/\sigma^2 .\end{align}
Assume the natural parameter space $\Omega = \{\bsb{\eta}\in \mathbb R^n: b(\bsb{\eta})<+\infty\}$
is open. Then  the loss corresponds to  a distribution in the regular exponential dispersion family with dispersion $\sigma^2$ and natural parameter $\eta_i$. In the Gaussian case, $l_0 (\eta_i)$ is $ (\eta_i  - y_i)^2/(2\sigma^2)$ up to an additive term independent of $\eta_i$. Consider a signal class
\begin{equation}
\mathcal B(s^*,o^*, M_\beta, M_\gamma) = \{(\bm\beta^*,\bm\gamma^*): \|\bm\beta^*\|_0\le s^*, \|\bm\gamma^*\|_0 \le o^*,  \|\bm\beta^*\|_\infty< M_\beta, \|\bm\gamma^*\|_\infty < M_\gamma\}
\end{equation}
where $0\le s^*\le p$, $0\le o^*\le n$, and  $+\infty\ge M_\beta, M_\gamma\ge 0$.  The following theorem implies
 Theorem \ref{thm:minimax} by setting $M_\beta = M_\beta =+\infty$ and assuming $   \underline \kappa_\beta / \overline \kappa_\beta   $ and  $  \kappa_\gamma     $ are  positive constants.
\begin{manualtheorem}{\ref{thm:minimax}'}\label{thm:minimax-gen}

In the  regular exponential dispersion family with $n\ge 2, p\ge 2, 1\le o^* \le n/2, 1\le s^*\le p/2,$ define
\begin{align}
P_\beta (s^*) = s^*\log(ep/s^*), \quad P_\gamma =  o^*\log(en/o^*).
\end{align}
Let $I(\cdot)$ be any nondecreasing   function with $I(0)=0, I\not\equiv 0$.
 (i) Suppose for some $\kappa_\beta, \kappa_\gamma > 0$    $  \breg_l (\bsb0,  \bsbX
\bsbb     )\sigma^2\allowbreak\le    \kappa_\beta \Breg_2(  \bsb0,    \bsbb    )$, $\forall   {\bm\beta}  :   \|\bm\beta \|_0 \le s^*, \|\bm\beta \|_\infty< M_\beta$, and  $  \breg_l (\bsb0,
\bsbg     )\sigma^2\allowbreak\le    \kappa_\gamma \Breg_2(  \bsb0,    \bsbg    )$, $\forall   \bsbg  :   \|\bsbg \|_0 \le o^*, \|\bsbg\|_\infty< M_\gamma$.
Then there exist positive constants $\tilde c, c$, depending on $I(\cdot)$ only, such that
\begin{equation}
\begin{split}
\inf_{(\hat{\bm\beta},\hat{\bm\gamma})}\,\sup_{(\bm\beta^*,\bm\gamma^*)\in \mathcal B(s^*,o^*,M_\beta, M_\gamma)} \mathbb E\Big [I\Big (\Breg_2  (
\bar \bsbb^*,    \hat{ \bar \bsbb} )/\big \{\tilde c \big[\min\{ \sigma^2  P_\beta (s^*)/\kappa_\beta, M_\beta^{2} s^{*}\} \\ +\min\{ \sigma^2  P_\gamma (o^*)/\kappa_\gamma, M_\gamma^{2} o^{*}\}\big ]\big \}\Big)\Big] \ge c >0,
\end{split}
\label{minimaxrobglm-gen-est}
\end{equation}
where $(\hat{\bm\beta},\hat{\bm\gamma})$ denotes any estimator of $(\bm\beta^*,\bm\gamma^*)$. (ii) Suppose $  \breg_l (\bsb{0},  \bsbX   \bsbb_1  )\sigma^2\allowbreak\le   \overline \kappa_\beta\Breg_2(   \bsb{0},  \bsbb_1      )$ and  $\underline \kappa_\beta\Breg_2(  \bsb0,    \bsbb_2     )\le  \Breg_2( \bsb0,  \bsbX
    \bsbb_2   ) $,    $\forall   {\bm\beta}_i  :    \| \bsbb_i\|_0 \le i s^*,  \|\bm\beta_i \|_\infty< M_\beta, i =1,2$,  and  $  \breg_l (\bsb{0},     \bsbg   )\sigma^2\allowbreak\le  \kappa_\gamma\Breg_2(   \bsb{0},  \bsbg)$,     $\forall   \bsbg   :    \| \bsbg \|_0 \le   o^* , \|\bsbg\|_\infty< M_\gamma$.   where $0\le \underline \kappa_\beta \le \overline \kappa_\beta   , 0\le   \kappa_\gamma$. Then  there exist positive constants $\tilde c, c$ such that
\begin{equation}
\begin{split}
\inf_{(\hat{\bm\beta},\hat{\bm\gamma})}\,\sup_{(\bm\beta^*,\bm\gamma^*)\in \mathcal B(s^*,o^*,M_\beta, M_\gamma)} \mathbb E\Big [I\Big (\Breg_2  (\bar \bsbX   \bar \bsbb^*,   \bar \bsbX \hat{ \bar \bsbb} )/\big \{\tilde c \big[\min\{ ( \underline \kappa_\beta / \overline \kappa_\beta   )\sigma^2  P_\beta (s^*), \underline \kappa_\beta
M_\beta^{2} s^{*}\} \\ +\min\{ \sigma^2  P_\gamma (o^*)/\kappa_\gamma,  M_\gamma^{2} o^{*}\}\big ]\big \}\Big)\Big] \ge c >0.
\end{split}
\label{minimaxrobglm-gen-pred}
\end{equation}
\end{manualtheorem}


\begin{proof} First we introduce a lemma \cite[Lemma 3(iii)]{SheBregman}.
\begin{lemma} \label{lemKLBreg} For any  $p_{\bar\bsbb_1}, p_{\bar\bsbb_2}$ in the regular exponential dispersion family,  the Kullback-Leibler divergence of   $p_{\bar{\bm\beta}_2}$ from  $p_{\bar{\bm\beta}_1}$, defined by $\mathcal K(p_{\bar{\bm\beta}_1},p_{\bar{\bm\beta}_2})=\int  p_{\bar{\bm\beta}_1}\log( p_{\bar{\bm\beta}_1}/p_{\bar{\bm\beta}_2}) \rd \nu$, satisfies\begin{equation} \nonumber
\mathcal K(p_{\bar{\bm\beta}_1},p_{\bar{\bm\beta}_2}) = \breg_l ( \bar \bsbX \bar\bsbb_2, \bar \bsbX \bar\bsbb_1),
\end{equation}
where $\breg_l(\cdot, \cdot)$ is the generalized Bregman divergence notation introduced in Section \ref{sec:theory}.
\end{lemma}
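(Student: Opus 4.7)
The statement to prove is the identity $\mathcal{K}(p_{\bar{\bm\beta}_1},p_{\bar{\bm\beta}_2}) = \breg_l(\bar{\bsbX}\bar{\bsbb}_2, \bar{\bsbX}\bar{\bsbb}_1)$ for the regular exponential dispersion family with density $p_{\bar{\bsbb}}(\bsby) = \exp\{[\langle \bsby, \bar{\bsbe}\rangle - \langle \bsb1, b(\bar{\bsbe})\rangle]/\sigma^2\}$ and associated loss $l(\bsbe; \bsby) = [-\langle \bsby,\bsbe\rangle + \langle \bsb1, b(\bsbe)\rangle]/\sigma^2$. The plan is a direct two-sided computation: evaluate both sides as explicit functions of $\bsbe_i := \bar{\bsbX}\bar{\bsbb}_i$ ($i=1,2$), and show they coincide termwise.

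First, I would unfold the generalized Bregman side. Using $\nabla_{\bsbe} l(\bsbe;\bsby) = [-\bsby + b'(\bsbe)]/\sigma^2$ (with $b'$ applied componentwise), a direct substitution into $\breg_l(\bsbe_2,\bsbe_1) = l(\bsbe_2;\bsby) - l(\bsbe_1;\bsby) - \langle \nabla l(\bsbe_1;\bsby), \bsbe_2 - \bsbe_1\rangle$ causes the $\bsby$-linear contributions to cancel exactly, leaving the $\bsby$-free expression
\begin{equation*}
\breg_l(\bsbe_2,\bsbe_1) = \frac{1}{\sigma^2}\sum_{i=1}^n \bigl[b(\eta_{2,i}) - b(\eta_{1,i}) - b'(\eta_{1,i})(\eta_{2,i}-\eta_{1,i})\bigr].
\end{equation*}
This cancellation is the key algebraic observation, and it confirms that the generalized Bregman divergence associated with the loss is deterministic (even though $l$ itself depends on $\bsby$).

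Next, I would expand the KL side. Writing $\log(p_{\bar{\bsbb}_1}/p_{\bar{\bsbb}_2}) = [\langle \bsby,\bsbe_1-\bsbe_2\rangle - \langle \bsb1, b(\bsbe_1)-b(\bsbe_2)\rangle]/\sigma^2$ and taking expectation under $p_{\bar{\bsbb}_1}$, the only stochastic piece is $\EE_{p_{\bar{\bsbb}_1}}[\bsby]$. Here I would invoke the standard regular exponential family identity $\EE_{p_{\bar{\bsbb}_1}}[y_i] = b'(\eta_{1,i})$, which follows from differentiating the log-normalizer (and uses the openness assumption on the natural parameter space stated just before the GLM setup). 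Substituting gives
\begin{equation*}
\mathcal{K}(p_{\bar{\bsbb}_1},p_{\bar{\bsbb}_2}) = \frac{1}{\sigma^2}\sum_{i=1}^n \bigl[b'(\eta_{1,i})(\eta_{1,i}-\eta_{2,i}) - (b(\eta_{1,i})-b(\eta_{2,i}))\bigr],
\end{equation*}
which matches the expression for $\breg_l(\bsbe_2,\bsbe_1)$ derived above term by term, yielding the claim.

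There is no real obstacle here beyond careful bookkeeping of the dispersion factor $1/\sigma^2$ and of the \emph{order} of arguments: the KL divergence with base measure $p_{\bar{\bsbb}_1}$ corresponds to the Bregman with $\bsbe_1$ in the \emph{second} slot (the base point of linearization), which is the convention already reflected in the statement $\breg_l(\bar{\bsbX}\bar{\bsbb}_2,\bar{\bsbX}\bar{\bsbb}_1)$. If desired, I would also briefly remark that the identity extends to the vector-parameter case without assuming strict convexity of $b$, since the derivation only uses differentiability of $b$ and the first-moment identity; this justifies the use of the \emph{generalized} Bregman notation $\breg_l$ rather than the classical Bregman divergence.
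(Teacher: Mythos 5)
Your computation is correct: the $\bsby$-linear terms cancel in $\breg_l(\bsbe_2,\bsbe_1)$, the first-moment identity $\EE_{p_{\bar\bsbb_1}}[y_i]=b'(\eta_{1,i})$ handles the only stochastic term in the KL integrand, and the two resulting $\bsby$-free expressions agree, with the dispersion factor $1/\sigma^2$ and the order of arguments tracked correctly. The paper itself does not prove this lemma but only cites it (Lemma 3(iii) of the generalized Bregman reference); your direct termwise verification is the standard argument behind that citation, so there is nothing to add.
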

To prove the desired rate for estimation,
we make a discussion in two cases.

\textit{Case (i)} $   \min\{ \sigma^2  P_\gamma (o^*)/\kappa_\gamma, M_\gamma^{2} o^{*}\}\le \min\{ \sigma^2  P_\beta (s^*)/\kappa_\beta, M_\beta^{2} s^{*}\}$. Consider a signal subclass
\begin{equation}\nonumber
\mathcal B^1 = \{\bar{\bm\beta} = [\bm\beta^T, \bm 0^T]^T: \beta_j\in\{0,\tau R\}, \|\bm\beta\|_0\le s^*\},
\end{equation}
where $$R = \sigma (\log(ep/s^*))^{1/2}/{\kappa_\beta}^{1/2}\wedge M_\beta$$ and $1>\tau>0$ is a small constant to be chosen later.
Clearly, $\mathcal B^1\in\mathcal B(s^*,o^*,M_\beta, M_\gamma)$. By Stirling's approximation, $\log |\mathcal B^1| \ge  \log {p \choose s^*} \ge s^*\log(p/s^*) \ge cs^*\log(ep/s^*)$
for some universal constant $c$.

Let $\rho(\bar{\bm\beta}_1,\bar{\bm\beta}_2) = \|\bar{\bm\beta}_1-\bar{\bm\beta}_2\|_0$, the Hamming distance between $\bar{\bm\beta}_1$ and $\bar{\bm\beta}_2$. By Lemma A.3 in   \cite{Rigollet11}, there exists a subset $\mathcal B^{10} \subset \mathcal B^1$ such that $\bar{\bm\beta}_0=\bsb{0}\in  \mathcal B^{10}$ and
\begin{equation}\nonumber
\log |\mathcal B^{10}| \ge c_1s^*\log(ep/s^*), \rho(\bar{\bm\beta}_1,\bar{\bm\beta}_2) \ge c_2s^*,\forall \bar{\bm\beta}_1,\bar{\bm\beta}_2\in \mathcal B^{10}, \bar{\bm\beta}_1\ne \bar{\bm\beta}_2
\end{equation}
for some universal constants $c_1,c_2>0$. Then \begin{align}\label{minimax_ineq-1}\|\bar{\bm\beta}_1-\bar{\bm\beta}_2\|_2^2 = \tau^2R^2\rho(\bar{\bm\beta}_1,\bar{\bm\beta}_2) \ge c_2\tau^2R^2s^*\end{align} 
for any $\bar{\bm\beta}_1,\bar{\bm\beta}_2\in\mathcal B^{10}$, $\bar{\bm\beta}_1\ne \bar{\bm\beta}_2$.

By Lemma \ref{lemKLBreg} and the regularity condition, for any $\bar{\bm\beta}  \in\mathcal B^{10}$, we have
\begin{equation}\nonumber
\mathcal K(p_{\bar{\bm\beta}},p_{\bar\bsbb_0} ) = \breg_l (  \bar \bsbX {\bar \bsbb}_0,  \bar \bsbX \bar\bsbb) \le n {\tau^2}\kappa_\beta R^2s^*/\sigma^2.
\end{equation}
 Therefore,
\begin{equation} \label{minimax_ineq-2}
\frac{1}{|\mathcal B^{10}|-1}\sum_{\bar \bsbb\in\mathcal B^{10}\setminus \{\bar{\bm\beta}_0\}}\mathcal K(  p_{\bar{\bm\beta} },p_{\bar \bsbb_0}) \le {\tau^2} \kappa_\beta s^*\log(ep/s^*).
\end{equation}

Combining \eqref{minimax_ineq-1} and \eqref{minimax_ineq-2} and choosing a sufficiently small value for $\tau$, we can apply Theorem 2.7 of \cite{Tsybakov2008} to get the desired lower bound.

\textit{Case (ii)} $   \min\{ \sigma^2  P_\gamma (o^*)/\kappa_\gamma, M_\gamma^{2} o^{*}\}\ge \min\{ \sigma^2  P_\beta (s^*)/\kappa_\beta, M_\beta^{2} s^{*}\}$. Define a signal subclass
\begin{equation}\nonumber
\mathcal B^2= \{\bar{\bm\beta} = [\bm 0^T,\bsbg^T]^T: \gamma_i\in\{0,\tau R\}, \|\bm\gamma\|_0\le  o^*\},
\end{equation}
where $R =\sigma (\log(en/o^*))^{1/2}/\tau^{1/2}\wedge M_\gamma$ and $1>\tau>0$ is a small constant. The afterward treatment is similar to (i). The details are omitted.

The proof for the lower bound of $\| \bar \bsbX   \bar \bsbb^* -   \bar \bsbX \hat{ \bar \bsbb}\|_2^2$ follows similar lines.
\end{proof}
\subsection{Proof of Theorem \ref{thm:tuning} } \label{proof:tuning}


Let   $J(\bm\gamma) = |\mathcal J(\bm\gamma)| = \| \bsbg\|_0$ and $\mathcal J(\bm\gamma)$ is the support of $\bm\gamma$, i.e., $\mathcal J(\bm\gamma) = \{i:\gamma_i\ne 0\}$. The optimality of $(\hat{\bm\beta},\hat{\bm\gamma})$ implies that
\begin{equation}
\begin{split}
&\bm\Delta_l(\bm X\hat{\bm\beta}+\hat{\bm\gamma},\bm X\bm\beta^*+\bm\gamma^*)
\le   AP_o(\bm\beta^*,\bm\gamma^*)-AP_o(\hat{\bm\beta},\hat{\bm\gamma}) + \langle\bm\epsilon, \bar{\bm X}\hat{\bar{\bm\beta}}-\bar{\bm X}\bar{\bm\beta}^*\rangle.
\end{split}
\end{equation}
The stochastic term $\langle\bm\epsilon, \bar{\bm X}\hat{\bar{\bm\beta}}-\bar{\bm X}\bar{\bm\beta}\rangle$ can be decomposed and bounded in a similar  way as in the proof of Theorem  \ref{thm:fixed-point}. The difference is to  use the union bound to show the $\EE \sup_{s\le p, o\le n} R_{s, o}^2 \le C$.  Take  the first term     $\langle\bm\epsilon, I\rangle$ from the decomposition as an example:
\begin{align*}
&\langle \bm\epsilon, \mathcal P_{\bar{\bm X}_{\mathcal J(\bar{\bm\beta}^*)}} \bar{\bm X}\bm\Delta\rangle  - \frac{1}{a} \| \mathcal P_{\bar{\bm X}_{\mathcal J(\bar{\bm\beta}^*)}}\bar{\bm X}\bm\Delta\|_2^2 - bL P_o (J(  \bsbb^*), J(\bsbg^*) )  \\
\le\,&  \frac{1}{a'} \| \mathcal P_{\bar{\bm X}_{\mathcal J(\bar{\bm\beta}^*)}} \bar{\bm X}\bm\Delta \|^2_2 + \frac{a'}{4}  \sup_{s\le p, o\le n} \sup_{\bm\Delta\in\Gamma_{s ,o }} [\langle\bm\epsilon, \bm\Delta\rangle - 2\{{ ({b}/{a})LP_{o} (s , o )}\}^{1/2} ]_+^2 \\
\equiv\,&  \frac{1}{a'} \| \mathcal P_{\bar{\bm X}_{\mathcal J(\bar{\bm\beta}^*)}} \bar{\bm X}\bm\Delta \|^2_2 + \frac{a'}{4}  \sup_{s\le p, o\le n} R_{s, o}^2.
\end{align*}
When $s=o=0$, $R_{0,0}= 0 $. When $s\ge 1$ and $o\ge 1$, for any $t\ge 0$,
if $4b/a$ is a constant greater than $1$, \begin{equation} \label{Rbound}
\begin{split}
&\mathbb P(\sup_{1\le o \le n, 1\le s \le p}R_{s, o} > t\sigma)\\
\le\,&  \sum_{s=1}^{p} \sum_{o=1}^{n} \mathbb P\bigg(\sup_{\bm\Delta\in\Gamma_{s, o}}\langle\bm\epsilon, \bm\Delta\rangle - \sqrt{LP (s,o)} > t\sigma + 2\sigma\sqrt{\frac{b}{a}LP (s,o)} - \sqrt{LP  (s,o)}\bigg)\\
\le\,&   C\exp(-ct^2)\sum_{s=1}^{p} \sum_{o=1}^{n} \exp[-c(2\sqrt{b/a}-1)^2 P  (s,o)]\\
\le\,&  C\exp(-ct^2)\exp(-c (\log n+ \log p))\sum_{s=1}^p \sum_{o=1}^n \exp(-c(s + o))\\
\le\, & C \exp(-ct^2 ) (np)^{-c} ,
\end{split}
\end{equation}
where the last inequality is due to the sum of geometric series. Similarly, when $s = 0 $, $\EP(\sup_{0\le o\le n}R_{0,o} > t \sigma)\le C \exp(-ct^2) n^{-c}$,  and when $o=0$, $\EP(\sup_{0\le s \le p} R_{s,0} > t \sigma)\le C \exp(-ct^2) p^{-c}$ for any $0\le s \le p$.   Hence
$
\EE \sup_{s, o} R_{s, o}^2 \le C.
$
To summarize, we obtain that for any constants $a,b,a',b' > 0$ satisfying $4b>a$,
\begin{equation}
\mathbb E \langle\bm\epsilon, \bar{\bm X}\hat{\bar{\bm\beta}}-\bar{\bm X}\bar{\bm\beta}\rangle
\le (\frac{1}{a}+\frac{1}{a'})\|\bm X\hat{\bm\beta}-\bm X\bm\beta+\hat{\bm\gamma}-\bm\gamma\|_2^2 + 2bL[P (\hat{\bm\beta},\hat{\bm\gamma})+P (\bm\beta,\bm\gamma)] + a'C.
\end{equation}
Using the regularity condition and choosing constants $a,a',b,b'$ and $A$ sufficiently large such that $1/a+1/a'=\delta/4$, $4b>a$ and $A = A_0 + 2bL$, we   obtain the   error bound.

\subsection{Proof of Theorem \ref{thm:tuning_log_form}}
\label{subsec:sfpic-high}
We prove a   general theorem in possibly high dimensions  where $\bsbb$ is also desired to be sparse.

\begin{manualtheorem}{\ref{thm:tuning_log_form}'}\label{thm:tuning_log_form-largep}
Let $\bm y = \bm X\bm\beta^* + \bm\gamma^*+\bm\epsilon$, where  $\epsilon_i$ are independent sub-Gaussian$( 0,\sigma^2)$ and $\EE \epsilon_i^2 =  \sigma_i^2 = c_i \sigma^2$ with $c_i$  some positive constants and $\sigma^2$ unknown. Let $l_0(\bsbX \bsbb + \bsbg; \bsby)=\|\bm y - \bm X\bm\beta - \bm\gamma\|_2^2/2$. Define $P(\bm\beta,\bm\gamma) =  J(\bm\beta)\log(ep/J(\bm\beta))   + J(\bm\gamma)\log(en/J(\bm\gamma))$.
 Assume the true model is parsimonious in the sense that $P (\bm\beta^*,\bm\gamma^*)\le n/A_0$ for some constant $A_0 > 0$. Let $\delta(\bm\beta,\bm\gamma) = AP(\bm\beta,\bm\gamma)/n$ where $A$ is a positive constant satisfying $A<A_0$, and so $\delta(\bm\beta^*,\bm\gamma^*)<1$. Then for sufficiently large values of $A_0$ and $A$, any $(\hat{\bm\beta},\hat{\bm\gamma})$ that minimizes
$
\log l_0(\bsbX \bsbb + \bsbg; \bsby) + \delta(\bm\beta,\bm\gamma) \text{ s.t. } \delta(\bm\beta,\bm\gamma)<1
$ 
satisfies
\begin{equation}
\Breg_2 (\bar \bsbX \hat {\bar \bsbb}, \bar \bsbX \bar \bsbb^*  )     \lesssim \sigma^2P (\bm\beta^*,\bm\gamma^*)
\end{equation}
with probability at least $1-C' n^{-c'} -C    n^{-c(1 \wedge o^*)} p^{-c(1 \wedge s^*)}$ for some constants $c , c', C, C'>0$.
\end{manualtheorem}

\begin{proof}
Let $h(\bm\beta,\bm\gamma; A) = 1/(n-AP(\bm\beta,\bm\gamma))$. It follows from $1/(1-\delta)\ge \exp(\delta)$ for any $0\le \delta<1$ and $\exp(\delta)\ge 1/(1-\delta/2)$ for any $0\le\delta<2$ that
$n\|\bm y - \bm X\hat{\bm\beta} - \hat{\bm\gamma}\|_2^2 h(\hat{\bm\beta},\hat{\bm\gamma};A/2)
\le \|\bm y-\bm X\hat{\bm\beta}-\hat{\bm\gamma}\|_2^2 \exp[\delta(\hat{\bm\beta},\hat{\bm\gamma})]
\le \|\bm y-\bm X\bm\beta^*-\bm\gamma^*\|_2^2 \exp[\delta(\bm\beta^*,\bm\gamma^*)]
\le  \|\bm y-\bm X\bm\beta^*-\bm\gamma^*\|_2^2 h(\bm\beta^*,\bm\gamma^*;A)n$.

Since $h(\hat{\bm\beta},\hat{\bm\gamma};A/2)>0$, we have $$\|\bm y-\bm X\hat{\bm\beta}-\hat{\bm\gamma}\|_2^2\le\|\bm y-\bm X\bm\beta^*-\bm\gamma^*\|_2^2h(\bm\beta^*,\bm\gamma^*;A)/h(\hat{\bm\beta},\hat{\bm\gamma};A/2).$$ It follows that
\begin{equation*}
\begin{split}
&\|\bm X\hat{\bm\beta}-\bm X\bm\beta^*+\hat{\bm\gamma}-\bm\gamma^*\|_2^2 \\
\le\,& \|\bm\epsilon\|_2^2[h(\bm\beta^*,\bm\gamma^*;A)/h(\hat{\bm\beta},\hat{\bm\gamma};A/2)-1]+ 2\langle\bm\epsilon,\bm X\hat{\bm\beta}-\bm X\bm\beta^*+\hat{\bm\gamma}-\bm\gamma^*\rangle\\
\le\,& \frac{A\|\bm\epsilon\|_2^2}{n\sigma^2-A\sigma^2P (\bm\beta^*,\bm\gamma^*)}\sigma^2P (\bm\beta^*,\bm\gamma^*)-\frac{A\|\bm\epsilon\|_2^2}{2n\sigma^2}\sigma^2P (\hat{\bm\beta},\hat{\bm\gamma}) +2\langle\bm\epsilon,\bm X\hat{\bm\beta}-\bm X\bm\beta^*+\hat{\bm\gamma}-\bm\gamma^*\rangle.
\end{split}
\end{equation*}

The proof of Theorem \ref{thm:tuning} gives a high-probability bound for the stochastic term: for any constant $a,b,a' > 0$ satisfying $4b>a$,
\begin{equation}
\begin{split}
& 2\langle\bm\epsilon,\bm X\hat{\bm\beta}-\bm X\bm\beta^*+\hat{\bm\gamma}-\bm\gamma^*\rangle \\
\le\,& 2(\frac{1}{a}+\frac{1}{a'})\|\bm X\hat{\bm\beta}-\bm X\bm\beta^*+\hat{\bm\gamma}-\bm\gamma^*\|_2^2 + 4bL\sigma^{2}[P(\hat{\bm\beta},\hat{\bm\gamma})+P(\bm\beta^*,\bm\gamma^*)]
\end{split}\label{intermed0}
\end{equation}
with probability at least $1-C    n^{-c(1 \wedge o^*)} p^{-c(1 \wedge s^*)}$ for some $c , C>0$.

Assume $c_0 \sigma^2 \le \EE \epsilon_i^2 \le C_0 \sigma^2$ with $c_0, C_0$ positive constants and let $\varepsilon$ and $\varepsilon'$ be two constants satisfying $0<\varepsilon<1, \varepsilon'>0$. On $\mathcal A=\{(c_{0}-\varepsilon)n\sigma^2 \le \|\bm\epsilon\|_2^2 \le ( C_{0}+\varepsilon')n\sigma^2\}$, we have
\begin{equation*}
\begin{split}
&\frac{A\|\bm\epsilon\|_2^2}{n\sigma^2-A\sigma^2P(\bm\beta^*,\bm\gamma^*)}\sigma^2P(\bm\beta^*,\bm\gamma^*)-\frac{A\|\bm\epsilon\|_2^2}{2n\sigma^2}\sigma^2P(\hat{\bm\beta},\hat{\bm\gamma})\\
\le\,& \frac{(C_{0}+\varepsilon')AA_0}{A_0-A}\sigma^2P (\bm\beta^*,\bm\gamma^*) - \frac{(c_{0}-\varepsilon)A}{2}\sigma^2P (\hat{\bm\beta},\hat{\bm\gamma}).
\end{split}
\end{equation*}
 With $A_0$ large enough, we can choose $a,a',b,A$ such that $1/a+1/a'<1/2$, $4b>a$ and $8bL/ c_{0}(1-\varepsilon)\le A$. From the Hanson-Wright inequality  \citep{rudelson2013} $\mathcal A$ occurs with probability at most $c_2'\exp(-c_2n)$, where $c_2,c_2'$ are dependent on constants $ \varepsilon, \varepsilon'$. The conclusion results.
\end{proof}

\begin{remark}
For robust regression in low dimensions,       applying    Theorem  \ref{thm:tuning_log_form-largep} to   the reduced model     $\bsbU_{\perp}^T \bsby = \bsbU_{\perp}^T \bsbg^* + \bsbeps'$ gives Theorem \ref{thm:tuning_log_form}, where   $\bm U_\perp\in\mathbb R^{n\times (n-r(\bsbX))}$ is the orthogonal complement of $\bsbU$ that is from the SVD: $\bsbX = \bsbU \bsbD \bsbV^T$, and $\bm\epsilon'=\bm U_\perp^T\bm\epsilon$. The proof follows the same lines; in particular, when applying the Hanson-Wright inequality,  note that $\EE [\| \epsilon'\|_2^2 ] = Tr\{ \mbox{diag}\{\sigma_i^2\}\bsbU_\perp \bsbU_\perp^T \}\in [c_0 (n - r(\bsbX))\sigma^2, C_0 (n - r(\bsbX)) \sigma^2] $ and $\| \bsbU_\perp \bsbU_\perp ^T\|_F^2  = n  - r(\bsbX)$ and $\| \bsbU_\perp \bsbU_\perp ^T\|_2 = 1$.

\end{remark}

\subsection{Error bounds of optimal solutions}
\label{proof:global_rate}

This part points out that the error rate remains the same if $(\hat{\bm\beta},\hat{\bm\gamma})$ is {globally} optimal,  but the regularity condition gets  relaxed. For simplicity, we drop the $\ell_2$ penalty terms.

\begin{thm} \label{thm:global_rate}
Let $(\hat{\bm\beta}, \hat{\bm\gamma})$ be any globally optimal solution of $ \min_{  \|\bsbb\|_0 \le q_\beta, \|\bsbg\|_0 \le q_\gamma  }\allowbreak l(\bm X\bm\beta + \hat \bsbg; \bm y)$    with  $\|\hat{\bm\gamma}\|_0 = q_\gamma$, $\|\hat{\bm\beta}\|_0 = q_\beta$,     $  q_\gamma= \vartheta o^*,   q_\beta = \vartheta s^* $ and $\vartheta\ge 1$.
 Assume that there exists some  $\delta>0$ such that \begin{align}\label{regcondglobal}
 (\bm  \Delta_l - \delta\Breg_2 )(\bar \bsbX  {\bar \bsbb}, \bar \bsbX  {\bar \bsbb}') \ge 0
\end{align} holds for any   $\|\bm\beta\|_0\le \vartheta s^*, \|\bm\beta'\|_0 \le  s^*$, $\|\bm\gamma\|_0 \le \vartheta o^*, \| \bm\gamma'\|_0 \le  o^*$. Then the     following error bound  holds
\begin{equation} \label{oracle_ineq}
\EE [\Breg_2 (\bar \bsbX \hat {\bar \bsbb}, \bar \bsbX \bar \bsbb^*  ) ]
 \lesssim \frac{\vartheta}{\delta^2}\sigma^2(s^*\log\frac{ ep}{s^*} + o^*\log\frac{en}{o^*})+ \frac{\sigma^2}{\delta^2}.
\end{equation}
\end{thm}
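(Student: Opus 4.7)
}

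The plan is to exploit the fact that a \emph{globally} optimal solution satisfies a functional inequality against the truth directly, which sidesteps the coordinate-wise Lemma \ref{lemma:bound without design} that was needed in the A-estimator analysis (Theorem \ref{thm:fixed-point}). Specifically, since $\|\hat{\bsbb}\|_0 \le q_\beta = \vartheta s^*$ and $\|\hat{\bsbg}\|_0 \le q_\gamma = \vartheta o^*$ and $\bar\bsbb^*$ satisfies $\|\bsbb^*\|_0 \le s^*$, $\|\bsbg^*\|_0 \le o^*$, the feasibility of $\bar\bsbb^*$ together with global optimality gives $l(\bar\bsbX\hat{\bar\bsbb};\bsby) \le l(\bar\bsbX\bar\bsbb^*;\bsby)$. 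Rearranging via the GBF definition and using $\bsbeps = -\nabla l(\bar\bsbX\bar\bsbb^*)$, I get the basic deterministic inequality
\begin{equation*}
\bm\Delta_l(\bar\bsbX\hat{\bar\bsbb},\, \bar\bsbX\bar\bsbb^*) \;\le\; \langle \bsbeps,\ \bar\bsbX(\hat{\bar\bsbb}-\bar\bsbb^*)\rangle.
\end{equation*}

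Next, I would apply the relaxed regularity assumption \eqref{regcondglobal} with $\bar\bsbb = \hat{\bar\bsbb}$ and $\bar\bsbb' = \bar\bsbb^*$ (the sparsity budgets match since $\|\hat\bsbb\|_0 \le \vartheta s^*$, $\|\bsbb^*\|_0 \le s^*$, and similarly for $\bsbg$). This yields $\delta\,\Breg_2(\bar\bsbX\hat{\bar\bsbb},\bar\bsbX\bar\bsbb^*) \le \bm\Delta_l(\bar\bsbX\hat{\bar\bsbb},\bar\bsbX\bar\bsbb^*)$, so that combining with the previous step,
\begin{equation*}
\delta\, \Breg_2(\bar\bsbX\hat{\bar\bsbb},\,\bar\bsbX\bar\bsbb^*) \;\le\; \langle \bsbeps,\ \bar\bsbX(\hat{\bar\bsbb}-\bar\bsbb^*)\rangle.
\end{equation*}
Notice that, compared to the A-estimator proof, there is \emph{no} remainder term of the form $\Breg_2(\bar\bsbb,\bar\bsbb')/\sqrt{\vartheta}$ on the right-hand side; this is exactly why the condition can be relaxed to zero lower bound and $\bar{\bm\Delta}_l$ can be replaced by $\bm\Delta_l$. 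In the strongly convex case $\mu \mathbf D_2 \le \bm\Delta_l$, the hypothesis holds with $\delta=\mu$ for \emph{any} design, giving the remark's statement.

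The stochastic term is then bounded verbatim along the lines of the proof of Theorem \ref{thm:fixed-point}: decompose
\begin{equation*}
\bar\bsbX(\hat{\bar\bsbb}-\bar\bsbb^*) = \mathcal P_{\bar\bsbX_{\mathcal J(\bar\bsbb^*)}}\,\bar\bsbX(\hat{\bar\bsbb}-\bar\bsbb^*) + \mathcal P^\perp_{\bar\bsbX_{\mathcal J(\bar\bsbb^*)}}\mathcal P_{\bar\bsbX_{\mathcal J(\hat{\bar\bsbb})}}(\bsbX\hat\bsbb+\hat\bsbg),
\end{equation*}
apply Cauchy--Schwarz to each piece, and invoke Lemmas \ref{concenGauss} and \ref{concenGauss2} (with $p_1 = p$, $p_2 = n$ and cardinality budgets $\vartheta s^*$, $\vartheta o^*$) to obtain, for any $a,a',b>0$,
\begin{equation*}
\EE\langle\bsbeps,\bar\bsbX(\hat{\bar\bsbb}-\bar\bsbb^*)\rangle \;\le\; \Bigl(\tfrac{1}{a}+\tfrac{1}{a'}\Bigr)\EE\|\bar\bsbX(\hat{\bar\bsbb}-\bar\bsbb^*)\|_2^2 + 3bL\sigma^2\bigl(\vartheta s^*\log\tfrac{ep}{\vartheta s^*}+\vartheta o^*\log\tfrac{en}{\vartheta o^*}\bigr) + Ca'\sigma^2,
\end{equation*}
provided $a,b$ are chosen so that the sub-Gaussian deviation terms concentrate (e.g.\ $4b>a$). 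Choosing $a=a'=4/\delta$ and $b$ proportional to $1/\delta$ so that $1/a+1/a' \le \delta/2$, and recalling $\Breg_2 = \tfrac12\|\cdot\|_2^2$, the $\|\bar\bsbX(\hat{\bar\bsbb}-\bar\bsbb^*)\|_2^2$ term on the right gets absorbed into the left-hand side, and dividing by $\delta$ delivers \eqref{oracle_ineq}.

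The main (minor) obstacle is merely bookkeeping: one must verify that the asymmetric $\bm\Delta_l$, not $\bar{\bm\Delta}_l$, is what naturally appears, and ensure that the support-based decomposition and entropy bounds from Lemmas \ref{concenGauss}--\ref{concenGauss2} still apply with the cardinalities $\vartheta s^*$ and $\vartheta o^*$ (the decomposition introduces at worst $(1+\vartheta)s^*$ and $(1+\vartheta)o^*$ nonzeros, which only affects the rate up to an absolute constant absorbed into $\lesssim$). No new concentration inequality or combinatorial argument is required; the whole improvement comes from the global optimality inequality.
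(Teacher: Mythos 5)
Your proposal is correct and follows essentially the same route as the paper's proof: the global-optimality basic inequality $\bm\Delta_l(\bar\bsbX\hat{\bar\bsbb},\bar\bsbX\bar\bsbb^*)\le\langle\bsbeps,\bar\bsbX(\hat{\bar\bsbb}-\bar\bsbb^*)\rangle$, the relaxed regularity condition applied at $(\hat{\bar\bsbb},\bar\bsbb^*)$, and the identical projection decomposition with Lemmas \ref{concenGauss}--\ref{concenGauss2} for the stochastic term. The only slip is the final constant: with $a=a'=4/\delta$ you get $1/a+1/a'=\delta/2$, so the quadratic term exactly cancels the left-hand side $\delta\Breg_2=(\delta/2)\|\cdot\|_2^2$; take $a=a'=8/\delta$ (as the paper does) so that a positive multiple of $\Breg_2(\bar\bsbX\hat{\bar\bsbb},\bar\bsbX\bar\bsbb^*)$ survives the absorption.
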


\eqref{regcondglobal}, apart from replacing
$2\bar{\bm \Delta}_l$   by  ${\bm \Delta}_l$,  removes  the term on the right-hand side of
\eqref{fixed-point-condition}.   It is easy to see that  if  $l$ is $\mu$-strongly convex, 
then \eqref{regcondglobal}   is true with $\delta = \mu$.
In particular, when $l_0 (\eta ; y) = (\eta - y)^2/2$,  the condition holds trivially ($\delta=1$).


Under a slightly stronger regularity condition than that used in Theorem \ref{thm:fixed-point},
we get an estimation error bound.
\begin{thm} \label{th:esterr}
Let $(\hat{\bm\beta}, \hat{\bm\gamma})$ be an  A-estimator   satisfying  $\|\hat{\bm\gamma}\|_0 = q_\gamma$, $\|\hat{\bm\beta}\|_0 = q_\beta$ with $   q_\gamma= \vartheta o^* ,    q_\beta = \vartheta s^* $ and $\vartheta\ge 1$. Assume  $\bsbb^* \ne \bsb0$.
 Then, with  \eqref{fixed-point-condition}   replaced by $
 (1-\epsilon) (2\bar{\bm\Delta}_l -  \delta\Breg_2 )(\bar \bsbX_{\rho} \bar \bsbb , \bar \bsbX_{\rho} \bar \bsbb ')
 \ge \frac{1}{\sqrt{\vartheta}} \Breg_2 (\bar \bsbb,   \bar \bsbb ')
$   $\forall \bar \bsbb,\bar \bsbb'$:  $\|\bm\beta\|_0\le \vartheta s^*, \|\bm\beta'\|_0 \le  s^*$, $\|\bm\gamma\|_0 \le \vartheta o^*, \| \bm\gamma'\|_0 \le  o^*$  for some $\epsilon, \delta >0$,  with probability at least $1 - C p^{-c}$,
$ 
\rho  \|\hat{\bm\beta} - \bm\beta^* \|_2^2+ \|\hat{\bsbg} - \bsbg^* \|_2^2 \lesssim \frac{\sigma^2}{\delta \varepsilon }  \{ \vartheta^{3/2} [s^*\log(ep/s^*) + o^* \log (en/o^*)]\}.
$
In particular, we have the estimation error bound    
\begin{equation}  \label{esterrbnd}
 \|\hat{\bm\beta} - \bm\beta^* \|_2^2 \lesssim \frac{\sigma^2}{\delta \varepsilon } \frac{ \vartheta^{3/2} [s^*\log(ep/s^*) + o^* \log (en/o^*)]} {\rho}
\end{equation}
 with probability  $1 - C p^{-c}$ under \eqref{fixed-point-condition} with   $\bar \bsbX_\rho$ redefined as $ [  {\bsbX}/{\sqrt {\rho(1+\epsilon})} \ \ \bsbI]$.
\end{thm}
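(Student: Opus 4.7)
The plan is to recycle the argument structure of Theorem~\ref{thm:fixed-point} and exploit the strict slack $\epsilon$ in the strengthened regularity condition to upgrade the prediction bound into an $\ell_2$ estimation bound. The starting point is the master inequality \eqref{fixed-point_bound-1} derived there for any A-estimator via Lemma~\ref{bcdsurro} and Lemma~\ref{lemma:bound without design}, namely
\[
2\bar{\bm\Delta}_l(\bm X\hat{\bm\beta}+\hat{\bm\gamma},\bm X\bm\beta^*+\bm\gamma^*)\le \tfrac{1}{2\sqrt{\vartheta}}\bigl(\rho\|\hat{\bm\beta}-\bm\beta^*\|_2^2+\|\hat{\bm\gamma}-\bm\gamma^*\|_2^2\bigr)+\langle\bm\epsilon,\bar{\bm X}(\hat{\bar{\bm\beta}}-\bar{\bm\beta}^*)\rangle.
\]
The stochastic term is controlled by the same orthogonal-projection decomposition and combined usage of Lemma~\ref{concenGauss} and Lemma~\ref{concenGauss2} as in Theorem~\ref{thm:fixed-point}, yielding, with probability at least $1-Cp^{-c}$,
\[
\langle\bm\epsilon,\bar{\bm X}(\hat{\bar{\bm\beta}}-\bar{\bm\beta}^*)\rangle \le \tfrac{\delta}{4}\|\bar{\bm X}(\hat{\bar{\bm\beta}}-\bar{\bm\beta}^*)\|_2^2 + \tfrac{C\sigma^2\vartheta}{\delta}\bigl[s^*\log(ep/s^*)+o^*\log(en/o^*)\bigr].
\]
The only substantive change from the original argument is that I retain the high-probability event rather than take expectation.

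The novel step is how the strengthened condition is used. Writing $\tilde{\bar{\bm\beta}}=[\sqrt{\rho}\,\bm\beta^T,\bm\gamma^T]^T$ so that $\bar{\bm X}_\rho\tilde{\bar{\bm\beta}}=\bar{\bm X}\bar{\bm\beta}$ and $\Breg_2(\tilde{\bar{\bm\beta}},\tilde{\bar{\bm\beta}}')=(\rho\|\bm\beta-\bm\beta'\|_2^2+\|\bm\gamma-\bm\gamma'\|_2^2)/2$, the hypothesis reads
\[
(1-\epsilon)\Bigl[2\bar{\bm\Delta}_l(\bar{\bm X}\bar{\bm\beta},\bar{\bm X}\bar{\bm\beta}')-\tfrac{\delta}{2}\|\bar{\bm X}(\bar{\bm\beta}-\bar{\bm\beta}')\|_2^2\Bigr] \ge \tfrac{1}{2\sqrt{\vartheta}}\bigl(\rho\|\bm\beta-\bm\beta'\|_2^2+\|\bm\gamma-\bm\gamma'\|_2^2\bigr),
\]
which rearranges into a lower bound on $2\bar{\bm\Delta}_l$ carrying an explicit parameter-space term absent from Theorem~\ref{thm:fixed-point}. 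Substituting this lower bound into the upper inequality at $(\bar{\bm\beta},\bar{\bm\beta}')=(\hat{\bar{\bm\beta}},\bar{\bm\beta}^*)$ and collecting the $\rho\|\hat{\bm\beta}-\bm\beta^*\|_2^2+\|\hat{\bm\gamma}-\bm\gamma^*\|_2^2$ contributions on both sides, the net coefficient in front becomes $\tfrac{1}{2\sqrt{\vartheta}}\bigl(\tfrac{1}{1-\epsilon}-1\bigr)=\tfrac{\epsilon}{2\sqrt{\vartheta}(1-\epsilon)}>0$. Solving for this quantity yields the stated sum bound with an extra factor $\sqrt{\vartheta}/\epsilon$ over the prediction rate of Theorem~\ref{thm:fixed-point}, producing $\vartheta^{3/2}/(\delta\epsilon)$ on the right. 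The pure $\bm\beta$-bound \eqref{esterrbnd} is obtained analogously from \eqref{fixed-point-condition} with the redefined $\bar{\bm X}_\rho=[\bm X/\sqrt{\rho(1+\epsilon)},\bm I]$: the rescaling injects a $(1+\epsilon)$ factor in front of $\rho\|\bm\beta-\bm\beta'\|_2^2$ inside $\Breg_2(\tilde{\bar{\bm\beta}},\tilde{\bar{\bm\beta}}')$, so that after matching coefficients only $\rho\epsilon\|\hat{\bm\beta}-\bm\beta^*\|_2^2$ survives on the parameter side.

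The delicate point is the bookkeeping around $\epsilon$. With $\epsilon=0$ the coefficients on the two sides of the combined inequality would cancel exactly and one would recover only the prediction bound; the strict positivity of $\epsilon$ is precisely what opens a gap from which an $\ell_2$ estimation bound can be extracted, which also explains the $1/\epsilon$ blow-up relative to Theorem~\ref{thm:fixed-point}. One also needs to check that $\hat{\bar{\bm\beta}}$ and $\bar{\bm\beta}^*$ meet the sparsity levels $\|\bm\beta\|_0\le \vartheta s^*,\|\bm\gamma\|_0\le \vartheta o^*$ required by the hypothesis, which is immediate from the standing assumption $\|\hat{\bm\beta}\|_0=q_\beta=\vartheta s^*,\|\hat{\bm\gamma}\|_0=q_\gamma=\vartheta o^*$.
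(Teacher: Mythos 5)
Your proposal is correct and follows essentially the same route as the paper, which simply recalls the high-probability version of the master inequality from Theorem \ref{thm:fixed-point}, invokes the strengthened regularity condition with $a=a'=4/\delta$, $b=2/\delta$, and omits the remaining details. Your bookkeeping of the $\epsilon$-slack — the net coefficient $\tfrac{\epsilon}{2\sqrt{\vartheta}(1-\epsilon)}$ on the parameter-space term for the sum bound, and the surviving $\tfrac{\rho\epsilon}{2\sqrt{\vartheta}}\|\hat{\bm\beta}-\bm\beta^*\|_2^2$ under the rescaled $\bar{\bm X}_\rho$ for \eqref{esterrbnd} — is exactly the omitted calculation and yields the stated $\vartheta^{3/2}/(\delta\epsilon)$ rates.
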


 When $\vartheta, \epsilon, \delta$ are constants and $\rho \ge c
n$, the estimation error bound in \eqref{esterrbnd} is of the order $ \sigma^2 \{ s^*\log(ep/s^*) + o^* \log (en/o^*)\}/n$.

\begin{proof}
To prove Theorem \ref{thm:global_rate}, by definition, $\hat{\bar{\bm\beta}} = [\hat{\bm\beta}^T, \hat{\bm\gamma}^T]^T$ satisfies $l(\bar{\bm X}\hat{\bar{\bm\beta}}) \le l(\bar{\bm X} {\bar{\bm\beta}}^*)   $ or equivalently,
\begin{equation}
\bm\Delta_l(\bar{\bm X}\hat{\bar{\bm\beta}},\bar{\bm X}\bar{\bm\beta}^*) \le  \langle\bm\epsilon, \bar{\bm X}\hat{\bar{\bm\beta}}-\bar{\bm X}\bar{\bm\beta}^*\rangle.
\end{equation}
Treating  the stochastic term   in  the same way as in
 the proof of Theorem \ref{thm:fixed-point},  we obtain
\begin{equation}
\mathbb E\langle\bm\epsilon, \bar{\bm X}\hat{\bar{\bm\beta}}-\bar{\bm X}\bar{\bm\beta}^*\rangle \le (\frac{1}{a}+\frac{1}{a'})\|\bar{\bm X}\hat{\bar{\bm\beta}}-\bar{\bm X}\bar{\bm\beta}^*\|_2^2 + 3bL \sigma^2 P(q_\beta, q_\gamma) + a'C\sigma^2,
\end{equation}
for any $a,a',b >0$ satisfying $4b>a$. The regularity condition implies
\begin{equation}
(\delta/2)\|\bar{\bm X}\hat{\bar{\bm\beta}}-\bar{\bm X}\bar{\bm\beta}^*\|_2^2  \le \bm\Delta_l(\bar{\bm X}\hat{\bar{\bm\beta}}, \bar{\bm X}\bar{\bm\beta}^*).
\end{equation}
Combining the above three inequalities and choosing $a,a',b$ such that $a = a' = 8/\delta$ and $b=4/\delta$ gives the desired result.

To prove   Theorem \ref{th:esterr}, recall that  from the proof
of Theorem \ref{thm:fixed-point}, we obtain
\begin{align*}
 2 \bar{\bm\Delta}_l(\bm X\hat{\bm\beta}+\hat{\bm\gamma}, \bm X\bm\beta^*+\bm\gamma^*) \le  \frac{1}{2\sqrt{\vartheta}}(\|\hat{\bm\gamma}-\bm\gamma^*\|_2^2+\rho\|\hat{\bm\beta}-\bm\beta^*\|_2^2)\\+  (\frac{1}{a}+\frac{1}{a'})\|\bar{\bm X}\hat{\bar{\bm\beta}}-\bar{\bm X}\bar{\bm\beta}^*\|_2^2+ 3bL \vartheta \sigma^2 P_o(s^*, o^*)
\end{align*}
with probability at least  $1 - C n^{-c(1 \wedge o^*)} p^{-c}$. Using the regularity condition, and setting $a = a' = 4/\delta$ and $b = 2/ \delta$, we get the estimation error bound \eqref{esterrbnd} with high probability (details omitted).
\end{proof}

\subsection{General noise and stochastic breakdown}
\label{subsec:extra}
Given a random variable $X$,  its Orlicz $\psi$-norm is defined by $$
\| X\|_{\psi} = \inf \big\{ M>0: \EE \psi(\frac{|X|}{M})\le 1\big\},
$$ where  $\psi$ is a strictly increasing convex  function on $[0, +\infty)$ with $\psi(0)=0$; see, e.g., \cite{van1996weak}. Similar to the definition of sub-Gaussian random vectors (cf. Definition \ref{def:subgauss}), we say   that a random vector   $\bsb{\epsilon}$ has its   $\psi$-norm bounded above by $\sigma$ if       $$\|\langle \bsb{\epsilon}, \bsba\rangle\|_{\psi}\le \sigma\|\bsba\|_2 $$  for any  vector $\bsba$.  The components of $\bsb{\epsilon}$ are not required to be independent.

Some commonly used $\psi$-norms in statistics are the $L_p$ norms, $\psi(x) = x^p$ with      $ p\ge 1$, which  means $|X|$ possesses a finite $p$-th moment, and $\psi_p$ norms,  $\psi(x) = \exp(x^p) -1 $, covering  sub-Gaussian    ($p=2$) and sub-exponential ($p=1$) random variables.   By Markov's inequality, that  $X$ has  a finite $\psi$-norm  $\sigma$ implies a tail probability bound
\begin{align*}\EP (|X|> t) \le \frac{ 2}{\psi(t/\sigma)+1}, \quad \forall t>0\end{align*}
which encompasses diverse heavy/light tail decays.

In the following theorem, instead of restricting our attention to a   sub-Gaussian type effective noise $\bsbeps$ (cf.  \eqref{noise-def}),  which \textit{is} sensible in many applications especially when  a Lipschitz continuous loss is in use, we make a more general assumption that $\bsbeps$ has a finite Orlicz $\psi$-norm bounded above by $\sigma$ and  $\psi(x) \ge c x^2$ (i.e., the second moment exists),  together with the regularity condition $\lim\sup_{x,y\rightarrow \infty} \psi(x) \psi(y)/\psi(cxy)<\infty$  \citep{van1996weak}   for some constant $c>0$  that is satisfied by say the $L_p$ norms and $\psi_p$ norms. (Note that the  $\psi$ here is not the same psi function to define an M-estimator.)
\begin{thm}\label{theorem:generalnoise}
Let  $\hat{\bar{\bm\beta}} $ be a solution to $ \min_{ \bar{\bm\beta}:\|\bm\beta\|_0 \le q_\beta,   \|\bm\gamma\|_0 \le q_\gamma } l(\bar{\bm X} \bar{\bm\beta}; \bm y)$
with  $ 0\le  q_\gamma\le n, 0\le   q_\beta \le p $.
  Assume that there exists some $\delta >0$ such that
$
  ({\bm\Delta}_l -  \delta\Breg_2 )(\bar \bsbX \bar \bsbb , \bar \bsbX  \bar \bsbb ')
 \ge 0   $
for sparse $\bar \bsbb, \bar \bsbb'$ satisfying  $\|\bm\beta\|_0\le q_\beta, \|\bm\beta'\|_0 \le   q_\beta$, $\|\bm\gamma\|_0 \le q_\gamma, \| \bm\gamma'\|_0 \le  q_\gamma$. Then the following   \emph{oracle inequality} holds for any $\bar\bsbb: \| \bsbb\|_0\le q_\beta, \| \bsbg\|_0\le q_\gamma$
\begin{equation} \label{fixed-point-rate-gennoise}
\EE [\Breg_2 (\bar \bsbX \hat {\bar \bsbb}, \bar \bsbX \bar \bsbb^*  ) ]
\lesssim \frac{1}{\delta^2}\Big \{\delta\EE\breg_l(\bar \bsbX   {\bar \bsbb}, \bar \bsbX \bar \bsbb^*  )+\sigma^2 \big[\psi^{-1}\big(\exp\{c( q_\gamma\log\frac{en}{ q_\gamma}+ q_\beta  \log\frac{ep}{ q_\beta})\}\big)\big]^2\Big\}
, \end{equation}
 where $c$ is a positive constant. In particular, in the sub-Gaussian case with $\psi (x) = \exp(x^2) - 1$, the RHS becomes  $\frac{1}{\delta^2} \{\delta  \EE\breg_l(\bar \bsbX   {\bar \bsbb}, \bar \bsbX \bar \bsbb^*  )+ {\sigma^2}      q_\gamma\log({en}/{ q_\gamma})+\sigma^2 q_\beta  \log({ep}/{ q_\beta})+\sigma^2)\}$.
 \end{thm}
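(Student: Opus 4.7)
The plan mirrors the strategy used for Theorem~\ref{thm:global_rate}, with the sub-Gaussian concentration Lemma~\ref{concenGauss} replaced by an Orlicz-norm maximal inequality over sparse directions.

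\textbf{Step 1 (optimality inequality).} Since $\hat{\bar\bsbb}$ is a global minimizer of $l(\bar\bsbX\,\cdot\,;\bsby)$ over the constrained set and any $\bar\bsbb$ with $\|\bsbb\|_0\le q_\beta,\|\bsbg\|_0\le q_\gamma$ is feasible, $l(\bar\bsbX\hat{\bar\bsbb})\le l(\bar\bsbX\bar\bsbb)$. Adding and subtracting $l(\bar\bsbX\bar\bsbb^*)+\langle\nabla l(\bar\bsbX\bar\bsbb^*),\bar\bsbX\,\cdot\,-\bar\bsbX\bar\bsbb^*\rangle$ on both sides and using $\bsbeps=-\nabla l(\bar\bsbX\bar\bsbb^*)$ yields the master inequality
\[
\breg_l(\bar\bsbX\hat{\bar\bsbb},\bar\bsbX\bar\bsbb^*)\;\le\;\breg_l(\bar\bsbX\bar\bsbb,\bar\bsbX\bar\bsbb^*)+\langle\bsbeps,\bar\bsbX(\hat{\bar\bsbb}-\bar\bsbb)\rangle.\quad(\star)
\]

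\textbf{Step 2 (regularity $+$ Young).} Reading the statement as one in which $\bar\bsbb^*$ also lies in the sparse class (so the regularity condition is applicable to both $(\hat{\bar\bsbb},\bar\bsbb^*)$ and $(\bar\bsbb,\bar\bsbb^*)$), the LHS of $(\star)$ is at least $\delta\Breg_2(\bar\bsbX\hat{\bar\bsbb},\bar\bsbX\bar\bsbb^*)$. The difference $\hat{\bar\bsbb}-\bar\bsbb$ has at most $2q_\beta$ nonzero $\bsbb$-coordinates and $2q_\gamma$ nonzero $\bsbg$-coordinates, so $\bar\bsbX(\hat{\bar\bsbb}-\bar\bsbb)$ lies in the union $\Gamma$ of Euclidean unit balls in the column spaces of $[\bsbX_{\mathcal J_1},\bsbI_{\mathcal J_2}]$ over $|\mathcal J_1|\le 2q_\beta,\,|\mathcal J_2|\le 2q_\gamma$. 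Set $T:=\sup_{\bm\alpha\in\Gamma}\langle\bsbeps,\bm\alpha\rangle$; then Cauchy--Schwarz applied to $\langle\bsbeps,\bar\bsbX(\hat{\bar\bsbb}-\bar\bsbb)\rangle\le T\,\|\bar\bsbX(\hat{\bar\bsbb}-\bar\bsbb)\|_2$, the triangle bound $\|\bar\bsbX(\hat{\bar\bsbb}-\bar\bsbb)\|_2\le\|\bar\bsbX(\hat{\bar\bsbb}-\bar\bsbb^*)\|_2+\|\bar\bsbX(\bar\bsbb-\bar\bsbb^*)\|_2$, and Young's inequality with a parameter proportional to $\delta$ absorb a small fraction of $\Breg_2(\bar\bsbX\hat{\bar\bsbb},\bar\bsbX\bar\bsbb^*)$ back into the LHS. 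A second application of the regularity hypothesis to $(\bar\bsbb,\bar\bsbb^*)$ converts the residual $\Breg_2(\bar\bsbX\bar\bsbb,\bar\bsbX\bar\bsbb^*)$ into $\breg_l(\bar\bsbX\bar\bsbb,\bar\bsbX\bar\bsbb^*)/\delta$, and elementary book-keeping produces the deterministic bound $\Breg_2(\bar\bsbX\hat{\bar\bsbb},\bar\bsbX\bar\bsbb^*)\lesssim\breg_l(\bar\bsbX\bar\bsbb,\bar\bsbX\bar\bsbb^*)/\delta+T^2/\delta^2$.

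\textbf{Step 3 (Orlicz maximal inequality -- main obstacle).} It remains to control $\EE T^2$ under only the Orlicz assumption $\|\langle\bsbeps,\bm\alpha\rangle\|_\psi\le\sigma\|\bm\alpha\|_2$. A volumetric argument gives $\log N(1/2,\Gamma,\|\cdot\|_2)\le C\{q_\beta\log(ep/q_\beta)+q_\gamma\log(en/q_\gamma)\}=:CP_{\mathrm{comp}}$, because there are at most $\binom{p}{2q_\beta}\binom{n}{2q_\gamma}$ sparse ambient subspaces of dimension at most $2q_\beta+2q_\gamma$, each covered by $\le 6^{2q_\beta+2q_\gamma}$ balls of radius $1/2$. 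The generic Orlicz maximal inequality $\|\max_{i\le N}X_i\|_\psi\le K\psi^{-1}(N)\max_i\|X_i\|_\psi$ (Lemma~2.2.2 of van der Vaart and Wellner), whose proof exploits the hypothesis $\limsup_{x,y\to\infty}\psi(x)\psi(y)/\psi(cxy)<\infty$, then bounds the supremum of $\langle\bsbeps,\cdot\rangle$ over the $1/2$-net by $\sigma\,\psi^{-1}(\exp\{cP_{\mathrm{comp}}\})$, and a short standard discretization step lifts this to $T$ itself. Finally $\psi(x)\ge cx^2$ upgrades the $\psi$-bound to a second moment, giving $\EE T^2\lesssim\sigma^2[\psi^{-1}(\exp\{cP_{\mathrm{comp}}\})]^2$, with the additive $\sigma^2$ in the statement originating from $\psi^{-1}$ evaluated at a constant when $P_{\mathrm{comp}}$ is of order one. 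Step~3 is the hardest: for heavy-tailed $\psi$ the usual Dudley-type chaining used in the sub-Gaussian proof may waste a multiplicative factor, and the stated regularity of $\psi$ is exactly what allows the one-scale argument to remain sharp across the discretization, recovering the sub-Gaussian rate of Theorem~\ref{thm:global_rate} when $\psi(x)=e^{x^2}-1$.
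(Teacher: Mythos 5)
Your proposal follows essentially the same route as the paper's proof: the same optimality inequality, the same absorption of the stochastic term via the regularity condition and Young's inequality, and—crucially—the same single-scale discretization of the sparse union of subspaces combined with the finite-class Orlicz maximal inequality (Lemma 2.2.2 of van der Vaart and Wellner) in place of Dudley chaining, which is exactly how the paper circumvents the possibly divergent entropy integral for a general $\psi$. The only cosmetic difference is that you handle the residual $\Breg_2(\bar\bsbX\bar\bsbb,\bar\bsbX\bar\bsbb^*)$ term slightly more explicitly via a second application of the regularity condition, which is consistent with (and arguably tidier than) the paper's bookkeeping.
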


\begin{proof}
By definition, for any    $\bar\bsbb: \| \bsbb\|_0\le q_\beta, \| \bsbg\|_0\le q_\gamma$,   $l(\bar\bsbX \hat {\bar \bsbb}) \le l(\bar\bsbX   {\bar \bsbb}) $ and so
\begin{align}
\breg_l(\bar \bsbX\hat {\bar \bsbb}, \bar \bsbX  {\bar \bsbb}^*) \le \breg_l(\bar \bsbX  {\bar \bsbb}, \bar \bsbX  {\bar \bsbb}^*) + \langle \bsb{\epsilon}, \bar \bsbX(\hat {\bar \bsbb} -   {\bar \bsbb})/\| \bar \bsbX(\hat {\bar \bsbb} -   {\bar \bsbb})\|_2)\rangle \|\bar \bsbX(\hat {\bar \bsbb} -   {\bar \bsbb})\|_2.  \label{generalnoisebasicineq}
\end{align}

Define $ \Gamma(q_\beta, q_\gamma)= \{   \bsb{\theta}\in \mathbb R^n: \|\bsb{\theta}\|_2 \le 1, \bsb{\theta} =\bsbX \bsbb + \bsbg \mbox{ for some }   \| \bsbb\|_0\le q_\beta, \| \bsbg\|_0\le q_\gamma  \}$ or $\Gamma$ for short when there is no ambiguity. For notational convenience, we  bound $\sup_{\bsb{\theta} \in \Gamma(q_\beta, q_\gamma)} \langle \bsbeps, \bsb{\theta} \rangle$.
The difficulty lies in   the possible \textit{divergence}  of the entropy integral for  a general $\psi$. To conquer this, we apply discretization and make use of the finiteness of the number of range spaces defined by $\Gamma$.

Recall that   $\bsb{\theta}\in \Gamma(q_\beta, q_\gamma)$ means $\bsb{\theta} = \bar\bsbX \bar\bsbb\in \Proj_{\bar\bsbX_{\mathcal J(\bar \bsbb)}}$. Obviously, if  $\bar \bsbb$ has degenerate zeros: $\|\bsbb\|_0< q_\beta$ and/or $\|\bsbg\|_0< q_\gamma$,  the range of $\bar \bsbX_{_{\mathcal J(\bar \bsbb)}}$ is always  included in a column subspace of  $\bar \bsbX$  indexed by exactly $q_\beta$ columns in $\bsbX$ and $q_\gamma$ columns in $\bsbI_{n\times n}$, and so we just need to focus on the $\bar{\bsbb}$'s with  $J(\bsbb) = q_\beta$ and $J(\bsbg)  = q_\gamma$.    To use  an $\varepsilon$-net $\dot \Gamma$ to discretize $\Gamma$ such that for any $\bsb{\theta}\in \Gamma$, there exists $\dot {\bsb{\theta}}\in \dot \Gamma\subset \Gamma$ satisfying   $\|\bsb{\theta} - \dot {\bsb{\theta}}\|_2 \le \varepsilon$, we include all nondegenerate subspaces and apply  a standard volume argument. Then for any $0<\epsilon<1$, the covering number $\mathcal N(\varepsilon , \Gamma, \|\cdot \|_2)$ is bounded by
$$
\mathcal N(\varepsilon, \Gamma, \|\cdot \|_2) \le {p \choose q_\beta} {n \choose q_\gamma} \Big(\frac{3}{\varepsilon}\Big)^{r(\bsbX)\wedge (q_\beta + q_\gamma)}.
$$
Furthermore, the construction shows that for     $\bsb{\theta}\in \Gamma$ and     $\bsb{\theta} = \bar\bsbX \bar\bsbb$ with no degeneracy in $\bar\bsbb$,    $\dot{\bsb{\theta}}$ is also in the subspace determined by $ \mathcal P:=\Proj_{\bar\bsbX_{\mathcal J(\bar \bsbb)}}$. Since  $\mathcal P \bsba \in \bsbGamma$ for any $\|\bsba\|_2\le 1$  we have
\begin{align*}
\langle \bsbeps, \bsb{\theta} \rangle &= \langle \bsbeps, \bsb{\theta}  - \dot {\bsb{\theta}} \rangle +        \max_{\dot{\bsb{\theta}} \in \dot \Gamma} \langle \bsbeps,   \dot {\bsb{\theta}}\rangle    \\
& =  \varepsilon\langle \bsbeps,  {\mathcal P}(\bsb{\theta}  - \dot {\bsb{\theta}})/\varepsilon \rangle+    \max_{\dot{\bsb{\theta}} \in \dot \Gamma} \langle \bsbeps,   \dot {\bsb{\theta}}\rangle\\
&\le  \varepsilon\sup_{\bsb{\theta}\in \Gamma}\langle \bsbeps, \bsb{\theta}\rangle +    \max_{\dot{\bsb{\theta}} \in \dot \Gamma} \langle \bsbeps,   \dot {\bsb{\theta}}\rangle,
\end{align*}
and so
   $ (1-\varepsilon) \sup_{\bsb{\theta}\in \Gamma} \langle \bsbeps, \bsb{\theta}  \rangle \le  \max_{\dot \Gamma}\langle \bsbeps,   \dot {\bsb{\theta}}\rangle$. We can take say $\varepsilon = 0.5$ to turn to a finite-class problem instead of using the entropy integral.

Indeed, by an extension of Massart's finite class lemma (cf. Lemma 2.2.2 in \cite{van1996weak}) and Stirling's formula, we obtain
$$
\| \sup_{\bsb{\theta}\in \Gamma( 2q_\beta  ,  2q_\gamma )} \langle \bsbeps, \bsb{\theta}  \rangle \|_{\psi} \le C   \psi^{-1}\big(\exp\{c (q_\gamma\log\frac{en}{q_\gamma}+q_\beta \log\frac{ep}{ q_\beta})\}\big)  \sigma
$$
where $ C, c$ are constants depending on $\psi$ only.

Based on \eqref{generalnoisebasicineq} and  the regularity condition,
$$
\frac{\delta}{4}\|\bar \bsbX(\hat {\bar \bsbb} -   {\bar \bsbb}^*)\|_2^2 \le \breg_l(\bar \bsbX\hat {\bar \bsbb}, \bar \bsbX  {\bar \bsbb}^*) -\frac{\delta}{4}\|\bar \bsbX(\hat {\bar \bsbb} -   {\bar \bsbb}^*)\|_2^2\le \breg_l(\bar \bsbX  {\bar \bsbb}, \bar \bsbX  {\bar \bsbb}^*) + \frac{1}{\delta} (\sup_{\bsb{\theta}\in \Gamma(2q_\beta, 2q_\gamma)} \langle \bsbeps, \bsb{\theta}  \rangle   )^2.
$$
By assumption, $\{\EE [ (\sup_{\bsb{\theta}\in \Gamma(2q_\beta, 2q_\gamma)} \langle \bsbeps, \bsb{\theta}  \rangle   )^2]\}^{1/2}\lesssim   \| \sup_{\bsb{\theta}\in \Gamma(2q_\beta, 2q_\gamma)} \langle \bsbeps, \bsb{\theta}  \rangle \|_{\psi}  $. Hence
$$
\EE [\Breg_2 (\bar \bsbX \hat {\bar \bsbb}, \bar \bsbX \bar \bsbb^*  ) ]
\lesssim \frac{1}{\delta}\EE\breg_l(\bar \bsbX   {\bar \bsbb}, \bar \bsbX \bar \bsbb^*  )+\frac{1}{\delta^2} \{[\sigma^2\psi^{-1}(\exp\{c q_{\gamma}\log\frac{en}{q_{\gamma}}+c q_{\beta}  \log\frac{ep}{ q_{\beta}}\})]^2\}.
$$
The proof is complete.
\end{proof}

\begin{remark}
The oracle inequality implies that even if $\bar \bsbb^*$ has  $\|\bm\beta^*\|_0 \gg q_\beta,   \|\bm\gamma^*\|_0 \gg q_\gamma$, as along as  $  \bar \bsbb^*$ can be well approximated by some  $ {\bar \bsbb}$: $\|\bm\beta\|_0 \le q_\beta,   \|\bm\gamma\|_0 \le q_\gamma$ in the sense that the bias
$\EE\breg_l(\bar \bsbX   {\bar \bsbb}, \bar \bsbX \bar \bsbb^*  )$ is controlled by $\sigma^2 \big[\psi^{-1}\big(\exp\{c( q_\gamma\log\frac{en}{ q_\gamma}+ q_\beta  \log\frac{ep}{ q_\beta})\}\big)\big]^2/\delta$ up to a multiplicative constant, the prediction risk  bound is of the order  $\sigma^2/\delta^2$ times
\begin{align}  \Big[\psi^{-1}\big(\exp\big\{c( q_\gamma\log\frac{en}{ q_\gamma}+ q_\beta  \log\frac{ep}{ q_\beta})\big\}\big)\Big]^2.\end{align}
  This applicability to approximately sparse signals is quite useful in  reality.
\end{remark}
\begin{remark}\label{rmk:genbp}
Taking a specific $\bar \bsbb = \bar\bsbb^*$ in \eqref{fixed-point-rate-gennoise} and assuming the associated regularity condition holds,  the error bound    depends on $\bsbg^*$ through its support size only, thereby  \emph{finite} regardless of its magnitude or outlyingness. This  gives a risk-based    breakdown point result  that accounts for the randomness of the estimators.

Concretely, to define a general stochastic breakdown point, we  fix    the true systematic component   $ \bsbX \bsbb $, but can freely alter   the response   $\bsby$   in
  \begin{align}\label{eq:bpyvaryingspace}
  \mathcal Y (o) = \{\bsby: \|\bsbg\|_0 \le o, \| \bsbeps\|_{\psi}<+\infty  \}
 \end{align} with $o\in \mathbb N\cup \{0\}$, where   $\bsbeps = -\nabla l (\bsbX \bsbb  +\bsbg; \bsby)$.
Given any estimator   $(\hat { \bsbb} , \hat \bsbg)$  (that implicitly depends on the data $\bsbX, \bsby$),  its  finite-sample breakdown point $\epsilon^*$ can be defined by
 \begin{align} \frac{1}{n} \min\{o: \sup_{\bsby \in\mathcal Y(o)  }   \EE [\Breg_2 (\bar \bsbX \hat {\bar \bsbb}, \bar \bsbX \bar \bsbb   ) ] = +\infty\}.\label{sbp-1}
\end{align}
Then, in the setup of Theorem \ref{theorem:generalnoise}, the estimator given $q_\beta, q_\gamma$ has the stochastic breakdown point
\begin{align}\epsilon^* \ge (q_\gamma+1)/n. \label{bplowerbound}
\end{align}
For example, for   $  \hat {\bar \bsbb} = \arg\min_{ \bar{\bm\beta}:\|\bm\beta\|_0 \le q_\beta,   \|\bm\gamma\|_0 \le q_\gamma } l(\bar{\bm X} \bar{\bm\beta}; \bm y)$
 with a quadratic  $l(\bsbe; \bsby) = \| \bsby - \bsbe\|_2^2/2$, the contamination model associated with $\mathcal Y (o)$ is   $\bsby = \bsbX \bsbb    + \bsbg + \bsbeps$
subject to  $\|\bsbg\|_0 \le o$,  where the nonzero components of $\bsbg$ can be arbitrarily large, but even   in high dimensions the breakdown point of $ \hat {\bar \bsbb}$ is no lower than $(q_\gamma+1)/n$. The direct link between the $\ell_0$-constraint and  breakdown   point facilitates parameter tuning.
\end{remark}

Statistically speaking, obtaining a precise error rate  $\big[\psi^{-1}\big(\exp\{c( q_\gamma\log\frac{en}{ q_\gamma}+ q_\beta  \log\frac{ep}{ q_\beta})\}\big)\big]^2$   in Theorem \ref{theorem:generalnoise} is much more informative than simply knowing that the risk is finite for the purpose of breakdown studies. However, it is an interesting question to determine     what relaxed conditions    $\bsbeps$ should satisfy to guarantee \eqref{bplowerbound}  for a general \textit{extended} real-valued function $l:  \mathbb R^n \rightarrow \mathbb R\cup\{+\infty\}$.   Toward this, we change the $\Breg_2$ in \eqref{sbp-1} to the generalized Bregman $\breg_l$:  $\epsilon^*=   \min\{o: \sup_{\bsby \in\mathcal Y(o)  }   \EE [\breg_l (\bar \bsbX \hat {\bar \bsbb}, \bar \bsbX \bar \bsbb   ) ] = +\infty\}/n$,
 and make a no-model-ambiguity  assumption: $ l$ is differentiable at    $\bar \bsbX \bar \bsbb^* \in D=\{\bsbeta: l(\bsbeta)<+\infty\} $ with the gradient  $   \nabla l(\bar \bsbX \bar \bsbb^*  )=-\bm\epsilon$,  $\bar \bsbX \bar \bsbb^* $ is a finite optimal solution  to the \textit{Fenchel conjugate}   when $\bsbzeta = - \bsbeps$:
\begin{align}
l^*(\bsbzeta) = \sup_{\bsbe\in \mathbb R^n} \langle \bsbzeta ,\bsbe \rangle - l(\bsbe),\label{f-conj}
\end{align}
and the extended real-valued convex function $l^*$ is differentiable at $-\bsbeps$. The assumption simply means that $(\bar \bsbX \bar \bsbb^* , -\bsbeps)$ makes a conjugate pair. Note that   $l$  need not be overall strictly convex, especially when $D$ is compact, according to Danskin's min-max theorem \citep{Bertsekas1999book}. Then we have the following conclusion.
\begin{thm}\label{theorem:bregBP}
Under $\EE [\breg_{l^*}(\bsbeps, -\bsbeps)]<M<+\infty$ for all $\bsbg^*: \|\bsbg^*\|_0\le q$, the   $\breg_l$-risk based   breakdown point   for    any finite solution   $  \hat {\bar \bsbb}$ to $\min_{ \bar{\bm\beta}:   \|\bm\gamma\|_0 \le q } l(\bar{\bm X} \bar{\bm\beta}; \bm y)$    satisfies $\epsilon^* \ge (q+1)/n$.
\end{thm}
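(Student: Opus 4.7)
The plan is to combine the Fenchel--Young inequality with the conjugate-pair identity satisfied by $(\bar{\bm X}\bar{\bm\beta}^{*},-\bm\epsilon)$, using the optimality of $\hat{\bar{\bm\beta}}$ against $\bar{\bm\beta}^{*}$ as a feasible competitor whenever $\|\bm\gamma^{*}\|_{0}\le q$. Fix any $\bm\gamma^{*}$ with $\|\bm\gamma^{*}\|_{0}\le o\le q$. Since $\bar{\bm\beta}^{*}$ is then feasible, the minimality of $\hat{\bar{\bm\beta}}$ yields $l(\bar{\bm X}\hat{\bar{\bm\beta}})\le l(\bar{\bm X}\bar{\bm\beta}^{*})$. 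Differentiability of $l$ at $\bar{\bm X}\bar{\bm\beta}^{*}$ with $\nabla l(\bar{\bm X}\bar{\bm\beta}^{*})=-\bm\epsilon$ gives the generalized Bregman decomposition
$$\breg_{l}(\bar{\bm X}\hat{\bar{\bm\beta}},\bar{\bm X}\bar{\bm\beta}^{*}) = \bigl[l(\bar{\bm X}\hat{\bar{\bm\beta}})-l(\bar{\bm X}\bar{\bm\beta}^{*})\bigr] + \langle\bm\epsilon,\bar{\bm X}\hat{\bar{\bm\beta}}-\bar{\bm X}\bar{\bm\beta}^{*}\rangle.$$

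Next I would bound the inner product term. Fenchel--Young applied at $\hat{\bar{\bm\beta}}$ gives $\langle\bm\epsilon,\bar{\bm X}\hat{\bar{\bm\beta}}\rangle\le l(\bar{\bm X}\hat{\bar{\bm\beta}})+l^{*}(\bm\epsilon)$, while the no-model-ambiguity assumption that $(\bar{\bm X}\bar{\bm\beta}^{*},-\bm\epsilon)$ is a conjugate pair furnishes the exact equality $-\langle\bm\epsilon,\bar{\bm X}\bar{\bm\beta}^{*}\rangle = l(\bar{\bm X}\bar{\bm\beta}^{*})+l^{*}(-\bm\epsilon)$. Adding these,
$$\langle\bm\epsilon,\bar{\bm X}\hat{\bar{\bm\beta}}-\bar{\bm X}\bar{\bm\beta}^{*}\rangle \le l(\bar{\bm X}\hat{\bar{\bm\beta}}) + l(\bar{\bm X}\bar{\bm\beta}^{*}) + l^{*}(\bm\epsilon)+l^{*}(-\bm\epsilon).$$

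The key algebraic step is to rewrite $l^{*}(\bm\epsilon)+l^{*}(-\bm\epsilon)$ in terms of $\breg_{l^{*}}(\bm\epsilon,-\bm\epsilon)$ and the reference quantities. Since $l^{*}$ is differentiable at $-\bm\epsilon$ with $\nabla l^{*}(-\bm\epsilon)=\bar{\bm X}\bar{\bm\beta}^{*}$, the definition of the generalized Bregman gives $l^{*}(\bm\epsilon) = \breg_{l^{*}}(\bm\epsilon,-\bm\epsilon) + l^{*}(-\bm\epsilon) + 2\langle\bar{\bm X}\bar{\bm\beta}^{*},\bm\epsilon\rangle$. Substituting $l^{*}(-\bm\epsilon) = -l(\bar{\bm X}\bar{\bm\beta}^{*})-\langle\bm\epsilon,\bar{\bm X}\bar{\bm\beta}^{*}\rangle$ from the conjugate equality and simplifying cleanly cancels the inner-product cross-terms, producing $l^{*}(\bm\epsilon)+l^{*}(-\bm\epsilon) = \breg_{l^{*}}(\bm\epsilon,-\bm\epsilon) - 2 l(\bar{\bm X}\bar{\bm\beta}^{*})$. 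Plugging this back,
$$\breg_{l}(\bar{\bm X}\hat{\bar{\bm\beta}},\bar{\bm X}\bar{\bm\beta}^{*}) \le 2\bigl[l(\bar{\bm X}\hat{\bar{\bm\beta}})-l(\bar{\bm X}\bar{\bm\beta}^{*})\bigr] + \breg_{l^{*}}(\bm\epsilon,-\bm\epsilon) \le \breg_{l^{*}}(\bm\epsilon,-\bm\epsilon),$$
where the last step invokes the optimality inequality. Taking expectations and the assumed uniform bound $\mathbb{E}\breg_{l^{*}}(\bm\epsilon,-\bm\epsilon)<M$, one obtains $\mathbb{E}\breg_{l}(\bar{\bm X}\hat{\bar{\bm\beta}},\bar{\bm X}\bar{\bm\beta}^{*})<M<+\infty$ uniformly over $\bsby\in\mathcal Y(o)$ with $o\le q$, which yields $\epsilon^{*}\ge(q+1)/n$ directly from the definition of the stochastic breakdown point.

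The main obstacle I anticipate is purely at the level of bookkeeping rather than analytic depth: because $l$ is extended real-valued and need not be strictly convex, one must be careful that each application of Fenchel--Young is valid (finite on both sides) and that the conjugate equality really holds at $(\bar{\bm X}\bar{\bm\beta}^{*},-\bm\epsilon)$. The no-ambiguity hypothesis---that $\bar{\bm X}\bar{\bm\beta}^{*}$ attains $l^{*}(-\bm\epsilon)$ and that $l^{*}$ is differentiable at $-\bm\epsilon$---is precisely what licenses the cancellation in the algebraic step above, so the rest of the argument is essentially a one-line consequence of optimality and the uniform Bregman-moment assumption on the effective noise.
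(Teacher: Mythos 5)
Your proof is correct and arrives at exactly the paper's key inequality $\bm\Delta_l(\bar{\bm X}\hat{\bar{\bm\beta}},\bar{\bm X}\bar{\bm\beta}^*)\le \bm\Delta_{l^*}(\bm\epsilon,-\bm\epsilon)$, but by a somewhat different algebraic route. The paper introduces the shifted function $h(\bm\delta)=\bm\Delta_l(\bm\delta+\bm\eta^*,\bm\eta^*)$, applies a parametrized Fenchel--Young inequality $\langle\bm\epsilon,\bm\delta\rangle\le h(\bm\delta)/c+h^*(c\bm\epsilon)/c$, computes $h^*(\bm\zeta)=\bm\Delta_{l^*}(\bm\zeta-\bm\epsilon,-\bm\epsilon)$, and sets $c=2$; you instead apply plain Fenchel--Young to $l$ at the point $\bm\epsilon$, use the exact conjugate equality at $(-\bm\epsilon,\bar{\bm X}\bar{\bm\beta}^*)$, and then convert $l^*(\bm\epsilon)+l^*(-\bm\epsilon)$ into $\bm\Delta_{l^*}(\bm\epsilon,-\bm\epsilon)-2l(\bar{\bm X}\bar{\bm\beta}^*)$ via the Bregman identity for $l^*$ at $-\bm\epsilon$. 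The ingredients are identical (Fenchel--Young, the conjugate-pair hypothesis, differentiability of $l^*$ at $-\bm\epsilon$), and the two computations are dual views of the same cancellation; your version avoids the auxiliary conjugate $h^*$ and is arguably more transparent, while the paper's free parameter $c$ would in principle let one trade the constant in front of $\bm\Delta_l$ against the argument of $\bm\Delta_{l^*}$. Two bookkeeping points you correctly flag but should make explicit: $l(\bar{\bm X}\hat{\bar{\bm\beta}})$ is finite because $l(\bar{\bm X}\hat{\bar{\bm\beta}})\le l(\bar{\bm X}\bar{\bm\beta}^*)<+\infty$, and $l^*(\bm\epsilon)<+\infty$ almost surely because $\mathbb E[\bm\Delta_{l^*}(\bm\epsilon,-\bm\epsilon)]<M$ forces $\bm\Delta_{l^*}(\bm\epsilon,-\bm\epsilon)$, hence $l^*(\bm\epsilon)$, to be finite a.s.; with these in hand every identity you use is a genuine equality of finite reals and the argument closes.
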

In the special case that  $l$ is strongly convex on $\mathbb R^n$ with respect to a certain norm, as long as the effective noise has the second moment, the theorem recovers    \eqref{bplowerbound}  since   $l^*$ can be shown to be strongly smooth with respect to the Euclidean norm \citep{Rockafellar1970} (albeit not delivering  a concrete  error rate as before). Two other notable features apart from the randomness:  $l$ is   a general extended real-valued function (not restricted to  be a  function of $\bsby - \bsbe$  as in M-estimation), and its conjugate plays an important role in bounding the risk; also, we do not need to assume a positive norm penalty which is a key element to the proof of Proposition 1 in \cite{alfons2013sparse}.

\begin{proof}
First, we show that
\begin{align}
\breg_l(\bar\bsbX  \hat {\bar\bsbb}, \bar\bsbX    {\bar\bsbb^*}) \le \breg_{l^*}(\bsbeps, -\bsbeps). \label{genbregbyFY}
\end{align}
By definition, $l(\bar\bsbX  \hat {\bar\bsbb}) \le l( \bar\bsbX    {\bar\bsbb^*})$, from which it follows that $\breg_l(\bar\bsbX  \hat {\bar\bsbb}, \bar\bsbX    {\bar\bsbb^*}) \le \langle \bsbeps, \bar\bsbX  \hat {\bar\bsbb}- \bar\bsbX    {\bar\bsbb^*} \rangle$. Define  $\bsb{\eta}^* =  \bar\bsbX    {\bar\bsbb^*}$ and
$$
h(\bsbdelta) = \breg_l(\bsbdelta+\bsb{\eta}^* , \bsb{\eta}^*).
$$
By assumption, $l$ is a proper function and applying Fenchel-Young's inequality gives
\begin{align*}
 \breg_l(\bar\bsbX  \hat {\bar\bsbb}, \bar\bsbX    {\bar\bsbb^*}) \le \langle \bsbeps, \bsbdelta \rangle|_{\bsbdelta = \bar\bsbX  \hat {\bar\bsbb}- \bar\bsbX    {\bar\bsbb^*}}\le \frac{1}{c}\breg_l(\bsbdelta+\bar\bsbX    {\bar\bsbb^*} , \bar\bsbX    {\bar\bsbb^*})|_{\bsbdelta = \bar\bsbX  \hat {\bar\bsbb}- \bar\bsbX    {\bar\bsbb^*}}  + \frac{1}{c}h^*(c \bsbeps)
\end{align*}
or $(1-1/c)\breg_l(\bar\bsbX  \hat {\bar\bsbb}, \bar\bsbX    {\bar\bsbb^*})\le h^*(c \bsbeps)/c $
for any $c>0$.

On the other hand,
\begin{align*}
h^*(\bsbzeta) &= \sup_\bsbdelta \langle \bsbzeta, \bsbdelta \rangle  - l(\bsb{\eta}^*+\bsbdelta)
 + l(\bsb{\eta}^*) + \langle \nabla l(\bsbeta^*), \bsbdelta\rangle \\
 &= \sup_\bsbdelta \langle \bsbzeta+\nabla l(\bsbeta^*), \bsbdelta \rangle  - l(\bsb{\eta}^*+\bsbdelta)
 + l(\bsb{\eta}^*) \\
 &= \sup_\bsbdelta \langle \bsbzeta+\nabla l(\bsbeta^*),\bsb{\eta}^*+ \bsbdelta \rangle  - l(\bsb{\eta}^*+\bsbdelta)
 + l(\bsb{\eta}^*)-  \langle \bsbzeta+\nabla l(\bsbeta^*),\bsb{\eta}^* \rangle \\
 &= l^{*} (\bsbzeta+\nabla l(\bsbeta^*))
 + l(\bsb{\eta}^*)- \langle \bsbzeta+\nabla l(\bsbeta^*),\bsb{\eta}^* \rangle,
 \end{align*}
 where $ l(\bsb{\eta}^*),  \nabla l(\bsbeta^*)$ are known to be finite.
 Using the optimality of $\bsbeta^*$, we have further
 \begin{align*}
h^*(\bsbzeta) &= l^{*} (\bsbzeta+\nabla l(\bsbeta^*))
 - l^{*}(-\bsbeps)- \langle \bsbzeta,\bsb{\eta}^*\rangle.
\end{align*}
Moreover, from the assumption and definition \eqref{f-conj}, it is easy to show that $ \bsb{\eta}^*\in \partial l^{*}(-\bsbeps)$, and so   $\nabla l^{*}(-\bsbeps)=\bsb{\eta}^*$, from which it follows that
\begin{align}
h^*(\bsbzeta) = \breg_{l^{*}} (\bsbzeta-\bsbeps, -\bsbeps).
\end{align}
Taking $c=2$  gives \eqref{genbregbyFY}. The rest of the lines  follow Remark \ref{rmk:genbp}.
\end{proof}

\subsection{More simulations} \label{sec:add_exp}
In this part, we present more experiment results by varying the correlation strength, covariance structure and sparsity level.

Concretely, in Example 1 with $n=1000, p=10, o^*=100$, we changed the correlation strength $\rho$ from 0.2 to 0.8. In Example 2 with $n=1000, p=10, o^*=120$, three possible covariance structures for $\bm\Sigma$ were included, Toeplitz structure $[\rho^{|i-j|}]$, equally correlated structure $[\rho \mbox{1}_{i\neq j}]$, and blocked structure $\mbox{diag}\{ \bm\Sigma_0, \bm\Sigma_1\}$, with two blocks of equal size and the predictors within each block equally correlated with  $\rho=0.5$. In Example 3 and Example 4, where $n = 200, p = 1000, o^*=30$,  we considered different sparsity levels. The later steps of introducing high-leveraged outliers remain the same. The results are summarized in Table \ref{table:new1}, Table \ref{table:new2}, and Table \ref{table:new3}.
 The overall comparison evidently supports PIQ as an extremely resistant and efficient method.

\begin{table}[!ht]
  \setlength{\tabcolsep}{3pt}\centering\footnotesize

{\footnotesize 
  \caption{\small Example 1 with different correlation strengths. \label{table:new1}
}

\vspace{.1in}

  \begin{tabular}{l cccc c cccc c cccc c cccc}
  \hline
  & \multicolumn{4}{c}{$\rho=0.2$} && \multicolumn{4}{c}{$\rho=0.4$} && \multicolumn{4}{c}{$\rho=0.6$} && \multicolumn{4}{c}{$\rho=0.8$}\\
  \cmidrule(lr){2-5} \cmidrule(lr){7-10} \cmidrule(lr){12-15} \cmidrule(lr){17-20}
              & Err & M & JD & \textbf{T} && Err & M & JD & \textbf{T} && Err & M & JD &  \textbf{T} && Err & M & JD & \textbf{T}\\
  \hline
  S            & 0.17 & 49  & 38 & 0.1 &  & 0.06 & 7.3 & 76 & 0.1 &  &  0.08 & 0.3 & 80 & 0.1 &  & 0.12 & 0.3 & 80&0.1\\
  LTS          & 0.11 & 36  & 44 & 0.2 &  & 0.04 & 9.1 & 64 & 0.2 &  &  0.03 & 0.5 & 78 & 0.2 &  & 0.06 & 0.2 & 82&0.2\\
RLARS  & 0.23 & 84  &  0 & 0.9 &  & 0.22 & 81  &  0 & 0.9 &  &  0.23 & 74  & 0  & 0.7 &  & 0.29 & 59  & 0 &0.7 \\
  PENSE        & 0.25 & 85  &  0 & 13  &  & 0.24 & 82  &  0 & 13  &  &  0.21 & 62  & 16 & 13  &  & 0.11 & 0.2 & 88&13 \\
  \textbf{PIQ} & 0.02 & 0.20& 90 & 0.1 &  & 0.02 & 0.2 & 90 & 0.1 &  &  0.02 & 0.2 & 90 & 0.1 &  & 0.06 & 0.2 & 88&0.1\\
  \hline
  \end{tabular}
}
\end{table}

\begin{table}[!ht]
  \setlength{\tabcolsep}{3pt}\centering\footnotesize

{\footnotesize 
  \caption{\small Example 2 with different covariance structures. \label{table:new2}
}

\vspace{.1in}

  \begin{tabular}{l cccc c cccc c cccc}
  \hline
  & \multicolumn{4}{c}{Toeplitz structure} && \multicolumn{4}{c}{Blocked structure} &&    \multicolumn{4}{c}{Equally correlated}\\
  \cmidrule(lr){2-5} \cmidrule(lr){7-10} \cmidrule(lr){12-15}
             & Err &M & JD & \textbf{T} && Err & M & JD & \textbf{T}  && Err & M & JD & \textbf{T}\\
  \hline
  B-Y          & 0.29 & 100 & 0   &0.6  && 0.34 &  97 & 0   &0.8  && 0.34 & 97 & 0  & 0.7 \\
  QLE          & 0.29 & 100 & 0   &0.1  && 0.33 &  97 & 0   &0.1  && 0.33 & 97 & 0  & 0.1 \\
  TLE          & 0.11 &  16 & 84  &6.0  && 0.07 &  0  & 100 &7.0  && 0.07 & 0  & 100& 6.0 \\
  \textbf{PIQ} & 0.07 &  0  & 100 &0.1  && 0.06 &  0  & 100 &0.1  && 0.06 & 0  & 100& 0.1 \\

  \hline
  \end{tabular}
}
\end{table}

\begin{table}[!ht]
  \setlength{\tabcolsep}{3pt}\centering \footnotesize

{\footnotesize 
\caption{\small Example 3 and Example 4 with different sparsity levels.} \label{table:new3}
\vspace{.1in}

  \begin{tabular}{l ccccccc c ccccccc}
  \hline
   & \multicolumn{15}{c}{\textbf{Regression}}\\

  & \multicolumn{7}{c}{$s^*=2$}  & \multicolumn{7}{c}{$s^*=4$}   \\ \cmidrule(l){2-8} \cmidrule(l){10-16}
    & Err  & M$^{\gamma}$ & JD$^{\gamma}$ & M$^{\beta}$ & FA$^{\beta}$ & JD$^{\beta}$ & \textbf{T} & & Err  & M$^{\gamma}$ & JD$^{\gamma}$ & M$^{\beta}$ & FA$^{\beta}$ & JD$^{\beta}$ & \textbf{T}\\
  \hline
  QL          & 0.70 & 79 & 0  & 4  & 0.2  & 92 & 44   && 0.89 &  81 & 0  &  33 & 0.2  &  8  & 45  \\
  S-MTE       & 0.58 & 77 & 0  & 0  & 0.2  & 100& 35   && 0.82 &  79 & 0  &  28 & 0.2  & 26  &42 \\
 RLARS    & 1.24 & 88 & 0  & 50 & 0.1  & 0  & 40   && 1.52 &  91 & 0  &  58 & 0.3  &  0  & 41 \\

  S-LTS       & 0.93 & 62 & 18 & 32 & 1.6  & 36 & 134  && 1.24 &  79 & 10 &  29 & 1.4  &  16 &  132\\
  \textbf{PIQ}& 0.15 & 2  & 76 & 3  & 0.1  & 94 & 4    && 0.34 & 3 &  78 & 16 & 0.3 &  46 & 4  \\
  \midrule[1pt]
  & \multicolumn{7}{c}{$s^*=6$}  & \multicolumn{7}{c}{$s^*=8$}   \\ \cmidrule(l){2-8} \cmidrule(l){10-16}
      & Err  & M$^{\gamma}$ & JD$^{\gamma}$ & M$^{\beta}$ & FA$^{\beta}$ & JD$^{\beta}$ & \textbf{T} & & Err  & M$^{\gamma}$ & JD$^{\gamma}$ & M$^{\beta}$ & FA$^{\beta}$ & JD$^{\beta}$ & \textbf{T}\\
  \hline
  QL          & 1.14 &  81 & 0 & 36 & 0.2 & 0  & 51    && 1.59 & 86 & 0 & 42 & 0.1 &  0 & 52   \\
  S-MTE       & 0.98 &  79 & 0 & 29 & 0.2 & 4  & 49    && 1.34 & 83 & 0 & 31 & 0.2 &  0 & 60  \\
RLARS      & 1.21 &  85 & 0 & 40 & 0.4 & 0  & 41    && 1.27 & 83 & 0 & 32 & 0.6 &  2 & 41  \\
  S-LTS       & 1.29 &  77 & 22& 25 & 1.0 & 20 & 134   && 1.55 & 90 & 4 & 23 & 1.0 &  8 & 144  \\
  \textbf{PIQ}& 0.64 &  21 & 50 & 20 & 0.4 &18 & 4     && 1.25 & 33 & 34& 31 & 0.7 & 0 & 4 \\
  \hline\hline

  & \multicolumn{15}{c}{\textbf{Classification}}\\

  & \multicolumn{7}{c}{$s^*=2$}  & \multicolumn{7}{c}{$s^*=4$}   \\ \cmidrule(l){2-8} \cmidrule(l){10-16}
    & Err  & M$^{\gamma}$ & JD$^{\gamma}$ & M$^{\beta}$ & FA$^{\beta}$ & JD$^{\beta}$ & \textbf{T} & & Err  & M$^{\gamma}$ & JD$^{\gamma}$ & M$^{\beta}$ & FA$^{\beta}$ & JD$^{\beta}$ & \textbf{T}\\
  \hline
  enetLTS     & 41 &100 &0  & 36  & 1.0 &28 & 65 & &43 & 96& 4 & 62 &1.0 &4 & 72  \\
  \textbf{PIQ}& 15 &4   &96 & 2   & 0.1 &96 & 14 & &18 & 5 & 92& 15 &0.3 &48& 15  \\
  \midrule[1pt]
  & \multicolumn{7}{c}{$s^*=6$}  & \multicolumn{7}{c}{$s^*=8$}   \\ \cmidrule(l){2-8} \cmidrule(l){10-16}

      & Err  & M$^{\gamma}$ & JD$^{\gamma}$ & M$^{\beta}$ & FA$^{\beta}$ & JD$^{\beta}$ & \textbf{T} & & Err  & M$^{\gamma}$ & JD$^{\gamma}$ & M$^{\beta}$ & FA$^{\beta}$ & JD$^{\beta}$ & \textbf{T}\\
  \hline
  enetLTS      & 46 &96 & 4 & 82 & 0.8 & 14& 74 & & 47 & 96 & 4 & 88 &0.6 &0 &76  \\
  \textbf{PIQ} & 25 &65 & 20& 33 & 0.5 & 6 & 15 & & 26 & 93 & 4 & 39 &0.7 &0 &15\\
  \hline

  \end{tabular}
}
\end{table}

\bibliographystyle{apalike}
\bibliography{PIQrefs}

\begin{thebibliography}{}

\bibitem[Alfons et~al., 2013]{alfons2013sparse}
Alfons, A., Croux, C., and Gelper, S. (2013).
\newblock Sparse least trimmed squares regression for analyzing
  high-dimensional large data sets.
\newblock {\em The Annals of Applied Statistics}, 7(1):226--248.

\bibitem[Avella-Medina, 2017]{avella2017influence}
Avella-Medina, M. (2017).
\newblock Influence functions for penalized m-estimators.
\newblock {\em Bernoulli}, 23(4B):3178--3196.

\bibitem[Avella-Medina and Ronchetti, 2018]{AMR18}
Avella-Medina, M. and Ronchetti, E. (2018).
\newblock {Robust and consistent variable selection in high-dimensional
  generalized linear models}.
\newblock {\em Biometrika}, 105(1):31--44.

\bibitem[Belloni and Chernozhukov, 2011]{Belloni2011}
Belloni, A. and Chernozhukov, V. (2011).
\newblock $\ell_1$-penalized quantile regression in high-dimensional sparse
  models.
\newblock {\em The Annals of Statistics}, 39(1):82--130.

\bibitem[Bertsekas, 1999]{Bertsekas1999book}
Bertsekas, D.~P. (1999).
\newblock {\em {Nonlinear Programming}}.
\newblock Athena Scientific, 2nd edition.

\bibitem[Bianco and Yohai, 1996]{bianco1996robust}
Bianco, A.~M. and Yohai, V.~J. (1996).
\newblock Robust estimation in the logistic regression model.
\newblock In {\em Robust statistics, data analysis, and computer intensive
  methods}, pages 17--34. Springer, New York, NY.

\bibitem[Boyd and Vandenberghe, 2004]{Boyd2004}
Boyd, S. and Vandenberghe, L. (2004).
\newblock {\em Convex Optimization}.
\newblock Cambridge University Press, Cambridge.

\bibitem[Bregman, 1967]{Bregman1967}
Bregman, L.~M. (1967).
\newblock {The relaxation method of finding the common point of convex sets and
  its application to the solution of problems in convex programming}.
\newblock {\em USSR Computational Mathematics and Mathematical Physics},
  7:200--217.

\bibitem[B{\"u}hlmann and Yu, 2003]{buhlmann2003boosting}
B{\"u}hlmann, P. and Yu, B. (2003).
\newblock Boosting with the {$L_2$} loss: regression and classification.
\newblock {\em Journal of the American Statistical Association},
  98(462):324--339.

\bibitem[Candes and Tao, 2005]{Candes2005}
Candes, E.~J. and Tao, T. (2005).
\newblock Decoding by linear programming.
\newblock {\em IEEE Trans. Inf. Theor.}, 51(12):4203--4215.

\bibitem[Cantoni and Ronchetti, 2001]{robustGLM2001}
Cantoni, E. and Ronchetti, E. (2001).
\newblock Robust inference for generalized linear models.
\newblock {\em Journal of the American Statistical Association}, 96:1022--1030.

\bibitem[Chapelle, 2007]{Chapelle2007}
Chapelle, O. (2007).
\newblock Training a support vector machine in the primal.
\newblock {\em Neural Computation}, 19(5):1155--1178.

\bibitem[Chinchor, 1992]{FScore}
Chinchor, N. (1992).
\newblock {MUC-4 evaluation metrics}.
\newblock In {\em Proceedings of the 4th Conference on Message Understanding},
  pages 22--29.

\bibitem[Croux and Haesbroeck, 2003]{croux2003implementing}
Croux, C. and Haesbroeck, G. (2003).
\newblock Implementing the {B}ianco and {Y}ohai estimator for logistic
  regression.
\newblock {\em Computational statistics \& data analysis}, 44(1):273--295.

\bibitem[Freue et~al., 2019]{freue2019robust}
Freue, G. V.~C., Kepplinger, D., Salibi{\'a}n-Barrera, M., and Smucler, E.
  (2019).
\newblock Robust elastic net estimators for variable selection and
  identification of proteomic biomarkers.
\newblock {\em The Annals of Applied Statistics}, 13(4):2065--2090.

\bibitem[Hadi and Luce{\~n}o, 1997]{hadi1997maximum}
Hadi, A.~S. and Luce{\~n}o, A. (1997).
\newblock Maximum trimmed likelihood estimators: a unified approach, examples,
  and algorithms.
\newblock {\em Computational Statistics \& Data Analysis}, 25(3):251--272.

\bibitem[Hampel et~al., 2011]{hampel2011robust}
Hampel, F.~R., Ronchetti, E.~M., Rousseeuw, P.~J., and Stahel, W.~A. (2011).
\newblock {\em Robust Statistics: The Approach based on Influence Functions},
  volume~1.
\newblock John Wiley \& Sons, Hoboken, NJ.

\bibitem[Hastie et~al., 2015]{hastie2015statistical}
Hastie, T., Tibshirani, R., and Wainwright, M. (2015).
\newblock {\em Statistical {L}earning with Sparsity: the Lasso and
  Generalizations}.
\newblock Chapman and Hall/CRC.

\bibitem[Huber and Ronchetti, 2009]{huber2009robust}
Huber, P. and Ronchetti, E. (2009).
\newblock {\em Robust Statistics}.
\newblock John Wiley \& Sons, Hoboken, NJ.

\bibitem[Hunter and Lange, 2004]{Hunter2004tutorial}
Hunter, D.~R. and Lange, K. (2004).
\newblock {A tutorial on MM algorithms}.
\newblock {\em The American Statistician}, 58(1):30--37.

\bibitem[Khan et~al., 2007]{khan2007robust}
Khan, J.~A., Van~Aelst, S., and Zamar, R.~H. (2007).
\newblock Robust linear model selection based on least angle regression.
\newblock {\em Journal of the American Statistical Association},
  102(480):1289--1299.

\bibitem[Kurnaz et~al., 2018]{Kurnaz2018}
Kurnaz, F.~S., Hoffmann, I., and Filzmoser, P. (2018).
\newblock Robust and sparse estimation methods for high-dimensional linear and
  logistic regression.
\newblock {\em Chemometrics and Intelligent Laboratory Systems}, 172:211--222.

\bibitem[Little et~al., 2007]{little2007exploiting}
Little, M.~A., McSharry, P.~E., Roberts, S.~J., Costello, D.~A., and Moroz,
  I.~M. (2007).
\newblock Exploiting nonlinear recurrence and fractal scaling properties for
  voice disorder detection.
\newblock {\em Biomedical engineering online}, 6(1):23.

\bibitem[Loh, 2017]{loh2017statistical}
Loh, P.-L. (2017).
\newblock Statistical consistency and asymptotic normality for high-dimensional
  robust $ m $-estimators.
\newblock {\em The Annals of Statistics}, 45(2):866--896.

\bibitem[Maronna et~al., 2006]{maronna2006robust}
Maronna, R.~A., Martin, D.~R., and Yohai, V.~J. (2006).
\newblock {\em Robust Statistics: Theory and Methods}, volume~1.
\newblock John Wiley \& Sons.

\bibitem[M{\"u}ller and Welsh, 2005]{muller2005outlier}
M{\"u}ller, S. and Welsh, A. (2005).
\newblock Outlier robust model selection in linear regression.
\newblock {\em Journal of the American Statistical Association},
  100(472):1297--1310.

\bibitem[Nair and Hinton, 2010]{Hinton2010relu}
Nair, V. and Hinton, G.~E. (2010).
\newblock Rectified linear units improve restricted {B}oltzmann machines.
\newblock In {\em Proceedings of the 27th international conference on machine
  learning (ICML-10)}, pages 807--814.

\bibitem[Needell and Tropp, 2009]{needell2009cosamp}
Needell, D. and Tropp, J.~A. (2009).
\newblock {CoSaMP}: Iterative signal recovery from incomplete and inaccurate
  samples.
\newblock {\em Applied and computational harmonic analysis}, 26(3):301--321.

\bibitem[{\"O}llerer et~al., 2015]{ollerer2015influence}
{\"O}llerer, V., Croux, C., and Alfons, A. (2015).
\newblock The influence function of penalized regression estimators.
\newblock {\em Statistics}, 49(4):741--765.

\bibitem[Pregibon, 1982]{pregibon1982resistant}
Pregibon, D. (1982).
\newblock Resistant fits for some commonly used logistic models with medical
  applications.
\newblock {\em Biometrics}, pages 485--498.

\bibitem[Qin et~al., 2017]{Qin2017}
Qin, Y., Li, S., Li, Y., and Yu, Y. (2017).
\newblock Penalized maximum tangent likelihood estimation and robust variable
  selection.
\newblock {\em arXiv:1708.05439}.

\bibitem[Rigollet and Tsybakov, 2011]{Rigollet11}
Rigollet, P. and Tsybakov, A. (2011).
\newblock Exponential screening and optimal rates of sparse estimation.
\newblock {\em Annals of Statistics}, 39(2):731--771.

\bibitem[Rockafellar, 1970]{Rockafellar1970}
Rockafellar, R.~T. (1970).
\newblock {\em {Convex Analysis}}.
\newblock Princeton University Press, Princeton, NJ.

\bibitem[Ronchetti et~al., 1997]{ronchetti1997robust}
Ronchetti, E., Field, C., and Blanchard, W. (1997).
\newblock Robust linear model selection by cross-validation.
\newblock {\em Journal of the American Statistical Association},
  92(439):1017--1023.

\bibitem[Rousseeuw and Yohai, 1984]{rousseeuw1984robust}
Rousseeuw, P. and Yohai, V. (1984).
\newblock {Robust regression by means of S-estimators}.
\newblock In {\em Robust and nonlinear time series analysis}, pages 256--272.
  Springer, New York, NY.

\bibitem[Rousseeuw, 1985]{rousseeuw1985multivariate}
Rousseeuw, P.~J. (1985).
\newblock Multivariate estimation with high breakdown point.
\newblock {\em Mathematical statistics and applications}, 8:283--297.

\bibitem[Rousseeuw and Driessen, 1999]{rousseeuw1999computing}
Rousseeuw, P.~J. and Driessen, K.~V. (1999).
\newblock {Computing LTS Regression for Large Data Sets}.
\newblock Technical report, Institute of Mathematical Statistics Bulletin.

\bibitem[Rousseeuw and Hubert, 1997]{rousseeuw1997recent}
Rousseeuw, P.~J. and Hubert, M. (1997).
\newblock Recent developments in {PROGRESS}.
\newblock {\em Lecture Notes-Monograph Series}, pages 201--214.

\bibitem[Rudelson and Vershynin, 2013]{rudelson2013}
Rudelson, M. and Vershynin, R. (2013).
\newblock Hanson-{W}right inequality and sub-gaussian concentration.
\newblock {\em Electron. Commun. Probab.}, 18:1--9.

\bibitem[Sakar et~al., 2013]{sakar2013collection}
Sakar, B.~E., Isenkul, M.~E., Sakar, C.~O., Sertbas, A., Gurgen, F., Delil, S.,
  Apaydin, H., and Kursun, O. (2013).
\newblock Collection and analysis of a parkinson speech dataset with multiple
  types of sound recordings.
\newblock {\em IEEE Journal of Biomedical and Health Informatics},
  17(4):828--834.

\bibitem[Salibian-Barrera and Van~Aelst, 2008]{salibian2008robust}
Salibian-Barrera, M. and Van~Aelst, S. (2008).
\newblock Robust model selection using fast and robust bootstrap.
\newblock {\em Computational Statistics \& Data Analysis}, 52(12):5121--5135.

\bibitem[She, 2012]{She2012}
She, Y. (2012).
\newblock An iterative algorithm for fitting nonconvex penalized generalized
  linear models with grouped predictors.
\newblock {\em Computational Statistics \& Data Analysis}, 56(10):2976--2990.

\bibitem[She, 2016]{She2016}
She, Y. (2016).
\newblock {On the finite-sample analysis of $\Theta$-estimators}.
\newblock {\em Electronic Journal of Statistics}, 10(2):1874--1895.

\bibitem[She and Chen, 2017]{She2017RRRR}
She, Y. and Chen, K. (2017).
\newblock Robust reduced-rank regression.
\newblock {\em Biometrika}, 104(3):633--647.

\bibitem[She and Owen, 2011]{she2011outlier}
She, Y. and Owen, A.~B. (2011).
\newblock Outlier detection using nonconvex penalized regression.
\newblock {\em Journal of the American Statistical Association},
  106(494):626--639.

\bibitem[She and Tran, 2019]{SCV}
She, Y. and Tran, H. (2019).
\newblock On cross-validation for sparse reduced rank regression.
\newblock {\em Journal of the Royal Statistical Society: Series B},
  81:145--161.

\bibitem[She et~al., 2013]{She2013Super}
She, Y., Wang, J., Li, H., and Wu, D. (2013).
\newblock Group iterative spectrum thresholding for super-resolution sparse
  spectral selection.
\newblock {\em IEEE Transactions on Signal Processing}, 61(24):6371--6386.

\bibitem[She et~al., 2021]{SheBregman}
She, Y., Wang, Z., and Jin, J. (2021).
\newblock Analysis of generalized {B}regman surrogate algorithms for nonsmooth
  nonconvex statistical learning.
\newblock {\em The Annals of Statistics}, 49(6):3434--3459.

\bibitem[Tibshirani, 1996]{tibshirani1996regression}
Tibshirani, R. (1996).
\newblock Regression shrinkage and selection via the lasso.
\newblock {\em Journal of the Royal Statistical Society. Series B
  (Methodological)}, pages 267--288.

\bibitem[Tsybakov, 2008]{Tsybakov2008}
Tsybakov, A.~B. (2008).
\newblock {\em Introduction to Nonparametric Estimation}.
\newblock Springer, New York, NY.

\bibitem[van~de Geer and B{\"u}hlmann, 2009]{van2009conditions}
van~de Geer, S.~A. and B{\"u}hlmann, P. (2009).
\newblock {On the conditions used to prove oracle results for the Lasso}.
\newblock {\em Electronic Journal of Statistics}, 3:1360--1392.

\bibitem[van~der Vaart and Wellner, 1996]{van1996weak}
van~der Vaart, A. and Wellner, J. (1996).
\newblock {\em Weak Convergence and Empirical Processes: With Applications to
  Statistics}.
\newblock Springer Series in Statistics. Springer.

\bibitem[van~der Vaart, 1998]{vdV98}
van~der Vaart, A.~W. (1998).
\newblock {\em Asymptotic {S}tatistics}.
\newblock Cambridge University Press, Cambridge.

\bibitem[Vandev and Neykov, 1998]{vandev1998regression}
Vandev, D. and Neykov, N. (1998).
\newblock About regression estimators with high breakdown point.
\newblock {\em Statistics: A Journal of Theoretical and Applied Statistics},
  32(2):111--129.

\bibitem[Vershynin, 2012]{Vershynin2012}
Vershynin, R. (2012).
\newblock {Introduction to the non-asymptotic analysis of random matrices}.
\newblock {\em Compressed sensing}, pages 210--268.

\bibitem[Zhang, 2013]{zhang2013multi}
Zhang, T. (2013).
\newblock Multi-stage convex relaxation for feature selection.
\newblock {\em Bernoulli}, 19(5B):2277--2293.

\end{thebibliography}
\end{document}